\documentclass[11pt]{article}

\usepackage[utf8]{inputenc}
\usepackage{palatino}
\usepackage{titling}
\usepackage{amsmath,amssymb,amsfonts,amsthm, bbm}
\usepackage{mathtools}
\usepackage{geometry}
\usepackage{graphicx}
\usepackage{enumitem}
\usepackage[colorlinks=true]{hyperref}
\usepackage{braket}
\usepackage{thm-restate}
\usepackage{indentfirst}
\usepackage{multirow}
\usepackage{xcolor}
\usepackage[most]{tcolorbox}

\hypersetup{pdfpagemode=UseNone}

\newtheorem{theorem}{Theorem}

\newtheorem{lemma}[theorem]{Lemma}
\newtheorem{claim}[theorem]{Claim}
\newtheorem{corollary}[theorem]{Corollary}
\newtheorem{remark}[theorem]{Remark}
\newtheorem{definition}{Definition}

\newcommand{\A}{\mathrm{A}}
\newcommand{\B}{\mathrm{B}}

\newcommand{\clA}{\mathcal{A}}
\newcommand{\clB}{\mathcal{B}}

\newcommand{\clP}{\mathcal{P}}
\newcommand{\clL}{\mathcal{L}}
\newcommand{\clX}{\mathcal{X}}
\newcommand{\clY}{\mathcal{Y}}

\newcommand{\ba}{\mathbf{a}}
\newcommand{\bA}{\mathbf{A}}
\newcommand{\bb}{\mathbf{b}}
\newcommand{\bB}{\mathbf{B}}
\newcommand{\bx}{\mathbf{x}}
\newcommand{\bX}{\mathbf{X}}
\newcommand{\by}{\mathbf{y}}
\newcommand{\bY}{\mathbf{Y}}
\newcommand{\bs}{\mathbf{s}}

\newcommand{\tZ}{\widetilde{Z}}

\newcommand{\bbC}{\mathbb{C}}
\newcommand{\bbE}{\mathop{\mathbb{E}}}
\newcommand{\bbR}{\mathbb{R}}

\newcommand{\sfP}{\mathsf{P}}
\newcommand{\sfQ}{\mathsf{Q}}
\newcommand{\sfU}{\mathsf{U}}

\DeclareMathOperator{\D}{D}
\DeclareMathOperator{\R}{R}
\DeclareMathOperator{\Q}{Q}

\newcommand{\eps}{\varepsilon}

\newcommand{\Tr}{\operatorname{Tr}}
\newcommand{\ketbra}[2]{\ket{#1}\!\!\bra{#2}}
\newcommand{\state}[1]{\ketbra{#1}{#1}}

\newcommand{\poly}{\mathrm{poly}}
\newcommand{\Var}{\mathrm{Var}}
\newcommand{\bias}{\mathrm{bias}}
\newcommand{\chsh}{\mathrm{CHSH}}
\newcommand{\wchsh}{\mathrm{WorstCaseCHSH}}
\newcommand{\MC}{\mathrm{MC}}
\newcommand{\HC}{\mathrm{HC}}
\newcommand{\col}{\mathrm{col}}
\newcommand{\norm}{\mathrm{norm}}
\newcommand{\pub}{\mathrm{pub}}
\newcommand{\dep}{\mathrm{dep}}
\newcommand{\uni}{\mathrm{uni}}
\newcommand{\era}{\mathrm{era}}
\newcommand{\res}{\mathrm{res}}
\newcommand{\IP}{\mathrm{RealIP}}
\newcommand{\EQ}{\mathrm{EQ}}
\newcommand{\diag}{\mathrm{diag}}

\newcommand{\bits}{\{0,1\}}

\title{Non-local games and communication complexity with noisy entanglement}
\author{Srijita Kundu \\
\small Quantum Computing Research Centre \\
\small Hon Hai (Foxconn) Research Institute \\
\small \texttt{srijita.kundu@foxconn.com.sg} \\
\and Olivier Lalonde \\
\small Institute for Quantum Computing \\
\small University of Waterloo \\
\small \texttt{olalonde@uwaterloo.ca}
}
\date{}

\begin{document}

\maketitle
\begin{abstract}
 We study the impact of noise on the theories of quantum nonlocality and entanglement-assisted communication complexity. We consider non-local games and entanglement-assisted communication complexity in a model where the players Alice and Bob may share arbitrarily many noisy EPR pairs. We study four noise models: depolarizing noise, unital noise, biased reset noise, and erasure noise. Our results are as follows:
 \begin{enumerate}
 \item We upper bound the value of the CHSH game under all these noise models in terms of the noise parameter, without any assumptions on the measurements used in the strategy.
 \item We prove a parallel repetition theorem for general non-local games under all noise models except biased reset noise; we prove an improved parallel repetition theorem for unique games. Our parallel repetition rate is smaller than the quantum parallel repetition rate for CHSH in a nontrivial noise regime.
 \item Using our unique-game parallel repetition theorem and a relation defined by the CHSH game, we prove a separation between communication complexity with noisy vs noiseless entanglement (the communication is classical in both cases). This implies an $\Omega(n)$ two-way communication lower bound for distilling $n$ EPR pairs from noisy EPR pairs, in the same nontrivial noise regime.
 \item We show that any interactive entanglement-assisted communication protocol can be simulated by an SMP communication protocol with noisy shared randomness, with an exponential blowup in communication. This generalizes a known result on the simulation of noiseless shared randomness with noisy shared randomness.
 \item We show a polynomial lower bound on the number of copies of noisy EPR pairs required to compute the Equality function with constant communication, under all four noise models. The previous result gives a matching upper bound, and moreover, this answers (in the negative) an open question in the literature on whether logarithmically many noisy shared bits suffice for communication.
 \end{enumerate}
\end{abstract}

\section{Introduction}
\subsection{Background}
Quantum entanglement has long been identified as one of the most distinctive features of quantum mechanics, and is used as a resource in a vast number of quantum information-theoretic tasks. Such tasks include the design of non-local strategies for non-local games \cite{CHTW10} and efficient communication protocols \cite{BCvD01}, for example using quantum teleportation \cite{BBC+93}. Although these tasks have been studied extensively, most known strategies and protocols assume pure shared entanglement, and little is known about the weaker model in which the shared entanglement is noisy, despite its likely relevance for near-term practical applications.

The particular model of noisy entanglement we will be studying is one in which the shared resource is an arbitrarily large number of independent copies of a given noisy entangled state $\Psi_{E^\A E^\B}$. For many such states of interest, it is possible to extract near-perfect pure entangled states (e.g. EPR pairs) from this resource using local operations and classical communication (LOCC) between the parties holding the two halves of the entangled state \cite{BBP+96,WTB17}. This process is called \emph{entanglement distillation}, and known protocols for it can distill $n$ near-perfect EPR pairs using $O(n)$ communication, with the hidden constant depending on the state $\Psi$ and on the desired fidelity. Despite this, in many settings where one might want to use entanglement, the noisy model may be strictly weaker than the noiseless model, even when the noise is low enough that the states are entangled and have distillable entanglement. This is because communication may either be forbidden or counted as a resource in such a setting, and entanglement distillation costs communication.

\paragraph{Non-local games.}
A non-local game is a scenario in which two spatially separated parties, traditionally called Alice and Bob, are given classical inputs and must provide outputs satisfying a given predicate. The largest probability with which they can do so in a given model is called the value of the game. This probability can be strictly larger when the parties share entanglement. For example, the CHSH game has classical value $3/4$ and quantum value $\cos^2\left(\frac{\pi}{8}\right) \approx 0.85$. In the usual quantum model, Alice and Bob may share an entangled state of arbitrary finite dimension.

Recent work \cite{Yao19,QY21,QY25,DFN+25} has established that noise has a dramatic effect on the complexity of computing the entangled value of a game. For certain fixed choices of $\Psi$, such as qubit Werner states, approximating this value was shown to be NP-complete, as in the unentangled setting. This stands in stark contrast to the case where Alice and Bob share an arbitrarily large number of perfect EPR pairs, where approximating the entangled value of a non-local game to any fixed nontrivial precision is undecidable \cite{JNVWY20}. 

The values of several specific non-local games in the noisy-entanglement model have also been studied recently \cite{FQXY26}, under additional assumptions on both the noisy state and the players' strategies. Their result for the CHSH game is as follows:
\begin{theorem}[\cite{FQXY26}]\label{thm:FQXY-CHSH}
Suppose that $\Psi$ is a qubit Werner state, namely, an EPR pair which is replaced with the maximally mixed state with probability $1-\mu$. If Alice and Bob's observables are traceless, their winning probability at CHSH is at most $\frac12+\frac1{2\sqrt{2}}\mu$.\footnote{Technically, this result also works for any noise model such that the marginals of EPR pairs remain maximally mixed, i.e., unital noise. But this is not explicitly stated in \cite{FQXY26}.}
\end{theorem}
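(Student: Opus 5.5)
The plan is to reduce the winning probability to a bias bound and then exploit the structure of the Werner state. Write the CHSH winning probability as $\frac12 + \frac18 \sum_{x,y} (-1)^{xy}\langle A_x\otimes B_y\rangle_\Psi$. Here $A_0,A_1$ are Alice's $\pm1$ observables and $B_0,B_1$ are Bob's, all acting on her (resp.\ his) halves of the $n$ copies of $\Psi$. It therefore suffices to show
\[
\Bigl|\sum_{x,y}(-1)^{xy}\langle A_x\otimes B_y\rangle_{\Psi^{\otimes n}}\Bigr| \le 2\sqrt2\,\mu .
\]

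Expand $\Psi^{\otimes n}$ over subsets $S\subseteq[n]$, where $S$ is the set of copies that remain EPR pairs. This gives
\[
\Psi^{\otimes n}=\sum_S \mu^{|S|}(1-\mu)^{n-|S|}\,\Phi_S\otimes \tfrac{I}{2^{n-|S|}}\otimes\tfrac{I}{2^{n-|S|}} .
\]
A cleaner route is a Fourier/Pauli expansion. Write each observable as $A_x=\sum_{P}\hat A_x(P)\,P$ over $n$-qubit Pauli strings $P$, and similarly for $B_y$. The Werner state acts on a Pauli pair as $\langle P\otimes Q\rangle_{\Psi^{\otimes n}} = \mu^{|P|}\langle P\otimes Q\rangle_{\Phi^{\otimes n}}$, where $|P|$ is the number of non-identity tensor factors. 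Indeed each qubit's depolarizing channel scales non-identity Paulis by $\mu$ and fixes the identity. Hence $\langle A_x\otimes B_y\rangle_\Psi=\langle T_\mu(A_x)\otimes B_y\rangle_{\Phi^{\otimes n}}$, where $T_\mu$ is the noise operator acting on Alice's side. By tracelessness, $\hat A_x(I)=0$, so $T_\mu(A_x)=\mu\,A_x'$ for some operator $A_x'$ with $\|A_x'\|_{2,\mathrm{norm}}\le \|A_x\|_{2,\mathrm{norm}}\le1$.

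The main obstacle is that $A_x'$ is no longer a $\pm1$ observable. Its operator norm need not be at most $1$, so the Tsirelson bound cannot be applied verbatim. I would avoid operator norms entirely and use the vector formulation. On the maximally entangled state, $\langle X\otimes Y\rangle_{\Phi}=\frac{1}{2^n}\Tr(X^T Y)$ is just a normalized Hilbert–Schmidt inner product. With $u_x = T_\mu(A_x)^T$ and $v_y=B_y$ viewed as vectors in this inner product, the CHSH expression equals
\[
\langle u_0,v_0+v_1\rangle+\langle u_1,v_0-v_1\rangle .
\]
Cauchy–Schwarz bounds this by $\|u_0\|\|v_0+v_1\|+\|u_1\|\|v_0-v_1\|$.

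Next, $\|u_x\|\le\mu\|A_x\|_2\le\mu$, because all Fourier weight sits on $|P|\ge1$, each such weight is scaled by at least a factor $\mu$, and $\|A_x\|_2=1$ for a unitary. Also $\|v_0+v_1\|^2+\|v_0-v_1\|^2=2(\|v_0\|^2+\|v_1\|^2)=4$. Together these give the bound $\mu\cdot\max_{a^2+b^2=4}(a+b)=2\sqrt2\,\mu$, which is the desired inequality.

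Two remaining details need care. First, if the observables act on additional local ancilla registers beyond the noisy qubits, I absorb the ancillas by noting that local pure ancillas can be traced into the operators; the operators then become contractions rather than unitaries. In that case only $\|\cdot\|_2\le1$ is needed, which still holds, and the tracelessness assumption is interpreted on the reduced operators. Second, the norm bound uses only that the marginal is maximally mixed and that the noise channel contracts traceless operators by $\mu$. This is why the footnote's extension to unital noise should go through with $\mu$ replaced by the channel's contraction factor on the traceless subspace.
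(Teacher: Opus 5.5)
Your proof is correct. The paper gives no proof of this statement; it quotes it from \cite{FQXY26} as background. The natural comparison is therefore with the paper's own single-copy bound, Theorem~\ref{thm:chsh-c}. That argument also rests on the covariance form \eqref{eq:MC-var} of maximal correlation, but it bounds the four correlators one at a time. Specialised to traceless observables ($\alpha_x=\beta_y=0$), \eqref{eq:bias-bound} gives only $\bias\le 4\rho$, i.e.\ value at most $\frac12+\frac{\rho}{2}$. The step that recovers the Tsirelson factor $\sqrt2$ in your proof is grouping Bob's operators as $B_0\pm B_1$ before Cauchy--Schwarz, then using the parallelogram identity.

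Your Pauli computation $\|T_\mu(A_x)\|_2\le\mu\|A_x\|_2$ for traceless $A_x$ is a hands-on instance of \eqref{eq:MC-var} with $\rho=\MC(\Omega^\dep_\mu)=\mu$. You could instead write $|\Tr(\Psi A_x\otimes N)|\le\rho\,\|A_x\|_{L^2(\Psi_{E^\A})}\|N\|_{L^2(\Psi_{E^\B})}$ for $A_x$ centred with respect to Alice's marginal, using $\MC(\Psi^{\otimes m})=\MC(\Psi)$ from Lemma~\ref{lem:maxcorrproperties}. The same two lines then give $\frac12+\frac{\rho}{2\sqrt2}$ for any shared state of maximal correlation $\rho$. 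This version:
\begin{itemize}
\item needs only Alice's observables to be centred;
\item covers the footnote's unital case ($\rho=\|K\|_\infty$), as well as biased reset and erasure noise;
\item makes your ancilla discussion automatic, since a local product ancilla does not change the maximal correlation, and ``centred'' then coincides with your reduced-operator tracelessness.
\end{itemize}

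It also exposes one interpretive point you left implicit. If shared randomness is allowed, centring must hold for each realization, after which convexity finishes the argument. Otherwise the statement fails: if Alice and Bob both output a shared random bit, their marginals are uniform and they win with probability $\frac34>\frac12+\frac{\mu}{2\sqrt2}$ whenever $\mu<1/\sqrt2$. In the maximal-correlation picture this appears as the shared bit raising $\rho$ to $1$. What your Fourier route buys in return is a fully explicit, self-contained proof for depolarizing noise that does not rely on tensorization of maximal correlation.
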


\paragraph{Communication complexity.}
Communication complexity studies settings in which Alice and Bob receive classical inputs and communicate to produce outputs satisfying a given predicate. The central quantity is the least amount of communication needed to succeed with high probability. The two most commonly studied models in quantum communication complexity are the one with quantum communication without prior entanglement, introduced by Yao \cite{Yao93}, and the one with classical communication assisted by prior entanglement, introduced by Buhrman, Cleve, and van Dam \cite{BCvD01}.\footnote{By quantum teleportation, additionally allowing quantum messages in the entanglement-assisted model reduces the communication by a factor of at most $2$.} Raz gave the first exponential separation between classical and quantum communication complexity without prior entanglement for a partial Boolean function \cite{Raz99}. Subsequent works established related exponential separations in more restrictive communication models and for total search problems \cite{Gav08,KR11,Gav16,GGJL25}.

Unlike the noisy version of Yao's model, in which the channel used to transmit quantum messages is noisy, which has been studied before \cite{BNT+14}, as far as we know, the model in which Alice and Bob share noisy entanglement has never been studied in the literature. However, its classical analogue, in which Alice and Bob share imperfectly correlated randomness, has received some attention \cite{BGI15,CGMS15}. Bavarian, Gavinsky, and Ito \cite{BGI15} showed that the Equality function on $n$ bits has a worst-case bounded-error simultaneous-message-passing (SMP) protocol with constant communication when Alice and Bob share sufficiently many independent realizations of dependent (but not perfectly correlated) random variables $R^\A$ and $R^\B$, respectively; their protocol uses $O(2^n)$ realizations. Canonne, Guruswami, Meka, and Sudan \cite{CGMS15} studied the special case in which $R^\A$ and $R^\B$ are $\rho$-correlated bits with uniform marginals, and obtained the following result.
\begin{theorem}[\cite{CGMS15}] \label{thm:canonne}
Any Boolean function with $n$-bit inputs that has a $C$-bit two-way classical protocol with perfect shared randomness also has a one-way classical protocol using $\poly_\rho(n, 2^C)$ many $\rho$-correlated uniform bits and $2^{O_\rho(C)}$ communication. 
\end{theorem}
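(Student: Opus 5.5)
Here is how I would prove Theorem~\ref{thm:canonne}. The plan has three stages. First, turn a $C$-bit interactive protocol into a one-way protocol with perfect shared randomness and $2^{O(C)}$ communication. Second, turn that one-way protocol into an agreement problem about a real-valued quantity. Third, simulate that with $\rho$-correlated bits via a Gaussian-style correlated-sampling argument.

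\textbf{Step 1: from interactive to one-way with shared randomness.} Fix the shared random string $r$. The protocol tree has at most $2^C$ leaves. Each leaf $\ell$ corresponds to a rectangle, so the probability of reaching $\ell$ factors as $a_\ell(x,r)\,b_\ell(y,r)$, where $a_\ell,b_\ell\in\bits$ for a deterministic protocol. Alice's message is the vector $(a_\ell(x,r))_\ell$, of length $2^C$. Bob can then determine the unique leaf consistent with his own vector $(b_\ell(y,r))_\ell$ and output its label. This gives a one-way protocol with perfect shared randomness and $2^C$ communication. In this protocol the acceptance probability is $p(x,y)=\bbE_r\big[\sum_{\ell\text{ accepting}} a_\ell(x,r)b_\ell(y,r)\big]$, an inner product of bounded vectors of dimension $2^C$. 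After amplification, $p(x,y)$ is at least $2/3$ when $f(x,y)=1$ and at most $1/3$ when $f(x,y)=0$.

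\textbf{Step 2: reduce to estimating inner products of unit vectors.} Define $u_x=\bigl(a_\ell(x,r)\bigr)_{\ell,r}$ and $v_y=\bigl(b_\ell(y,r)\bigr)_{\ell,r}$, normalised in a suitable sense. Deciding $f$ then amounts to deciding whether $\langle u_x,v_y\rangle$ is at least $2/3$ or at most $1/3$, which is a gap inner-product problem. To make the dimension finite and $\poly(n,2^C)$, I would use a Newman-type argument. This lets us take $r$ from a set of size $\poly(n)$ while losing only a small constant in $p(x,y)$.

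\textbf{Step 3: estimate the inner product with $\rho$-correlated bits.} Group the $\rho$-correlated bits into blocks. By the central limit theorem, sums over blocks give approximately $\rho$-correlated Gaussian vectors $g,h$. For unit vectors, $\Pr[\operatorname{sign}\langle g,u\rangle=\operatorname{sign}\langle h,v\rangle]$ is an explicit increasing function of $\rho\langle u,v\rangle$, by a Sheppard-type formula. So the sign agreement rate distinguishes the two gap cases with an advantage that is a constant depending on $\rho$. Repeating $2^{O_\rho(C)}$ times amplifies this advantage. In each repetition Alice sends only her sign bit, so the total communication is $2^{O_\rho(C)}$ and the number of correlated bits used is $\poly_\rho(n,2^C)$.

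\textbf{Main obstacle.} I expect the hardest part to be Step 3's quantitative bookkeeping. The vectors are not unit vectors, and the gap has to survive normalisation. That normalisation may need Alice to send her vector's norm, which costs only $O(C)$ bits after rounding. The CLT error must also be controlled, for instance with Berry--Esseen on blocks of size $\poly(2^C)$. Finally, the gap in the agreement probability must be shown to be at least $2^{-O_\rho(C)}$, so that the repetition count, and hence the communication, matches the claimed bounds.
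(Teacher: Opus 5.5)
Your proposal is correct, but it takes a different route from the paper. The paper does not reprove this cited theorem. It recovers it as the case $d=1$ of Theorem~\ref{thm:canonnebetter}, whose SMP protocol is in particular one-way.

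\textbf{Reduction to inner products.} Your Steps 1--2 correspond to Lemma~\ref{lem:reductiontoIP}. The paper uses the transcript decomposition of Lemma~\ref{lem:clevebuhrman} followed by Johnson--Lindenstrauss, where you use the rectangle decomposition followed by Newman. Johnson--Lindenstrauss is what lets the paper handle entanglement-assisted protocols, where there is no finite random string to sparsify. For purely classical protocols, your Newman step is more elementary and gives the slightly smaller dimension $O(n2^C)$.

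\textbf{The estimator.} This is where the routes really differ.
\begin{itemize}
\item You build approximately Gaussian vectors from block sums and use random-hyperplane sign agreement. This only sees the \emph{normalised} inner product. That is why you need the extra norm message, and why the CLT error must be pushed below a gap of order $\rho 2^{-C}$.
\item The paper uses maximal-correlation witnesses $f,g$ and degree-$d$ multilinear monomials, for which $\bbE[Z^\A Z^\B]=\rho^d\langle u,v\rangle$ holds exactly, with no Gaussian approximation. The only error comes from clipping to $[-T,T]$, and this is controlled uniformly by the fourth-moment bound of Corollary~\ref{cor:finite-space-hypercontractivity}. Private-randomness rounding to $\pm1$ then makes $T^2ab/\rho^d$ a nearly unbiased estimate of the \emph{unnormalised} inner product.
\end{itemize}
What the paper's estimator buys: no norms are sent, and it works for any finite source with maximal correlation $\rho$. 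The degree-$d$ encoding also uses only $O(dD^{1/d})$ samples per repetition, at a cost of $\rho^{-6d}$ in communication. Your block sums need $\Omega(D)$ samples per repetition. Your sign scheme is in fact also symmetric and could run as SMP if both players send their norms. The original CGMS protocol is one-way because Alice sends the index of an input-dependent Gaussian.

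\textbf{A correction inside Step 3.} The agreement advantage is not a $\rho$-dependent constant; it is $\Omega(\rho 2^{-C})$. The normalised gap is $(1/3)/(\|u_x\|\|v_y\|)\geq 2^{-C}/3$, and $s\mapsto 1-\arccos(s)/\pi$ has slope at least $1/\pi$. So you need $O(\rho^{-2}2^{2C})$ repetitions, which is still $2^{O_\rho(C)}$.

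\textbf{The bookkeeping you flag does close.} Apply a two-dimensional Berry--Esseen bound to the pair $(\langle g,u\rangle,\langle h,v\rangle)$, not to the vectors $g,h$. Its error is $O\bigl((1-\rho)^{-3/2}B^{-1/2}\bigr)$, independent of $D$. So blocks of size $B=2^{O(C)}\rho^{-2}(1-\rho)^{-3}$ suffice, and the sample count stays $\mathrm{poly}_\rho(n,2^C)$.
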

This result shows that every communication problem having a constant-communication public-coin protocol also has a constant-communication protocol using imperfectly correlated randomness of this type. In particular, when specialized to the Equality function on $n$ bits, this gives a constant-communication protocol with $\poly(n)$ shared samples, thereby improving exponentially on the cost of the protocol of \cite{BGI15}. They asked whether an analogue of Newman's theorem \cite{New91} could be shown to hold for imperfect shared randomness, which could further reduce the shared randomness cost to $\Theta(\log n)$, as is known to be possible with perfect shared randomness. 

\cite{CGMS15} also exhibited a partial Boolean function which can be computed with $k$ bits of communication when given access to perfect shared randomness, but requires $2^{\Omega_\rho(k)}$ communication when given access to $\rho$-correlated uniform bits, for every $\rho<1$. Note, however, that in the imperfect-shared-randomness model, Alice and Bob still have access to private randomness, and Newman's result \cite{New91} shows that randomized communication with private and public randomness differs by at most $O(\log n)$. Thus, the separation in \cite{CGMS15} is not exponential in the input size\footnote{$k$ is a parameter defining a sparsity promise on their inputs. The communication only depends on this sparsity promise, and is independent of the input size, in both cases.}, whereas separations between communication with noisy and noiseless entanglement could potentially be exponential in the input size.

\subsection{Our results}
In this work, we further the study of non-local games and initiate the study of communication complexity in the presence of noisy entanglement. As mentioned previously, our model of noisy entanglement is one in which Alice and Bob share an unbounded number of copies of a fixed mixed quantum state $\Psi_{E^\A E^\B}$. Our results illustrate that this is a strictly weaker model than the one in which Alice and Bob share noiseless entanglement for a number of tasks of interest, even though the amount of shared entanglement can be unbounded.

For many of our results, in line with the previous literature, we assume that $\Psi$ is the result of passing one half of a perfect EPR pair through a noisy channel. We consider the following noise models, which are described in more detail in Section~\ref{sec:noisyentanglement}:
\begin{itemize}
\item Depolarizing noise;
\item Unital noise, which is a very general class of noise containing depolarizing and dephasing noise;
\item Biased reset noise, which is a different generalization of depolarizing noise;
\item Erasure noise.
\end{itemize}

\subsubsection{Results about non-local games}

\paragraph{Upper bounds on the noisy entangled values of the CHSH game.} Our first result consists of new upper and lower bounds on the value of the CHSH game under the aforementioned noise models. Unlike \cite{FQXY26}, our result requires no assumptions on Alice or Bob's observables. The upper bounds and best-known lower bounds under these models are summarized in Table~\ref{tab:chsh-ub-lb}.
\begin{table}[!ht]
\centering
\begin{tabular}{|c|c|c|}
\hline
Noise model & Upper bound & Lower bound \\
\hline
 & & \\
Depolarizing & \multirow{2}{*}{$\min\left\{\frac12+\frac1{4\sqrt{1-\mu^2}}, \frac12 + \frac{\sqrt{1+\mu}}{4} \right\}$} & \multirow{2}{*}{$\frac12+\frac\mu{2\sqrt{2}}$} \\
Parameters: $\mu$ & & \\
& & \\
Unital & \multirow{2}{*}{$\min\left\{\frac12+\frac1{4\sqrt{1-s_1^2}}, \frac12 + \frac{\sqrt{1+s_1}}{4}\right\}$} & \multirow{2}{*}{$\frac12+\frac14\sqrt{s_1^2+s_2^2}$} \\
 Parameters: $s_1, s_2$ & & \\
 & & \\
 Reset & \multirow{2}{*}{$\min\left\{\frac12+\frac14\sqrt{\frac{1-(1-\lambda)|b|}{1-(1-\lambda)|b|-\lambda^2}}, \frac12+\frac1{2\sqrt{2}}\right\}$}  & \multirow{2}{*}{$\frac12+\frac\lambda{2\sqrt{2}}$} \\
 Parameters: $\lambda, b$ & & \\
 & & \\
 Erasure & \multirow{2}{*}{$\min\left\{\frac12+\frac1{4\sqrt{\eps}}, \frac12+\frac1{2\sqrt{2}}\right\}$} & \multirow{2}{*}{$\frac12+\frac{1-\eps}{2\sqrt{2}}$}\\
 Parameters: $\eps$ & & \\
 & & \\
\hline
\end{tabular}
\caption{Upper and lower bounds on the noisy quantum value of the CHSH game under depolarizing, unital, reset, and erasure noise. The lower bounds are shown in Corollary~\ref{cor:mu-chsh} and Lemma~\ref{lem:chsh-uni-lb}; the upper bounds are shown in Corollary~\ref{cor:chsh-comb}. A comparison of the two upper bounds, along with the lower bound, is given in Figure~\ref{fig:chsh-bounds} in Section~\ref{sec:game-ub}.}
\label{tab:chsh-ub-lb}
\end{table}

Our upper bounds are derived in terms of the maximal correlation of the noisy entangled state, a quantum generalization of the classical notion introduced by Beigi in \cite{Bei13}. The maximal correlation of EPR pairs under depolarizing noise was calculated in \cite{QY21}; we compute the maximal correlation of EPR pairs under the other noise models in Lemmas~\ref{lem:MC-unital}, \ref{lem:MC-reset}, and \ref{lem:MC-erasure}.

For unital noise (including depolarizing noise), we have a better upper bound which is the minimum of two functions
\[ \frac12+\frac1{4\sqrt{1-\rho^2}}\qquad \text{and} \qquad \frac12 + \frac{\sqrt{1+\rho}}{4} \]
of the maximal correlation $\rho$. For other noise models, it is the minimum of the first function (which diverges as $\rho\to1$) and the noiseless quantum value. This is because, for unital noise, we can upper bound the value of a game by an NPA local level 1 SDP, using an NPA local level 1 characterization of maximal correlation. This result holds for any game, not just CHSH, and is shown in Theorem~\ref{thm:noisy-NPA}. On the other hand, the upper bound in terms of the divergent function works for all of the noise models considered and uses special properties of the CHSH game; this is shown in Theorem~\ref{thm:chsh-c}.

The lower bounds all come from the obvious strategy of playing the original CHSH strategy on a single noisy EPR pair. It is unclear whether the upper bounds should match these lower bounds, since it is possible that more than one copy of the noisy EPR pair could help. However, the upper bound cannot be optimal in the depolarizing-noise case: depolarized EPR pairs are unentangled for $\mu \leq \frac13$, and the noisy quantum value at these values of $\mu$ should reduce to the classical value $0.75$, which our upper bound does not. The bound from \cite{FQXY26} stated in Theorem~\ref{thm:FQXY-CHSH} for depolarizing noise does match the depolarizing lower bound, but it requires the extra assumption about traceless observables.

\begin{remark}
All the upper bounds obtained in these results are in terms of the maximal correlation of the shared entangled state (which is the same for $m$ copies of a state as for a single copy). The fact that we started with EPR pairs is only used to calculate this maximal correlation under the various noise models --- so in principle the bound in terms of maximal correlation could hold for arbitrary noisy entangled states.
\end{remark}

\paragraph{Noisy parallel repetition theorems for non-local games.}
Our next set of results is parallel repetition theorems for non-local games with noisy entanglement. A parallel repetition theorem for a non-local game $G$ in a specific model (classical, quantum, noisy quantum) shows that if the value of a single copy of $G$ in that model is less than $1$, then the value of $n$ copies of $G$ played in parallel, denoted by $G^n$, goes down exponentially in $n$. This is of course true if the optimal strategy for $G^n$ is to just play the optimal strategy for $G$ $n$ times in parallel, but that is not true for all games. A parallel repetition theorem for the quantum value of general two-player games was only shown very recently \cite{OpenAI}, although the analogous classical theorem has been known for some time \cite{Raz95}.

We prove our noisy parallel repetition results for all of the noise models introduced except for biased reset noise. We first prove a weaker theorem that holds for any non-local game, and upper bounds the noisy quantum value of a parallel-repeated game $G^n$ in terms of the classical value of the parallel-repeated game. Then we prove a stronger result that works only for unique games, for which we define a quantity called a collision value $\col_p(G)$ (see Definition~\ref{def:col-value}) that depends on the winning predicate of the game as well as the input distribution. The upper bound for unique games is in terms of the classical value of $G^n$ as well as $\col_p(G)$. The two results are combined in the following theorem.
\begin{theorem}[Combination of Theorems~\ref{thm:parrep}, \ref{thm:parrep-gen}, \ref{thm:u-parrep}, \ref{thm:u-parrep-gen}]\label{thm:parrep-comb}
Let $G$ be a non-local game with input sets $\clX\times\clY$ and output sets $\clA\times\clB$, such that the classical value of $G^n$ is at most $w_G^n$. Then the quantum value of $G^n$ using noisy EPR pairs under depolarizing, unital, or erasure noise with maximal correlation $\rho$ is at most
\[ \left(2^{\rho\log(|\clA|\cdot|\clB|)}\cdot w_G\right)^{\frac{n}{1+\rho}}. \]

If $G$ is a unique game, then the noisy quantum value of $G^n$ is also upper bounded by
\[ O(n^2)\cdot\left((\col_{1+\rho}(G))^{4\rho}\cdot w_G^{1-\rho}\right)^{\frac{n}{1+3\rho}}. \]
\end{theorem}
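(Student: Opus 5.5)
Both bounds come from one reduction: a noisy-entangled strategy for $G^n$ is turned into a statement about how well Alice's and Bob's outputs can agree, and hypercontractivity for the noisy state converts that agreement into the classical value raised to a power. The tool is the quantum hypercontractivity / reverse-H\"older inequality for states with maximal correlation $\rho$. For depolarized and unital EPR pairs this follows from the maximal-correlation computations (Lemma~\ref{lem:MC-unital}) together with tensorization of maximal correlation; for erasure noise it follows from Lemma~\ref{lem:MC-erasure}. The inequality I would use reads as follows. For positive operators $P$ on Alice's side and $Q$ on Bob's side, with maximally mixed marginals,
\[ \Tr\bigl((P\otimes Q)\Psi^{\otimes m}\bigr)\le \|P\|_{1+\rho}\,\|Q\|_{1+\rho}, \]
where the norms are normalized Schatten norms.

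For the general-game bound, fix the strategy $\{P^{\bx}_{\ba}\}$, $\{Q^{\by}_{\bb}\}$. The winning probability is
\[ \bbE_{\bx,\by}\sum_{(\ba,\bb)\ \text{winning}}\Tr\bigl((P^{\bx}_{\ba}\otimes Q^{\by}_{\bb})\Psi^{\otimes m}\bigr). \]
Apply hypercontractivity termwise, and use $\|P\|_{1+\rho}\le \|P\|_1^{1/(1+\rho)}$ for $0\le P\le I$. Each term is then at most $(p_{\bx}(\ba)\,q_{\by}(\bb))^{1/(1+\rho)}$, where $p_{\bx}(\ba)=\Tr(P^{\bx}_{\ba})/d$ and $q_{\by}(\bb)$ are the output distributions obtained from the maximally mixed marginals. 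Apply H\"older's inequality to the sum over $(\bx,\by,\ba,\bb)$ with exponent $1+\rho$ on the product measure and the conjugate exponent on the indicator of winning pairs. This gives
\[ \Bigl(\bbE\sum_{\text{win}} p\,q\Bigr)^{1/(1+\rho)}\cdot\bigl((|\clA||\clB|)^n\bigr)^{\rho/(1+\rho)}. \]
The first factor is the winning probability of a product (unentangled) classical strategy, so it is at most $w_G^{n/(1+\rho)}$. Together the two factors give exactly $\bigl(2^{\rho\log(|\clA||\clB|)}w_G\bigr)^{n/(1+\rho)}$.

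For unique games, write $\pi_{xy}$ for the permutation defining the predicate. Winning means $\bb=\pi_{\bx\by}(\ba)$, so the winning probability becomes
\[ \bbE_{\bx,\by}\Tr\Bigl(\bigl(\textstyle\sum_{\ba}P^{\bx}_{\ba}\otimes Q^{\by}_{\pi(\ba)}\bigr)\Psi^{\otimes m}\Bigr), \]
and the $(|\clA||\clB|)^n$ loss above should disappear. I would first apply hypercontractivity, then H\"older in the form $\sum_{\ba}\|P_{\ba}\|_{1+\rho}\|Q_{\pi(\ba)}\|_{1+\rho}$. Next I would interpolate the $(1+\rho)$-norms between $\|\cdot\|_1$ (the classical distributions $p,q$) and the collision-type quantity $\sum_{\ba}\|P_{\ba}\|_{1+\rho}^{1+\rho}$. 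The latter is controlled by $\col_{1+\rho}(G)^n$ after averaging over inputs, using Definition~\ref{def:col-value}. A three-way H\"older split, with weights giving exponents $4\rho$, $1-\rho$ and overall $1/(1+3\rho)$, should separate a classical-strategy term bounded by $w_G^{n}$ from collision terms, one per player with two H\"older applications each.

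The main obstacle is that the collision term does not factor across the $n$ coordinates for an arbitrary entangled strategy. I would handle it by bucketing output probabilities dyadically into $O(n)$ levels for each player. This costs a union over $O(n)\times O(n)$ buckets, which accounts for the $O(n^2)$ factor. Within a bucket the norms are comparable, so the classical term is bounded by the value of an explicit classical strategy for $G^n$ restricted to that bucket.
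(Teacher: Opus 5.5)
Your general-game computation is correct and is the paper's argument in compressed form: the single H\"older step with exponents $1+\rho$ and $\frac{1+\rho}{\rho}$ is exactly what Lemmas~\ref{lem:Renyi} and~\ref{lem:event-transfer} do in two stages. Your justification of the starting inequality is wrong, though. The inequality $\Tr((P\otimes Q)\Psi^{\otimes m})\le\|P\|_{1+\rho}\|Q\|_{1+\rho}$ does not follow from $\MC(\Psi)=\rho$ plus tensorization of $\MC$, even with maximally mixed marginals. For example, let $X,Y$ be uniform on $[q]$, with $Y=X$ with probability $\lambda$ and $Y$ independent of $X$ otherwise. Then $\MC(X;Y)=\lambda$, yet $\Pr[X=Y=1]\geq\lambda/q>q^{-2/(1+\lambda)}$ for large $q$. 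Maximal correlation only gives a necessary condition on the hypercontractive exponent. For the three noise models, the inequality comes instead from Theorems~\ref{thm:depolarizing-hypercontractivity}, \ref{thm:unital-hypercontractivity} and~\ref{thm:erasure-hypercontractivity}, applied via \eqref{eq:choi-identity} and normalized H\"older with $r=1+\rho$, $s=\frac{1+\rho}{\rho}$. For erasure noise Bob's marginal is not maximally mixed, so your normalized-Schatten form does not apply as stated. The paper instead expands $(\Omega^\era_\eps)^{\otimes m}$ over erasure patterns $S\subseteq[m]$ and uses a second, scalar H\"older over $S$.

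The unique-game part has a genuine gap: nothing in it explains how $\col_{1+\rho}(G)$ enters. The quantity you propose, $\sum_{\ba}\|P^{\bx}_{\ba}\|_{1+\rho}^{1+\rho}$, is trivially at most $\sum_{\ba}\sfP_{\bA|\bx}(\ba)=1$. It involves neither Bob nor the permutations $\pi_{\bx\by}$, so it cannot yield a factor $\col_{1+\rho}(G)^n$. The missing ingredient is the collision list bound (Lemma~\ref{lem:col-list}). After dyadic bucketing, $\Lambda_{n,\rho}(G)\le\sum_{k,l}2^{-(k+l)/(1+\rho)}S_G(\clL^\A_k,\clL^\B_l)$, where the score $S_G$ is the average number of winning pairs in bucket pair $(k,l)$. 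Up to normalization, $S_G$ is the inner product of a bucket indicator $F$ with $(T^{\B\to\A}_G)^{\otimes n}H$. Cauchy--Schwarz, the $p\to2$ norm of this tensorized operator, and $|\clL^\A_{\bx,k}|\le2^{k+1}$ then give $S_G\le2^{(k+1)/2+(l+1)/p}(\col^{\B\to\A}_p(G))^n$, and symmetrically for $\A\to\B$. The coordinatewise structure comes from tensorization of the weighted $p\to2$ norm of the game's collision operator (Lemma~\ref{lem:tensor}, which needs $p=1+\rho\le2$), not from the strategy. Bucketing does not make anything factor; it only makes the marginals roughly constant so that the sum reduces to counting.

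Your exponents are also read off the statement rather than derived. They come from bounding, separately for each bucket pair, the minimum of three quantities by $u^\alpha v^\beta w^\gamma$. One is the hard-list bound (Lemma~\ref{lem:hard-list}), which after the weight $2^{-(k+l)/(1+\rho)}$ grows like $2^{\frac{\rho}{1+\rho}(k+l)}$. The other two are the collision bounds, which decay like $2^{-\frac{1-\rho}{2(1+\rho)}k}$ and $2^{-\frac{1-\rho}{2(1+\rho)}l}$ respectively, each in only one index. Summability with $\beta=\gamma$ forces $\alpha<\frac{1-\rho}{1+3\rho}$, and the boundary value gives the exponents $\frac{1-\rho}{1+3\rho}$ and $\frac{4\rho}{1+3\rho}$. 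Once that is in place, your truncation to $O(n)$ dyadic levels is a legitimate alternative source of the $O(n^2)$ factor: put $\alpha$ exactly on the boundary and discard levels below $2^{-Kn}$ for a large constant $K$. The paper instead sums the full double geometric series with $\alpha$ at distance $1/n$ from the boundary.
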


We note that these results are different from the most standard form of parallel repetition theorems in that the base of the exponent depends on the single-copy \emph{classical} value of $G$ rather than its single-copy noisy quantum value. However, we think this dependence makes sense: at a high enough noise level, the noisy EPR pairs in fact become unentangled, and therefore the noisy quantum value should become the classical value. For example, for depolarizing noise, the states are unentangled if $\rho = \mu \leq \frac13$ (where $\mu$ is the sole parameter characterizing the depolarizing noise channel). None of the cases of Theorem~\ref{thm:parrep-comb} are sharp enough to give the classical bound $w_G^n$ exactly at low enough $\rho$, but the base of the exponent depending on $w_G$ and a noise parameter does seem correct.

However, the base of the exponent in the general-game case of Theorem~\ref{thm:parrep-comb} may be larger than $1$, in which case the result is trivial. This is difficult to check for arbitrary games, since we usually do not know the exact exponential rate in classical parallel repetition. But we do know the classical rate for the CHSH game \cite{DS14,Amb26}; substituting this value, $2^{\rho\log(|\clA|\cdot|\clB|)}\cdot w_\chsh$ is smaller than $1$ for some values of $\rho\in[0,1]$.

A greater test of the nontriviality of Theorem~\ref{thm:parrep-comb} is checking whether the upper bounds on the noisy quantum value of $G^n$ are smaller than known upper bounds on the noiseless quantum value of $G^n$. Moreover, this has to happen at some noise level small enough that the state is not already unentangled, since at such a noise level, the noisy quantum value should be equal to the classical value. Putting in the values of the parameters for CHSH in the general game case of Theorem~\ref{thm:parrep-comb}, the base of the exponent is also smaller than $\cos^2(\pi/8)$ at $\rho\approx 0.03$, which is far smaller than the threshold $\rho=\mu=\frac13$ at which depolarized EPR pairs become unentangled.

This is why we prove the unique games case of our result, which gives a much sharper bound. First, the base of the exponent for unique games is always smaller than $1$ ($\rho$ is in $[0,1]$ and so is $\col_{1+\rho}(G)$ --- see Definition~\ref{def:col-value}). Moreover, computing the value of $\col_{1+\rho}(G)$ for the CHSH game, we see that the base of the exponent for CHSH in the unique games upper bound is smaller than $\cos^2(\pi/8)$ as long as $\mu\leq 0.36$. Thus we have the following theorem for the CHSH game.
\begin{theorem}[Combination of Theorems~\ref{thm:chsh-parrep} and \ref{thm:chsh-parrep-gen}]\label{thm:chsh-parrep-comb}
For depolarizing, unital, and erasure noise, when the maximal correlation of EPR pairs under these noise models is $\rho$, the quantum value of $\chsh^n$ with noisy EPR pairs is at most $\alpha_\rho^n$, for some $\alpha_\rho$ that is less than $\cos^2(\pi/8)$ for $\rho\leq0.36$.
\end{theorem}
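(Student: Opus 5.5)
This theorem should follow by specializing the unique-games case of Theorem~\ref{thm:parrep-comb} to $G=\chsh$ and doing a numerical check. CHSH is a unique game: for each $(x,y)$ and each $a$ there is exactly one winning $b=a\oplus xy$. With $|\clA|=|\clB|=2$, uniform inputs on $\{0,1\}^2$, and $w_\chsh$ the known classical parallel-repetition rate of \cite{DS14,Amb26}, we have $\mathrm{val}_c(\chsh^n)\le w_\chsh^n$. The unique-games bound then gives
\[
\text{noisy value of } \chsh^n \;\le\; O(n^2)\cdot\beta_\rho^{\,n},\qquad \beta_\rho:=\left((\col_{1+\rho}(\chsh))^{4\rho}\, w_\chsh^{1-\rho}\right)^{\frac{1}{1+3\rho}} .
\]

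The first step is to evaluate $\col_{1+\rho}(\chsh)$ from Definition~\ref{def:col-value}. I expect this to be a Rényi-type collision quantity over the uniform input distribution and the permutations $\pi_{xy}(a)=a\oplus xy$. By the symmetry of CHSH, it should reduce to a closed form in $\rho$ such as $2^{-\rho/(1+\rho)}$ or a small average of such terms.

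The second step is to absorb the polynomial factor. Pick any $\alpha_\rho$ with $\beta_\rho<\alpha_\rho$. Then $O(n^2)\beta_\rho^n\le \alpha_\rho^n$ for all large $n$. For the finitely many small $n$, use the trivial bound that the noisy value is at most the noiseless quantum value, which is at most $1$. More carefully, one can enlarge $\alpha_\rho$ slightly, or take $\alpha_\rho=\max(\beta_\rho+\delta,\ \text{constant adjustments})$ and check the finitely many $n$ against the bound $\cos^2(\pi/8)^n$. That noiseless bound holds because CHSH is an XOR game and its quantum value is perfectly multiplicative. Using the fact that the noisy value never exceeds the noiseless one, we may take
\[
\alpha_\rho:=\inf\bigl\{\alpha: \text{the noisy value of } \chsh^n \le \alpha^n \text{ for all } n\bigr\}.
\]
This infimum is at most $\max(\cos^2(\pi/8),\beta_\rho)$ in the limit, and it is strictly less than $\cos^2(\pi/8)$ whenever $\beta_\rho<\cos^2(\pi/8)$, after the constant is handled. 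Handling this constant requires some care, since a prefactor $C n^2$ with $C>1$ forbids a clean bound for all $n$. The natural fix is to state $\alpha_\rho$ as the limiting exponential rate, or to combine with the multiplicativity bound to obtain a strict inequality at every $n$. I would follow whatever convention the paper's Theorems~\ref{thm:chsh-parrep} and \ref{thm:chsh-parrep-gen} use.

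The final step, which I expect to be the main obstacle, is verifying that $\beta_\rho<\cos^2(\pi/8)\approx0.8536$ for all $\rho\in[0,0.36]$. Taking logarithms, this is the condition
\[
4\rho\log\col_{1+\rho}(\chsh)+(1-\rho)\log w_\chsh<(1+3\rho)\log\cos^2(\pi/8).
\]
At $\rho=0$ it holds because $w_\chsh<\cos^2(\pi/8)$; the classical rate is at most $3/4$-ish per copy. The approach is to show that the left-hand side minus the right-hand side is monotone, or at least convex, in $\rho$, so that checking the endpoint $\rho=0.36$ numerically suffices. Failing that, a fine grid check with a Lipschitz bound would do. The threshold $0.36$ is exactly where this inequality becomes tight, so the endpoint computation with the precise value of $w_\chsh$ must be done carefully.
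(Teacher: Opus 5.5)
Your outline matches the paper's: specialize the unique-games bound of Theorem~\ref{thm:parrep-comb} to CHSH with $w_\chsh=(1+\sqrt5)/4$, absorb the $O(n^2)$ factor for large $n$, then check numerically. However, both steps that carry the actual content are missing.

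\textbf{The collision value.} $\col_{1+\rho}(\chsh)$ has no closed form of the kind you guess. By Lemma~\ref{lem:chsh-col} it is a nonconvex maximization of $\frac12\bigl[\sum_i z_i^2+(z_0+z_1)(z_2+z_3)\bigr]$ over $z\geq0$ with $\sum_i z_i^p=1$. At $p=1.36$ the maximum is attained at an asymmetric stationary point of the form $(u,u,t,1)$ with $t\approx78.45$. A symmetric ansatz only gives $\col_{1.36}(\chsh)\geq 2^{(1-2/p)/2}\approx0.8495$, whereas the true value is about $0.8736$. The final margin is tiny ($0.8534679$ versus $\cos^2(\pi/8)\approx0.8535534$), so any formula of the type you suggest would give an invalid bound. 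The paper has to classify all interior stationary points and certify each family with exact rational interval arithmetic (Appendix~\ref{app:chsh-cert}) to obtain $\col_{1.36}(\chsh)<0.873606$.

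\textbf{The full range $\rho\in[0,0.36]$.} You propose proving monotonicity or convexity of the log-difference in $\rho$, or a grid-plus-Lipschitz check. Either would need certified quantitative control of $\rho\mapsto\col_{1+\rho}(\chsh)$ at many points, and the paper's numerics indicate this function is not even monotone on this range.

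The paper needs none of this. Since $\sfP_{\bA|\bx}(\ba)\sfP_{\bB|\by}(\bb)\leq1$, the pointwise bound $\sfP_{\bA\bB|\bx\by}(\ba,\bb)\leq\bigl(\sfP_{\bA|\bx}(\ba)\sfP_{\bB|\by}(\bb)\bigr)^{1/(1+\rho)}$ implies the same bound with exponent $1/1.36$ for every $\rho\leq0.36$. In other words, the quantity $\Lambda_{n,\rho}$ of \eqref{eq:lambda-n} is non-decreasing in $\rho$. Moreover, the hard-list and collision-list bounds use nothing about the strategy except that its marginals are probability distributions. So every strategy at noise level $\rho\leq0.36$ is controlled by the $p=1.36$ analysis. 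A single certified evaluation at $\rho=0.36$ then proves the theorem on the whole range, whether or not your $\beta_\rho$ is monotone.

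Two minor corrections. First, $w_\chsh\approx0.809$, not roughly $3/4$. Second, comparing small $n$ against $\cos^{2n}(\pi/8)$ cannot yield $\alpha_\rho^n$ with $\alpha_\rho<\cos^2(\pi/8)$, so the bound holds, as in the paper, for all sufficiently large $n$.
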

The above theorem only gives a parallel repetition rate smaller than the noiseless rate in the range $\rho\in (0.33,0.36]$, rather than the whole range $(0.33, 1]$ where EPR pairs remain entangled under depolarizing noise. Despite this limitation, we consider this a significant result, since getting the exact base of the exponent in parallel repetition is quite difficult.

\begin{remark}
Maximal correlation is not definitionally used in deriving Theorems~\ref{thm:parrep-comb} and \ref{thm:chsh-parrep-comb}. The relevant parameters here are actually exponents that appear in hypercontractivity theorems for all three noise models, and Theorems~\ref{thm:parrep}--\ref{thm:chsh-parrep-gen} are actually stated in terms of these. It can be shown that the exponents in these hypercontractivity inequalities must at least be the maximal correlation, but for these three models, it so happens that these exponents are actually equal to the maximal correlation; see Theorems~\ref{thm:depolarizing-hypercontractivity}, \ref{thm:unital-hypercontractivity} and \ref{thm:erasure-hypercontractivity}.
\end{remark}

\subsubsection{Communication complexity results}
We now turn to describing our results regarding the power of noisy entanglement in communication complexity. We present two separations with the perfect entanglement model, one concerning the communication complexity of a relational problem and another concerning the amount of noisy entanglement required to compute the Equality function with constant communication. 

\paragraph{Separation between communication with noisy vs noiseless entanglement.} Our first result for communication complexity is a separation between the quantum communication complexity models where Alice and Bob are allowed to share perfect EPR pairs, and where they are allowed to share noisy EPR pairs. We allow only classical communication in both models. Unlike in non-local games, sharing EPR pairs is sufficient (up to a small multiplicative factor) for communication, by a result of \cite{CH19}. Therefore, we consider this a sufficiently general model for quantum communication with noisy entanglement.

\begin{theorem}[Restated in Theorem~\ref{thm:Q*-Qnoise}]\label{thm:comm-sep}
Let $\Psi$ be the result of applying depolarizing, unital or erasure noise to a perfect EPR pair, and suppose $\MC(\Psi) < 0.36$. There exists a relational problem on $n$-bit inputs that can be solved with no communication if Alice and Bob share $n$ perfect EPR pairs, but requires $\Theta(n)$ two-way classical communication if Alice and Bob share arbitrarily many copies of $\Psi$.  
\end{theorem}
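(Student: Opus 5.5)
The relational problem will be the $n$-fold CHSH relation. Inputs are $x\in\bits^n$ to Alice and $y\in\bits^n$ to Bob. Valid outputs are pairs $(a,b)\in\bits^n\times\bits^n$ such that $a_i\oplus b_i = x_i y_i$ for at least a $(\cos^2(\pi/8)-\delta)$ fraction of coordinates $i$, for a small constant $\delta>0$ chosen later.

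The upper bound needs no communication. Alice and Bob play the optimal CHSH strategy independently on each of $n$ perfect EPR pairs. Each coordinate is won with probability $\cos^2(\pi/8)$, independently given the inputs. By a Chernoff bound the fraction of wins is at least $\cos^2(\pi/8)-\delta$ except with probability $e^{-\Omega(\delta^2 n)}$, and this holds for every input. So the relation is solved with zero communication, and the trivial protocol in which Alice sends $x$ shows the $O(n)$ side of $\Theta(n)$.

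For the lower bound, suppose a $c$-bit two-way protocol $P$ using copies of $\Psi$ solves the relation with success probability $2/3$ on uniform inputs. I would convert $P$ into a non-local strategy with no communication by guessing the transcript. Alice and Bob use shared randomness (which is just a separable state, so it can be folded into the noisy-state model, or replaced by a fixed best choice) to pick a uniformly random transcript $\tau\in\bits^c$. Each player runs $P$ as if the messages were $\tau$. At the end each player checks local consistency of their own messages with $\tau$ and outputs accordingly. On the event that $\tau$ equals the true transcript, which has probability at least $2^{-c}$ when one guesses the messages both parties would send, the outputs are distributed exactly as in $P$. This is the standard argument, but it must be done carefully for quantum measurements. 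Since the communication is classical, each party's next message is obtained from a measurement on their share, so the guessed run coincides with the real run on the correct-guess event. The resulting zero-communication strategy, with noisy EPR pairs, wins the ``threshold'' version of $\chsh^n$ with probability at least $\frac23 2^{-c}$. Inconsistent rounds can be assigned arbitrary outputs, which only adds winning probability.

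It then remains to show that the noisy value of the threshold game, where one must win a $(\cos^2(\pi/8)-\delta)$ fraction of the coordinates, is $2^{-\Omega(n)}$. Theorem~\ref{thm:chsh-parrep-comb} gives $\alpha_\rho^n$ with $\alpha_\rho<\cos^2(\pi/8)$ for $\rho<0.36$, but only for winning all coordinates. I would extend this with the standard reduction from threshold to full repetition. Choose a random subset $S$ of $k=\gamma n$ coordinates. The probability that all of $S$ lies among the winning coordinates is about $(\cos^2(\pi/8)-\delta)^k$. Alternatively, and more cleanly, I would rerun the proof of the unique-game theorem (Theorem~\ref{thm:u-parrep}) with the threshold predicate, since hypercontractivity-type arguments typically give threshold versions directly with the classical threshold value. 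For CHSH, the classical probability of winning a $q$ fraction is $2^{-D(q\|3/4)n}$, and I would substitute this for $w_G^n$. Choosing $\delta$ small so that the bound stays below $2^{-\epsilon n}$ for $\rho<0.36$ then gives $\frac23 2^{-c}\le 2^{-\epsilon n}$, so $c=\Omega(n)$.

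I expect this last step to be the main obstacle: obtaining a concentration or threshold parallel repetition bound rather than the all-win bound, with a base still beating $\cos^2(\pi/8)$. The route I would try first is to check that the collision-value and hypercontractivity argument of Theorem~\ref{thm:u-parrep} only uses the classical value of the predicate on $n$ coordinates. In that case the threshold predicate, whose classical value decays at rate $D(\cos^2(\pi/8)-\delta\,\|\,3/4)$, can be plugged in. The polynomial factor $O(n^2)$ is harmless.
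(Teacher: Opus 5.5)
Your overall architecture matches the paper's. The relation is the threshold game $\chsh^{t/n}$ with $t$ slightly below $\cos^2(\pi/8)\,n$. The zero-communication upper bound is Chernoff over $n$ independent optimal CHSH plays. A $c$-bit protocol over copies of $\Psi$ is turned into a non-local strategy losing only a $2^{-c}$ factor by guessing the transcript.

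The gap is the threshold bound, and the route you say you would try first does not go through. The proof of Theorem~\ref{thm:u-parrep} does not use only the classical value of the predicate. The collision-list bound (Lemma~\ref{lem:col-list}) needs the predicate to be unique, so that the score is an inner product against $(T^{\B\to\A}_G)^{\otimes n}$, whose norm tensorizes by Lemma~\ref{lem:tensor}. The threshold predicate is not unique: each $\ba$ is compatible with $\sum_{j\le n-t}\binom nj=2^{\Omega(n)}$ strings $\bb$. Its collision operator is a sum over winning sets $W$ of $T_{\mathrm{win}}^{\otimes W}\otimes T_{\mathrm{lose}}^{\otimes W^c}$. Bounding this by the triangle inequality costs a factor of roughly $2^{h(1-q)n}$, where $h$ is binary entropy and $h(0.15)\approx 0.6$, and this overwhelms $\col_{1+\rho}(\chsh)^n\approx 0.87^n$. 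The general-game R\'enyi argument does transfer arbitrary events, but its $2^{2\rho n/(1+\rho)}$ loss swamps any classical threshold rate here.

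Separately, the classical threshold value is not $2^{-D(q\|3/4)n}$. Already at $q=1$ this would give $(3/4)^n$, whereas $\omega(\chsh^n)$ behaves like $((1+\sqrt5)/4)^n$ (Theorem~\ref{thm:chsh-n}). The correct classical bound is the one in Lemma~\ref{lem:chsh-thres}, centred at $(1+\sqrt5)/4$.

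Your random-subset route is the right one and is essentially what the paper does. What it is missing is the bound on winning a fixed subset. For any $S\subseteq[n]$, condition the $n$-fold noisy strategy on the inputs outside $S$. Because the CHSH input distribution is a product, the inputs on $S$ stay uniform and no shared randomness is needed. Each conditional strategy is a noisy strategy for $\chsh^{|S|}$ on the same copies of $\Psi$, so the probability of winning all of $S$ is at most $\alpha_\rho^{|S|}$ by Theorem~\ref{thm:chsh-parrep-comb}.

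The paper then applies the generalized Chernoff bound (Theorem~\ref{thm:gen-chernoff}) with $\delta=\alpha_\rho$, mirroring Lemma~\ref{lem:chsh-thres}. Equivalently, with $W$ the set of won coordinates and $|S|=\gamma n$, your counting gives
\[ \Pr[|W|\ge qn]\,(q-\gamma)^{\gamma n}\le \bbE_S\Pr[S\subseteq W]\le\alpha_\rho^{\gamma n}. \]
This is $2^{-\Omega(n)}$ once $\delta,\gamma$ are small enough that $\cos^2(\pi/8)-\delta-\gamma>\alpha_\rho$. Your version has the small advantage of using Theorem~\ref{thm:chsh-parrep} only at large $|S|$, so its ``large enough $n$'' caveat and $O(n^2)$ factor cause no trouble. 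With this in place, $\frac23 2^{-c}\le 2^{-\Omega(n)}$ gives $c=\Omega(n)$, as you intended.
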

This theorem follows from Theorem~\ref{thm:chsh-parrep-comb} using an argument relating non-local games with some communication leakage to communication complexity, which was previously used in \cite{JK25,HLM25}. We prove a $t$-out-of-$n$ threshold parallel repetition theorem for the noisy quantum value of the CHSH game. The relation achieving the separation is the same as the $t$-out-of-$n$ CHSH game, whose input and output lengths are both $n$.

Theorem~\ref{thm:comm-sep} thus shows that the communication models with perfect and imperfect shared entanglement are not equivalent, at least if the base state $\Psi$ is sufficiently noisy. Moreover, it implies a lower bound on the interactive communication cost of distilling $n$ EPR pairs, given arbitrarily many noisy EPR pairs. This is because if it were possible to distill $n$ EPR pairs with $o(n)$ communication, then it would be possible to then use the zero-communication noiseless strategy from Theorem~\ref{thm:comm-sep} in order to solve the relation for a total of $o(n)$ communication. We thus have the following corollary of Theorem~\ref{thm:comm-sep}, along with known entanglement distillation protocols.
\begin{corollary}\label{cor:distill-lb}
Let $\Psi$ be as in the statement of Theorem~\ref{thm:comm-sep}. Suppose that $\Psi$ has nonzero distillable entanglement. If Alice and Bob share arbitrarily many copies of $\Psi$, the two-way classical communication cost of distilling $n$ near-perfect EPR pairs is $\Theta(n)$ bits.
\end{corollary}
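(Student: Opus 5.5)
The corollary has two halves: a lower bound of $\Omega(n)$ bits for distillation, which we get by reduction to Theorem~\ref{thm:comm-sep}, and a matching upper bound of $O(n)$ bits, which we get from known protocols.

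\textbf{Lower bound.} Suppose some LOCC protocol $\Pi$ uses $c(n)$ bits of two-way classical communication and arbitrarily many copies of $\Psi$. Suppose it outputs, on registers held by Alice and Bob, a state $\sigma$ with fidelity at least $1-\delta$ to $\ket{\Phi}^{\otimes n}$, for a small constant $\delta$. Let $R$ be the relational problem from Theorem~\ref{thm:comm-sep}. Its zero-communication strategy with $n$ perfect EPR pairs succeeds with some probability, say at least $1-\eta$ on every input, where $\eta$ is a small constant. The protocol for $R$ with noisy entanglement is then:
\begin{enumerate}
\item run $\Pi$, which does not depend on the inputs;
\item apply the zero-communication measurements of the noiseless strategy to $\sigma$.
\end{enumerate}

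Fidelity $1-\delta$ implies trace distance at most $\sqrt{\delta}$. Trace distance does not increase under the local measurements, so on each input the success probability is at least $1-\eta-\sqrt{\delta}$. This remains above the error threshold defining the $\Theta(n)$ lower bound in Theorem~\ref{thm:comm-sep} once $\delta$ and $\eta$ are small constants. That theorem then gives $c(n)=\Omega(n)$.

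The step needing care is matching error parameters. Theorem~\ref{thm:Q*-Qnoise} is proved through a $t$-out-of-$n$ threshold CHSH relation. The noiseless strategy wins each coordinate with probability $\cos^2(\pi/8)$. With $t$ set slightly below $n\cos^2(\pi/8)$, concentration gives success $1-e^{-\Omega(n)}$, and the lower bound should hold for any constant success probability above $\alpha_\rho^{\Omega(n)}$-type bounds. I would check that the restated theorem gives the lower bound against protocols with some constant success probability, say $1/2$. This is compatible with "near-perfect" fidelity $1-\delta$ for a fixed constant $\delta$, and hence for $\delta\to 0$ as well.

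\textbf{Upper bound.} Since $\Psi$ has nonzero distillable entanglement $E_D(\Psi)>0$, standard hashing/recurrence-based one-way or two-way distillation protocols \cite{BBP+96,WTB17} produce $n$ EPR pairs of fidelity $1-\delta$ from $O(n)$ copies of $\Psi$. For a fixed $\Psi$ and $\delta$, their classical communication is linear in the number of copies consumed, hence $O(n)$. If a particular protocol only reaches positive yield after a constant-size preprocessing step (such as recurrence rounds, needed when the hashing yield is nonpositive), that step still costs $O(1)$ bits per copy. So the total communication is $O(n)$, and combined with the lower bound this gives $\Theta(n)$.
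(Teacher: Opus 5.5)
Your proof is correct and follows the same route as the paper. The lower bound composes an $o(n)$-communication distillation protocol with the zero-communication noiseless strategy for $\wchsh^{t/n}$, contradicting Theorem~\ref{thm:comm-sep}; the upper bound is cited from standard distillation protocols. You also supply bookkeeping the paper leaves implicit: fidelity is converted to trace distance, and the bound $2^C e^{-\eps^2 n/2}$ in the proof of Theorem~\ref{thm:Q*-Qnoise} holds for any constant success probability and also with shared randomness, so it covers distillation protocols that use randomized hashing or twirling.
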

As far as we know, this is the first time that the known entanglement distillation protocols are shown to be optimal in terms of two-way communication cost, for certain choices of the state $\Psi$. \cite{AY04} considered the problem of distilling one EPR pair out of a state close to $m$ EPR pairs, and showed a communication lower bound for this. But their result does not extend to distilling $n$ EPR pairs, since that would require a direct sum-like property. \cite{DY24} derived an $\Omega(n)$ lower bound on the one-way (quantum) communication complexity for extracting a shared $n$-bit random string out of arbitrarily many depolarized EPR pairs, and their result applies to any maximal correlation $\rho \in [0,1)$. Since sharing $n$ EPR pairs trivially lets Alice and Bob get $n$ shared random bits, this also implies an $\Omega(n)$ communication lower bound on distilling $n$ EPR pairs for the maximal-correlation range $\rho \in [\frac13,1)$. Our result is stronger than \cite{DY24} due to applying to two-way communication and working for more noise models.

\paragraph{Separation between communication with noisy entanglement and perfect shared randomness.} Theorem~\ref{thm:comm-sep} holds for sufficiently high noise, which results in low maximal correlation. For sufficiently low noise, there is also a straightforward separation between quantum communication with noisy shared entanglement, for all four noise models, and randomized communication with perfect shared randomness. We formally prove this in Theorems~\ref{thm:Rpub-Qt}--\ref{thm:Rpub-Quni} in Section~\ref{sec:comm-lb}, using classical parallel repetition for CHSH \cite{DS14,Amb26} and the lower bounds on noisy quantum values of CHSH from Table~\ref{tab:chsh-ub-lb}.

\paragraph{New upper and lower bounds on the noisy entanglement cost of protocols.}
We also revisit the results of \cite{CGMS15} that were mentioned in the introduction. We prove the following strengthened version of their simulation, which builds on an earlier simulation due to \cite{Shi_2008}. 
\begin{theorem}\label{thm:canonnebetter}
Fix an integer $d\geq1$, and let $R^\A,R^\B$ be dependent finite-valued random variables with maximal correlation $\rho>0$. Any two-way entanglement-assisted communication protocol with $n$-bit inputs that communicates $C$ bits and produces one bit can be approximated by a classical SMP protocol with communication $\rho^{-6d}2^{O(C+d)}$ that uses $\rho^{-6d}2^{O(C+d)}n^{1/d}$ independent copies of $(R^\A,R^\B)$. The hidden constants depend only on the distribution of $(R^\A,R^\B)$ and the desired constant approximation error, apart from the displayed dependence on $\rho$.
\end{theorem}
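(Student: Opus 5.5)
Any protocol of this kind computes a one-bit output whose acceptance probability is a function $P(x,y)$. We will show that $P$ is well approximated by a low-dimensional inner product of bounded vectors, and then estimate that inner product using noisy shared randomness.

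\textbf{Step 1: reduction to an inner product.} Following Shi and Zhu's simulation \cite{Shi_2008}, we write the acceptance probability of any $C$-bit two-way entanglement-assisted protocol as
\[ P(x,y)=\langle u_x, v_y\rangle , \]
where $u_x, v_y$ are real vectors of norm at most $2^{O(C)}$. Concretely, sum over the $2^{C}$ transcripts; for each transcript, Alice's and Bob's operators give $\Tr\big[(A_{x,t}\otimes B_{y,t})\psi\big]$, and after vectorizing this becomes an inner product.

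\textbf{Step 2: dimension reduction.} We apply a Johnson--Lindenstrauss style random projection to dimension $k=2^{O(C)}$ (up to $\eps$-dependent constants). To fix the projection so that it works for all inputs, we take a union bound over $2^{n}$ inputs, or better, use Newman-type derandomization. The $n^{1/d}$ dependence suggests a recursive or tensor trick. I would make it concrete by splitting the $n$ input bits into $d$ blocks of $n/d$ bits each, and then encoding each coordinate of the projected vectors as a degree-$d$ product of block functions. Under the maximal-correlation hypercontractive structure, each factor of this product costs a factor $\rho^{-O(1)}$, which gives the $\rho^{-6d}$ term. This exponent bookkeeping is the step I am least sure of. The alternative is to first produce roughly $n^{1/d}$ perfectly shared bits from the noisy samples, but that route may not be possible.

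\textbf{Step 3: estimating the inner product from correlated samples.} We use the following key fact: by the definition of maximal correlation, there exist functions $f,g$ of $R^\A,R^\B$ with mean zero, unit variance, and $\mathbb{E}[f(R^\A)g(R^\B)]=\rho$. For a bounded $k$-dimensional vector pair $(a,b)$, Alice draws, for each coordinate $i$, an independent copy and sends a quantized version of $a_i f(R^\A_i)$. Bob computes $\sum_i b_i g(R^\B_i)$ in the same way. In the SMP setting both parties send these quantities to the referee, who multiplies them and gets an unbiased estimate of $\rho\langle a,b\rangle$. To make the product bilinear in the shared coordinates, we use $k$ copies per repetition, so the estimate is $\rho\langle a,b\rangle$ with variance $O(\|a\|^2\|b\|^2)$. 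We repeat $O(\rho^{-2}\cdot 2^{O(C)})$ times to estimate to additive error $\eps$, and round each message to $O(C)$ bits. The referee then thresholds the estimate of $P$. Message quantization works because the functions $f,g$ take finitely many values, since the variables are finite-valued.

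\textbf{Step 4: counting resources.} We multiply the number of coordinates, repetitions and blocks. The main obstacle is Step 2: getting the exact $n^{1/d}$ dependence together with the $\rho^{-6d}$ exponent. I expect it to require a careful multi-level Newman argument in which each level reduces the randomness from $m$ to roughly $m^{1/2}$-type savings while paying a $\rho^{-6}$ factor in the variance of the product estimator. Step 3 by itself only yields $\poly(n)$ samples.
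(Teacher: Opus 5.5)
\textbf{There is a genuine gap.} Your Steps 1 and 3 contain correct ingredients: the transcript decomposition into an inner product of vectors of norm $2^{O(C)}$, and centred witnesses $f,g$ with $\bbE[f(R^\A)g(R^\B)]=\rho$. However, Step 2 is false as stated, and the mechanism you propose for the $n^{1/d}$ and $\rho^{-6d}$ factors does not work.

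Alice and Bob share no perfect randomness, so the projection must be fixed in advance and must work for all $2^{n+1}$ vectors $\{u_x\}\cup\{v_y\}$ at once. The union bound you mention then gives, via Lemma~\ref{lem:jl-inner-products}, dimension $D=O(n2^{2C}/\eps^2)$, not $2^{O(C)}$. No Newman-type argument removes the dependence on $n$. In fact dimension $2^{O(C)}$ is impossible:
\begin{itemize}
\item for $\EQ_n$, the packing bound (Lemma~\ref{lem:diagonalpackingbound} with $\Lambda=I$) forces dimension $\Omega(n)$;
\item feeding $2^{O(C)}$-dimensional vectors into your Step~3 would give a constant-communication protocol for $\EQ_n$ using an $n$-independent number of samples, contradicting Theorem~\ref{thm:lowerboundnumberofcopies}.
\end{itemize}
With the correct $D=\Theta(n)$, your Step~3 (one fresh sample per coordinate) is just the degree-$1$ case and uses $\Theta(n)$ samples per repetition. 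Splitting the \emph{input} bits into $d$ blocks cannot fix this. The projected vectors $u_x$ are arbitrary functions of $x$ with no product structure across input blocks. No multi-level Newman argument is needed either.

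The missing idea is to impose the tensor structure on the \emph{noisy samples} instead. From $m$ copies, the $\binom md$ multilinear monomials $\prod_{j\in S}f(R^\A_j)$, $S\in\binom{[m]}{d}$, are orthonormal because $f$ is centred with unit variance; the same holds for the monomials $\prod_{j\in S}g(R^\B_j)$. Moreover $\bbE\bigl[\prod_{j\in S}f(R^\A_j)\prod_{j\in S'}g(R^\B_j)\bigr]=\rho^d\delta_{S,S'}$. Hence $Z^\A=\sum_S u_S\prod_{j\in S}f(R^\A_j)$ and $Z^\B=\sum_S v_S\prod_{j\in S}g(R^\B_j)$ satisfy $\bbE[Z^\A Z^\B]=\rho^d\langle u,v\rangle$. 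The requirement $\binom md\geq D$ needs only $m=d\lceil D^{1/d}\rceil$ samples per repetition, and this is the source of $n^{1/d}$.

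Your quantization step is also unjustified. Even though $f$ is finite-valued, $Z^\A$ is not bounded independently of $D$. The paper handles this in three steps:
\begin{itemize}
\item Clip at $T$ and control the clipping bias with the degree-$d$ fourth-moment estimate $\bbE[Z^4]\leq(2\sqrt3)^{4d}\bbE[f(R^\A)^4]^d\,\bbE[Z^2]^2$ (Corollary~\ref{cor:finite-space-hypercontractivity}). This permits $T=4\HC(R^\A,R^\B)^{2d}/(\eps\rho^d)$.
\item Round each clipped value to a single sign bit with private randomness, so that $\bbE[T^2a_ib_i/\rho^d]$ lies within $\eps/2$ of $\langle u,v\rangle$.
\item Apply Hoeffding with range $T^2/\rho^d$. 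This needs $k=\Theta(T^4\log(2/\delta)/(\eps^2\rho^{2d}))=2^{O(d)}\eps^{-6}\rho^{-6d}\log(2/\delta)$ repetitions.
\end{itemize}
So $\rho^{-6d}$ is $\rho^{-4d}$ from $T^4$ times $\rho^{-2d}$ from the signal shrinking to $\rho^d$, not a per-block factor. Taking $\eps=\Theta(2^{-C})$ gives communication $2k=\rho^{-6d}2^{O(C+d)}$ and $km=\rho^{-6d}2^{O(C+d)}n^{1/d}$ samples.
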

This strengthens Theorem~\ref{thm:canonne} in several directions: the simulation applies to entanglement-assisted protocols, produces an SMP protocol rather than a one-way protocol, and gives an explicit bound on the number of correlated samples. It is also stated directly for arbitrary finite-valued dependent sources and makes the dependence on maximal correlation explicit. Applying this simulation to constant-communication classical protocols for equality gives the following corollary.
\begin{corollary}
Fix dependent finite-valued random variables $R^\A,R^\B$ with maximal correlation $\rho>0$ and an integer $d\geq1$. There exists a family of classical SMP protocols for $\EQ_n$ with communication $\rho^{-6d}2^{O(d)}$ and imperfect-shared-randomness cost $\rho^{-6d}2^{O(d)}n^{1/d}$.
\end{corollary}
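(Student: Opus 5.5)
The corollary should follow by plugging a constant-communication public-coin protocol for $\EQ_n$ into the simulation of Theorem~\ref{thm:canonnebetter}. A classical protocol with perfect shared randomness is a special case of an entanglement-assisted protocol: Alice and Bob can measure shared EPR pairs in the computational basis to get uniform shared bits. So it is enough to exhibit a two-way (in fact SMP) public-coin protocol for $\EQ_n$ with $C=O(1)$ bits of communication and one output bit.

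\textbf{The base protocol.} For a constant error $\delta$, Alice and Bob use shared randomness to pick $k=\lceil\log_2(1/\delta)\rceil$ uniformly random strings $r_1,\dots,r_k\in\bits^n$. Alice sends $\langle x,r_i\rangle \bmod 2$ for each $i$, and Bob sends $\langle y,r_i\rangle \bmod 2$ for each $i$, either to a referee or to the other party. The output is $1$ exactly when all $k$ pairs agree. If $x=y$, this is always correct. If $x\neq y$, each pair agrees with probability $1/2$, so the error is at most $2^{-k}\le\delta$. This gives $C=O(1)$, with the constant depending only on $\delta$. Since the protocol uses perfect shared randomness, which can be generated from shared EPR pairs, it is a valid entanglement-assisted protocol.

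\textbf{Applying the simulation.} Theorem~\ref{thm:canonnebetter} with this $C$ and the given $d$ produces a classical SMP protocol. Its communication is $\rho^{-6d}2^{O(C+d)}=\rho^{-6d}2^{O(d)}$, and it uses $\rho^{-6d}2^{O(C+d)}n^{1/d}=\rho^{-6d}2^{O(d)}n^{1/d}$ independent copies of $(R^\A,R^\B)$. The constant $C$ is absorbed into the $O(d)$ term, because $d\ge1$ gives $C+d=O(d)$.

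\textbf{Error bookkeeping.} The only point needing care is the error. The simulation approximates the original protocol's acceptance probability on every input up to a constant additive error $\eta$. We therefore choose $\delta$ and $\eta$ so that $\delta+\eta<1/3$; for example, $\delta=\eta=1/10$ gives worst-case bounded error. Both are fixed constants, so the hidden constants in Theorem~\ref{thm:canonnebetter} depend only on the distribution of $(R^\A,R^\B)$, as the corollary states.
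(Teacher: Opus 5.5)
Your proposal is correct and follows the paper's route: the corollary comes from feeding a constant-communication public-coin protocol for $\EQ_n$ (an entanglement-assisted protocol, since measuring shared EPR pairs gives the shared randomness) into Theorem~\ref{thm:canonnebetter}, with $C=O(1)$ absorbed into $2^{O(d)}$. One small point on the final step: the simulation actually outputs an estimate of the acceptance probability that can fail with a small constant probability, so the referee should threshold that estimate at $1/2$; choosing $\delta$, $\eta$ and that failure probability all to be small constants handles this, as you intended.
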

This shows that any number of copies that grows polynomially with $n$ suffices for a constant communication protocol, thereby improving on the protocol of \cite{CGMS15}, which turns out to consume $\Theta(n)$ copies. We also show that the polynomial dependence on $n$ is qualitatively optimal, even if the players share noisy entanglement, which is a stronger resource.
\begin{theorem} \label{thm:lowerboundnumberofcopies}
Let $\Psi$ be an entangled state with maximal correlation $\rho < 1$. Any family of two-way bounded-error protocols for $\EQ_n$ with $O(1)$ classical communication that use $N(n)$ independent copies of $\Psi$ and no additional shared randomness or entanglement must use $N(n) = n^{\Omega(1)}$ copies.
\end{theorem}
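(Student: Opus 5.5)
We will prove the contrapositive: if $N(n)=n^{o(1)}$ along some subsequence, then a constant-communication protocol for $\EQ_n$ with shared $\Psi^{\otimes N}$ leads to a contradiction. The idea is to turn the communication protocol into an SMP-like or zero-communication object whose success probability is controlled by the maximal correlation, and then use a counting or hypercontractivity argument over the $2^n$ inputs.

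\textbf{Step 1: remove communication.} A $C$-bit two-way protocol with $C=O(1)$ can be converted into a zero-communication ``guessing'' strategy. Alice and Bob guess the full transcript using shared randomness; since we allow no extra shared randomness, they guess independently and uniformly. Each party aborts when its own messages are inconsistent with the guessed transcript. Conditioned on non-abort, which occurs with probability at least $2^{-2C}$, the output distribution equals that of the protocol.

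For each input $x$ this gives a family of Alice measurements $\{A^x_t\}$ on $E^{\A}$, indexed by the transcript $t$ and an accept outcome; Bob similarly gets $\{B^y_t\}$. The acceptance probabilities satisfy
\[ \Pr[\text{accept}\mid x,y]=\sum_t \Tr\bigl[(A^x_t\otimes B^y_t)\Psi^{\otimes N}\bigr], \]
and bounded error translates into a constant gap between $x=y$ and $x\neq y$ in the ``output $1$'' acceptance mass. Thus each input $x$ yields a bounded operator-valued vector $\mathbf{A}^x=(A^x_t)_t$ of constant length $2^{O(C)}$ on $E^{\A\otimes N}$. We would have
\[ \langle \mathbf{A}^x,\mathbf{B}^y\rangle_\Psi \ge \beta \quad\text{if } x=y, \qquad \langle \mathbf{A}^x,\mathbf{B}^y\rangle_\Psi \le \beta-\delta \quad\text{if } x\ne y, \]
with $\beta,\delta=2^{-O(C)}$.

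\textbf{Step 2: discretize.} Quantum measurements on an $N$-copy system with local dimension $d_0$ live in a space of dimension $d_0^{2N}$. An $\eta$-net over the relevant operator sets has size $\exp(d_0^{O(N)}\log(1/\eta))$, which is too large to count directly when $N$ is polylogarithmic. We therefore use maximal correlation instead. Because $\MC(\Psi^{\otimes N})=\rho<1$ independently of $N$, the noisy operator $T$ relating Alice's and Bob's observables (the quantum analogue of Beigi's \cite{Bei13} conditional expectation) contracts mean-zero parts by $\rho$ per ``degree.''

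Decompose each $A^x_t$ into its Fourier/Pauli-type expansion across the $N$ tensor factors, using an orthonormal basis of operators on $E^{\A}$ whose first element is the identity, weighted by $\Psi$'s marginal. The correlation $\langle A,B\rangle_\Psi$ equals $\sum_S \langle A_S,T_S B_S\rangle$, with $|T_S|\le\rho^{|S|}$. Truncating to degree $k=O(\log(1/\delta)/\log(1/\rho))$, a constant, changes all correlations by less than $\delta/4$. The truncated $\mathbf{A}^x$ lies in a space of dimension $D\le (d_0^2)^k\binom{N}{k}=N^{O(1)}$, in which the vectors have bounded norm.

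\textbf{Step 3: count.} Round the truncated vectors to an $\eta$-net with $\eta=\delta/(8\cdot 2^{O(C)})$. The number of distinct rounded vectors is at most $(1/\eta)^{O(D)}=2^{N^{O(1)}}$. If $N=n^{o(1)}$, then $N^{O(1)}=n^{o(1)}$ with the exponent fixed by the constant $k$, so this count is less than $2^n$ for large $n$. Two distinct inputs $x\neq x'$ therefore round to the same Alice vector. Then $\langle\mathbf{A}^{x'},\mathbf{B}^x\rangle$ is within $\delta/2$ of $\langle\mathbf{A}^{x},\mathbf{B}^x\rangle\ge\beta$, which contradicts the inequality $\langle\mathbf{A}^{x'},\mathbf{B}^x\rangle\le\beta-\delta$ from Step 1. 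Hence $\log\bigl(2^{N^{O(k)}}\bigr)\ge n$, which gives $N\ge n^{\Omega(1/k)}=n^{\Omega(1)}$.

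\textbf{Main obstacle.} The delicate step is Step 2: establishing the operator version of the low-degree truncation with the uniform $\rho^{|S|}$ contraction for $\Psi^{\otimes N}$. This requires choosing operator bases on both sides that singular-value-decompose the correlation map of $\Psi$, which is possible via the canonical form underlying the definition of $\MC$ in \cite{Bei13}. It also requires controlling norms of the operator coefficients so that the net argument applies, since the truncated parts need not remain measurement operators. We would handle this in the weighted $\ell_2$ norm induced by the marginals; only inner products against Bob's fixed (untruncated) operators matter, and those inner products are bounded by Cauchy--Schwarz.
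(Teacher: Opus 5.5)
Your argument is correct and is essentially the paper's proof. Both use the transcript decomposition and the tensorized maximal-correlation SVD (Theorem~\ref{thm:efronstein}), whose coefficients are damped by $\lambda_{\bs}\le\rho^{|\bs|}$; both discard the damped coordinates and compare the $O(N^k)$ surviving dimensions against a packing bound for the $2^n$ inputs. Your degree-$k$ cutoff and $\eta$-net pigeonhole on Alice's truncated vectors are just the dual form of the paper's threshold on $\Lambda$ and its packing of Bob's projected vectors in Lemma~\ref{lem:diagonalpackingbound}.

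The one blemish is the guessing detour in Step~1. As stated it is wrong: with independently guessed transcripts the two guesses need not agree, so conditioning on non-abort does not reproduce the protocol. Nothing downstream depends on it, though. The identity you display is exactly Lemma~\ref{lem:clevebuhrman}, which holds directly for the original protocol with $\beta=2/3$ and $\delta=1/3$.
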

Specializing to the case in which $\Psi$ encodes imperfectly correlated uniform random bits gives a matching qualitative lower bound for imperfect shared randomness. We note that (for nontrivial parameters), all the noise models we consider in this work result in maximal correlation smaller than $1$ when applied to EPR pairs. Therefore, the above theorem gives a lower bound on the number of EPR pairs under depolarizing, unital, biased reset and erasure noise.

We also note that Theorem~\ref{thm:lowerboundnumberofcopies} answers the open question of \cite{CGMS15} regarding whether $O(\log n)$ copies of imperfectly correlated random bits suffice for any protocol using this resource, in the negative. In fact, polynomially many imperfectly correlated bits, or polynomially many copies of noisy shared entanglement are required, which is exponentially more than in the perfectly correlated case \cite{New91}. If $O(\log n)$ communication is allowed, $O(\log n)$ copies of noisy shared entanglement suffice, since $O(\log n)$ EPR pairs can be distilled out of them with $O(\log n)$ communication. But this no longer achieves the optimal communication for Equality, which is constant.

\subsection{Our techniques}
The main technical results of this work are the noisy parallel repetition theorems for general and unique games, and the lower and upper bounds on noisy entanglement cost in communication complexity. So we describe our techniques for proving these results.

\subsubsection{Noisy parallel repetition theorems}
We focus on the depolarizing-noise case first. Consider a strategy for $G^n$ with $m$ copies of EPR pairs under depolarizing noise, which we denote by $(\Omega^\dep_\mu)^{\otimes m}$. Here $\mu$ is the probability that the depolarizing noise does nothing to the EPR pair, which is also equal to its maximal correlation. The strategy involves measurements $M^\bx_\ba, N^\by_\bb$ for Alice and Bob respectively, where $\bx, \by$ are the $n$-copy inputs and $\ba, \bb$ are $n$-copy output strings. The conditional distribution of outputs given inputs is
\[ \sfP_{\bA\bB|\bx\by}(\ba,\bb) = \Tr\left[(M^\bx_\ba\otimes N^\by_\bb)(\Omega^\dep_\mu)^{\otimes m}\right].\]
The first technical component of our proof is a pointwise upper bound on these probabilities $\sfP_{\bA\bB|\bx\by}(\ba,\bb)$ in terms of the product of Alice and Bob's marginal probabilities. Specifically, using a hypercontractivity theorem for the depolarizing channel (Theorem~\ref{thm:depolarizing-hypercontractivity}, due to \cite{MO08,King14}), we prove that
\[
\sfP_{\bA\bB|\bx\by}(\ba,\bb)
\leq
\left(
\sfP_{\bA|\bx}(\ba)
\sfP_{\bB|\by}(\bb)
\right)^{1/(1+\mu)}.
\]
This result is quite similar to Lemma~15 of \cite{DB14}, which is used there to prove results about local state transformation. However, unlike \cite{DB14}, we obtain our bound in terms of the actual marginal probabilities of the strategy.

\paragraph{General games.} For any quantum strategy, the product of its marginal probabilities can be obtained by a classical strategy. In particular, the probability of winning $G^n$ under this product-of-marginals distribution (which we denote by $\sfQ$) is at most $w_G^n$. To prove our parallel repetition theorem for general games, we observe that the pointwise bound lemma can be used to upper bound the R\'enyi divergence $D_\alpha$ between $\sfP$ and $\sfQ$, for $\alpha=\frac{1+\mu}\mu$. It is well known that if $D_\infty$ between $\sfP$ and $\sfQ$ is bounded, an upper bound on the probability of an event under $\sfQ$ can be used to upper bound the probability of the same event under $\sfP$. As it turns out, a similar result holds for $D_\alpha$ for other $\alpha > 1$, as we show in Lemma~\ref{lem:event-transfer}. Using the upper bound $w_G^n$ on the probability of winning $G^n$ under $\sfQ$ then gives the first part of Theorem~\ref{thm:parrep-comb} for $\sfP$.

\paragraph{Unique games.} The R\'enyi divergence-based argument loses a factor depending exponentially on the output sizes of $G$. For unique games, we avoid most of this loss by exploiting the fact that each answer of one player determines at most one winning answer of the other player. We write $\pi^\B_{\bx\by,n}(\ba)$ to denote the single output string by Bob that is accepted on inputs $\bx,\by$ and Alice-output $\ba$. From the pointwise bound lemma, the winning probability of the unique game can be upper bounded by
\[ \sum_{\substack{\bx,\by,\ba,\bb: \\ V^n(\bx,\by,\ba,\bb)=1}}\left(
\sfP_{\bA|\bx}(\ba)
\sfP_{\bB|\by}(\bb)
\right)^{1/(1+\mu)} = \sum_{\bx,\by,\ba}\left(
\sfP_{\bA|\bx}(\ba)
\sfP_{\bB|\by}(\pi^\B_{\bx\by,n}(\ba))
\right)^{1/(1+\mu)}.\]

To upper bound this quantity, we partition Alice's and Bob's answers into buckets according to their marginal probabilities. In a single bucket for Alice, $\sfP_{\bA|\bx}(\ba)$ is between $2^{-k}$ and $2^{-k+1}$. For each pair of buckets, we bound the number of compatible winning answer pairs in two ways. The first, which we call the hard list bound, uses the classical parallel-repeated value $w_G^n$.

The second way uses the collision value $\col_p(G^n)$ of $G^n$. The $\A\to\B$ collision value of a unique game is the $p\to2$ induced norm of a function $T^{\A^n\to\B^n}_{G^n}$ depending on its input distribution and the function $\pi^\B_{\bx\by,n}$ taking Alice-outputs to Bob-outputs. Roughly speaking, $T^{\A^n\to \B^n}_{G^n}$ acting on an Alice input-output pair $(\bx,\ba)$ produces a mixture of Bob input-output pairs $(\by,\bb)$ such that $(\bx,\by,\ba,\bb)$ are accepted by $V^n$.\footnote{Technically, $T^{\A^n\to\B^n}_{G^n}$ acts on functions $\clX^n\times\clA^n\to\bbR$ to produce functions $\clY^n\times\clB^n\to\bbR$; the effect described can be thought of as its action on indicator functions for a particular $(\bx,\ba) \in \clX^n\times\clA^n$.} We say two input-output pairs $(\bx,\ba), (\bx',\ba')$ collide if $\pi^\B_{\bx\by,n}(\ba)=\pi^\B_{\bx'\by,n}(\ba')$ for some $\by$, and $\|T^{\A^n\to\B^n}_{G^n}z\|_2^2$ measures the probability of this. This is why we call the $p\to2$ norm of $T^{\A^n\to\B^n}_{G^n}$ the collision value. By the definition of $T^{\A^n\to\B^n}_{G^n}$, it is not difficult to see that the number of compatible winning answer pairs inside a bucket can be upper bounded by  $\|T^{\A^n\to\B^n}_{G^n}\|_{p\to2}$ (along with some additional factors depending on answer and bucket size). However, $\|T^{\A^n\to\B^n}_{G^n}\|_{p\to2}$ is just $\|(T^{\A\to\B}_G)^{\otimes n}\|_{p\to2}$. Induced norms tensorize, so the last quantity is just $(\|T^{\A\to\B}_G\|_{p\to2})^n$. $\|T^{\A\to\B}_G\|_{p\to2}$ is a relatively simple quantity depending on only the single-copy game, which can be computed.

Using the hard list bound and the collision list bound, the number of winning answers inside each pair of buckets is upper bounded as the minimum of the two bounds. Interpolating between the two and summing the resulting double geometric series gives the parallel repetition theorem for unique games. We note that dividing into buckets and taking the minimum of the two bounds at each step is important: taking a global minimum of the two bounds gives a worse overall upper bound, and we do not obtain a nontrivial result for CHSH from it.

\paragraph{The CHSH game.} For the CHSH game, the relevant collision operator is only a $4\times4$ matrix. Its $p$-to-$2$ norm reduces to a four-variable scalar optimization. We certify a sufficiently strong upper bound at $\mu=0.36$ and combine it with the classical rate $w_\chsh=(1+\sqrt5)/4$ to obtain $0.8534679$ numerically as the base of the exponent in parallel repetition for CHSH. This is slightly smaller than $\cos^2(\pi/8)\approx 0.8535534$.

\paragraph{Other noise models.} The proofs for unital and erasure noise are very similar. A hypercontractivity theorem analogous to Theorem~\ref{thm:depolarizing-hypercontractivity} was previously not known for general unital channels. We prove this result in Theorem~\ref{thm:unital-hypercontractivity} using a multiplicativity result for unital noise channels in \cite{King14}. For erasure noise, a hypercontractivity theorem (Theorem~\ref{thm:erasure-hypercontractivity}, \cite{BDOY25}) was previously known. We use this hypercontractivity property to get an analogous pointwise bound lemma for these noise models, and the rest of the proof is identical.

\subsubsection{Simulating entanglement with noisy shared randomness in communication}
Theorem~\ref{thm:canonnebetter} follows from an SMP protocol for estimating the real inner product of two vectors using noisy shared randomness. Lemma~\ref{lem:clevebuhrman}, based on \cite{Lal24}, expresses the acceptance probability of a $C$-bit entanglement-assisted protocol as $2^C$ times the inner product of two sub-unit vectors. These vectors may initially lie in a space whose dimension depends on the amount of shared entanglement. The Johnson--Lindenstrauss lemma reduces their dimension to $O(n2^{2C})$, for constant approximation error, independently of the amount of shared entanglement.

To estimate an inner product in dimension $D$, we encode the $D$ coordinates of the input vectors $u,v$ into degree-$d$ monomials in $m$ i.i.d. copies of the correlated random variables $R^\A,R^\B$, where $D=\binom md$. By the definition of maximal correlation, if $\MC(R^\A;R^\B)=\rho$, then there are functions $f\in L^2(R^\A)$ and $g\in L^2(R^\B)$ such that
\[
\bbE[f(R^\A)]=\bbE[g(R^\B)]=0,
\qquad
\bbE[f(R^\A)^2]=\bbE[g(R^\B)^2]=1,
\qquad
\bbE[f(R^\A)g(R^\B)]=\rho.
\]
Given i.i.d. copies $(R^\A_1,R^\B_1),\ldots,(R^\A_m,R^\B_m)$, define
\[
Z^\A=\sum_{S\in\binom{[m]}d}u_S\prod_{j\in S}f(R^\A_j),
\qquad
Z^\B=\sum_{S\in\binom{[m]}d}v_S\prod_{j\in S}g(R^\B_j).
\]
Distinct monomials are orthogonal because the witnesses are centred, while a matching pair of degree-$d$ monomials has correlation $\rho^d$. Consequently,
\[
\bbE[Z^\A Z^\B]=\rho^d\langle u,v\rangle.
\]

The variables $Z^\A,Z^\B$ need not be bounded, so Alice and Bob first clip them to $[-T,T]$. They then use private randomness to round their clipped values to signs $a,b\in\{-1,1\}$ whose conditional expectations are the clipped values divided by $T$. It follows that $T^2ab/\rho^d$ has expectation close to $\langle u,v\rangle$. A fourth-moment bound controls the bias introduced by clipping, and averaging $k$ independent repetitions controls the sampling error by Theorem~\ref{thm:chernoff}. Each repetition uses $m$ copies of $(R^\A,R^\B)$ and sends one bit from each player, giving a total of $km$ source copies and $2k$ bits of communication.

This construction is symmetric: Alice's and Bob's messages are computed independently, and the referee only needs the pairs of signs $(a_i,b_i)_{i=1}^k$. In contrast, the construction of \cite{CGMS15} has Alice choose an input-dependent correlated Gaussian direction and send its index to Bob; Bob's computation depends on that index, which is why their protocol is one-way rather than SMP.

\subsubsection{Lower bound on noisy entanglement cost of Equality}
Theorem~\ref{thm:lowerboundnumberofcopies} combines the standard packing argument for equality with the singular-value decomposition associated with maximal correlation. The standard packing argument involves the following idea: a protocol for $\EQ_n$ implies the existence of a vector $u_x$ for each input $x\in\{0,1\}^n$ to $\EQ_n$ such that $\langle u_x, u_y\rangle$ is small for $y\neq x$. The dimension of these vectors depends on the specific protocol for $\EQ_n$, and there is a standard packing result (stated in Lemma~\ref{lem:packingbound}) which lower bounds the dimension from the pairwise inner product bound. This property has been used to lower bound the communication cost of protocols for $\EQ_n$ previously.

To lower bound the noisy entanglement cost, suppose a constant-communication protocol for $\EQ_n$ uses $m$ copies of a fixed state $\Psi_{E^\A E^\B}$. The transcript decomposition from Lemma~\ref{lem:clevebuhrman}, followed by the maximal-correlation decomposition of $\Psi^{\otimes m}$, expresses its acceptance probability as
\[
\Pr[b=1|x,y]=\langle v_x,\Lambda w_y\rangle,
\]
where $\|v_x\|_2,\|w_y\|_2\leq2^{C/2}$ and $\Lambda$ is diagonal. Its diagonal entries are products of the singular values of the one-copy correlation map corresponding to $\Psi_{E^\A E^\B}$. The largest singular value is $1$, while every other singular value is at most $\rho=\MC(\Psi)<1$.

Correctness gives a constant gap between $\langle v_x,\Lambda w_x\rangle$ and $\langle v_x,\Lambda w_y\rangle$ for $x\neq y$. Coordinates of $\Lambda$ below a fixed threshold contribute too little to this gap and can be discarded. After projecting onto the remaining coordinates, the vectors corresponding to the $2^n$ possible inputs remain separated by a constant distance. Lemma~\ref{lem:packingbound} therefore implies that the retained subspace has dimension $\Omega(n)$.

A tensor-product singular value of $\Lambda$, involving $t$ coordinates corresponding to the non-maximal singular vector, is at most $\rho^t$. Hence only constant-degree tensor-product directions can lie above the threshold. The number of such directions among $m$ copies is polynomial in $m$. Comparing this polynomial upper bound with the $\Omega(n)$ lower bound from packing yields $m=n^{\Omega(1)}$.

\subsection{Open problems}
There are some obvious open questions left from our results.
\begin{itemize}
\item \textbf{Improving the base of the exponent in the parallel repetition results:} We suspect that the base of the exponent for CHSH for example, should be lower than $\cos^2(\pi/8)$ for \emph{any} noise level, and this certainly should be true for noise levels that lead to maximal correlation above $0.36$. Moreover, ideally the base of the exponent should go to the classical parallel repetition rate $w_G$ as soon as the noise level is high enough for the states to become unentangled, and we currently do not have this either.\footnote{For depolarizing noise, the states become unentangled at $\rho=\mu\leq \frac13$. It can be checked that the base of the exponent for the CHSH game $(\col_{1+\rho}(\chsh))^{4\rho}w_\chsh^{1-\rho}$ does not reduce to $w_\chsh$ for $\rho\leq \frac13$.}
\item \textbf{Making the parallel repetition theorem work for biased reset noise:} The chief barrier to making our parallel repetition proof technique work for biased reset noise is the lack of an appropriate hypercontractivity result for biased reset noise. A hypercontractivity result for this type of channel was proved very recently in \cite{DGOYZ26}, but unfortunately it is not quite the type of hypercontractivity result that is useful for us. There may be an extension of their result that is of the correct form to work with our techniques.
\item \textbf{Extending to more noise models:} An important type of noise model that our techniques do not work for at all is amplitude damping noise. This is because all our techniques rely on the maximal correlation of EPR pairs under the relevant noise model being less than $1$, but for amplitude damping noise, this is actually $1$. This is because maximal correlation does not only measure quantum correlation, but also classical correlation; for example, the maximal correlation of Werner states continues to be nonzero even when the states are unentangled. Thus, a characterization of noisy quantum values of games solely in terms of maximal correlation cannot be sufficient. Perhaps a measure such as the \emph{maximal entanglement} introduced by Beigi \cite{Bei14} could be useful here.
\item \textbf{Stronger communication separations:} Generally, separations for Boolean functions are considered the strongest class of separations for any model, and are the hardest to prove separations for. Our communication problem separating quantum communication with noisy vs noiseless entanglement is a relational one, and moreover the relation has a large output alphabet. It would be interesting to see if our result can be extended to relations with smaller alphabets, and to partial Boolean functions.
\end{itemize}

\subsection{Organization of the paper}
In Section~\ref{sec:prelim}, we introduce the mathematical preliminaries used throughout the paper. In Section~\ref{sec:noisyentanglement}, we introduce our noise models and maximal correlation, and compute the maximal correlation of noisy EPR pairs. In Section~\ref{sec:game-comm}, we recall the standard definitions and results concerning non-local games and communication complexity and introduce their noisy-entanglement variants. In Section~\ref{sec:game-ub}, we derive upper bounds on the noisy value of the single-copy CHSH game. In Section~\ref{sec:parrep}, we prove our noisy parallel repetition theorems for general and unique games, with special attention to CHSH. In Section~\ref{sec:comm-lb}, we use these results to separate communication with noisy entanglement from communication with perfect shared randomness, and communication with perfect entanglement from communication with noisy entanglement. Finally, in Section~\ref{sec:ent-lb}, we prove our upper and lower bounds on the amount of noisy shared randomness or noisy entanglement needed by communication protocols.

\section{Preliminaries}\label{sec:prelim}
\subsection{Linear algebra notions}

Given a vector space $V$, a linear map $f: V \to \bbC$ is called a linear functional on $V$. We will use the following properties of linear functionals.
\begin{theorem}[Riesz representation theorem] \label{thm:riesz}
Let $V$ be a finite-dimensional inner product space. Given a linear functional $f$ on $V$, there exists a unique $u \in V$ such that, for all $v \in V$, 
\[f(v) = \langle u, v \rangle\]
\end{theorem}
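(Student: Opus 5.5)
The plan is the standard finite-dimensional argument: existence by expanding in an orthonormal basis, uniqueness by nondegeneracy of the inner product. I take $\langle\cdot,\cdot\rangle$ to be conjugate-linear in the first argument and linear in the second, as the formula $f(v)=\langle u,v\rangle$ with $f$ linear requires.

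\emph{Existence.} Since $V$ is finite-dimensional, Gram--Schmidt applied to any basis gives an orthonormal basis $e_1,\dots,e_d$. I would define
\[
u=\sum_{i=1}^d \overline{f(e_i)}\,e_i .
\]
For any $v=\sum_j c_j e_j$, linearity of $f$ gives $f(v)=\sum_j c_j f(e_j)$. On the other hand, orthonormality together with conjugate-linearity in the first slot gives
\[
\langle u,v\rangle=\sum_{i,j} f(e_i)\,c_j\,\langle e_i,e_j\rangle=\sum_j c_j f(e_j).
\]
The two expressions agree, so $u$ represents $f$.

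\emph{Uniqueness.} Suppose $\langle u,v\rangle=\langle u',v\rangle$ for all $v\in V$. Then $\langle u-u',v\rangle=0$ for every $v$. Taking $v=u-u'$ gives $\|u-u'\|^2=0$, and positive definiteness forces $u=u'$.

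There is no real obstacle here. The only point needing care is the conjugation convention: if the inner product were instead linear in the first argument, one would use $u=\sum_i f(e_i)e_i$, and the rest of the argument is unchanged. Finite dimensionality is used only to guarantee the finite orthonormal basis, so no completeness or continuity argument is needed.
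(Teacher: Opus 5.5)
Your proof is correct and is the standard finite-dimensional argument; the paper gives no proof of this theorem and simply quotes it as a known preliminary, so there is no separate route to compare against. Your convention (conjugate-linear in the first slot) matches the paper's $\langle X,Y\rangle=\Tr(X^\dagger Y)$, and because the paper identifies zero-norm operators in $L^2(\sigma)$, the positive definiteness you use for uniqueness holds in every space where the theorem is applied.
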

\begin{theorem}[Singular value decomposition] \label{thm:svd}
Let $V, W$ be finite-dimensional inner product spaces, and let $T: V \to W$ be a linear map. Let $m = \min(\dim V, \dim W)$. There exist orthonormal collections $v_1, ...,v_m \in V$, $w_1, ...,w_m \in W$ such that, for all $i,j \in [m]$ with $i \neq j$, $\langle w_i, T(v_j) \rangle = 0$ and real numbers $\mu_1 \geq \mu_2 \geq ... \geq \mu_m \geq 0$ such that, for all $i,j \in [m]$,
\[\langle w_i, T(v_j) \rangle = \delta_{i,j} \mu_i\]
The $v_i$ and the $w_i$ are called the right and left singular vectors of $T$, respectively. The numbers $\mu_i$ are called the singular values of $T$ and are unique. 
\end{theorem}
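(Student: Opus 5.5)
The statement to prove is the singular value decomposition for a linear map $T: V \to W$ between finite-dimensional inner product spaces. I would reduce it to the spectral theorem for self-adjoint operators.

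\textbf{Step 1: the operator $T^*T$.} Using Theorem~\ref{thm:riesz}, define the adjoint $T^*: W \to V$. For each $w$, the map $v \mapsto \langle w, T v\rangle$ is conjugate-linear in $v$, so its conjugate is a linear functional. Riesz then gives a unique vector $T^* w$ with $\langle T^* w, v\rangle = \langle w, T v\rangle$ for all $v$. Uniqueness implies that $T^*$ is linear. The operator $T^*T$ on $V$ is self-adjoint and positive semidefinite, since $\langle v, T^*T v\rangle = \|Tv\|^2 \geq 0$.

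\textbf{Step 2: diagonalize $T^*T$.} By the spectral theorem, $V$ has an orthonormal eigenbasis $v_1,\dots,v_{\dim V}$ of $T^*T$. Order it so that the eigenvalues $\lambda_1 \geq \lambda_2 \geq \dots \geq 0$, and set $\mu_i = \sqrt{\lambda_i}$. For $i \neq j$ we get
\[
\langle Tv_i, Tv_j\rangle = \langle v_i, T^*T v_j\rangle = \lambda_j \delta_{ij}.
\]
So the vectors $Tv_i$ are pairwise orthogonal and $\|Tv_i\| = \mu_i$.

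\textbf{Step 3: build the $w_i$.} Let $r$ be the number of nonzero $\mu_i$. Since the $Tv_i$ with $i \le r$ are nonzero and pairwise orthogonal in $W$, we have $r \leq \min(\dim V, \dim W) = m$. For $i \leq r$, put $w_i = Tv_i/\mu_i$; these are orthonormal. Extend them by Gram--Schmidt to an orthonormal family $w_1,\dots,w_m$ in $W$, which is possible because $m \leq \dim W$. Keep $v_1,\dots,v_m$ from the eigenbasis, which is possible because $m \le \dim V$.

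Now check the required identity. For $j \leq r$,
\[
\langle w_i, Tv_j\rangle = \mu_j \langle w_i, w_j\rangle = \mu_j\delta_{ij}.
\]
For $j > r$ we have $Tv_j = 0$ and $\mu_j = 0$, so both sides vanish. This gives $\langle w_i, Tv_j\rangle = \delta_{ij}\mu_i$ for all $i,j\in[m]$.

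\textbf{Step 4: uniqueness of the singular values.} Suppose orthonormal families $\{v_i\}$, $\{w_i\}$ and numbers $\mu_i$ satisfy the stated relations. When $m = \dim V$, the $v_j$ form a basis, and $Tv_j = \mu_j w_j$ follows from expanding $Tv_j$ in the $w_i$. The component of $Tv_j$ orthogonal to the span of the $w_i$ would also have to vanish; this is automatic if $m = \dim W$ and otherwise needs a short argument. Then $T^*T v_j = \mu_j^2 v_j$, so the $\mu_j^2$ are exactly the eigenvalues of $T^*T$ with multiplicity, and the $\mu_j$ are determined. If instead $m = \dim W < \dim V$, apply the same argument to $T^*$ and the operator $TT^*$, whose nonzero eigenvalues coincide with those of $T^*T$.

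\textbf{Main obstacle.} The only delicate point is the uniqueness claim. The statement as written does not literally force $Tv_j$ to lie in the span of the $w_i$ in every case, so I would interpret uniqueness as uniqueness for genuine decompositions $T = \sum_i \mu_i w_i v_i^*$. With that reading, the spectral argument in Step 4 goes through directly.
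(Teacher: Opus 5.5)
The paper gives no proof of this statement; it quotes it as a standard fact. Your existence argument is correct and is the standard one: the adjoint via Theorem~\ref{thm:riesz}, the spectral theorem for $T^*T$, $w_i=Tv_i/\mu_i$ for $\mu_i>0$, then a Gram--Schmidt completion. In fact it yields more than the statement asks for. You get $Tv_j=\mu_jw_j$ for all $j\le m$, and $Tx=0$ for $x\perp\mathrm{span}\{v_j\}$ (the remaining eigenvalues of $T^*T$ vanish because $r\le m$). That is, $T=\sum_j\mu_j w_j\langle v_j,\cdot\rangle$. One minor slip: with the paper's convention $\langle X,Y\rangle=\Tr(X^\dagger Y)$, the map $v\mapsto\langle w,Tv\rangle$ is already linear. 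Riesz therefore applies to it directly and gives $\langle T^*w,v\rangle=\langle w,Tv\rangle$; conjugating it would make it conjugate-linear.

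The step that fails is the ``short argument'' in Step 4, meant to show that the component of $Tv_j$ orthogonal to $\mathrm{span}\{w_i\}$ vanishes. No such argument can exist, because the uniqueness claim read literally is false whenever $\dim V\neq\dim W$. Take $T:\bbC\to\bbC^2$ with $T(1)=e_2$ and $v_1=1$. Then $w_1=(\sqrt{1-c^2},c)$ satisfies the stated condition with $\mu_1=c$, for every $c\in[0,1]$. Dually, take $T:\bbC^2\to\bbC$ with $T(a,b)=b$ and $w_1=1$. Choosing $v_1=e_1$ or $v_1=e_2$ gives $\mu_1=0$ or $\mu_1=1$.

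So the reinterpretation in your last paragraph is not optional. Drop the ``short argument'' and state uniqueness only for decompositions with $T=\sum_i\mu_iw_i\langle v_i,\cdot\rangle$. Under that hypothesis $T^*T=\sum_i\mu_i^2v_i\langle v_i,\cdot\rangle$. Hence $\mu_1^2\ge\dots\ge\mu_m^2$ are exactly the $m$ largest eigenvalues of $T^*T$ counted with multiplicity (the rest are $0$), and the separate $TT^*$ case is unnecessary.

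This strengthened form is also what the paper actually relies on later. In the proof of Theorem~\ref{thm:lowerboundnumberofcopies}, the singular-vector families of Theorem~\ref{thm:efronstein} are extended to full bases, and the correlation form is claimed to stay diagonal. That needs $\langle w_i,Tv_j\rangle=0$ whenever $i$ or $j$ indexes an added vector. This holds for your construction, but it does not follow from the theorem as stated.
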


\paragraph{Inner products.} For $n\times n$ matrices $X,Y$, $\langle X,Y\rangle=\Tr(X^\dagger Y)$ denotes the usual Hilbert--Schmidt inner product. Given an $n\times n$ density matrix $\sigma$, we write $\langle X,Y\rangle_\sigma=\Tr(\sigma X^\dagger Y)$ for the corresponding weighted inner product. We write $L^2(\sigma)$ for the resulting finite-dimensional inner product space. If $\sigma$ is not full rank, operators of zero weighted norm are identified; equivalently, only their action on the support of $\sigma$ is retained.

We have the following standard consequence of the Johnson--Lindenstrauss lemma for approximating a finite collection of vectors in a lower-dimensional space.
\begin{lemma}[\cite{John84}]
\label{lem:jl-inner-products}
Let $v_1,\dots,v_N \in \bbR^{D_0}$ be sub-unit vectors, and let $0<\eps<1$. There is an integer
\[
D_1=O\!\left(\frac{\log(2N)}{\eps^2}\right)
\]
and sub-unit vectors $w_1,\dots,w_N\in\bbR^{D_1}$ such that
\[
  \bigl|\langle w_i, w_j\rangle - \langle v_i,v_j\rangle\bigr| \;\le\; \eps
  \qquad\text{for all } i,j \in [N].
\]
\end{lemma}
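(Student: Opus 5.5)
The approach is the standard Johnson--Lindenstrauss random projection, followed by a rescaling step so the images stay sub-unit. Let $D_1 = \lceil C\log(2N)/\eps^2\rceil$ for a large absolute constant $C$. We may assume $D_1 < D_0$, since otherwise the identity embedding (padded with zeros if needed) works. Take a random linear map $\Pi:\bbR^{D_0}\to\bbR^{D_1}$, for instance with i.i.d.\ $N(0,1/D_1)$ entries. For a fixed unit vector $z$, the Johnson--Lindenstrauss concentration inequality gives
\[
\Pr\bigl[\,\bigl|\|\Pi z\|_2^2-\|z\|_2^2\bigr|>\delta\|z\|_2^2\,\bigr]\le 2e^{-c\delta^2 D_1}.
\]

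Apply this with $\delta=\eps/4$ to the vectors $v_i$ and to all differences $v_i-v_j$. For the differences, use the polarization identity
\[
\langle v_i,v_j\rangle=\tfrac12\bigl(\|v_i\|^2+\|v_j\|^2-\|v_i-v_j\|^2\bigr).
\]
There are $O(N^2)$ such vectors. A union bound shows that for $C$ large enough there is a $\Pi$ preserving all of these squared norms up to multiplicative error $1\pm\delta$. Since $\|v_i\|\le1$ and $\|v_i-v_j\|\le2$, every squared norm is at most $4$. Polarization then gives
\[
|\langle\Pi v_i,\Pi v_j\rangle-\langle v_i,v_j\rangle|\le\tfrac12(\delta+\delta+4\delta)=3\delta .
\]
The number of vectors is polynomial in $N$, so $\log$ of it is $O(\log(2N))$; this is consistent with the claimed $D_1$.

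The remaining issue, and the only non-routine point, is that $\Pi v_i$ may have norm slightly above $1$, up to $\sqrt{1+\delta}$. To fix this, set $w_i=\Pi v_i/\sqrt{1+\delta}$, so each $\|w_i\|\le1$. The rescaling changes every inner product by a factor $1/(1+\delta)$, which moves it by at most $\delta\,|\langle\Pi v_i,\Pi v_j\rangle|\le\delta(1+3\delta)$. The total error is therefore at most $3\delta+\delta(1+3\delta)\le 4\delta+3\delta^2\le\eps$ for $\delta=\eps/4$ and $\eps<1$.

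Finally, if the $v_i$ are merely sub-unit rather than unit, the concentration bound still applies after scaling, because the bound is homogeneous in $\|z\|$. So nothing changes, and the lemma follows with $D_1=O(\log(2N)/\eps^2)$.
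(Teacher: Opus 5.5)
Your proposal follows the same route as the paper's sketch: Johnson--Lindenstrauss applied to the $v_i$ together with the origin (equivalently, to the $v_i$ and the differences $v_i-v_j$), then the polarization identity, then a common rescaling to restore sub-unit norms.

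The only flaw is the last line of arithmetic. With $\delta=\eps/4$ you get $4\delta+3\delta^2=\eps+3\eps^2/16>\eps$, so the claimed inequality $4\delta+3\delta^2\le\eps$ is false as written. There are two easy repairs:
\begin{enumerate}
\item Take $\delta=\eps/5$. Then $4\delta+3\delta^2\le 23\eps/25<\eps$.
\item Keep the exact rescaling factor. The shift caused by rescaling is $\frac{\delta}{1+\delta}|\langle \Pi v_i,\Pi v_j\rangle|\le\frac{\delta}{1+\delta}\|\Pi v_i\|\,\|\Pi v_j\|\le\delta$, by Cauchy--Schwarz and $\|\Pi v_i\|^2\le 1+\delta$. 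The total error is then exactly $4\delta=\eps$.
\end{enumerate}
Either change only alters the constant, which is absorbed into $D_1=O(\log(2N)/\eps^2)$.
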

The usual Johnson--Lindenstrauss lemma preserves all pairwise squared distances in a set of $O(N)$ points. Applying it to the vectors together with the origin and using the polarization identity gives the displayed inner-product estimate. The resulting vectors have norm at most $\sqrt{1+O(\eps)}$; a common rescaling makes them sub-unit while changing the approximation error by only a constant factor, which is absorbed into the $O$ notation.

\paragraph{Vector $\ell_p$ norm.} For $p\geq 1$, the $\ell_p$ norm of a vector $v=(v_i)_{i\in I}$ on a finite set $I$ is defined as
\[
 \|v\|_{p}
 =\left(\sum_{i\in I}|v_i|^p\right)^{1/p}.
 \]
If $\sfP$ is a probability distribution on $I$, we also use the weighted version of the $\ell_p$ norm:
\[
 \|v\|_{p(\sfP)}
 =\left(\sum_{i\in I}\sfP(i)|v_i|^p\right)^{1/p}.
 \]

We have the following standard estimate:
\begin{lemma}[{\cite[Lemma~4.2.8]{Ver26}}]\label{lem:packingbound}
    Suppose $v_1, ..., v_k$ are sub-unit vectors in $\bbC^d$ such that, for all $l \neq l'$, 
    \[\|v_l - v_{l'}\|_2 \geq \eps\]
    Then:
    \[k \leq \left(1+\frac{2}{\eps}\right)^{2d}\]
\end{lemma}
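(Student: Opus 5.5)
This is the standard volumetric packing argument, with the complex space $\bbC^d$ viewed as the real space $\bbR^{2d}$.

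\emph{Setup.} Identify $\bbC^d$ with $\bbR^{2d}$ via real and imaginary parts. This identification is an isometry for the Euclidean norm, so the $v_l$ become points of the closed real unit ball $B\subset\bbR^{2d}$, and they remain pairwise at distance at least $\eps$. Around each $v_l$ we place the open ball $B(v_l,\eps/2)$.

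\emph{Disjointness.} The balls $B(v_l,\eps/2)$ are pairwise disjoint. If a point $z$ lay in both $B(v_l,\eps/2)$ and $B(v_{l'},\eps/2)$, the triangle inequality would give $\|v_l-v_{l'}\|_2<\eps$, contradicting the separation hypothesis.

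\emph{Containment.} Each ball $B(v_l,\eps/2)$ lies inside $(1+\eps/2)B$, since $\|v_l\|_2\le 1$.

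\emph{Volume comparison.} Let $\mathrm{vol}$ be Lebesgue measure on $\bbR^{2d}$. Volume scales as $r^{2d}$ for a ball of radius $r$, so disjointness and containment give
\[
k\,(\eps/2)^{2d}\,\mathrm{vol}(B)\;\le\;(1+\eps/2)^{2d}\,\mathrm{vol}(B).
\]
Dividing through yields
\[
k\le\left(\frac{1+\eps/2}{\eps/2}\right)^{2d}=\left(1+\frac{2}{\eps}\right)^{2d},
\]
which is the claimed bound.

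The only step needing any care is the factor $2d$ rather than $d$ in the exponent. It comes from the real dimension of $\bbC^d$ being $2d$, and every other step is routine.
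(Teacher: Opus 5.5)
Your volumetric argument (disjoint balls of radius $\eps/2$ inside the ball of radius $1+\eps/2$ in $\bbR^{2d}$) is correct, and it is the standard proof of the cited result from Vershynin, which the paper invokes without reproducing a proof. Your explanation that the exponent $2d$ comes from viewing $\bbC^d$ as $\bbR^{2d}$ matches how the paper states the bound.
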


\paragraph{Operator Schatten norms.} We use the same notation as the vector $\ell_p$ norm to denote the Schatten $p$-norm of a $d$-dimensional operator $X$:
\[
\|X\|_{p}
=
\left(
\Tr|X|^p
\right)^{1/p}.
\]
A special case is $p=\infty$; $\|X\|_\infty$ is equal to the maximum singular value of $X$.

These norms satisfy H\"older's inequality:
\[
|\Tr(XY)|
\le
\|X\|_{p}\|Y\|_{q},
\qquad
\frac1p+\frac1q=1.
\]
The normalized version of the Schatten norm will be denoted as
\[ \|X\|_{p}^\norm = \left(
\frac1d\Tr|X|^p
\right)^{1/p}.
\]
We also have the normalized H\"older's inequality for H\"older conjugate $p, q$:
\[
\frac1d|\Tr(XY)|
\le
\|X\|_{p}^\norm\|Y\|_{q}^\norm.
\]

\paragraph{Induced operator norms.} The vector $\ell_p$ norms can be used to define induced operator norms. For a linear map $A$ between two vector spaces on which we have an $\ell_p$ norm and an $\ell_q$ norm, the $p\to q$ induced norm is defined as
\[ \|A\|_{p\to q} = \sup_{v\neq 0}\frac{\|Av\|_q}{\|v\|_p}.\]
It can be verified that $\|A\|_{2\to2}=\|A\|_\infty$, i.e., the largest singular value of $A$.

If the two vector spaces have probability distributions $\sfP$ and $\sfQ$, we can define the weighted version of this norm
\[ \|A\|_{p(\sfP)\to q(\sfQ)} = \sup_{v\neq 0}\frac{\|Av\|_{q(\sfQ)}}{\|v\|_{p(\sfP)}}.\]
It is known that induced norms tensorize for $0 < p \leq q < \infty$.
\[ \|A\otimes B\|_{p \to q} = \|A\|_{p\to q}\|B\|_{p\to q}.\]
We provide a proof of the fact that this is also true of the weighted version.
\begin{lemma}\label{lem:tensor}
For $0 < p \leq q < \infty$, and any operator $A: V \to W$, and probability distributions $\sfP$ and $\sfQ$ on $V$ and $W$,
\[ \|A^{\otimes n}\|_{p(\sfP^{\otimes n})\to q(\sfQ^{\otimes n})} = \left(\|A\|_{p(\sfP)\to q(\sfQ)}\right)^n.\]
\end{lemma}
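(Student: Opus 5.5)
The plan is to reduce the weighted statement to the unweighted tensorization fact quoted just before the lemma, using diagonal rescalings. Write $D_{\sfP}^{1/p}$ for the diagonal operator on $V$ that multiplies the $i$-th coordinate by $\sfP(i)^{1/p}$, and define $D_{\sfQ}^{1/q}$ on $W$ in the same way. Then $\|v\|_{p(\sfP)} = \|D_{\sfP}^{1/p}v\|_p$. We may assume $\sfP$ has full support; otherwise we restrict $V$ to the support of $\sfP$, which is harmless because vectors of zero weighted norm are identified. With this in place,
\[
\|A\|_{p(\sfP)\to q(\sfQ)} = \bigl\|D_{\sfQ}^{1/q} A D_{\sfP}^{-1/p}\bigr\|_{p\to q},
\]
by the substitution $u = D_{\sfP}^{1/p}v$.

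The key observation is that these rescalings tensorize. For the product measure, $(\sfP^{\otimes n})(i_1,\dots,i_n)^{1/p} = \prod_k \sfP(i_k)^{1/p}$, so $D_{\sfP^{\otimes n}}^{1/p} = (D_{\sfP}^{1/p})^{\otimes n}$, and likewise for $\sfQ$. Writing $\tilde A = D_{\sfQ}^{1/q} A D_{\sfP}^{-1/p}$, the mixed-product property gives
\[
D_{\sfQ^{\otimes n}}^{1/q} A^{\otimes n} D_{\sfP^{\otimes n}}^{-1/p} = \tilde A^{\otimes n}.
\]
Applying the unweighted tensorization $\|\tilde A^{\otimes n}\|_{p\to q} = \|\tilde A\|_{p\to q}^n$, obtained by iterating $\|X\otimes Y\|_{p\to q}=\|X\|_{p\to q}\|Y\|_{p\to q}$, then yields the claim.

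If we want the proof to be self-contained rather than relying on the quoted fact, the main obstacle is the upper bound $\|X\otimes Y\|_{p\to q}\le\|X\|_{p\to q}\|Y\|_{p\to q}$ for $p\le q$. The lower bound is immediate by testing on product vectors $x\otimes y$. For the upper bound, factor $X\otimes Y=(X\otimes I)(I\otimes Y)$ and argue as follows.

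1. For a matrix $f(i,j)$, applying $I\otimes Y$ row by row gives $\|(I\otimes Y)f\|_{\ell_p(\ell_q)}\le \|Y\|_{p\to q}\|f\|_{\ell_p(\ell_p)}$.
2. Minkowski's integral inequality, valid because $q/p\ge1$, swaps the mixed norm: $\|\cdot\|_{\ell_q(\ell_p)}\le\|\cdot\|_{\ell_p(\ell_q)}$.
3. Applying $X\otimes I$ column by column then gives a further factor $\|X\|_{p\to q}$, measured in the $\ell_q(\ell_q)=\ell_q$ norm.

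The step that needs care is this Minkowski swap, which is exactly where $p\le q$ is used (for $p<1$ one uses the corresponding quasi-norm version). The weighted version then follows from the unweighted one either by the reduction above, or by running the same argument directly with product weights.
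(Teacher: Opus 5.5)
Your proof is correct, and its analytic core is the same as the paper's. Both factor the tensor product as $(A\otimes I)(I\otimes A)$, bound one factor slice by slice, exchange the order of the mixed norm by Minkowski's inequality (the only place $p\le q$ enters), bound the other factor slice by slice, and get the reverse inequality from product vectors.

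The real difference is how the weights are handled. The paper carries $\sfP$ and $\sfQ$ through the whole computation and uses a weighted form of Minkowski's inequality. You instead remove them at the start with the diagonal conjugation $\tilde A = D_{\sfQ}^{1/q} A D_{\sfP}^{-1/p}$, which tensorizes, so the weighted lemma becomes a formal corollary of the unweighted fact stated just before it. The two are the same estimate in different coordinates: the rescaling $z_{ij}\mapsto \sfP(i)^{1/p}\sfQ(j)^{1/q} z_{ij}$ turns the paper's weighted Minkowski inequality into the unweighted one.

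Your packaging buys two things. It shows that only the product structure of the weights matters; they need not even be normalized. Proving the two-factor bound for general $X\otimes Y$ also gives a clean induction on $n$, which the paper leaves as obvious. The paper's direct route never inverts $D_{\sfP}$, so it needs no full-support reduction.

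On that reduction, the precise justification is a dichotomy. Either $D_{\sfQ}A$ annihilates every vector supported off $\operatorname{supp}\sfP$, and then restricting changes neither side. Or it does not, and then both sides are $+\infty$, as testing on $v_0^{\otimes n}$ shows.

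Finally, no quasi-norm variant is needed for $p<1$. The swap only uses the triangle inequality in $\ell_{q/p}$, with $q/p\ge 1$, applied to the nonnegative numbers $|g_{ij}|^p$. This holds for every $0<p\le q$.
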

\begin{proof}
The direction
\[ \|A^{\otimes n}\|_{p(\sfP^{\otimes n})\to q(\sfQ^{\otimes n})} \geq \left(\|A\|_{p(\sfP)\to q(\sfQ)}\right)^n\]
follows by using $v^{\otimes n}$, where $v$ is the vector that maximizes $\|A\|_{p(\sfP)\to q(\sfQ)}$, as the argument of $A^{\otimes n}$.

We prove the other direction for $n=2$; the generalization is obvious. For an arbitrary vector $v\in V^{\otimes 2}$, let $u\in V\otimes W$ be $(I\otimes A)v$, and $w\in W^{\otimes 2}$ be $(A\otimes A)v = (A\otimes I)u$. We can write $v, u, w$ with two coordinates corresponding to the two vector spaces in the tensor product; we use $A_{ii'}$ to refer to the entries of $A$. Then
\[ u_{i_1j_2} = \sum_{i_2} A_{j_2i_2}v_{i_1i_2} \qquad w_{j_1j_2} = \sum_{i_1}A_{j_1i_1}u_{i_1j_2}.\]
For a fixed $j_2$, $w^{j_2} = (w_{j_1j_2})_{j_1}$ is a vector in $W$, and $u^{j_2}$ defined similarly is a vector in $V$. Therefore, by the definition of $\|A\|_{p(\sfP)\to q(\sfQ)}$,
\begin{align*}
\left(\sum_{j_1}\sfQ(j_1)|w_{j_1j_2}|^q\right)^{1/q} & = \|w^{j_2}\|_{q(\sfQ)} \\
& \leq \|A\|_{p(\sfP)\to q(\sfQ)}\|u^{j_2}\|_{p(\sfP)} \\
 & = \|A\|_{p(\sfP)\to q(\sfQ)}\left(\sum_{i_1}\sfP(i_1)|u_{i_1j_2}|^p\right)^{1/p}.
\end{align*}
Now,
\begin{align}
\|A^{\otimes 2}v\|_{q(\sfQ^{\otimes 2})} & = \|w\|_{q(\sfQ^{\otimes 2})} \nonumber \\
 & = \left(\sum_{j_2}\sfQ(j_2)\left(\sum_{j_1}\sfQ(j_1)|w_{j_1j_2}|^q\right)\right)^{1/q} \nonumber \\
 & \leq \|A\|_{p(\sfP)\to q(\sfQ)}\left(\sum_{j_2}\sfQ(j_2)\left(\sum_{i_1}\sfP(i_1)|u_{i_1j_2}|^p\right)^{q/p}\right)^{1/q}. \label{eq:exchange1}
\end{align}
We will use a variant of Minkowski's inequality \cite{GS16} which says, for probability distributions $\alpha_i, \beta_j$, $0 < p \leq q < \infty$, and any $z_{ij}$,
\[ \left(\sum_j\beta_j\left(\sum_i\alpha_i|z_{ij}|^p\right)^{q/p}\right)^{1/q} \leq \left(\sum_i\alpha_i\left(\sum_j\beta_j|z_{ij}|^q\right)^{p/q}\right)^{1/p}.\]
Using this on \eqref{eq:exchange1} we get
\begin{equation}\label{eq:exchange2}
\|A^{\otimes 2}v\|_{q(\sfQ^{\otimes 2})} \leq \|A\|_{p(\sfP)\to q(\sfQ)}\left(\sum_{i_1}\sfP(i_1)\left(\sum_{j_2}\sfQ(j_2)|u_{i_1j_2}|^q\right)^{p/q}\right)^{1/p}.
\end{equation}

Define $u^{i_1}$ and $v^{i_1}$ similarly to $w^{j_2}$ and $u^{j_2}$. By the definition of $\|A\|_{p(\sfP)\to q(\sfQ)}$ again we have,
\[ \left(\sum_{j_2}\sfQ(j_2)|u_{i_1j_2}|^q\right)^{1/q} \leq \|A\|_{p(\sfP)\to q(\sfQ)}\left(\sum_{i_2}\sfP(i_2)|v_{i_1i_2}|^p\right)^{1/p}.\]
Using this on \eqref{eq:exchange2} we get
\begin{align*}
\|A^{\otimes 2}v\|_{q(\sfQ^{\otimes 2})} & \leq \|A\|_{p(\sfP)\to q(\sfQ)}\left(\sum_{i_1}\sfP(i_1)\left(\|A\|_{p(\sfP)\to q(\sfQ)}\left(\sum_{i_2}\sfP(i_2)|v_{i_1i_2}|^p\right)^{1/p}\right)^p\right)^{1/p} \\
& = \left(\|A\|_{p(\sfP)\to q(\sfQ)}\right)^2\left(\sum_{i_1}\sfP(i_1)\sum_{i_2}\sfP(i_2)|v_{i_1i_2}|^p\right)^{1/p} \\
& = \left(\|A\|_{p(\sfP)\to q(\sfQ)}\right)^2\|v\|_{p(\sfP^{\otimes 2})}.
\end{align*}
This proves the lemma.
\end{proof}

\subsection{Probability distributions}
For a classical probability distribution $\sfP_{AX}$ on variables $AX$, we will use $\sfP_{A|X=x}$ to denote the conditional distribution of $A$ given $X=x$, which we will sometimes shorten to $\sfP_{A|x}$. We will omit the variable subscripts when the variable names are irrelevant or clear from context. We will use $\sfP(E)$ to denote the probability of an event $E$.

For two classical probability distributions $\sfP$ and $\sfQ$, and $\alpha>1$, the classical order-$\alpha$ R\'enyi divergence is 
\[
D_\alpha(\sfP\|\sfQ)
=
\frac1{\alpha-1}
\log
\sum_\omega
\sfP(\omega)^\alpha \sfQ(\omega)^{1-\alpha}.
\]

The following lemma relates the probability of an event under two probability distributions, using the Renyi divergence. It is elementary, but we have not found a proof elsewhere, so we include it.
\begin{lemma}
\label{lem:event-transfer}
Let $\sfP$ and $\sfQ$ be probability distributions, let $\alpha>1$, and let
$E$ be an event. Then
\[
\sfP(E)
\le
2^{
\frac{\alpha-1}{\alpha}D_\alpha(\sfP\|\sfQ)
}
\cdot\sfQ(E)^{(\alpha-1)/\alpha}.
\]
\end{lemma}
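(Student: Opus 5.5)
The plan is to apply H\"older's inequality directly to the sum defining $\sfP(E)$, splitting each term so that one factor reproduces the R\'enyi sum and the other reproduces $\sfQ(E)$. Write $\beta=\alpha/(\alpha-1)$ for the H\"older conjugate of $\alpha$. For each $\omega\in E$ with $\sfQ(\omega)>0$ we factor
\[
\sfP(\omega)=\left(\sfP(\omega)\sfQ(\omega)^{(1-\alpha)/\alpha}\right)\cdot \sfQ(\omega)^{(\alpha-1)/\alpha}.
\]
If $\sfQ(\omega)=0$ while $\sfP(\omega)>0$, then $D_\alpha(\sfP\|\sfQ)=\infty$ for $\alpha>1$ and the bound is trivial. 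So we may assume $\sfP\ll\sfQ$ and sum only over $\omega$ in the support of $\sfQ$.

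Next we apply H\"older's inequality with exponents $\alpha$ and $\beta$ to $\sum_{\omega\in E}$ of this product:
\[
\sfP(E)\le\left(\sum_{\omega\in E}\sfP(\omega)^\alpha\sfQ(\omega)^{1-\alpha}\right)^{1/\alpha}\left(\sum_{\omega\in E}\sfQ(\omega)\right)^{(\alpha-1)/\alpha}.
\]
In the first factor we extend the sum from $E$ to all $\omega$, which only increases it since the terms are nonnegative. That full sum equals $2^{(\alpha-1)D_\alpha(\sfP\|\sfQ)}$ by the definition of $D_\alpha$, with $\log$ in base $2$ as elsewhere in the paper. Raising to the power $1/\alpha$ gives $2^{\frac{\alpha-1}{\alpha}D_\alpha(\sfP\|\sfQ)}$. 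The second factor is exactly $\sfQ(E)^{(\alpha-1)/\alpha}$, and this finishes the argument.

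There is no real obstacle here. The only points needing care are choosing the exponent split so that H\"older yields precisely the R\'enyi sum, and handling the zero-probability edge cases. One should also check the base of the logarithm, so that the exponent comes out as $\frac{\alpha-1}{\alpha}D_\alpha$ and not with an extra constant.
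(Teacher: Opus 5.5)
Your proof is correct and is the paper's argument: the same factorization of $\sfP(\omega)$, H\"older's inequality with exponents $\alpha$ and $\alpha/(\alpha-1)$, and enlarging the first sum from $E$ to the whole space to recover $2^{(\alpha-1)D_\alpha(\sfP\|\sfQ)}$. Your explicit treatment of points with $\sfQ(\omega)=0$ is a small improvement, since the paper writes $\sfQ(\omega)^{(1-\alpha)/\alpha}$ without comment; the case is harmless in the paper's use because Lemma~\ref{lem:Renyi} checks $\sfP\ll\sfQ$ directly.
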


\begin{proof}
Write
\[
\sfP(E)
=
\sum_{\omega\in E}
\left(
\sfP(\omega)\sfQ(\omega)^{(1-\alpha)/\alpha}
\right)
\sfQ(\omega)^{(\alpha-1)/\alpha}.
\]

Applying H\"older's inequality (for vector inner products) with conjugate exponents $\alpha$ and
$\alpha/(\alpha-1)$ gives
\begin{align*}
\sfP(E) & 
\le
\left(
\sum_{\omega\in E}
\left(
\sfP(\omega)\sfQ(\omega)^{(1-\alpha)/\alpha}
\right)^\alpha
\right)^{1/\alpha}
\left(\sum_{\omega\in E}\sfQ(\omega)\right)^{(\alpha-1)/\alpha} \\
& \le \left(
\sum_{\omega\in E}
\sfP(\omega)^\alpha\sfQ(\omega)^{1-\alpha}
\right)^{1/\alpha}
\cdot\sfQ(E)^{(\alpha-1)/\alpha} \\
& \le \left(
\sum_{\omega}
\sfP(\omega)^\alpha\sfQ(\omega)^{1-\alpha}
\right)^{1/\alpha}
\cdot\sfQ(E)^{(\alpha-1)/\alpha} \\
& \le 2^{
\frac{\alpha-1}{\alpha}D_\alpha(\sfP\|\sfQ)
}
\cdot\sfQ(E)^{(\alpha-1)/\alpha}. \qedhere
\end{align*}
\end{proof}

We will need the standard Chernoff--Hoeffding bound and a generalized Chernoff bound for certain dependent Boolean random variables.
\begin{theorem}[Chernoff--Hoeffding bound]\label{thm:chernoff}
Let $X_1,\ldots,X_n$ be independent real random variables with $X_i\in[a_i,b_i]$ almost surely, and set $S=\sum_{i=1}^nX_i$. For every $t\geq0$,
\[
\Pr[S-\bbE[S]\geq t],\ \Pr[S-\bbE[S]\leq-t]
\leq
\exp\!\left(-\frac{2t^2}{\sum_{i=1}^n(b_i-a_i)^2}\right).
\]
Consequently,
\[
\Pr[|S-\bbE[S]|\geq t]
\leq
2\exp\!\left(-\frac{2t^2}{\sum_{i=1}^n(b_i-a_i)^2}\right).
\]
\end{theorem}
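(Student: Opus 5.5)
The plan is the standard exponential-moment (Chernoff) method, with Hoeffding's lemma as the single-variable input. It suffices to prove the upper-tail bound. The lower-tail bound then follows by applying it to the variables $-X_i\in[-b_i,-a_i]$, whose ranges have the same lengths $b_i-a_i$. The two-sided bound follows from a union bound over the two tails, which accounts for the factor $2$.

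\textbf{Step 1 (Hoeffding's lemma).} For a real random variable $Y\in[a,b]$ almost surely and every $s\in\bbR$, I would show
\[
\bbE\bigl[e^{s(Y-\bbE[Y])}\bigr]\le \exp\!\left(\frac{s^2(b-a)^2}{8}\right).
\]
First, by convexity of $y\mapsto e^{sy}$ on $[a,b]$, we have $e^{sy}\le \frac{b-y}{b-a}e^{sa}+\frac{y-a}{b-a}e^{sb}$. Taking expectations bounds the moment generating function of $Y$ by that of the two-point variable on $\{a,b\}$ with the same mean.

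Next, shift so that $\bbE[Y]=0$, write $\theta=-a/(b-a)\in[0,1]$ and $u=s(b-a)$. The logarithm of the resulting bound becomes
\[
L(u)=-\theta u+\log\bigl(1-\theta+\theta e^{u}\bigr).
\]
One checks that $L(0)=L'(0)=0$. Moreover, $L''(u)=q(1-q)\le\frac14$, where $q=\theta e^u/(1-\theta+\theta e^u)\in[0,1]$. Taylor's theorem with remainder then gives $L(u)\le u^2/8$, which is the claim.

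\textbf{Step 2 (tensorization and Markov).} For $s>0$, Markov's inequality applied to $e^{s(S-\bbE[S])}$ gives
\[
\Pr[S-\bbE[S]\ge t]\le e^{-st}\,\bbE\bigl[e^{s(S-\bbE[S])}\bigr].
\]
By independence, the expectation factors as $\prod_i\bbE[e^{s(X_i-\bbE[X_i])}]$. Step 1 bounds this product by $\exp\bigl(s^2\sum_i(b_i-a_i)^2/8\bigr)$.

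\textbf{Step 3 (optimization).} Minimizing $-st+s^2\Sigma/8$ over $s>0$, where $\Sigma=\sum_i(b_i-a_i)^2$, gives $s=4t/\Sigma$ and exponent $-2t^2/\Sigma$. This is exactly the stated bound. The degenerate case $\Sigma=0$ is trivial: then $S=\bbE[S]$ almost surely, so for $t>0$ the probability is $0$, and for $t=0$ the stated bound reads $1$, read with the convention $0/0=0$.

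The only nonroutine step is Hoeffding's lemma, specifically the uniform bound $L''\le\frac14$. It comes from recognizing $L''$ as the variance $q(1-q)$ of a Bernoulli variable. Everything else is bookkeeping.
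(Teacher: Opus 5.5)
Your proof is correct and complete. It is the standard proof of Hoeffding's inequality: Hoeffding's lemma via the convexity chord bound and $L''=q(1-q)\le\frac14$, then exponential Markov, factorization by independence, and the choice $s=4t/\Sigma$, with negation and a union bound giving the remaining tails. The paper gives no proof of this statement and simply quotes it as a standard fact, so your argument is the textbook route it implicitly relies on. The one triviality you leave implicit is a coordinate with $b_i=a_i$: there $X_i$ is constant and its factor in the product is just $1$.
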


\begin{theorem}[Generalized Chernoff bound \cite{PS97}]\label{thm:gen-chernoff}
Let $X_1, \ldots, X_n$ be Boolean random variables such that, for some $\delta \in [0,1]$ and every subset $S \subseteq [n]$,
\[
\Pr\left[\bigwedge_{i\in S}\{X_i=1\}\right] \leq \delta^{|S|}.
\]
Then, for any $\gamma \in [\delta, 1]$, we have
\[\Pr\left[\sum_{i=1}^nX_i \geq \gamma n\right] \leq e^{-2(\gamma-\delta)^2n}.\]
\end{theorem}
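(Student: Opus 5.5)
We follow the argument of Impagliazzo and Kabanets, which proves this bound directly from the subset hypothesis. The idea is to choose a random subset and compare two expressions for the probability that all variables in it equal $1$. Set $Y=\sum_{i=1}^n X_i$, fix a parameter $q\in[0,1)$, and let $\bs\subseteq[n]$ be a random subset in which each $i$ is included independently with probability $q$, independently of the $X_i$. Consider the event $F$ that $X_i=1$ for all $i\in\bs$.

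\textbf{Upper bound on $\Pr[F]$.} Condition on $\bs=S$ and apply the hypothesis:
\[
\Pr[F]=\bbE_{\bs}\Pr\Bigl[\textstyle\bigwedge_{i\in\bs}\{X_i=1\}\Bigr]\le \bbE_{\bs}\,\delta^{|\bs|}=(1-q+q\delta)^n .
\]
\textbf{Lower bound on $\Pr[F]$.} Condition instead on the values of the $X_i$. Given those values, $F$ occurs exactly when $\bs$ avoids every index with $X_i=0$. This has probability $(1-q)^{n-Y}$, which on the event $Y\ge\gamma n$ is at least $(1-q)^{(1-\gamma)n}$. Therefore
\[
\Pr[Y\ge\gamma n]\,(1-q)^{(1-\gamma)n}\le\Pr[F]\le(1-q+q\delta)^n .
\]

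\textbf{Optimizing over $q$.} It remains to show that
\[
\min_{q}\frac{(1-q+q\delta)}{(1-q)^{1-\gamma}}\le e^{-2(\gamma-\delta)^2}.
\]
For $\delta<\gamma<1$ and $\delta>0$, take
\[
q=\frac{\gamma-\delta}{\gamma(1-\delta)}.
\]
A routine computation shows that the ratio then equals $e^{-D(\gamma\|\delta)}$, where $D(\gamma\|\delta)$ is the binary KL divergence in nats. Pinsker's inequality gives $D(\gamma\|\delta)\ge 2(\gamma-\delta)^2$, which yields the claimed bound $e^{-2(\gamma-\delta)^2 n}$.

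\textbf{Boundary cases.} The case $\gamma=\delta$ is trivial. For $\gamma=1$, apply the hypothesis with $S=[n]$ to get $\Pr[Y\ge n]\le\delta^n$. Since $\ln(1/\delta)\ge 1-\delta\ge 2(1-\delta)^2$, this gives $\delta^n\le e^{-2(1-\delta)^2n}$. For $\delta=0$ the hypothesis with $|S|=1$ makes every $X_i=0$ almost surely, so $\Pr[Y\ge\gamma n]=0$ for $\gamma>0$.

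The only step requiring any care is the choice of $q$ and checking the identity with the KL divergence. If that algebra becomes messy, one can instead bound $\ln$ of the ratio directly. Since the expression is convex in $q$, it suffices to check that its value at the chosen $q$ is at most $-2(\gamma-\delta)^2$.
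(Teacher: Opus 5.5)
The paper gives no proof of this statement; it cites \cite{PS97}. The argument there uses moment generating functions. Since each $X_i$ is Boolean, $e^{tY}=\prod_i\bigl(1+(e^t-1)X_i\bigr)$ expands, for $t\geq0$, into monomials $\prod_{i\in S}X_i$ with nonnegative coefficients. The subset hypothesis then gives $\bbE[e^{tY}]\leq(1+(e^t-1)\delta)^n$, which is the moment generating function of $n$ independent Bernoulli$(\delta)$ variables, and the usual Chernoff--Hoeffding calculation finishes.

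You instead use the Impagliazzo--Kabanets random-subset argument, which never touches exponential moments. Your main case is correct. The choice $q=\frac{\gamma-\delta}{\gamma(1-\delta)}$ lies in $[0,1)$ exactly when $\delta>0$ and $\gamma<1$. It gives $1-q+q\delta=\delta/\gamma$ and $1-q=\frac{\delta(1-\gamma)}{\gamma(1-\delta)}$, so the ratio is $(\delta/\gamma)^{\gamma}\bigl(\frac{1-\delta}{1-\gamma}\bigr)^{1-\gamma}=e^{-D(\gamma\|\delta)}$, and Pinsker converts this to the stated form.

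Both routes reach the same KL exponent. The MGF route also records a structural fact: $Y$ is dominated by the independent case in every exponential moment, so it plugs into any MGF-based argument. Yours is shorter and purely combinatorial, and it produces the KL exponent directly from one well-chosen $q$.

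One step in your boundary cases is false as written. For $\gamma=1$ you use $1-\delta\geq2(1-\delta)^2$, which holds only for $\delta\geq\frac12$; at $\delta=0.1$ it reads $0.9\geq1.62$. The inequality you actually need, $\ln(1/\delta)\geq2(1-\delta)^2$, is true for all $\delta\in(0,1]$. It is Pinsker again, since $\ln(1/\delta)=D(1\|\delta)$. Alternatively, check directly that $h(\delta)=\ln(1/\delta)-2(1-\delta)^2$ satisfies $h(1)=0$ and $h'(\delta)=-(1-2\delta)^2/\delta\leq0$.

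Your closing remark about convexity is unnecessary, since plugging in any single $q$ already upper-bounds the minimum. It is also not true in general: for $\delta=0.1$, $\gamma=0.9$, both the ratio and its logarithm have negative second derivative at $q=0$. Nothing in the proof depends on it.
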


We will also use a finite-space fourth-moment bound that is a homogeneous degree-$d$ consequence of O'Donnell's hypercontractivity theorem for general random variables, stated below.
\begin{theorem}[{\cite[Theorem~10.16]{ODo14}}, specialized to $q=4$]\label{thm:odonnell}
Let $X$ be a random variable satisfying $\bbE[X]=0$, $\bbE[X^2]=1$, and $\bbE[X^4]^{1/4}=B_X$, and let $\eta_X=\frac{1}{2\sqrt{3}\,B_X}$. Then
\[ \bbE\left[(a+\eta_XbX)^4\right]^{1/4} \leq \bbE\left[(a+bX)^2\right]^{1/2}
=
(a^2+b^2)^{1/2}\]
for all $a,b\in\bbR$.
\end{theorem}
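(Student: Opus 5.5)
The plan is to expand the fourth moment directly and bound each term using only the three given moment conditions. The target inequality $\bbE[(a+\eta_XbX)^4]^{1/4}\le(a^2+b^2)^{1/2}$ is equivalent to
\[
\bbE\left[(a+\eta_XbX)^4\right]\le a^4+2a^2b^2+b^4 .
\]
The identity $\bbE[(a+bX)^2]=a^2+b^2$ is immediate from $\bbE[X]=0$ and $\bbE[X^2]=1$. Writing $\eta=\eta_X$ and $B=B_X$, the binomial expansion together with $\bbE[X]=0$ and $\bbE[X^2]=1$ gives
\[
\bbE\left[(a+\eta bX)^4\right]=a^4+6\eta^2a^2b^2+4\eta^3ab^3\,\bbE[X^3]+\eta^4b^4B^4 .
\]
It then suffices to compare this with $a^4+2a^2b^2+b^4$ term by term.

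First, $B\ge 1$ by Jensen (or Cauchy--Schwarz), since $\bbE[X^4]\ge\bbE[X^2]^2=1$. Hence $\eta\le\frac{1}{2\sqrt3}$ and $\eta^2\le\frac1{12}$, so the quadratic term is at most $\frac12a^2b^2$.

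The quartic term is exact: $\eta^4B^4=\frac1{144}$, so it contributes $\frac{b^4}{144}$.

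The only term needing care is the cubic one, because $\bbE[X^3]$ can have either sign and is not controlled directly. I would bound it by Cauchy--Schwarz, $|\bbE[X^3]|\le\sqrt{\bbE[X^2]\,\bbE[X^4]}=B^2$. Writing $\eta^3B^2=\eta\cdot\eta^2B^2=\eta/12\le\frac{1}{24\sqrt3}$, the cubic term is at most
\[
4\eta^3B^2|a||b|^3\le\frac{|a||b|^3}{6\sqrt3}.
\]
By AM--GM, $|a||b|^3\le\frac12(a^2b^2+b^4)$.

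Summing the bounds, the coefficient of $a^2b^2$ is at most $\frac12+\frac1{12\sqrt3}<2$, and the coefficient of $b^4$ is at most $\frac1{12\sqrt3}+\frac1{144}<1$. This gives the claimed inequality, and taking fourth roots finishes the proof.

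I expect the cubic term to be the main (mild) obstacle. If a tighter constant were wanted, the AM--GM split could be adjusted, but with $\eta=\frac{1}{2\sqrt3B}$ there is a large margin, so the crude estimates above should already suffice.
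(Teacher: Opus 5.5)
Your proof is correct. The expansion
$\bbE[(a+\eta bX)^4]=a^4+6\eta^2a^2b^2+4\eta^3ab^3\bbE[X^3]+\eta^4B^4b^4$ is right, and so are the three estimates: $6\eta^2\le\frac12$ from $B\ge1$, $4\eta^3|\bbE[X^3]|\le4\eta^3B^2\le\frac{1}{6\sqrt3}$ by Cauchy--Schwarz, and $\eta^4B^4=\frac1{144}$. Together with $|a||b|^3\le\frac12(a^2b^2+b^4)$ they give $\bbE[(a+\eta bX)^4]\le(a^2+b^2)^2$.

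The paper does not prove this statement; it cites O'Donnell, whose result is the general bound $\eta_X=\|X\|_2/(2\sqrt{q-1}\,\|X\|_q)$ for every $q>2$, specialized here to $q=4$. That argument is structurally different from yours. One first symmetrizes $X$ using $\bbE[X]=0$ and convexity, which costs the factor $2$. One then applies the two-point hypercontractive inequality for a uniform $\pm1$ bit conditionally on $X$, which produces the $\sqrt{q-1}$. Minkowski's inequality in $L^{q/2}$ finishes the proof.

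Your route instead exploits that $4$ is an even integer, so the fourth moment can be expanded exactly. Only the odd cubic term needs controlling. This does not extend to general $q$ and does not explain where the constant comes from. However, it is fully elementary and self-contained, and $q=4$ is the only case the paper uses, in Corollary~\ref{cor:finite-space-hypercontractivity}.

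Your computation also makes the slack explicit. The same estimates go through with $\eta_X=\frac{1}{2B_X}$: the coefficients become $\frac32+\frac14$ on $a^2b^2$ and $\frac14+\frac1{16}$ on $b^4$. That choice would improve the factor $(2\sqrt3)^{4d}$ in that corollary to $2^{4d}$.
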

Using this, we prove the following corollary, which we will actually use.
\begin{corollary}[Polynomial fourth-moment bound]\label{cor:finite-space-hypercontractivity}
Let $X$ be a finite-valued real random variable satisfying $\bbE[X]=0$ and $\bbE[X^2]=1$.
For $m\geq d$, let $X_1,\ldots,X_m$ be independent copies of $X$. Then, for every collection of real coefficients $(c_S)_{S\in\binom{[m]}d}$,
\[
Z=\sum_{S\in\binom{[m]}d}c_S\prod_{i\in S}X_i
\]
satisfies
\[
\bbE[Z^4]
\leq
\left(2\sqrt{3}\right)^{4d}\bbE[X^4]^d\;\bbE[Z^2]^2.
\]
\end{corollary}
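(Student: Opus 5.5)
We prove the corollary by induction on $m$, using Theorem~\ref{thm:odonnell} to handle one variable at a time. This is the standard route from the one-variable hypercontractive estimate to the multilinear Bonami-type inequality. Set $\eta=\eta_X=\frac{1}{2\sqrt3 B_X}$ with $B_X=\bbE[X^4]^{1/4}$. The goal is the stronger statement
\[
\bbE[Z^4]^{1/4}\le \eta^{-d}\,\bbE[Z^2]^{1/2},
\]
and raising it to the fourth power gives exactly $(2\sqrt3)^{4d}\bbE[X^4]^d\,\bbE[Z^2]^2$. Since each term $\prod_{i\in S}X_i$ has degree exactly $d$, this is equivalent to $\bbE[(T_\eta Z)^4]^{1/4}\le \bbE[Z^2]^{1/2}$, where $T_\eta$ multiplies each degree-$k$ multilinear monomial by $\eta^k$; on $Z$ it acts as multiplication by $\eta^d$. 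So the plan is to prove this $(2,4)$-hypercontractive inequality for $T_\eta$ on arbitrary multilinear polynomials $F$ in $X_1,\dots,X_m$, and then specialize to homogeneous $F=Z$.

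For the induction, write $F=A+X_mB$, where $A,B$ are multilinear in $X_1,\dots,X_{m-1}$. Then $T_\eta F=T'_\eta A+\eta X_m T'_\eta B$, where $T'_\eta$ acts on the first $m-1$ variables. Conditioning on $X_1,\dots,X_{m-1}$ and applying Theorem~\ref{thm:odonnell} pointwise with $a=T'_\eta A$ and $b=T'_\eta B$ gives
\[
\bbE[(T_\eta F)^4]\le \bbE\bigl[\bigl((T'_\eta A)^2+(T'_\eta B)^2\bigr)^2\bigr].
\]
By Minkowski's inequality in $L^2$, this is at most
\[
\bigl(\|(T'_\eta A)^2\|_2+\|(T'_\eta B)^2\|_2\bigr)^2=\bigl(\|T'_\eta A\|_4^2+\|T'_\eta B\|_4^2\bigr)^2.
\]
The induction hypothesis bounds each $\|T'_\eta A\|_4$ by $\|A\|_2$ and likewise for $B$. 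The right side is therefore at most $(\|A\|_2^2+\|B\|_2^2)^2$. This equals $\|F\|_2^4$, because $\bbE[X_m]=0$, $\bbE[X_m^2]=1$ and $X_m$ is independent of $A,B$. The base case $m=0$, where $F$ is constant, is trivial.

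The step that needs the most care is the pointwise application of Theorem~\ref{thm:odonnell}. It must be used conditionally with $a,b$ random but independent of $X_m$, which is fine since the inequality holds for all real $a,b$. Finite-valuedness of $X$ ensures all moments exist, and the multilinear expansion is exact, so no further difficulties arise.
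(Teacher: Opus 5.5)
Your proof is correct and follows essentially the same route as the paper: induction on $m$ by splitting off $X_m$, a pointwise application of Theorem~\ref{thm:odonnell} conditioned on $X_1,\ldots,X_{m-1}$, Minkowski's inequality in $L^2$, the induction hypothesis, and the orthogonality identity $\bbE[F^2]=\bbE[A^2]+\bbE[B^2]$. The only cosmetic difference is that you prove the $(2,4)$ bound for $T_\eta$ on all multilinear polynomials and then specialize. The paper instead stays within homogeneous polynomials, with $Z_0$ of degree $d$ and $Z_1$ of degree $d-1$, and tracks the factors $\eta^d,\eta^{d-1}$ directly.
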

\begin{proof}
Since the function $t\mapsto t^2$ is convex on $[0,\infty)$, Jensen's inequality applied to the nonnegative random variable $X^2$ gives
\[
B_X^4
=
\bbE[(X^2)^2]
\geq
\bbE[X^2]^2
=1,
\]
and hence $B_X\geq1$ and $\eta_X<1$.

We prove by induction on $m$ that every homogeneous multilinear polynomial $Z$ of degree $d$ in $X_1,\ldots,X_m$ satisfies
\begin{equation}\label{eq:induction}
\eta_X^d\bbE[Z^4]^{1/4}
\leq
\bbE[Z^2]^{1/2}.
\end{equation}
The case $m=0$ is immediate. The claim is also immediate when $d=0$, since $Z$ is then constant.

Suppose $m\geq1$ and $d\geq1$. Separate the monomials according to whether they contain $X_m$, and write
\[
Z=Z_0+X_mZ_1,
\]
where $Z_0$ is homogeneous of degree $d$ and $Z_1$ is homogeneous of degree $(d-1)$ in $X_1,\ldots,X_{m-1}$. Either polynomial may be zero. Since
\[
\eta_X^dZ
=
\eta_X^dZ_0
+
\eta_X\bigl(\eta_X^{d-1}Z_1\bigr)X_m,
\]
Theorem~\ref{thm:odonnell}, applied with $a=\eta_X^dZ_0$ and $b=\eta_X^{d-1}Z_1$ for each fixed value of $(X_1,\ldots,X_{m-1})$, gives
\[
\bbE_{X_m}\left[(\eta_X^dZ)^4\right]^{1/2}
\leq
\eta_X^{2d}Z_0^2+\eta_X^{2(d-1)}Z_1^2.
\]
Squaring this inequality, averaging over $X_1,\ldots,X_{m-1}$, and then taking a square root yields
\[
\eta_X^{2d}\bbE_{X_1,\ldots, X_{m-1}}[Z^4]^{1/2}
\leq
\bbE_{X_1,\ldots,X_{m-1}}\left[
\left(
\eta_X^{2d}Z_0^2+\eta_X^{2(d-1)}Z_1^2
\right)^2
\right]^{1/2} \leq \eta_X^{2d}\bbE[Z_0^4]^{1/2}
+
\eta_X^{2(d-1)}\bbE[Z_1^4]^{1/2}
\]
by the triangle inequality.
Applying the induction hypothesis to $Z_0$ and $Z_1$, the right-hand side above this is at most
\[
\bbE[Z_0^2]+\bbE[Z_1^2].
\]

Finally, $Z_0$ and $Z_1$ do not depend on $X_m$. Since $\bbE[X_m]=0$ and $\bbE[X_m^2]=1$,
\[
\bbE[Z^2]
=
\bbE[(Z_0+X_mZ_1)^2]
=
\bbE[Z_0^2]+\bbE[Z_1^2].
\]
Consequently,
\[
\eta_X^{2d}\bbE[Z^4]^{1/2}
\leq
\bbE[Z^2].
\]
Taking a square root proves the induction step and therefore \eqref{eq:induction}. Raising \eqref{eq:induction} to the fourth power and substituting the definition of $\eta_X$ gives
\[
\bbE[Z^4]
\leq
\eta_X^{-4d}\bbE[Z^2]^2
=
\left(2\sqrt{3}\right)^{4d}\bbE[X^4]^d\bbE[Z^2]^2,
\]
as required.
\end{proof}

\section{Noisy entanglement} \label{sec:noisyentanglement}
We now turn to introducing concepts about noisy entanglement that we will be using throughout the rest of the paper. The first subsection discusses the maximal correlation, which we will be using as a measure of how noisy a given quantum state is. The next subsections introduce the four noise models we will be studying.

\subsection{Maximal correlation}
In our study of noisy entanglement, it will be useful to have a way to measure how noisy a given entangled state is. We will be using the maximal correlation, which is a real number between zero and one. Classically, the maximal correlation of two random variables is defined as follows.

Throughout this paper, classical random variables are finite-valued and quantum systems are finite-dimensional. Accordingly, the maxima in the following definitions are attained.

\begin{definition}[\cite{Hir35,Geb41,Ren59}] \label{def:maximalcorrelation1}
Given two random variables $R^\A,R^\B$, their maximal correlation is
\[
\MC(R^\A;R^\B)
=
\max_{\substack{f \in L^2(R^\A),g \in L^2(R^\B) \\ \bbE[f(R^\A)] = \bbE[g(R^\B)] = 0 \\ \bbE[f(R^\A)^2] = \bbE[g(R^\B)^2] = 1 }} \bbE[f(R^\A)g(R^\B)].
\]
\end{definition}

This quantity extends to bipartite quantum states in the following way. 
\begin{definition}[\cite{Bei13}] \label{def:quantmaxcor}
Given a bipartite state $\Psi_{E^\A E^\B}$, its maximal correlation is
\[
\MC(\Psi)
=
\max_{\substack{M \in L^2(\Psi_{E^\A}), N \in L^2(\Psi_{E^\B}) \\ \langle I, M \rangle_{\Psi_{E^\A}} = \langle I, N \rangle_{\Psi_{E^\B}} = 0 \\ \langle M, M \rangle_{\Psi_{E^\A}} = \langle N, N \rangle_{\Psi_{E^\B}} = 1 }}
\left|\Tr\!\left(\Psi(M^\dagger\otimes N)\right)\right|.
\]
\end{definition}

If the marginals of $\Psi_{E^\A E^\B}$ are maximally mixed, then the centring conditions simply become $\Tr(M)=\Tr(N)=0$. If $M$ and $N$ satisfy the centring conditions but not necessarily the normalization conditions, the definition gives
\begin{equation}\label{eq:MC}
\left|
\Tr\!\left(\Psi(M^\dagger\otimes N)\right)
\right|
\leq
\MC(\Psi)
\sqrt{\Tr(\Psi_{E^\A} M^\dagger M)}
\sqrt{\Tr(\Psi_{E^\B} N^\dagger N)}.
\end{equation}
For arbitrary $M$ and $N$, subtracting their local means gives the covariance form
\begin{equation}\label{eq:MC-var}
\bigl|
\Tr\bigl(\Psi_{E^{\A}E^{\B}}M^\dagger\otimes N\bigr)-\Tr\bigl(\Psi_{E^{\A}}M^\dagger\bigr)
 \Tr\bigl(\Psi_{E^{\B}}N\bigr)
\bigr|
\leq
\MC(\Psi)
\sqrt{\Var_{\Psi_{E^{\A}}}(M)}
 \sqrt{\Var_{\Psi_{E^{\B}}}(N)},
\end{equation}
where
\[
\Var_\sigma(M)
=
\Tr(\sigma M^\dagger M)
-
\left|\Tr(\sigma M)\right|^2.
\]

We will use the following properties of the quantum maximal correlation, which were shown by Beigi.
\begin{lemma}[\cite{Bei13}] \label{lem:maxcorrproperties}
    The maximal correlation satisfies the following properties:
    \begin{enumerate}
        \item For any state $\Psi_{E^\A E^\B}$, $\MC(\Psi) \in [0,1]$, and $\MC(\Psi) = 0$ if and only if $\Psi$ is a product state.
        \item Given states $\Psi_{E^\A_1 E^\B_1}, \Phi_{E^\A_2 E^\B_2}$, we have
        \[\MC(\Psi \otimes \Phi) = \max(\MC(\Psi), \MC(\Phi))\]
        \item The maximal correlation is non-increasing under local operations.
    \end{enumerate}
\end{lemma}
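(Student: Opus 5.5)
The plan is to recast $\MC(\Psi)$ as a second singular value, in the spirit of Theorem~\ref{thm:svd}, and then check the three properties from that picture. Define the linear map $T_\Psi: L^2(\Psi_{E^\B})\to L^2(\Psi_{E^\A})$ by the requirement $\langle M, T_\Psi(N)\rangle_{\Psi_{E^\A}}=\Tr(\Psi(M^\dagger\otimes N))$ for all $M$. This map exists by Theorem~\ref{thm:riesz}, with the support conventions of $L^2(\sigma)$; support issues are harmless because $\Psi$ lives on $\mathrm{supp}(\Psi_{E^\A})\otimes\mathrm{supp}(\Psi_{E^\B})$.

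I would first establish four facts about $T_\Psi$.
\begin{itemize}
\item $T_\Psi(I)=I$, since $\Tr(\Psi(M^\dagger\otimes I))=\langle M,I\rangle_{\Psi_{E^\A}}$.
\item $T_\Psi$ maps centred operators to centred operators, because $\langle I,T_\Psi N\rangle_{\Psi_{E^\A}}=\langle I,N\rangle_{\Psi_{E^\B}}$. The same holds for its adjoint.
\item $\|T_\Psi\|\le 1$. Set $A=(M\otimes I)\Psi^{1/2}$ and $B=(I\otimes N)\Psi^{1/2}$. Then $\Tr(\Psi(M^\dagger\otimes N))=\Tr(A^\dagger B)$, and Cauchy--Schwarz for the Hilbert--Schmidt product gives
\[
\left|\Tr(\Psi(M^\dagger\otimes N))\right|\le\sqrt{\Tr(\Psi_{E^\A}M^\dagger M)}\sqrt{\Tr(\Psi_{E^\B}N^\dagger N)}.
\]
\item By the previous three facts, the SVD of $T_\Psi$ can be chosen with $(I,I)$ as the top singular pair, with value $1$. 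The definition of $\MC(\Psi)$ then says exactly that $\MC(\Psi)$ is the largest singular value of $T_\Psi$ restricted to the centred subspaces.
\end{itemize}

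\textbf{Property 1.} The bound $\MC(\Psi)\in[0,1]$ follows immediately from $\|T_\Psi\|\le1$. If $\Psi$ is a product state, then $\Tr(\Psi(M^\dagger\otimes N))=\Tr(\Psi_{E^\A}M^\dagger)\Tr(\Psi_{E^\B}N)=0$ for centred $M,N$, so $\MC(\Psi)=0$. Conversely, suppose $\MC(\Psi)=0$. Decompose arbitrary $M,N$ into their local means plus centred parts; by \eqref{eq:MC-var}, $\Tr(\Psi(M^\dagger\otimes N))=\Tr(\Psi_{E^\A}M^\dagger)\Tr(\Psi_{E^\B}N)$ for all $M,N$. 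Since product operators span everything, this forces $\Psi=\Psi_{E^\A}\otimes\Psi_{E^\B}$.

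\textbf{Property 2.} The lower bound $\MC(\Psi\otimes\Phi)\ge\max(\MC(\Psi),\MC(\Phi))$ comes from the test operators $M\otimes I$ and $N\otimes I$, which remain centred and normalized. For the upper bound, I would verify on product operators that $T_{\Psi\otimes\Phi}=T_\Psi\otimes T_\Phi$. This uses that the weighted inner products factor for $\Psi_{E^\A_1}\otimes\Phi_{E^\A_2}$, and the identity then extends by linearity. The singular pairs of the tensor product are products of singular pairs, with products of singular values. The centred subspace for $\Psi\otimes\Phi$ is the orthogonal complement of $I\otimes I$, so it is spanned by all product pairs other than $(I,I)\otimes(I,I)$. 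Each such pair has singular value at most $\max(\MC(\Psi)\cdot1,\,1\cdot\MC(\Phi),\,\MC(\Psi)\MC(\Phi))$, using that all singular values are at most $1$.

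I expect the main obstacle to be making this SVD bookkeeping rigorous. Specifically, I need to confirm that $T_{\Psi\otimes\Phi}$ really is the tensor product in the weighted spaces, and that the centred subspace is exactly the span of the non-top product singular vectors.

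\textbf{Property 3.} By symmetry, consider a channel $\Lambda$ on Alice's side, with $\Psi'=(\Lambda\otimes\mathrm{id})\Psi$. Take $M'$ centred for $\Psi'_{E^\A}$ and set $M=\Lambda^*(M')$, where $\Lambda^*$ is the unital CP adjoint. Then:
\begin{itemize}
\item $\Tr(\Psi'(M'^\dagger\otimes N))=\Tr(\Psi(M^\dagger\otimes N))$.
\item $M$ is centred for $\Psi_{E^\A}$.
\item By the Kadison--Schwarz inequality $\Lambda^*(X)^\dagger\Lambda^*(X)\le\Lambda^*(X^\dagger X)$, we get $\Tr(\Psi_{E^\A}M^\dagger M)\le\Tr(\Psi'_{E^\A}M'^\dagger M')$.
\end{itemize}
Applying \eqref{eq:MC} to $\Psi$ then gives $\MC(\Psi')\le\MC(\Psi)$. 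General local operations are compositions of channels, such as partial traces, and appending local ancillas. Appending an ancilla is handled by Property 2 together with the fact that a product state has $\MC=0$.
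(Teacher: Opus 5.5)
Your proof is correct and follows essentially the standard argument of \cite{Bei13}, which the paper cites without reproving. The bounds and tensorization come from viewing $\MC(\Psi)$ as the top singular value of the correlation map restricted to the centred subspaces, which is the same picture the paper records in Theorem~\ref{thm:efronstein}. Monotonicity comes from pulling test operators back through the unital CP adjoint and applying Kadison--Schwarz; your separate treatment of appended ancillas is unnecessary, since that step already covers local channels that change dimension.
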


The quantum maximal correlation can also be described through a singular-value decomposition. Let $\Psi_{E^\A E^\B}$ be a density matrix. By Theorem~\ref{thm:riesz}, there is a unique linear map
\[
\mathcal T_\Psi:L^2(\Psi_{E^\A})\longrightarrow L^2(\Psi_{E^\B})
\]
satisfying
\[
\Tr\!\left(\Psi(M^\dagger\otimes N)\right)
=
\langle \mathcal T_\Psi(M),N\rangle_{\Psi_{E^\B}}
\]
for every $M\in L^2(\Psi_{E^\A})$ and $N\in L^2(\Psi_{E^\B})$. The Cauchy--Schwarz inequality implies that every singular value of $\mathcal T_\Psi$ lies in $[0,1]$, while $\mathcal T_\Psi(I)=I$. Applying Theorem~\ref{thm:svd} to $\mathcal T_\Psi$ gives the following decomposition, which was also proved by Beigi.
\begin{theorem}[Maximal correlation decomposition; {\cite[Theorem~1]{Bei13}}] \label{thm:efronstein}
With the above setup, there are orthonormal families $M_1=I,M_2,\ldots$ in $L^2(\Psi_{E^\A})$ and $N_1=I,N_2,\ldots$ in $L^2(\Psi_{E^\B})$, with at least one family forming a basis of its space, and nonnegative real numbers $1=\rho_1\geq\rho_2\geq\cdots$ such that, for all $i,j$,
\[
\Tr\!\left(\Psi(M_i^\dagger\otimes N_j)\right)
=
\delta_{i,j}\rho_i.
\]
Moreover, $\rho_2=\MC(\Psi)$.
\end{theorem}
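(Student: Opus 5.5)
The plan is to obtain the decomposition by applying the singular value decomposition (Theorem~\ref{thm:svd}) to $\mathcal T_\Psi:L^2(\Psi_{E^\A})\to L^2(\Psi_{E^\B})$. That gives orthonormal families $M_i$, $N_i$ with $\langle N_i,\mathcal T_\Psi(M_j)\rangle_{\Psi_{E^\B}}=\delta_{ij}\rho_i$. Since $\langle N,\mathcal T_\Psi(M)\rangle=\overline{\Tr(\Psi(M^\dagger\otimes N))}$ and the $\rho_i$ are real, this is the displayed identity $\Tr(\Psi(M_i^\dagger\otimes N_j))=\delta_{ij}\rho_i$. The number of singular pairs is $\min(\dim L^2(\Psi_{E^\A}),\dim L^2(\Psi_{E^\B}))$, so the family on the smaller space is a basis.

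Next I fix the top singular pair as $(I,I)$.

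\begin{itemize}
\item First, $\|\mathcal T_\Psi\|\le1$. By Cauchy--Schwarz for the positive functional $X\mapsto\Tr(\Psi X)$,
\[
|\Tr(\Psi(M^\dagger\otimes N))|^2\le\Tr(\Psi(M^\dagger M\otimes I))\,\Tr(\Psi(I\otimes N^\dagger N))=\|M\|^2_{\Psi_{E^\A}}\|N\|^2_{\Psi_{E^\B}},
\]
so every singular value lies in $[0,1]$.
\item Second, $\mathcal T_\Psi(I)=I$. This holds because $\Tr(\Psi(I\otimes N))=\Tr(\Psi_{E^\B}N)=\langle I,N\rangle_{\Psi_{E^\B}}$, and $I$ has unit norm in both weighted spaces.
\item Third, the adjoint $\mathcal T_\Psi^*$ also maps $I$ to $I$, by the symmetric computation on Alice's side.
\end{itemize}

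It follows that $I$ is a right singular vector with singular value $1$, and $\mathcal T_\Psi$ maps $I^\perp$ into $I^\perp$. For $M\perp I$ we have $\langle I,\mathcal T_\Psi M\rangle=\langle\mathcal T_\Psi^*I,M\rangle=\langle I,M\rangle=0$. Hence $\mathcal T_\Psi$ is block diagonal with respect to $\mathrm{span}(I)\oplus I^\perp$ on both sides. I take the SVD of the restriction $I^\perp\to I^\perp$ and prepend the pair $(M_1,N_1)=(I,I)$ with $\rho_1=1$. This gives the ordering $1=\rho_1\ge\rho_2\ge\cdots$, and the orthogonality conditions still hold across the blocks.

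Finally, $\rho_2=\MC(\Psi)$. The centring constraints $\langle I,M\rangle=\langle I,N\rangle=0$ say exactly that $M\in I^\perp$ and $N\in I^\perp$, and the normalization says both are unit vectors. So $\MC(\Psi)=\max|\langle \mathcal T_\Psi M,N\rangle|$ over unit vectors in $I^\perp$, which is the operator norm of the restricted map, namely its largest singular value $\rho_2$. If the restricted map is zero or the complement is trivial, both sides are $0$.

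The only delicate point is degeneracy when $\Psi_{E^\A}$ or $\Psi_{E^\B}$ is not full rank. There the spaces are quotients, and one must check that $\mathcal T_\Psi$ is well defined. This follows from the Cauchy--Schwarz bound above: operators of zero weighted norm pair to zero with everything. I expect no real obstacle; the main care is in the block-diagonal step, which fixes $I$ as the first singular pair even when the singular value $1$ is repeated.
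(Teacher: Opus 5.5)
Your proposal is correct and follows the same route as the paper, which defines $\mathcal T_\Psi$ via Riesz representation, uses Cauchy--Schwarz to bound the singular values by $1$ together with $\mathcal T_\Psi(I)=I$, and then applies the singular value decomposition (deferring the remaining details to Beigi). Your use of $\mathcal T_\Psi^*(I)=I$ to split off $\mathrm{span}(I)$, followed by the SVD of the restriction to $I^\perp$, fills in two steps the paper leaves implicit: forcing $M_1=N_1=I$ even when the singular value $1$ is repeated, and identifying $\rho_2$ with $\MC(\Psi)$.
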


\subsection{Depolarizing noise}
Our default model of noisy entanglement is an EPR pair affected by depolarizing noise. The resulting states are the two-qubit Werner states. For $\mu\in[0,1]$, define
\[
\Omega^\dep_\mu
=
\mu\ket{\Phi^+}\!\bra{\Phi^+}
+
(1-\mu)\frac{I_4}{4},
\]
where
\[
\ket{\Phi^+}
=
\frac{\ket{00}+\ket{11}}{\sqrt{2}}.
\]
The parameter $\mu$ is called the Werner visibility. The Werner state is separable for $\mu\leq\frac13$ and entangled for $\mu>\frac13$.

The qubit depolarizing channel with visibility $\mu$ is
\[
\Delta^\dep_\mu(X)
=
\mu X+(1-\mu)\frac{\Tr(X)}{2}I_2.
\]
It gives the Werner state when applied to one half of an EPR pair:
\[
\Omega^\dep_\mu
=
\bigl(\operatorname{id}\otimes\Delta^\dep_\mu\bigr)
\bigl(\ket{\Phi^+}\!\bra{\Phi^+}\bigr).
\]
\begin{theorem}[\cite{Wern89}]
The qubit Werner states are entangled for $\mu>\frac13$ and separable for $\mu\leq\frac13$.
\end{theorem}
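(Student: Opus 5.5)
The statement has two directions, and I would handle them separately. Entanglement for $\mu>\frac13$ comes from the Peres--Horodecki (PPT) criterion. Separability for $\mu\leq\frac13$ comes from an explicit separable decomposition at the threshold $\mu=\frac13$, combined with convexity.

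\emph{Entanglement for $\mu>\frac13$.} If a state is separable, its partial transpose is positive semidefinite, because the partial transpose of a product state $\alpha\otimes\beta$ is $\alpha\otimes\beta^T\succeq 0$. So it suffices to exhibit a negative eigenvalue of the partial transpose of $\Omega^\dep_\mu$.
\begin{itemize}
\item A direct computation gives $(\state{\Phi^+})^{T_\B}=\frac12 F$, where $F$ is the swap operator on two qubits.
\item The swap $F$ has eigenvalue $+1$ on the symmetric subspace and $-1$ on the singlet $\ket{\Psi^-}$.
\item Hence
\[
(\Omega^\dep_\mu)^{T_\B}=\frac{\mu}{2}F+\frac{1-\mu}{4}I_4 .
\]
Its smallest eigenvalue, attained on $\ket{\Psi^-}$, is $-\frac{\mu}{2}+\frac{1-\mu}{4}=\frac{1-3\mu}{4}$.
\end{itemize}
This eigenvalue is negative exactly when $\mu>\frac13$, so $\Omega^\dep_\mu$ is entangled in that range.

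\emph{Separability for $\mu\leq\frac13$.} For $\mu\in[0,\frac13]$ we can write
\[
\Omega^\dep_\mu=3\mu\,\Omega^\dep_{1/3}+(1-3\mu)\frac{I_4}{4}.
\]
The state $I_4/4$ is a product state, and the set of separable states is convex. So it suffices to show that $\Omega^\dep_{1/3}$ is separable.

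I would do this with the isotropic twirl $\mathcal{T}(X)=\int (U\otimes \bar U)X(U\otimes\bar U)^\dagger\,dU$ over Haar-random $U$.
\begin{itemize}
\item Since $(U\otimes\bar U)\ket{\Phi^+}=\ket{\Phi^+}$, the commutant of $\{U\otimes\bar U\}$ is spanned by $\state{\Phi^+}$ and $I_4-\state{\Phi^+}$.
\item Therefore $\mathcal{T}(X)=F_X\state{\Phi^+}+(1-F_X)\frac{I_4-\state{\Phi^+}}{3}$, where $F_X=\bra{\Phi^+}X\ket{\Phi^+}$, for any density matrix $X$.
\item Take $X=\state{00}$, a product state with $F_X=\frac12$. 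Then
\[
\mathcal{T}(X)=\frac12\state{\Phi^+}+\frac16\bigl(I_4-\state{\Phi^+}\bigr)=\frac13\state{\Phi^+}+\frac23\cdot\frac{I_4}{4}=\Omega^\dep_{1/3}.
\]
\item Each conjugate $(U\otimes\bar U)\state{00}(U\otimes\bar U)^\dagger=U\state{0}U^\dagger\otimes \bar U\state{0}\bar U^\dagger$ is a product state, so the twirl is a (continuous) convex mixture of product states, hence separable.
\end{itemize}

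To avoid an integral in finite dimensions, I would replace the Haar measure by a unitary $2$-design such as the single-qubit Clifford group. Equivalently, I would average $\state{\psi}\otimes\state{\bar\psi}$ over the six Pauli eigenstates $\psi$; a short Pauli-expansion check confirms this average equals $\Omega^\dep_{1/3}$.

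The main obstacle is this separability half. The entanglement direction is a one-line eigenvalue computation, whereas here the decomposition must be verified to hit $\Omega^\dep_{1/3}$ exactly. I would do the verification in the Pauli basis: $\state{\Phi^+}=\frac14\bigl(I\otimes I+X\otimes X-Y\otimes Y+Z\otimes Z\bigr)$, and $\frac16\sum_\psi\state{\psi}\otimes\state{\bar\psi}$ yields the same correlation terms scaled by $\frac13$.
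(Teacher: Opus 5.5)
Your proof is correct in both directions. The partial transpose $\frac{\mu}{2}F+\frac{1-\mu}{4}I_4$ has eigenvalue $\frac{1-3\mu}{4}$ on $\ket{\Psi^-}$. The reduction $\Omega^\dep_\mu=3\mu\,\Omega^\dep_{1/3}+(1-3\mu)\frac{I_4}{4}$ is right. The six-state average does equal $\frac14\bigl(I\otimes I+\frac13(X\otimes X-Y\otimes Y+Z\otimes Z)\bigr)=\Omega^\dep_{1/3}$. That average is by itself a complete finite separable decomposition, so you don't need the claim that the commutant of $\{U\otimes\bar U\}$ is two-dimensional. That claim is the one step you only sketched: invariance of $\ket{\Phi^+}$ alone does not give it, you would also need irreducibility on the orthogonal complement.

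The paper gives no proof of this statement; it simply quotes it from \cite{Wern89}, so there is no in-paper argument to compare against. Your route is essentially Werner's original argument, transported by the local unitary $I\otimes\sigma_Y$. This unitary maps $\ket{\Phi^+}$ to $\ket{\Psi^-}$ up to a phase and $U\otimes\bar U$ to $U\otimes U$. Werner twirls a product state over $U\otimes U$ to get separability. For entanglement he uses the swap test $\Tr(F\,\alpha\otimes\beta)=\Tr(\alpha\beta)\geq 0$ rather than the later Peres criterion. In your isotropic picture that test says separable states satisfy $\bra{\Phi^+}\rho\ket{\Phi^+}\leq\frac12$, while $\bra{\Phi^+}\Omega^\dep_\mu\ket{\Phi^+}=\frac{1+3\mu}{4}$. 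This is exactly the witness hidden in your PPT computation, since $(\state{\Psi^-})^{T_\B}=\frac12 I_4-\state{\Phi^+}$.
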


More generally,
\[
(\Omega^\dep_\mu)^{\otimes m}
=
\bigl(\operatorname{id}\otimes(\Delta^\dep_\mu)^{\otimes m}\bigr)
\bigl(\ket{\Phi^+_{2^m}}\!\bra{\Phi^+_{2^m}}\bigr),
\qquad
\ket{\Phi^+_d}
=
\frac{1}{\sqrt d}\sum_{j=1}^d\ket{j}\ket{j}.
\]

We will use the elementary maximally entangled state identity
\[
\bra{\Phi^+_d}M\otimes N\ket{\Phi^+_d}
=
\frac1d\Tr(MN^{\mathsf T})
=
\frac1d\Tr(M^{\mathsf T}N).
\]
The depolarizing channel is self-adjoint with respect to the Hilbert--Schmidt inner product and commutes with transposition. Consequently, for arbitrary operators $M$ and $N$ on the two local $m$-qubit spaces,
\begin{equation}
\label{eq:choi-identity}
\Tr\!\left[
(M\otimes N)(\Omega^\dep_\mu)^{\otimes m}
\right]
=
\frac1{2^m}
\Tr\!\left[
M\,(\Delta^\dep_\mu)^{\otimes m}(N^{\mathsf T})
\right].
\end{equation}

We will use the following tensorized hypercontractive inequality.
\begin{theorem}[Depolarizing hypercontractivity \cite{MO08,King14}]
\label{thm:depolarizing-hypercontractivity}
Let $1<p\leq q<\infty$ and $0\leq\mu\leq1$. Then for every $m\geq1$,
\[
\left\|(\Delta^\dep_\mu)^{\otimes m}(X)\right\|^\norm_q
\leq
\|X\|^\norm_p
\]
for every operator $X$ if and only if
\[
\mu
\leq
\sqrt{\frac{p-1}{q-1}}.
\]
\end{theorem}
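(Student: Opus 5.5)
The plan has two parts. Necessity comes from a second-order perturbation around the identity. Sufficiency comes from a semigroup argument that reduces hypercontractivity to a logarithmic Sobolev inequality.

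\emph{Necessity.} It is enough to use $m=1$, since tensoring with identities changes nothing. Take $X=I+\eps Z$ with Pauli $Z$. Then $\Delta^\dep_\mu(X)=I+\eps\mu Z$. Both operators are diagonal with eigenvalues $1\pm\cdot$, so their normalized Schatten norms are classical two-point $L^p$ norms. Expanding to second order gives
\[
\|I+\eps Z\|^\norm_p=1+\tfrac{(p-1)\eps^2}{2}+O(\eps^3),
\qquad
\|I+\eps\mu Z\|^\norm_q=1+\tfrac{(q-1)\mu^2\eps^2}{2}+O(\eps^3).
\]
The inequality for small $\eps$ therefore forces $(q-1)\mu^2\leq p-1$.

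\emph{Sufficiency.} First reduce to positive $X$. Both sides are unitarily covariant, so we may work with positive operators. For a general $X$, use the standard $2\times 2$ block embedding $\begin{pmatrix} |X^\dagger| & X\\ X^\dagger & |X|\end{pmatrix}\geq0$. This is compatible with the completely positive map $\mathrm{id}_2\otimes(\Delta^\dep_\mu)^{\otimes m}$, and it transfers the bound from positive block operators to $X$ up to matching normalization. Alternatively, we can cite the known fact that $p\to q$ norms of completely positive unital maps with $p\leq q$ are attained on positive inputs.

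Next, write $\Delta^\dep_\mu=e^{-tL}$ with $\mu=e^{-t}$ and $L(X)=X-\tfrac{\Tr X}{2}I$. On $m$ qubits the generator is $L_m=\sum_{i=1}^m L^{(i)}$, acting locally. It suffices to treat the critical case $\mu=\sqrt{(p-1)/(q-1)}$: smaller $\mu$ follows by composing with an extra depolarizing factor, which is a contraction in every normalized norm. Set $q(t)=1+(p-1)e^{2t}$ and differentiate $\log\|e^{-tL_m}X\|^\norm_{q(t)}$ for positive $X$. As in Gross's classical argument, nonpositivity of this derivative is equivalent to the quantum log-Sobolev inequality
\[
\mathrm{Ent}_q(Y)\leq \tfrac{q^2}{2(q-1)}\,\mathcal E_q(Y)
\qquad\text{for all } q>1 \text{ and } Y\geq0,
\]
where $\mathrm{Ent}_q$ is the $L^q$-entropy functional $\frac1{2^m}\Tr(Y^q\log Y^q)-\|Y\|_q^q\log\|Y\|_q^q$ (normalized) and $\mathcal E_q$ is the associated $L^q$-Dirichlet form of $L_m$. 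Integrating from $0$ to $t$ then gives the theorem.

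For a single qubit, this log-Sobolev inequality is a concrete two-eigenvalue computation: $\Delta^\dep_\mu$ commutes with unitary conjugation, so we may diagonalize $Y$. It reduces to the sharp two-point (Bonami--Beckner) inequality. This is the step where the optimal constant $\sqrt{(p-1)/(q-1)}$ appears.

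The main obstacle is tensorizing to $m$ qubits. Classically, tensorization follows from subadditivity of entropy, but noncommutative $L^q$ norms do not factor in the same way. The first thing I would try is the route of King \cite{King14}. For the unital qubit channel the product $(\Delta^\dep_\mu)^{\otimes m}$ has multiplicative $p\to q$ norm. One writes the $m$-qubit operator as a block matrix over the first $m-1$ qubits and applies the single-qubit bound to each $2\times2$ block structure. A Lieb--Thirring/Araki-type inequality then controls the norm of the block operator by the norm of the matrix of block norms, and induction closes the argument. If that proves too delicate, the fallback is the Montanaro--Osborne \cite{MO08} approach. They prove the $q=2$ case via a quantum Bonami inequality using Pauli-expansion orthogonality and then extend to general $q$ by the semigroup/log-Sobolev route, where the $L^2$ log-Sobolev constant tensorizes because the Dirichlet form is a sum of local terms and a quantum entropy subadditivity holds for the product reference state $I/2^m$.
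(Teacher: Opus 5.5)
Your necessity argument is correct, and so are the one-qubit ingredients for sufficiency: the semigroup form $\Delta^\dep_\mu=e^{-tL}$, the reduction to the critical $\mu$, the Gross-type derivative leading to the $L^q$ log-Sobolev inequality, and the reduction of the single-qubit case, after diagonalizing, to the two-point Bonami--Beckner inequality. The paper gives no proof of this theorem and simply cites \cite{MO08,King14}. The gap is the step you flag yourself: going from one qubit to $m$ qubits, which is where all the content of the theorem lies. Neither route you sketch covers the stated range $1<p\le q<\infty$. The block-matrix route uses the single-qubit bound plus multiplicativity of $\|\Lambda\otimes\Delta\|_{p\to q}$ via King's norm inequality for positive $2\times2$ block matrices, and it only works for $1\le p\le 2\le q$. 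The block-norm comparison points the right way for the input norm only when $p\le2$, and for the output norm only when $q\ge2$. That is exactly why Theorem~\ref{thm:unital-multiplicativity} carries the hypothesis $p\le2\le q$. Since $(\Delta^\dep_\mu)^{\otimes m}$ is self-adjoint and $(p,q)\mapsto(q/(q-1),p/(p-1))$ preserves $(p-1)/(q-1)$, the two remaining regimes $p\le q\le 2$ and $2\le p\le q$ are dual to each other, and this route says nothing about either.

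Your fallback does not fill this hole. Tensorizing the $L^2$ log-Sobolev inequality ``because the Dirichlet form is a sum of local terms and entropy is subadditive'' is the classical argument. Strong subadditivity does bound the entropy by a sum of relative entropies $D(Y^2\,\|\,\mathbb{E}_i(Y^2))$ with $\mathbb{E}_i=\frac{I_2}{2}\otimes\Tr_i$ (after normalizing). But each term involves an operator that may be entangled with the other $m-1$ qubits. Controlling it by the local Dirichlet form therefore needs a single-qubit inequality with quantum side information, which your diagonalization argument does not give.

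More importantly, even the $m$-qubit $L^2$ inequality (which you can in fact get from the $p=2\le q$ case above by differentiating at $t=0$) is not enough outside $p\le2\le q$. There your derivative argument needs the $L^r$ inequality for every $r$ between $p$ and $q$, hence for $r\neq2$. The standard passage from $L^2$ to $L^r$ is a Stroock--Varopoulos-type comparison $\Tr(Y^{r/2}L_m(Y^{r/2}))\le\frac{r^2}{4(r-1)}\Tr(Y^{r-1}L_m(Y))$. For $Y$ that does not commute with the local partial traces, this does not reduce to the two-point case, and you do not address it. So the missing idea is a genuinely noncommutative ingredient of this kind, or a substitute for the block-norm inequality that works outside $p\le 2\le q$.

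Two smaller points. First, unitary covariance cannot reduce a general $X$ to a positive one; note also that $(\Delta^\dep_\mu)^{\otimes m}$ is only covariant under local unitaries. Second, the block embedding must be used through positivity and $\|Q\|_q\le(\|P\|_q\|R\|_q)^{1/2}$ for the off-diagonal block. It cannot be used by applying the bound to $\mathrm{id}_2\otimes(\Delta^\dep_\mu)^{\otimes m}$, which fails in normalized norms. Citing Watrous, as in Lemma~\ref{lem:uni-qubit}, is cleaner.

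Finally, the paper only ever uses the theorem with $p=1+\mu\le2\le q=(1+\mu)/\mu$. In that range your block-matrix route is a complete argument given Theorem~\ref{thm:unital-multiplicativity}. It is exactly how the paper proves Theorem~\ref{thm:unital-hypercontractivity}.
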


The maximal correlation of Werner states was computed in \cite{QY21}.
\begin{lemma}[{\cite[Lemma~3.9]{QY21}}]
\label{lem:MC-werner}
The maximal correlation of the Werner state $\Omega^\dep_\mu$ is
\[
\MC(\Omega^\dep_\mu)
=
\mu.
\]
\end{lemma}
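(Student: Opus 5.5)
Since this lemma is cited from \cite{QY21}, the plan is to compute the map $\mathcal T_\Psi$ of Theorem~\ref{thm:efronstein} explicitly for $\Psi=\Omega^\dep_\mu$ and read off its second singular value. Both marginals of $\Omega^\dep_\mu$ are $I_2/2$, so $L^2(\Psi_{E^\A})$ and $L^2(\Psi_{E^\B})$ are the $2\times 2$ matrices with the normalized Hilbert--Schmidt inner product $\langle X,Y\rangle_{I/2}=\frac12\Tr(X^\dagger Y)$. The centring condition then reads $\Tr M=\Tr N=0$, i.e.\ $M$ and $N$ lie in the span of the Pauli matrices $\sigma_x,\sigma_y,\sigma_z$. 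With respect to the weighted inner product, these three Paulis together with $I$ form an orthonormal basis.

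The key computation is the bilinear form $\Tr\!\left(\Omega^\dep_\mu(M^\dagger\otimes N)\right)$ on this basis. The term $(1-\mu)I_4/4$ contributes $(1-\mu)\frac14\Tr(M^\dagger)\Tr(N)$, which vanishes for traceless $M,N$. The EPR term contributes $\mu\bra{\Phi^+}M^\dagger\otimes N\ket{\Phi^+}=\frac{\mu}{2}\Tr\bigl((M^\dagger)^{\mathsf T}N\bigr)=\frac{\mu}{2}\Tr(\overline{M}N)$, using the maximally entangled identity stated above.

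Restricting to Pauli operators, $\overline{\sigma_x}=\sigma_x$, $\overline{\sigma_z}=\sigma_z$ and $\overline{\sigma_y}=-\sigma_y$. Hence the matrix of the form in the Pauli basis is $\mu\,\diag(1,-1,1)$, so $\mathcal T_\Psi$ acts on the traceless subspace as $\mu$ times a unitary, namely complex conjugation composed with an orthogonal map. All three singular values there are therefore equal to $\mu$. Meanwhile $\mathcal T_\Psi(I)=I$, and $\mathcal T_\Psi$ maps the traceless subspace into the traceless subspace: pairing any traceless $M$ with $N=I$ gives $\frac12\Tr(M^\dagger)\cdot\mu+(1-\mu)\frac14\Tr(M^\dagger)\cdot 2=0$.

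So the singular values of $\mathcal T_\Psi$ are $1,\mu,\mu,\mu$, and Theorem~\ref{thm:efronstein} gives $\MC=\rho_2=\mu$. Equivalently, one can argue directly from the definition. By Cauchy--Schwarz, $|\Tr(\Psi(M^\dagger\otimes N))|=\frac{\mu}{2}|\Tr(\overline M N)|\le \mu\,\|M\|_{I/2}\|N\|_{I/2}$, which is at most $\mu$ for normalized $M,N$. The choice $M=N=\sigma_z$ attains this value.

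The only delicate point is handling complex operators and the absolute value in Definition~\ref{def:quantmaxcor}. This requires care with the conjugation and transpose in the identity, but since conjugation is an isometry, it does not affect the norms.
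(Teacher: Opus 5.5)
Your proof is correct and takes essentially the same route as the paper. The paper only cites \cite{QY21} for this lemma, but its proof of Lemma~\ref{lem:MC-unital}, specialized to $K=\mu I_3$, is exactly your computation: traceless $M,N$ in the Pauli basis, the maximally entangled identity giving $\frac{\mu}{2}\Tr(\overline{M}N)$, the sign flip of $\sigma_y$, a Cauchy--Schwarz/singular-value bound, and attainment at a Pauli such as $\sigma_z$. One small wording fix: on the traceless part $\mathcal T_\Psi(M)=\mu M^{\mathsf T}$ is linear (transposition), not complex conjugation, though this does not change the singular values $1,\mu,\mu,\mu$.
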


\subsection{General unital noise}

A qubit channel is unital if it maps the identity to itself. We denote a unital qubit channel by $\Delta^\uni_K$, where $K\in\bbR^{3\times3}$ is its Bloch matrix, defined by
\[
\Delta^\uni_K(\sigma_j)
=
\sum_{i=1}^3
K_{ij}\sigma_i,
\qquad
j\in\{1,2,3\},
\]
where $\sigma_1,\sigma_2,\sigma_3$ are the Pauli operators. Since $\Delta^\uni_K$ is unital and trace preserving,
\[
\Delta^\uni_K(I_2)=I_2
\]
and its action on an arbitrary operator $X=aI_2+\sum_{i=1}^3v_i\sigma_i$ is
\[
\Delta^\uni_K(X)
=
aI_2+
\sum_{i=1}^3(Kv)_i\sigma_i.
\]
Thus, the Bloch matrix $K$ completely specifies the channel. Applying $\Delta^\uni_K$ to one half of an EPR pair produces the state
\[
\Omega^\uni_K
=
\bigl(\operatorname{id}\otimes\Delta^\uni_K\bigr)
\bigl(\ket{\Phi^+}\!\bra{\Phi^+}\bigr),
\]
which has maximally mixed marginals. Depolarizing noise is the special case $K=\mu I_3$, for which $\Delta^\uni_{\mu I_3}=\Delta^\dep_\mu$.

Although the whole matrix $K$ is generally required to characterize unital noise, in some cases only its singular values are relevant. We use $s_1,s_2,s_3$ to denote the singular values of $K$ in descending order; thus, $s_1=\|K\|_\infty$. Equivalently, $\|K\|_\infty=\|K\|_{2\to2}$; we use the latter notation when invoking the induced-norm characterization
\[
\|K\|_{2\to2} = \max_{v\neq 0}\frac{\|Kv\|_2}{\|v\|_2}.
\]

We first state the multiplicativity result for unital qubit channels proved by King \cite{King14}. This result is stated using the induced norms obtained from the unnormalized Schatten norms.

\begin{theorem}[Unital multiplicativity theorem {\cite[Theorem~3]{King14}}]
\label{thm:unital-multiplicativity}
Let $\Delta^\uni_K$ be a unital qubit channel, and let $\Lambda$ be any completely positive map on operators of arbitrary finite dimension. If
\[
1\leq p\leq2\leq q,
\]
then
\[
\left\|\Lambda\otimes\Delta^\uni_K\right\|_{p\to q}
=
\|\Lambda\|_{p\to q}
\left\|\Delta^\uni_K\right\|_{p\to q}.
\]
\end{theorem}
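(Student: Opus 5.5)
The statement is King's theorem, which we take from \cite{King14}. If I were to prove it from scratch, I would follow King's strategy and reduce the problem to an inequality about $2\times2$ block matrices.

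\textbf{Easy direction and reductions.} The inequality $\geq$ comes from product inputs. Schatten norms are multiplicative on tensor products, so $\|(\Lambda\otimes\Delta^\uni_K)(X\otimes Y)\|_q=\|\Lambda(X)\|_q\|\Delta^\uni_K(Y)\|_q$, and taking near-optimal $X$ and $Y$ gives the lower bound.

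For $\leq$, I would first make three reductions.
\begin{enumerate}
\item For completely positive maps, the $p\to q$ norm is attained on positive semidefinite inputs, by the standard argument of Watrous and Audenaert. So it suffices to bound $\|(\Lambda\otimes\Delta^\uni_K)(Y)\|_q$ for $Y\geq0$.
\item Write $K=O_1\,\diag(\lambda_1,\lambda_2,\lambda_3)\,O_2$ with $O_1,O_2\in SO(3)$, allowing signed $\lambda_i$. Rotations in $SO(3)$ are implemented by unitary conjugations on the qubit, and Schatten norms are unitarily invariant. So we may assume $\Delta^\uni_K$ is Pauli-diagonal.
\item A Pauli-diagonal channel is a composition of a phase-type map with a unitary twirl, which lets us assume $\lambda_1\geq|\lambda_2|$ and $\lambda_3\geq0$.
\end{enumerate}

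\textbf{Main argument.} Write a positive input as a block matrix $Y=\begin{pmatrix}A&B\\B^\dagger&D\end{pmatrix}$, with blocks acting on $\Lambda$'s space and the outer $2\times2$ structure on the qubit. A Pauli-diagonal channel acts on the outer structure by:
\begin{itemize}
\item mixing $A$ and $D$ according to $\lambda_3$;
\item sending $B$ to a combination of $B$ and $B^\dagger$ governed by $\lambda_1,\lambda_2$.
\end{itemize}
This action commutes with applying $\Lambda$ blockwise.

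The key tool is a pair of \emph{block-norm inequalities} for positive block matrices, in the spirit of King's lemma:
\[
\|Y\|_q\leq\left\|\begin{pmatrix}\|A\|_q&\|B\|_q\\ \|B\|_q&\|D\|_q\end{pmatrix}\right\|_q\quad(q\geq2),
\qquad
\|Y\|_p\geq\left\|\begin{pmatrix}\|A\|_p&\|B\|_p\\ \|B\|_p&\|D\|_p\end{pmatrix}\right\|_p\quad(1\leq p\leq2).
\]
I would prove these using Lieb--Thirring/Araki--Lieb--Thirring and interpolation between the cases $q=2$ and $q=\infty$.

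With these in hand the chain is as follows.
\begin{enumerate}
\item Apply the $q$-inequality to the output $(\Lambda\otimes\Delta^\uni_K)(Y)$.
\item Bound each output block's $q$-norm by $\|\Lambda\|_{p\to q}$ times the $p$-norm of the corresponding input combination. For the off-diagonal block this uses a separate estimate via the $2\times2$ positivity trick.
\item The resulting scalar $2\times2$ positive matrix is, up to monotonicity, exactly $\Delta^\uni_K$ applied to the scalar matrix $\begin{pmatrix}\|A\|_p&\|B\|_p\\ \|B\|_p&\|D\|_p\end{pmatrix}$.
\item Its $q$-norm is therefore at most $\|\Delta^\uni_K\|_{p\to q}$ times that scalar matrix's $p$-norm.
\item The reverse block inequality bounds this last quantity by $\|Y\|_p$.
\end{enumerate}

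\textbf{Main obstacle.} The hard step is the block-norm inequalities together with the monotonicity needed in step 3. The scalar matrix must dominate in the right order, which requires the $2\times2$ $q$-norm to be increasing in the diagonal entries and in the absolute value of the off-diagonal entry. The off-diagonal term of the output mixes $B$ and $B^\dagger$, so I would first treat $\lambda_2=\pm\lambda_1$ and then obtain general $\lambda_2$ by convexity in the channel parameters. The restriction $p\leq2\leq q$ enters exactly where the two block inequalities point in opposite directions.
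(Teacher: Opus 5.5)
This matches the paper, which gives no proof and simply imports the result as King's Theorem~3, exactly as you do. Your from-scratch outline is a sound architecture: the block-norm inequality for positive $2\times2$ block matrices gives an upper bound for $q\ge2$ on the output and a lower bound for $p\le2$ on the input, and $\Delta^\uni_K$ then acts on the real symmetric matrix of block norms. The block-norm inequalities themselves are the deep input, however, and are not a routine interpolation between the easy cases $q=2$ and $q=\infty$, because the comparison matrix itself depends on $q$.

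One step of the outline should be dropped, namely treating $\lambda_2=\pm\lambda_1$ first and then passing to general $\lambda_2$ by convexity. The endpoint maps need not be completely positive: $(\lambda_1,\lambda_2,\lambda_3)=(0.6,-0.4,0)$ is a channel satisfying your normalization, while $(0.6,\pm0.6,0)$ are not, and your chain uses complete positivity. The detour is also unnecessary. Directly, $\bigl\|\tfrac{\lambda_1+\lambda_2}{2}B+\tfrac{\lambda_1-\lambda_2}{2}B^\dagger\bigr\|_p\le\max(|\lambda_1|,|\lambda_2|)\,\|B\|_p$. Moreover, $\Delta^\uni_K$ applied to a real symmetric matrix involves only $\lambda_1$ and $\lambda_3$. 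So arranging $|\lambda_1|\ge|\lambda_2|$ by a rotation about the $z$-axis already makes the dominating scalar matrix exactly $\Delta^\uni_K$ of the input norm matrix.
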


King's theorem is a multiplicativity theorem rather than, by itself, a hypercontractive inequality for every unital qubit channel. We will obtain the hypercontractive statement that we need by comparing each unital channel to a depolarizing channel and then applying Theorem~\ref{thm:unital-multiplicativity}.

\begin{theorem}[Hypercontractivity for general unital qubit noise]
\label{thm:unital-hypercontractivity}
Let $\Delta^\uni_{K_1},\ldots,\Delta^\uni_{K_m}$ be unital qubit channels, with Bloch matrices $K_1,\ldots,K_m$. Suppose that
\[
\max_{j\in[m]}
\left\|K_j\right\|_{2\to2}
\leq
\lambda
\]
for some $\lambda\in[0,1]$. If
\[
1<p\leq2\leq q<\infty
\qquad\text{and}\qquad
\lambda
\leq
\sqrt{\frac{p-1}{q-1}},
\]
then, for every operator $X$ on $m$ qubits,
\[
\left\|
\left(
\Delta^\uni_{K_1}\otimes\cdots\otimes\Delta^\uni_{K_m}
\right)(X)
\right\|^\norm_q
\leq
\|X\|^\norm_p.
\]
\end{theorem}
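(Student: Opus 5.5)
The plan is to reduce the statement to single-qubit norm bounds and then glue the qubits together with King's multiplicativity theorem (Theorem~\ref{thm:unital-multiplicativity}). Everything is done in unnormalized Schatten norms. On $m$ qubits, $\|Y\|^\norm_q=2^{-m/q}\|Y\|_q$, so the claim is equivalent to
\[
\left\|\Delta^\uni_{K_1}\otimes\cdots\otimes\Delta^\uni_{K_m}\right\|_{p\to q}\leq 2^{m(1/q-1/p)}.
\]

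\textbf{Step 1: single-qubit comparison.} I will show that $\|\Delta^\uni_{K}\|_{p\to q}\le\|\Delta^\dep_\lambda\|_{p\to q}$ whenever $\|K\|_{2\to2}\le\lambda$.
\begin{itemize}
\item Write the singular value decomposition $K=R_1DR_2$ with $R_1,R_2\in SO(3)$ and $D$ real diagonal. Any sign of a determinant is absorbed into $D$, so the entries of $D$ satisfy $|d_i|\le s_1\le\lambda$.
\item Rotations in $SO(3)$ of the Bloch vector are implemented by unitary conjugations. These preserve every Schatten norm, so $\|\Delta^\uni_K\|_{p\to q}=\|\Delta_D\|_{p\to q}$. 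Here $\Delta_D$ denotes the linear (not necessarily CP) map fixing $I$ and sending $\sigma_i\mapsto d_i\sigma_i$.
\item The map $D\mapsto\Delta_D$ is linear, and $D$ lies in the cube with vertices $\diag(\eps_1\lambda,\eps_2\lambda,\eps_3\lambda)$, $\eps_i\in\{\pm1\}$. By convexity of the induced norm, $\|\Delta_D\|_{p\to q}$ is at most the maximum of the norm over these eight vertices.
\item Each vertex map equals $\Delta^\dep_\lambda\circ W$ for some map $W$. Sign patterns with an even number of minus signs come from conjugation by a Pauli. Odd patterns come from a Pauli conjugation composed with transposition, which flips the sign of $\sigma_2$. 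Both Pauli conjugation and transposition are isometries for all Schatten norms, so every vertex has norm exactly $\|\Delta^\dep_\lambda\|_{p\to q}$.
\end{itemize}

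\textbf{Step 2: the depolarizing single-qubit norm.} Theorem~\ref{thm:depolarizing-hypercontractivity} with $m=1$ and $\lambda\le\sqrt{(p-1)/(q-1)}$ gives $\|\Delta^\dep_\lambda\|_{p\to q}\le 2^{1/q-1/p}$ in unnormalized norms. Evaluating at $X=I$ shows this bound is attained, although only the upper bound is needed.

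\textbf{Step 3: tensorization.} Apply Theorem~\ref{thm:unital-multiplicativity} inductively, valid since $1\le p\le2\le q$. At the $j$-th stage take $\Lambda=\Delta^\uni_{K_1}\otimes\cdots\otimes\Delta^\uni_{K_{j-1}}$, which is completely positive as a tensor product of channels. This gives
\[
\left\|\bigotimes_j\Delta^\uni_{K_j}\right\|_{p\to q}=\prod_j\|\Delta^\uni_{K_j}\|_{p\to q}\le 2^{m(1/q-1/p)},
\]
which is the normalized inequality.

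The main obstacle is Step 1. The intermediate maps $\Delta_D$, and some vertex maps such as the ``transpose-depolarizing'' map $\diag(-\lambda,-\lambda,-\lambda)$, need not be completely positive. So one cannot simply present $\Delta^\uni_K$ as $\Delta^\dep_\lambda$ followed by a channel. The convexity-plus-isometry argument sidesteps this, because multiplicativity is invoked only for the genuine channels $\Delta^\uni_{K_j}$ and the comparison is done purely at the level of single-qubit norms. I would double-check that the sign-absorbing SVD and the transpose-induced sign flip really cover all eight vertices.
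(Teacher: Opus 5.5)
Your proof is correct. Steps 2 and 3 are exactly what the paper does: convert to unnormalized norms, use the $m=1$ case of depolarizing hypercontractivity to get $\|\Delta^\dep_\lambda\|_{p\to q}\le 2^{1/q-1/p}$, then apply King's multiplicativity inductively with completely positive $\Lambda$.

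The difference is the single-qubit comparison. The paper argues pointwise on positive semidefinite inputs. For $X=aI_2+\sum_i v_i\sigma_i\succeq0$, the eigenvalues of $\Delta^\uni_K(X)$ are $a\pm\|Kv\|_2$. Since $(a+s)^q+(a-s)^q$ is nondecreasing in $s\in[0,a]$ and $\|Kv\|_2\le\lambda\|v\|_2$, this gives $\|\Delta^\uni_K(X)\|^\norm_q\le\|\Delta^\dep_\lambda(X)\|^\norm_q$. The paper then needs Watrous's theorem, that the $p\to q$ norm of a completely positive map is attained on positive semidefinite inputs, to pass to arbitrary $X$.

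You instead compare the induced norms directly, using three ingredients:
\begin{itemize}
\item the $SO(3)$ singular value decomposition;
\item convexity of $D\mapsto\|\Delta_D\|_{p\to q}$ on the cube $[-\lambda,\lambda]^3$ (the map $D\mapsto\Delta_D$ is affine rather than linear, since it always fixes $I$, but convexity is all you need);
\item the fact that each vertex is $\Delta^\dep_\lambda$ composed with a Pauli conjugation and possibly a transposition.
\end{itemize}
Your check of the eight sign patterns is right: the even patterns come from Pauli conjugations, and the four odd ones come from additionally flipping $\sigma_2$ via transposition.

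Your route needs no positivity reduction and no appeal to complete positivity of $\Delta^\uni_K$ in the comparison step. It handles non-Hermitian inputs directly and gives $\|\Delta^\uni_K\|_{p\to q}\le\|\Delta^\dep_{\|K\|_\infty}\|_{p\to q}$ for all $p,q\ge1$. The paper's argument is more hands-on and gives the stronger pointwise domination on positive inputs, but it relies on that extra external result.
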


We first prove a lemma dealing with the special case $m=1$.
\begin{lemma}\label{lem:uni-qubit}
Let $X$ be a $2\times2$ matrix, and let $\Delta^\uni_K$ be a unital qubit channel satisfying $\|K\|_{2\to2} \leq \lambda$. If $\lambda,p,q$ satisfy the same conditions as in Theorem~\ref{thm:unital-hypercontractivity}, then
\[ \|\Delta^\uni_K(X)\|_q^\norm \leq \|X\|_p^\norm.\]
\end{lemma}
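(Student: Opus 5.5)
The plan is to reduce the general unital qubit map $\Delta^\uni_K$ to the depolarizing map $\Delta^\dep_\lambda$, composed with and averaged over maps that preserve every normalized Schatten norm. Then the single-qubit case ($m=1$) of Theorem~\ref{thm:depolarizing-hypercontractivity} finishes the argument. We only use $\Delta^\uni_K$ as a linear map determined by its Bloch matrix, so complete positivity of the intermediate maps is never needed.

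\emph{Step 1 (diagonalize $K$ by rotations).} Take a singular value decomposition $K=O_1\,\mathrm{diag}(s_1,s_2,s_3)\,O_2^{\mathsf T}$ with $O_1,O_2\in O(3)$. If $\det O_i=-1$, replace $O_i$ by $-O_i$ and flip the sign of the corresponding diagonal entries. This gives $K=R_1 D R_2$ with $R_1,R_2\in SO(3)$ and $D=\mathrm{diag}(t_1,t_2,t_3)$, where $|t_i|=s_i\leq\|K\|_{2\to2}\leq\lambda$. Every $R\in SO(3)$ acts on Bloch vectors as conjugation $X\mapsto UXU^\dagger$ by some $U\in SU(2)$, and this fixes $I_2$. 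Since the map $K\mapsto\Delta^\uni_K$ on operators $aI_2+\sum_i v_i\sigma_i$ is multiplicative, we get $\Delta^\uni_K(X)=U_1\,\Delta_D(U_2XU_2^\dagger)\,U_1^\dagger$. Here $\Delta_D$ denotes the linear map with Bloch matrix $D$. Unitary conjugation preserves $\|\cdot\|_p^\norm$ for every $p$, so it suffices to prove $\|\Delta_D(Y)\|_q^\norm\leq\|Y\|_p^\norm$ for all $Y$.

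\emph{Step 2 (convex decomposition).} The vector $(t_1,t_2,t_3)$ lies in the cube $[-\lambda,\lambda]^3$. This cube is the convex hull of the eight points $\lambda\epsilon$ with $\epsilon\in\{\pm1\}^3$. Hence $\Delta_D=\sum_\epsilon c_\epsilon\,\Delta_{\mathrm{diag}(\epsilon)}\circ\Delta^\dep_\lambda$ with $c_\epsilon\geq0$ and $\sum_\epsilon c_\epsilon=1$. This uses that $\Delta^\dep_\lambda$ has Bloch matrix $\lambda I_3$ and commutes with all diagonal sign maps. We then check each sign map.
\begin{itemize}
\item If $\epsilon_1\epsilon_2\epsilon_3=+1$, then $\Delta_{\mathrm{diag}(\epsilon)}$ is either the identity or conjugation by a Pauli operator.
\item If $\epsilon_1\epsilon_2\epsilon_3=-1$, then $\Delta_{\mathrm{diag}(\epsilon)}$ is a Pauli conjugation composed with the transpose map. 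The transpose flips the sign of $\sigma_2$ only.
\end{itemize}
Both unitary conjugation and transposition preserve singular values, hence every normalized Schatten norm. By the triangle inequality for $\|\cdot\|_q^\norm$ we obtain
\[
\|\Delta_D(Y)\|_q^\norm\leq\sum_\epsilon c_\epsilon\|\Delta^\dep_\lambda(Y)\|_q^\norm=\|\Delta^\dep_\lambda(Y)\|_q^\norm .
\]

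\emph{Step 3 (invoke depolarizing hypercontractivity).} Since $\lambda\leq\sqrt{(p-1)/(q-1)}$ and $1<p\leq q$, Theorem~\ref{thm:depolarizing-hypercontractivity} with $m=1$ gives $\|\Delta^\dep_\lambda(Y)\|_q^\norm\leq\|Y\|_p^\norm$, which completes the proof. The restriction $p\leq2\leq q$ is not needed for this single-qubit lemma. It will only enter when King's multiplicativity theorem (Theorem~\ref{thm:unital-multiplicativity}) is used to tensorize.

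The step needing the most care is Step 1, in particular the sign bookkeeping when the orthogonal factors have determinant $-1$. Step 2 absorbs these signs: the resulting Bloch matrices $\mathrm{diag}(\epsilon)$ with $\epsilon_1\epsilon_2\epsilon_3=-1$ are not completely positive, but they are still Schatten-isometric through the transpose, which is all the argument uses.
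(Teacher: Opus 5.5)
Your proof is correct, but it takes a different route from the paper's.

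\textbf{The paper's route.} It first restricts to $X\succeq 0$. Writing $X=aI_2+\sum_i v_i\sigma_i$, the eigenvalues of $X$ and of $\Delta^\uni_K(X)$ are $a\pm\|v\|_2$ and $a\pm\|Kv\|_2$. Since $s\mapsto\frac12\bigl((a+s)^q+(a-s)^q\bigr)$ is non-decreasing on $[0,a]$ and $\|Kv\|_2\leq\lambda\|v\|_2$, this gives $\|\Delta^\uni_K(X)\|^\norm_q\leq\|\Delta^\dep_\lambda(X)\|^\norm_q$. Depolarizing hypercontractivity then finishes the positive case. Arbitrary $X$ are handled by citing Watrous's theorem that the $p\to q$ norm of a completely positive map is attained on positive semidefinite inputs.

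\textbf{Your route.} You factor $\Delta^\uni_K$ as $\Delta^\dep_\lambda$ followed by a convex combination of Pauli conjugations and Pauli-times-transpose maps. This is pre- and post-composed with the unitary conjugations coming from the $SO(3)$ factors of $K$. All of these maps are isometries for every Schatten norm, and $\Delta^\dep_\lambda$ commutes with unitary conjugation. So you get the same comparison $\|\Delta^\uni_K(X)\|^\norm_q\leq\|\Delta^\dep_\lambda(X)\|^\norm_q$ directly for every $X$, including non-Hermitian $X$.

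\textbf{Comparison.} The paper's argument is shorter and avoids the $SU(2)\to SO(3)$ correspondence and the sign bookkeeping. However, it relies on an external result and on complete positivity of $\Delta^\uni_K$. Yours is self-contained and never uses positivity of the map. It therefore applies verbatim to any linear map of this Bloch form with $\|K\|_{2\to2}\leq\lambda$. It also gives an explicit factorization of $\Delta^\uni_K$ through the depolarizing channel, rather than only a norm comparison on positive inputs.
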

\begin{proof}
We first prove the lemma for positive semidefinite $X$. Write the $2\times2$ positive semidefinite matrix $X$ as
\[
X
=
aI_2+\sum_{i=1}^3v_i\sigma_i,
\]
where $\sigma_i$ are the Paulis. Since $X$ is Hermitian, $a$ and $v_1,v_2,v_3$ are real. It can be checked that the eigenvalues of $X$ are $a+\|v\|_2$ and $a-\|v\|_2$. In particular, positivity of $X$ implies $a\geq\|v\|_2$.

Because $\Delta^\uni_K$ is unital and trace preserving, by the definition of $K$,
\[
\Delta^\uni_K(X)
=
aI_2+
\sum_{i=1}^3
(K v)_i\sigma_i.
\]
Its eigenvalues are therefore $a+\|K v\|_2$ and $a-\|K v\|_2$. The assumption on the Bloch matrix gives
\begin{equation}\label{eq:K-contract}
\|K v\|_2
\leq
\|K\|_{2\to2}\|v\|_2
\leq
\lambda\|v\|_2.
\end{equation}
Since $\lambda\leq1$, both $\|K v\|_2$ and $\lambda\|v\|_2$ lie in $[0,a]$, by the bound on $\|v\|_2$.

For fixed $a\geq0$, the normalized Schatten $q$-norm of a positive operator with eigenvalues $a+s$ and $a-s$ is non-decreasing in $s\in[0,a]$. Indeed, the $q$-th power of the Schatten norm is
\[
F_a(s) = \frac{(a+s)^q+(a-s)^q}{2},
\]
whose derivative with respect to $s$ is
\[
\frac q2
\left(
(a+s)^{q-1}-(a-s)^{q-1}
\right)
\geq0.
\]
Therefore,
\begin{align*}
\left(\left\|\Delta^\uni_K(X)\right\|^\norm_q\right)^q & = F_a\left(\|K v\|_2\right) \\
 & \leq F_a\left(\lambda\|v\|_2\right) \\
 & = F_a\left(\|\lambda v\|_2\right) \\
 & = \left(\left\|
aI_2+
\sum_{i=1}^3\lambda v_i\sigma_i
\right\|^\norm_q\right)^q.
\end{align*}
The operator in the final expression above is precisely $\Delta^\dep_\lambda(X)$. Taking $q$th roots therefore gives
\[
\left\|\Delta^\uni_K(X)\right\|^\norm_q
\leq
\left\|\Delta^\dep_\lambda(X)\right\|^\norm_q.
\]
By Theorem~\ref{thm:depolarizing-hypercontractivity} and the assumption
\[
\lambda
\leq
\sqrt{\frac{p-1}{q-1}},
\]
we conclude that
\[
\left\|\Delta^\uni_K(X)\right\|^\norm_q
\leq
\left\|\Delta^\dep_\lambda(X)\right\|^\norm_q
\leq
\|X\|^\norm_p
\]
for all PSD $X$.

Since $\Delta^\uni_K$ is completely positive, its
$p\to q$ induced norm can be evaluated by restricting the supremum to positive
semidefinite inputs; see~\cite[Theorem~1]{Watrous05}. The same statement
holds if we define induced norms using normalized Schatten norms, since normalization only multiplies
the induced norm by a constant depending on the input and output
dimensions. Since the lemma statement is essentially upper bounding $\|\Delta^\uni_K\|_{p\to q}^\norm$, the inequality proved above for $X\succeq0$
already implies
\[
\left\|\Delta^\uni_K(X)\right\|^\norm_q
\leq
\|X\|^\norm_p
\]
for every operator $X$.
\end{proof}

We are now ready to prove the full hypercontractivity theorem.
\begin{proof}[Proof of Theorem~\ref{thm:unital-hypercontractivity}]
King's theorem is stated using unnormalized Schatten norms, so we now keep track of the normalization. For a map on qubit operators,
\[
\sup_{X\neq0}
\frac{\|\Delta^\uni_{K_j}(X)\|^\norm_q}
{\|X\|^\norm_p}
=
2^{1/p-1/q}
\left\|\Delta^\uni_{K_j}\right\|_{p\to q}.
\]
Lemma~\ref{lem:uni-qubit} therefore implies that
\[
\left\|\Delta^\uni_{K_j}\right\|_{p\to q}
\leq
2^{1/q-1/p}.
\]

The tensor product of completely positive maps is completely positive. We may therefore apply Theorem~\ref{thm:unital-multiplicativity} repeatedly to obtain
\[
\left\|
\Delta^\uni_{K_1}\otimes\cdots\otimes\Delta^\uni_{K_m}
\right\|_{p\to q}
=
\prod_{j=1}^m
\left\|\Delta^\uni_{K_j}\right\|_{p\to q}
\leq
2^{m(1/q-1/p)}.
\]
Finally, the input and output spaces both have dimension $2^m$. Hence, for every operator $X$,
\begin{align*}
\left\|
\left(
\Delta^\uni_{K_1}\otimes\cdots\otimes\Delta^\uni_{K_m}
\right)(X)
\right\|^\norm_q
&=
2^{-m/q}
\left\|
\left(
\Delta^\uni_{K_1}\otimes\cdots\otimes\Delta^\uni_{K_m}
\right)(X)
\right\|_q\\
&\leq
2^{-m/q}
2^{m(1/q-1/p)}
\|X\|_p\\
&=
2^{-m/p}\|X\|_p\\
&=
\|X\|^\norm_p.
\end{align*}
This proves the theorem.
\end{proof}

We now calculate the maximal correlation of EPR pairs under general unital noise.
\begin{lemma}
\label{lem:MC-unital}
Let $\Delta^\uni_K$ be a unital qubit channel. Then
\[
\MC\left(\Omega^\uni_K\right)
=
\left\|K\right\|_\infty.
\]
\end{lemma}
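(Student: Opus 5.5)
We compute the correlation map $\mathcal T_\Psi$ for $\Psi=\Omega^\uni_K$ explicitly and then read off its second singular value, using Theorem~\ref{thm:efronstein}.

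Since the marginals of $\Omega^\uni_K$ are both $I_2/2$, the weighted inner products are $\langle X,Y\rangle_{I/2}=\frac12\Tr(X^\dagger Y)$. The centring condition becomes $\Tr M=\Tr N=0$. So $M$ and $N$ range over the (complex) spans of the Paulis, and $\{I,\sigma_1,\sigma_2,\sigma_3\}$ is an orthonormal basis of $L^2(I/2)$.

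The next step is a Choi-type identity analogous to \eqref{eq:choi-identity}:
\[
\Tr\!\left[(M^\dagger\otimes N)\Omega^\uni_K\right]
=\bra{\Phi^+}M^\dagger\otimes \Delta^{\uni*}_K(N)\ket{\Phi^+}
=\tfrac12\Tr\!\left(\bar M\,\Delta^{\uni*}_K(N)\right).
\]
Here $\Delta^{\uni*}_K$ is the Hilbert--Schmidt adjoint, which is the unital channel with Bloch matrix $K^{\mathsf T}$. I would verify this directly on the Pauli basis. Writing $M=\sum_i a_i\sigma_i$ and $N=\sum_j b_j\sigma_j$, we have $\bar\sigma_i=\pm\sigma_i$, with a minus sign only for $\sigma_2$. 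Hence the correlation equals
\[
\sum_{i,j}\bar a_i\, D_{ii} K_{ji}\, b_j=\bar a^{\mathsf T} D K^{\mathsf T} b,
\qquad D=\diag(1,-1,1).
\]
The only real care needed is tracking this transpose and conjugation sign; it is harmless because $D$ is orthogonal.

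Normalisation gives $\langle M,M\rangle=\|a\|_2^2$ and $\langle N,N\rangle=\|b\|_2^2$. The maximum of $|\bar a^{\mathsf T}DK^{\mathsf T}b|$ over unit complex vectors $a,b\in\bbC^3$ is therefore the largest singular value of $DK^{\mathsf T}$. This equals $\|K\|_\infty=s_1$, since complex unit vectors do not increase the operator norm of a real matrix. This yields $\MC(\Omega^\uni_K)=\|K\|_\infty$ directly from Definition~\ref{def:quantmaxcor}.

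Equivalently, in the language of Theorem~\ref{thm:efronstein}, $\mathcal T_\Psi$ fixes $I$ and acts on the traceless block as a matrix unitarily equivalent to $K^{\mathsf T}$ up to signs. Its singular values are therefore $1,s_1,s_2,s_3$, so $\rho_2=s_1$, using $s_1\le 1$, which holds for any qubit channel.
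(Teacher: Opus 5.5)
Your proposal is correct and follows essentially the same route as the paper: you expand the traceless $M,N$ in the Pauli basis, and you apply the maximally entangled identity with the adjoint channel (Bloch matrix $K^{\mathsf T}$). You then pick up the sign matrix $\diag(1,-1,1)$ from $\sigma_2$ under transposition/conjugation, and reduce to the largest singular value of $DK^{\mathsf T}$. The only cosmetic difference is that the paper exhibits explicit Hermitian maximizers built from real singular vectors of $K$, whereas you appeal to the variational characterization of the top singular value over complex unit vectors (and optionally the SVD of $\mathcal T_\Psi$); these are equivalent.
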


\begin{proof}
Both marginals of $\Omega^\uni_K$ are maximally mixed. Hence the operators in the definition of maximal correlation satisfy
\[
\Tr(M)
=
\Tr(N)
=
0,
\qquad
\frac12\Tr(M^\dagger M)
=
\frac12\Tr(N^\dagger N)
=
1.
\]
The three Pauli operators form an orthonormal basis for the traceless qubit operators with respect to the normalized Hilbert--Schmidt inner product. We may therefore write
\[
M
=
\sum_{j=1}^3m_j\sigma_j,
\qquad
N
=
\sum_{i=1}^3n_i\sigma_i,
\]
where the coefficient vectors $m,n\in\bbC^3$ satisfy $\|m\|_2=\|n\|_2=1$.

By the definition of $\Omega^\uni_K$ and the maximally entangled-state identity,
\[
\Tr\left(\Omega^\uni_KM^\dagger\otimes N\right)
=
\frac12\Tr\left((M^\dagger)^{\mathsf T}(\Delta^\uni_K)^*(N)\right)
\]
where $(\Delta^\uni_K)^*$ is the adjoint of $\Delta^\uni_K$. Indeed, the first equality below moves the channel from the state to the observable, and the second applies the maximally entangled state identity:
\[
\begin{split}
\Tr\left[
(M^\dagger\otimes N)
\bigl(\operatorname{id}\otimes\Delta^\uni_K\bigr)
\bigl(\ket{\Phi^+}\!\bra{\Phi^+}\bigr)
\right]
&=
\bra{\Phi^+}
M^\dagger\otimes(\Delta^\uni_K)^*(N)
\ket{\Phi^+}\\
&=
\frac12\Tr\left((M^\dagger)^{\mathsf T}(\Delta^\uni_K)^*(N)\right).
\end{split}
\]

Transposition fixes $\sigma_1$ and $\sigma_3$ and changes the sign of $\sigma_2$. Denote the corresponding orthogonal sign change on Pauli coefficient vectors by
\[
R_{\mathrm{trans}}
=
\operatorname{diag}(1,-1,1).
\]
Thus $(M^\dagger)^{\mathsf T}$ has coefficient vector $R_{\mathrm{trans}}\overline m$. By the definition of the Bloch matrix, $(\Delta^\uni_K)^*(N)$ has coefficient vector $K^{\mathsf T}n$. Orthonormality of the Pauli basis now gives
\[
\Tr\left(\Omega^\uni_KM^\dagger\otimes N\right)
=
m^\dagger R_{\mathrm{trans}}K^{\mathsf T}n.
\]
Here $R_{\mathrm{trans}}$ is orthogonal, so $\|R_{\mathrm{trans}}m\|_2=\|m\|_2=1$. It follows from the definition of the largest singular value that
\[
\left|
\Tr\left(\Omega^\uni_KM^\dagger\otimes N\right)
\right|
\leq
\left\|K\right\|_\infty.
\]

To see that equality is attained, let $u,v\in\bbR^3$ be unit right and left singular vectors for the largest singular value, chosen so that
\[
v^{\mathsf T}K u
=
\left\|K\right\|_\infty.
\]
Choose $m=R_{\mathrm{trans}}u$ and $n=v$. The corresponding $M$ and $N$ are traceless Hermitian operators satisfying the required normalization conditions, and
\[
\Tr\left(\Omega^\uni_KM^\dagger\otimes N\right)
=
v^{\mathsf T}K u
=
\left\|K\right\|_\infty.
\]
This proves the claimed equality.
\end{proof}

\subsection{Biased reset noise}
Fix a reset bias $b\in[-1,1]$, and let
\[
\tau_b
=
\frac{I_2+b\sigma_Z}{2}.
\]
Thus $b=0$ gives the maximally mixed state, while $b=1$ and $b=-1$ give the pure states $\ket{0}$ and $\ket{1}$, respectively. For a retention probability $\lambda\in[0,1]$, define the biased reset channel by
\[
\Delta^\res_{\lambda,b}(X)
=
\lambda X
+
(1-\lambda)\Tr(X)\tau_b.
\]
The channel leaves the input unchanged with probability $\lambda$ and otherwise replaces it by $\tau_b$. Applying it to one half of an EPR pair gives
\[
\Omega^\res_{\lambda,b}
=
\bigl(\operatorname{id}\otimes\Delta^\res_{\lambda,b}\bigr)
\bigl(\ket{\Phi^+}\!\bra{\Phi^+}\bigr)
=
\lambda\ket{\Phi^+}\!\bra{\Phi^+}
+
(1-\lambda)\frac{I_2}{2}\otimes\tau_b.
\]
Its marginals are
\[
\bigl(\Omega^\res_{\lambda,b}\bigr)_{E^\A}
=
\frac{I_2}{2},
\qquad
\bigl(\Omega^\res_{\lambda,b}\bigr)_{E^\B}
=
\lambda\frac{I_2}{2}
+
(1-\lambda)\tau_b.
\]
For $b=0$, this model reduces to depolarizing noise. For $b\neq0$, the channel is not unital and Bob's marginal is not maximally mixed.


A hypercontractivity theorem for the adjoint of the biased reset noise channel, analogous to Theorems~\ref{thm:depolarizing-hypercontractivity} and \ref{thm:unital-hypercontractivity}, was proved very recently in \cite{DGOYZ26}. This result uses a norm weighted by the reset state $\tau_b$ instead of the normalized Schatten norm $\|.\|^\norm_p$ (which is the norm weighted by the maximally mixed state $\frac{I_2}{2}$).

Our application requires a hypercontractivity theorem in which the norm is weighted by $\Delta^{\text{noise}}(I_2/2)$, the marginal obtained by applying the noise operator to one half of an EPR pair. This is true of Theorems~\ref{thm:depolarizing-hypercontractivity} and \ref{thm:unital-hypercontractivity}, and also of Theorem~\ref{thm:erasure-hypercontractivity} in the next subsection. For biased reset noise, a result useful for our purposes would require a norm weighted by the state $\lambda\frac{I_2}2+(1-\lambda)\tau_b$, and this is not quite available from \cite{DGOYZ26}.

We calculate the maximal correlation of EPR pairs under biased reset noise below.
\begin{lemma}
\label{lem:MC-reset}
The maximal correlation of the biased-reset EPR state is
\[
\MC\left(\Omega^\res_{\lambda,b}\right)
=
\begin{cases}
\displaystyle
\frac{\lambda}{\sqrt{1-(1-\lambda)|b|}},
& \lambda>0,\\[1ex]
0,
& \lambda=0.
\end{cases}
\]
In particular, unbiased reset gives $\MC(\Omega^\res_{\lambda,0})=\lambda$, while reset to a pure state gives $\MC(\Omega^\res_{\lambda,\pm1})=\sqrt{\lambda}$.
\end{lemma}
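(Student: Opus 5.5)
The plan is to compute $\MC(\Omega^\res_{\lambda,b})$ directly from Definition~\ref{def:quantmaxcor}. The $(1-\lambda)$ product part of the state should drop out, leaving an explicit optimization over a single qubit operator $N$. Throughout I write $\beta=(1-\lambda)b$. Alice's marginal is $I_2/2$ and Bob's is $\sigma:=\lambda\frac{I_2}{2}+(1-\lambda)\tau_b=\frac{I_2+\beta\sigma_Z}{2}$. The case $\lambda=0$ is immediate, since the state is then a product state and Lemma~\ref{lem:maxcorrproperties} gives $\MC=0$; so assume $\lambda>0$. Note that $|\beta|\le 1-\lambda<1$, so $\sigma$ is full rank.

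\textbf{Step 1: remove the product part.} For admissible $M$ (with $\Tr M=0$) and any $N$,
\[
\Tr\bigl(\Omega^\res_{\lambda,b}(M^\dagger\otimes N)\bigr)=\lambda\bra{\Phi^+}M^\dagger\otimes N\ket{\Phi^+}+(1-\lambda)\tfrac12\Tr(M^\dagger)\Tr(\tau_bN).
\]
The second term vanishes because $\Tr(M^\dagger)=0$. The maximally entangled state identity turns the first term into $\frac{\lambda}{2}\Tr\bigl((M^\dagger)^{\mathsf T}N\bigr)$. Now write $N=cI_2+\sum_i v_i\sigma_i$ with $c,v_i\in\bbC$. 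Since $M$ is traceless with $\frac12\Tr(M^\dagger M)=1$, Cauchy--Schwarz in the normalized Hilbert--Schmidt inner product (as in Lemma~\ref{lem:MC-unital}) gives
\[
\sup_M\Bigl|\tfrac12\Tr\bigl((M^\dagger)^{\mathsf T}N\bigr)\Bigr|=\|v\|_2 .
\]
This supremum is attained by taking $M$ proportional to the transpose-conjugate of the traceless part of $N$. Hence
\[
\MC=\lambda\cdot\max\bigl\{\|v\|_2:\ \Tr(\sigma N)=0,\ \Tr(\sigma N^\dagger N)=1\bigr\}.
\]

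\textbf{Step 2: upper bound.} Using the smallest eigenvalue of $\sigma$, which is $\frac{1-|\beta|}{2}$, we get
\[
\Tr(\sigma N^\dagger N)\ge\frac{1-|\beta|}{2}\Tr(N^\dagger N)=(1-|\beta|)\bigl(|c|^2+\|v\|_2^2\bigr)\ge(1-|\beta|)\|v\|_2^2 .
\]
The normalization constraint therefore forces $\|v\|_2^2\le 1/(1-|\beta|)$. This gives $\MC\le\lambda/\sqrt{1-(1-\lambda)|b|}$. The centring constraint is not even needed for this direction.

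\textbf{Step 3: matching construction.} Equality in Step 2 requires $c=0$ and $N^\dagger N$ supported on the eigenvector of $\sigma$ with the smaller eigenvalue. I would take $N=r\ketbra{1}{0}$ when $\beta\ge0$, and $N=r\ketbra{0}{1}$ when $\beta<0$. Consider the case $\beta\ge 0$. Here $N$ is traceless, which gives $c=0$, and $\Tr(\sigma N)=r\bra{0}\sigma\ket{1}=0$ because $\sigma$ is diagonal, so $N$ is centred. Also $N^\dagger N=r^2\ketbra{0}{0}$. Since $\beta\ge0$, the smaller eigenvalue $\frac{1-|\beta|}{2}$ of $\sigma$ belongs to the eigenvector $\ket{1}$, not $\ket{0}$. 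So the equality conditions force the support of $N^\dagger N$ onto $\ket{1}$. The correct choice is therefore $N=r\ketbra{0}{1}$, giving $N^\dagger N=r^2\ketbra{1}{1}$ and $\Tr(\sigma N^\dagger N)=r^2\frac{1-\beta}{2}$; the case $\beta<0$ is the mirror image with $N=r\ketbra{1}{0}$.

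In Pauli form, $\ketbra{0}{1}=(\sigma_X+i\sigma_Y)/2$, so $\|v\|_2^2=r^2/2$. Choosing $r^2=2/(1-|\beta|)$ meets the normalization constraint and gives $\|v\|_2^2=1/(1-|\beta|)$, matching Step 2. This yields the stated formula. The special cases follow by substitution: $b=0$ gives $\lambda$, and $|b|=1$ gives $\lambda/\sqrt{\lambda}=\sqrt\lambda$.

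\textbf{Main obstacle.} The step needing the most care is Step 3. The optimizer must be non-Hermitian: Hermitian $N$ only reach the value $\lambda/\sqrt{1-\beta^2}$, which is strictly smaller. Definition~\ref{def:quantmaxcor} allows arbitrary complex $M,N$, and both the attainment argument and the eigenvector bookkeeping must respect this.
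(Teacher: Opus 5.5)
Your proof is correct. Its skeleton matches the paper's: you remove the $(1-\lambda)$ product term using $\Tr M=0$, reduce to $\frac{\lambda}{2}\Tr\bigl((M^\dagger)^{\mathsf T}N\bigr)$, and apply Cauchy--Schwarz against the traceless part of $N$. The bound is then attained by the same witness as in the paper, namely $N=\ket{0}\!\bra{1}/\sqrt{p_1}$ (your $r^2=2/(1-\beta)$) paired with $M\propto\ket{1}\!\bra{0}$.

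The genuine difference is how you bound the traceless part of $N$.
\begin{itemize}
\item The paper assumes $b\ge0$ without loss of generality and writes $N$ entrywise. It uses the centring condition $p_0a+p_1d=0$ to eliminate $d$. It then compares the coefficients of $|u|^2$, $|v|^2$ and $|a|^2$ in $\frac12\Tr(N_0^\dagger N_0)$ and in $\Tr(\sigma N^\dagger N)$; the ratios are $\frac1{2p_0}$, $\frac1{2p_1}$ and $\frac1{4p_0p_1}$.
\item You instead use $\Tr(\sigma N^\dagger N)\ge\lambda_{\min}(\sigma)\Tr(N^\dagger N)$. This is shorter and needs no case split on the sign of $b$. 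It shows that Bob's centring constraint plays no role in the upper bound. It also makes the equality condition transparent: $N^\dagger N$ must sit on the small-eigenvalue eigenvector of $\sigma$.
\end{itemize}
Your Step 1 also treats the optimization over $M$ as an exact equality rather than an inequality. This turns the whole problem cleanly into an optimization over $N$ alone. Your closing remark is also correct: Hermitian $N$ reach only $\lambda/\sqrt{1-\beta^2}$. This usefully explains why the optimal witness must be non-Hermitian.

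There are two small slips to clean up.
\begin{itemize}
\item The first sentence of Step 3 has the cases swapped. For $\beta\ge0$, the choice $N=r\ket{1}\!\bra{0}$ puts $N^\dagger N$ on $\ket{0}$, the larger-eigenvalue eigenvector, and gives only $\lambda/\sqrt{1+\beta}$. You correct this later in the same paragraph, but the wrong statement should be removed.
\item In Step 1, the maximizing $M$ is $M\propto N_0^{\mathsf T}$, not the conjugate transpose $N_0^\dagger$. For complex $N_0$ the two differ: for example, $N_0=\sigma_X+i\sigma_Z$ gives $\Tr(N_0^{\mathsf T}N_0)=0$. They coincide for your real witness, so the attainment argument is unaffected.
\end{itemize}
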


\begin{proof}
The eigenvalues of Bob's marginal $(\Omega^\res_{\lambda,b})_{E^\B}$ are
\[
p_0
=
\frac{1+(1-\lambda)b}{2},
\qquad
p_1
=
\frac{1-(1-\lambda)b}{2}.
\]
We will do the proof for the case $b\geq 0$, so $p_0\geq p_1$; the other case is symmetric.

Let $M$ and $N$ satisfy the conditions
\begin{align*}
\Tr\left(M(\Omega^\res_{\lambda,b})_{E^\A}\right) = 0 & \qquad \Tr\left(M^\dagger M(\Omega^\res_{\lambda,b})_{E^\A} \right) = 1 \\
\Tr\left(N(\Omega^\res_{\lambda,b})_{E^\B}\right) = 0 & \qquad \Tr\left(N^\dagger N(\Omega^\res_{\lambda,b})_{E^\B} \right) = 1
\end{align*}
in the definition of maximal correlation. Alice's marginal is maximally mixed, so $\Tr(M)=0$ and $\frac12\Tr(M^\dagger M)=1$. Therefore,
\[
\Tr\left(\Omega^\res_{\lambda,b}M^\dagger\otimes N\right)
= \lambda\Tr(\state{\Phi^+}M^\dagger\otimes N) + (1-\lambda)\Tr\left[\left(\frac{I}2\otimes\tau_b\right)(M^\dagger\otimes N)\right]
 = \frac{\lambda}{2}\Tr\left((M^\dagger)^{\mathsf T}N\right).
\]
Define the traceless part of $N$
\[
N_0
=
N-\frac{\Tr(N)}{2}I_2.
\]
It can be checked that $\Tr((M^\dagger)^{\mathsf T}N) = \Tr((M^\dagger)^{\mathsf T}N_0)$. Therefore, Cauchy--Schwarz gives
\begin{equation}\label{eq:MC-CS}
\left|
\Tr\left(\Omega^\res_{\lambda,b}M^\dagger\otimes N\right)
\right|
\leq
\frac\lambda2\sqrt{\Tr(M^\dagger M)\Tr(N_0^\dagger N_0)} = 
\lambda\sqrt{\frac12\Tr(N_0^\dagger N_0)}.
\end{equation}

Write
\[
N
=
\begin{pmatrix}
a&u\\
v&d
\end{pmatrix}.
\]
The centring condition with respect to $(\Omega^\res_{\lambda,b})_{E^\B}$ for Bob implies $p_0a+p_1d=0$, and his normalization condition is
\[
\Tr\left(N^\dagger N(\Omega^\res_{\lambda,b})_{E^\B} \right) = p_0\bigl(|a|^2+|v|^2\bigr)
+
p_1\bigl(|u|^2+|d|^2\bigr)
=
1.
\]
Using $d=-p_0a/p_1$ from the centring condition in the normalization factor,
\begin{align}
\Tr\left(N^\dagger N(\Omega^\res_{\lambda,b})_{E^\B} \right) & = p_0|v|^2 + p_1|u|^2 + \left(p_0+\frac{p_0^2}{p_1}\right)|a|^2 \nonumber \\
  & = p_0|v|^2 + p_1|u|^2 + \frac{p_0}{p_1}|a|^2 \label{eq:N-norm}
\end{align}
where we have used $p_0+p_1=1$.

Moreover,
\[ N_0 = \begin{pmatrix} 
\frac{a-d}{2} & u \\
v & \frac{d-a}{2}
\end{pmatrix}.
\]
So,
\[
\frac12\Tr(N_0^\dagger N_0)
=
\frac{|a-d|^2}{4}
+
\frac{|u|^2+|v|^2}{2}.
\]
Using $a-d = a+\frac{p_0}{p_1}a = \frac{a}{p_1}$,
\begin{equation}\label{eq:N_0-norm}
\frac12\Tr(N_0^\dagger N_0) = \frac12|v|^2+\frac12|u|^2+\frac1{4p_1^2}|a|^2.
\end{equation}

We now take the ratios of the coefficients of $|v|^2, |u|^2$ and $|a|^2$ in \eqref{eq:N_0-norm} and \eqref{eq:N-norm}. The ratios are, respectively, $\frac{1}{2p_0}, \frac1{2p_1}$ and $\frac1{4p_0p_1}$. Since $p_1\leq p_0$, all three ratios are at most $\frac1{2p_1}$. Hence
\[
\frac12\Tr(N_0^\dagger N_0)
\leq
\frac1{2p_1}\Tr\left(N^\dagger N(\Omega^\res_{\lambda,b})_{E^\B} \right) \leq \frac1{2p_1}
\]
by the normalization condition on $N$. Due to \eqref{eq:MC-CS}, this proves
\[
\MC\left(\Omega^\res_{\lambda,b}\right)
\leq \frac{\lambda}{\sqrt{2p_1}} =
\frac{\lambda}{\sqrt{1-(1-\lambda)b}}.
\]

For $\lambda>0$, equality is attained by
\[
M
=
\sqrt2\ket{1}\!\bra{0},
\qquad
N
=
\frac{\ket{0}\!\bra{1}}{\sqrt{p_1}}.
\]
These operators are centred and normalized for the two marginals, and it can be checked that their correlation is $\lambda/\sqrt{2p_1}$. When $\lambda=0$, the state is a product state and its maximal correlation is zero.
\end{proof}

\subsection{Erasure noise}
Let $\ket{e}$ be an erasure flag orthogonal to $\{\ket{0},\ket{1}\}$. The qubit erasure channel with erasure probability $\eps\in[0,1]$ is
\[
\Delta^\era_\eps(X)
=
(1-\eps)X+\eps\Tr(X)\ket{e}\!\bra{e}.
\]
Thus the input is transmitted perfectly with probability $1-\eps$, while with probability $\eps$ it is replaced by a locally visible erasure flag. Applying this channel to one half of an EPR pair gives the erased EPR state
\[
\Omega^\era_\eps
=
\bigl(\operatorname{id}\otimes\Delta^\era_\eps\bigr)
\bigl(\ket{\Phi^+}\!\bra{\Phi^+}\bigr)
=
(1-\eps)\ket{\Phi^+}\!\bra{\Phi^+}
+
\eps\frac{I_2}{2}\otimes\ket{e}\!\bra{e}.
\]
Its marginals are
\[
\bigl(\Omega^\era_\eps\bigr)_{E^\A}
=
\frac{I_2}{2},
\qquad
\bigl(\Omega^\era_\eps\bigr)_{E^\B}
=
(1-\eps)\frac{I_2}{2}
+
\eps\ket{e}\!\bra{e}.
\]
In particular, unlike depolarizing and unital noise, erasure noise does not produce a state with two maximally mixed marginals.

For an operator $X$ on $m$ qubits and $S\subseteq[m]$, the normalized partial trace over the qubits in $S$ is $2^{-|S|}\Tr_S(X)$. The following form of erasure channel hypercontractivity is especially convenient because it is expressed entirely in terms of operators on the input qubits.
\begin{theorem}[Erasure channel hypercontractivity \cite{BDOY25}]
\label{thm:erasure-hypercontractivity}
Let $1<p\leq q<\infty$ and $\eps\in[0,1]$. For every operator $X$ on $m$ qubits,
\[
\left(
\sum_{S\subseteq[m]}
(1-\eps)^{m-|S|}\eps^{|S|}
\left(
\left\|
\frac{\Tr_S(X)}{2^{|S|}}
\right\|^\norm_q
\right)^q
\right)^{1/q}
\leq
\|X\|^\norm_p
\]
whenever either of the following conditions holds:
\begin{enumerate}[label=(\roman*)]
\item $1\leq p\leq2\leq q$ and $1-\eps\leq\frac{p-1}{q-1}$;
\item $1\leq p\leq q\leq2$ and $1-\eps\leq\left(\frac{p-1}{q-1}\right)^2$.
\end{enumerate}
\end{theorem}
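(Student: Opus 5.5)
The plan is to rewrite the left-hand side as a single normalized Schatten norm of the erasure-channel output, prove the one-qubit inequality directly, and then tensorize. The first step is to make the left-hand side intrinsic. The output $(\Delta^\era_\eps)^{\otimes m}(X)$ is block diagonal with respect to the erasure flags. Its block indexed by $S\subseteq[m]$ is
\[
(1-\eps)^{m-|S|}\eps^{|S|}\,\Tr_S(X)\otimes\ket{e}\!\bra{e}^{\otimes S}.
\]
We weight each block by the output marginal $\bigl((1-\eps)\frac{I_2}{2}+\eps\ket{e}\!\bra{e}\bigr)^{\otimes m}$, which is the weighting the application needs. With this weighting, the $q$-th power of the output norm is exactly the sum on the left-hand side.

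Next we use $\|\Tr_S(X)/2^{|S|}\|^\norm_q=\|E_S(X)\|^\norm_q$. Here $E_S(X)=\frac{\Tr_S(X)}{2^{|S|}}\otimes\frac{I_S}{1}$ is the trace-preserving conditional expectation onto the qubits outside $S$, normalized so that it is unital. With this identity, the claim becomes
\[
\Bigl(\bbE_{S}\bigl(\|E_S(X)\|^\norm_q\bigr)^q\Bigr)^{1/q}\leq\|X\|^\norm_p ,
\]
where $S$ contains each qubit independently with probability $\eps$. This is a random-restriction hypercontractivity statement. The conditional expectations $E_{\{i\}}$ commute and $E_S=\prod_{i\in S}E_{\{i\}}$, which is what makes tensorization plausible.

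For the base case $m=1$ we must show
\[
(1-\eps)\bigl(\|X\|^\norm_q\bigr)^q+\eps\,\Bigl|\tfrac{\Tr X}{2}\Bigr|^q\leq\bigl(\|X\|^\norm_p\bigr)^q .
\]
As in Lemma~\ref{lem:uni-qubit}, complete positivity (via \cite{Watrous05}) lets us restrict to $X\succeq0$. Up to unitary invariance we may then write $X=aI+s\sigma_Z$ with $0\leq s\leq a$. The inequality becomes a two-point inequality in $s/a\in[0,1]$: it compares $(1-\eps)\frac{(1+r)^q+(1-r)^q}{2}+\eps$ with $\bigl(\frac{(1+r)^p+(1-r)^p}{2}\bigr)^{q/p}$.

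In regime (i), $p\leq2\leq q$, I would deduce the base case from the Bonami two-point inequality with noise $\sqrt{(p-1)/(q-1)}$. Convexity in $\eps$ shows that erasing with probability $\eps$ beats averaging with correlation $\sqrt{1-\eps}$, and the latter is at least $\sqrt{(p-1)/(q-1)}$ after squaring, which matches the hypothesis $1-\eps\leq\frac{p-1}{q-1}$. In regime (ii), $p\leq q\leq2$, the required exponent squares. I would obtain this regime by duality with the $q'\to p'$ statement together with Riesz--Thorin interpolation from the trivial endpoint $q=p$.

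Tensorization is by induction on $m$. We split off the last qubit, $S=S'\cup(S\cap\{m\})$. The induction hypothesis is applied to $X$ and to $E_{\{m\}}(X)$ as operators on the first $m-1$ qubits, each with exponents $p\to q$. The base case is then applied in the last qubit. Between these two applications we must interchange the $\ell_q$-average over erasure patterns with an $L_p$ norm over the remaining qubits. Classically this interchange is exactly the weighted Minkowski inequality used in Lemma~\ref{lem:tensor}. Noncommutatively it is the main obstacle, because $\|\cdot\|_p$ of an operator on $m$ qubits is not an iterated norm over its tensor factors.

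My first attempt would replace iterated norms with Pisier's vector-valued noncommutative $L_q(L_p)$ norms. In these norms Minkowski's inequality holds for $p\leq q$, and the map $X\mapsto (E_S X)_S$ factors as a tensor product of completely positive one-qubit maps. If that becomes too technical, the fallback is the semigroup route. Here $E_S$ corresponds to the generator $\sum_i(\mathrm{id}-E_{\{i\}})$. This generator satisfies a log-Sobolev inequality whose constant tensorizes; this is Beigi's argument for classical-quantum semigroups. Integrating it gives the $(p-1)/(q-1)$ condition in regime (i). The left-hand side, a mixture of norms, is stronger than the norm of the mixture, so the fallback must still be checked to control the full sum over $S$ and not just the averaged operator.
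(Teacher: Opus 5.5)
Your base case in regime (i) fails, and it cannot be repaired, because the displayed statement is false there. The paper gives no proof of this theorem; it imports it from \cite{BDOY25}.

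Your reductions before that point are fine. First, $\|\Tr_S(X)/2^{|S|}\|^\norm_q=\|E_S(X)\|^\norm_q$. Second, with $w_S=(1-\eps)^{m-|S|}\eps^{|S|}$, the left-hand side is, up to a dimension factor, the Schatten $q$-norm of $\bigoplus_S w_S^{1/q}E_S(X)$. This is a completely positive image of $X$, so \cite{Watrous05} lets you take $X\succeq0$. Third, for $m=1$ the claim is $(1-\eps)\phi_q(r)+\eps\le\phi_p(r)^{q/p}$, where $\phi_s(r)=\frac{(1+r)^s+(1-r)^s}{2}$.

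To pass through Bonami you need $(1-\eps)\phi_q(r)+\eps\le\phi_q(\sqrt{1-\eps}\,r)$. This is concavity of $u\mapsto\phi_q(\sqrt u\,r)=\sum_{k\ge0}\binom{q}{2k}r^{2k}u^k$ on $[0,1]$. It holds for $2\le q\le3$. For $q>3$ the $u^2$-coefficient $\binom q4 r^4$ is positive; for $q=4$ the function is $1+6r^2u+r^4u^2$. So for small $r$ the function is convex, and erasure gives a \emph{larger} left-hand side than depolarizing noise of correlation $\sqrt{1-\eps}$. The comparison runs the wrong way.

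No other argument can fix this, because the base case itself fails. Take $m=1$, $X=\ket{0}\!\bra{0}$, $p=2$, $q=8$, $\eps=6/7$, so that $1-\eps=\frac{p-1}{q-1}$. Then
\[
\frac17\cdot\frac12+\frac67\cdot2^{-8}=\frac{67}{896}>\frac{56}{896}=\bigl(\|X\|^\norm_2\bigr)^8 .
\]
The same $X$ violates the inequality in the family the paper uses later, $p=1+\rho$, $q=1+1/\rho$, $1-\eps=\rho^2$. At $\rho=1/5$ the left side to the sixth power is $\frac1{50}+\frac3{200}=\frac7{200}$, while the right side is $2^{-5}=\frac1{32}$.

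In general, $X=\ket{0}\!\bra{0}$ forces $1-\eps\le\frac{2^{q(p-1)/p}-1}{2^{q-1}-1}$. For $p\le2$ this is exponentially small in $q$. So condition (i) as displayed is too weak for large $q$, and no tensorization scheme (Pisier's $L_q(L_p)$ or log-Sobolev) can prove it.

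Two parts of your plan can be salvaged.
\begin{itemize}
\item The line $q=2$ needs no tensorization. In the normalized Pauli expansion,
\[
\bbE_S\bigl(\|E_SX\|^\norm_2\bigr)^2=\sum_P(1-\eps)^{|P|}|\hat X_P|^2=\bigl(\|(\Delta^\dep_{\sqrt{1-\eps}})^{\otimes m}(X)\|^\norm_2\bigr)^2 .
\]
So the case $q=2$ with $1-\eps\le p-1$ is exactly Theorem~\ref{thm:depolarizing-hypercontractivity}.
\item Regime (ii) then follows from your interpolation idea, with no duality step. Dualizing would only move you to the adjoint $(Y_S)_S\mapsto\sum_S w_S E_S(Y_S)$ with exponents $2\le q'\le p'$, which nothing you have covers. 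Instead, apply complex (Riesz--Thorin) interpolation of noncommutative $L_p$ spaces to the linear map $X\mapsto(E_SX)_S$. Use two endpoints: the trivial contractions $L_{p_0}\to L_{p_0}$, and the bound $L_{2-\eps}\to L_2$ just described. This yields every $1<p\le q\le2$ with
\[
1-\eps\le\frac{q(p-1)}{pq-2p+q}.
\]
This region contains the one defined by $1-\eps\le\bigl(\frac{p-1}{q-1}\bigr)^2$.
\end{itemize}
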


Finally, we compute the maximal correlation of EPR pairs under erasure noise.
\begin{lemma}
\label{lem:MC-erasure}
The maximal correlation of the erased EPR state is
\[
\MC(\Omega^\era_\eps)
=
\sqrt{1-\eps}.
\]
\end{lemma}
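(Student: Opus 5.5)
I will follow the template of the proof of Lemma~\ref{lem:MC-reset}. First I reduce the correlation to a Hilbert--Schmidt pairing on the qubit part of Bob's space, then use Cauchy--Schwarz, then compare norms under the centring constraint. Alice's marginal is $I_2/2$, so the admissible $M$ satisfy $\Tr(M)=0$ and $\frac12\Tr(M^\dagger M)=1$. Bob's space is three-dimensional, spanned by $\ket0,\ket1,\ket e$. Let $P$ be the projector onto $\mathrm{span}\{\ket0,\ket1\}$ and write $N_{q}=PNP$ for the qubit block. Expanding the state, the erasure term contributes
\[
\eps\,\Tr\!\left(\tfrac{I_2}{2}M^\dagger\right)\bra e N\ket e=0,
\]
because $M$ is traceless. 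Hence, by the maximally entangled state identity,
\[
\Tr\left(\Omega^\era_\eps M^\dagger\otimes N\right)=\frac{1-\eps}{2}\Tr\left((M^\dagger)^{\mathsf T}N_q\right)=\frac{1-\eps}{2}\Tr\left((M^\dagger)^{\mathsf T}N_{q,0}\right),
\]
where $N_{q,0}$ is the traceless part of $N_q$. The last equality holds since $(M^\dagger)^{\mathsf T}$ is traceless. Cauchy--Schwarz then bounds the absolute value by $(1-\eps)\sqrt{\frac12\Tr(N_{q,0}^\dagger N_{q,0})}$.

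Next I use Bob's constraints. Write $c=\bra eN\ket e$. Centring gives $\frac{1-\eps}{2}\Tr(N_q)+\eps c=0$. Normalization gives
\[
1=\frac{1-\eps}{2}\Tr(N^\dagger P N)+\eps\,\bra e N^\dagger N\ket e\;\geq\;\frac{1-\eps}{2}\Tr(N_q^\dagger N_q)+\eps|c|^2,
\]
where I drop the off-diagonal blocks between $P$ and $\ket e$. Set $t=\Tr(N_q)/2$ and $X=\frac12\Tr(N_{q,0}^\dagger N_{q,0})$. Then $\frac12\Tr(N_q^\dagger N_q)=X+|t|^2$ and $c=-(1-\eps)t/\eps$. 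The constraint becomes
\[
(1-\eps)\bigl(X+|t|^2\bigr)+\frac{(1-\eps)^2}{\eps}|t|^2\le 1,
\]
so $X\le 1/(1-\eps)$. Therefore the correlation is at most $(1-\eps)/\sqrt{1-\eps}=\sqrt{1-\eps}$. The case $\eps=0$ is the pure EPR case, where the value is $1$; the case $\eps=1$ gives a product state with value $0$. Both are consistent with the formula, and the same computation covers them once $\eps=0$ is handled by reading the constraint directly.

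For attainment, take $M=\sqrt2\,\sigma_Z$ and $N=\sigma_Z/\sqrt{1-\eps}$ on the qubit block, extended by zero on $\ket e$. $N$ is centred because its qubit block is traceless and $c=0$. It is normalized because $\frac{1-\eps}{2}\cdot\frac{2}{1-\eps}=1$. The correlation is
\[
\frac{1-\eps}{2}\cdot\sqrt2\cdot\frac{2}{\sqrt{1-\eps}}\cdot\frac{1}{\sqrt2}=\sqrt{1-\eps}.
\]
The only delicate step is the bookkeeping of the block decomposition of $N$. In particular I must check that the off-diagonal blocks $PN\ket e$ drop out of the correlation and only increase the norm; they do, since $\Omega^\era_\eps$ is block diagonal with respect to $P\oplus\state{e}$ on Bob's side.
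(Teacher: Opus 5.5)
Your upper bound is essentially the paper's argument: kill the erasure term using $\Tr(M)=0$, apply the maximally entangled identity on the qubit block, pass to the traceless part $N_{q,0}$, apply Cauchy--Schwarz, and compare with Bob's normalization after discarding the off-diagonal and erasure blocks. The one difference is that you also use Bob's centring condition to write $c=\bra{e}N\ket{e}$ in terms of $t$. This is harmless but unnecessary: the terms $(1-\eps)|t|^2$ and $\eps|c|^2$ are nonnegative and can simply be dropped, which is what the paper does, and then no division by $\eps$ occurs. There is also a small slip in the normalization. The qubit part of Bob's normalization is $\frac{1-\eps}{2}\Tr(PN^\dagger N)=\frac{1-\eps}{2}\|NP\|_2^2$, not $\frac{1-\eps}{2}\Tr(N^\dagger PN)$. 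Your inequality still holds, since both quantities dominate $\Tr(N_q^\dagger N_q)$.

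The attainment step, however, is wrong as written. With $M=\sqrt2\,\sigma_Z$ you get $\frac12\Tr(M^\dagger M)=\frac12\Tr(2I_2)=2$, so $M$ violates the normalization you imposed at the start. The actual correlation is then $\frac{1-\eps}{2}\Tr\bigl(\sqrt2\,\sigma_Z\cdot\sigma_Z/\sqrt{1-\eps}\bigr)=\sqrt{2(1-\eps)}$. The extra factor $1/\sqrt2$ in your displayed product has no source. The factor $\sqrt2$ is needed for a rank-one witness like $\sqrt2\,\ket{1}\!\bra{0}$ in the reset lemma, but not for $\sigma_Z$, which already satisfies $\frac12\Tr(\sigma_Z^2)=1$. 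Take $M=\sigma_Z$ instead, as the paper does. Then $M$ is centred and normalized, and the correlation is $\frac{1-\eps}{2}\cdot\frac{2}{\sqrt{1-\eps}}=\sqrt{1-\eps}$, which completes the proof.
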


\begin{proof}
We first assume $\eps<1$. Let $M$ and $N$ satisfy the conditions in the definition of maximal correlation for $\Omega^\era_\eps$. Since Alice's marginal is maximally mixed, $\Tr(M)=0$ and $\frac12\Tr(M^\dagger M)=1$. Write the restriction of $N$ to Bob's unerased qubit subspace as $N_{\mathrm{qubit}}$, and let
\[
N_{\mathrm{qubit},0}
=
N_{\mathrm{qubit}}
-
\frac{\Tr(N_{\mathrm{qubit}})}{2}I_2
\]
be its traceless part. The block of $N$ connecting the qubit subspace to the erasure subspace does not contribute to the correlation (since $M$ has no component in this subspace). Further, since $M$ is traceless,
\[
\Tr\left(\Omega^\era_\eps M^\dagger\otimes N\right)
=
\frac{1-\eps}{2}
\Tr\left((M^\dagger)^{\mathsf T}N_{\mathrm{qubit},0}\right).
\]
By Cauchy--Schwarz,
\begin{equation}\label{eq:era-CS}
\left|
\Tr\left(\Omega^\era_\eps M^\dagger\otimes N\right)
\right|
\leq
(1-\eps)
\sqrt{\frac12\Tr\left(
N_{\mathrm{qubit},0}^\dagger
N_{\mathrm{qubit},0}
\right)}.
\end{equation}
On the other hand, the normalization of $N$ and positivity of all the omitted block contributions imply
\begin{equation}\label{eq:era-norm}
1
=
\Tr\left(\bigl(\Omega^\era_\eps\bigr)_{E^\B}N^\dagger N\right)
\geq
\frac{1-\eps}{2}
\Tr\left(N_{\mathrm{qubit}}^\dagger N_{\mathrm{qubit}}\right)
\geq
\frac{1-\eps}{2}
\Tr\left(
N_{\mathrm{qubit},0}^\dagger
N_{\mathrm{qubit},0}
\right).
\end{equation}
Combining \eqref{eq:era-CS} and \eqref{eq:era-norm} gives the upper bound $\sqrt{1-\eps}$:
\[ \left|\Tr(\Omega^\era_\eps M^\dagger\otimes N)\right| \leq \sqrt{1-\eps},\]
which gives the required upper bound on $\MC(\Omega^\era_\eps)$.

For the matching lower bound, take
\[
M=Z\oplus0,
\qquad
N
=
\frac{Z}{\sqrt{1-\eps}}\oplus0,
\]
where the second direct-sum block acts on the erasure flag. These operators are centred and normalized for the two marginals, and their correlation is $\sqrt{1-\eps}$. When $\eps=1$, the state is a product state and its maximal correlation is zero, agreeing with the same formula.
\end{proof}

\begin{remark}
For depolarizing noise and biased reset noise, the maximal correlation of the corresponding noisy EPR pair has a linear dependence on the probability that the noise channel does nothing ($\mu$ for depolarizing noise, $\lambda$ for biased reset). For erasure noise, the maximal correlation is the square root of the do-nothing probability $1-\eps$. This is because erasure noise tells Bob when the bad event (corresponding to not being an EPR pair) happens, so Bob can amplify his observable on the qubit block by $\sqrt{1-\eps}$. This can be seen by examining the operators $M,N$ that achieve $\MC(\Omega^\era_\eps)$: they have weight only on the qubit block and $N=M^{\mathsf T}/\sqrt{1-\eps}$. In all the other noise models, in the component in which Alice and Bob are uncorrelated, Bob's state is not in an orthogonal subspace, so this strategy does not work.
\end{remark}

\section{Non-local games and communication complexity}\label{sec:game-comm}
\subsection{Non-local games}
A two-player non-local game $G$ is described as $(\sfP, \clX\times\clY,\clA\times\clB, V)$ where $\sfP$ is a probability distribution over the input set $\clX\times\clY$, $\clA\times\clB$ is the output set, and $V:\clX\times\clY\times\clA\times\clB \to \{0,1\}$ is a predicate. It is played as follows: a referee selects inputs $(x,y)$ according to $\sfP$, sends $x$ to Alice and $y$ to Bob.
\begin{itemize}
\item In a classical strategy for $G$, Alice and Bob send outputs $a$ and $b$ back to the referee, which are just functions of $x$ and $y$. In principle, Alice and Bob can also use shared or private randomness to produce their outputs from their inputs, but it can be shown that this does not help.
\item In a quantum strategy, Alice and Bob are allowed to share any entangled state of any size. They perform measurements on their respective halves of the entangled state depending on their inputs, and send their outputs $(a,b)$ back to the referee.
\item In a depolarized quantum strategy with noise parameter $\mu$, Alice and Bob are only allowed to share (an unbounded number of) copies of the qubit Werner state $\Omega^\dep_\mu$. They can do any measurements on this state, and produce outcomes, just like a standard quantum strategy. We can additionally allow shared randomness in a noisy quantum strategy\footnote{A standard quantum strategy gets shared randomness for free, since it can be obtained by measuring EPR pairs in the standard basis. But if Alice and Bob only share noisy EPR pairs, they can no longer get perfect shared randomness for free, and we have to give it to them as an additional resource.}, but just like in the classical case, it can be shown that this makes no difference.
\item Noisy strategies with other kinds of noise are defined analogously to depolarized quantum strategies, with copies of the Werner state replaced by copies of an EPR pair under the corresponding noise channel.
\end{itemize}
The referee accepts and Alice and Bob win the game iff $V(x,y,a,b) = 1$.
\begin{definition}
The classical value of a game $G=(\sfP,\clX\times\clY,\clA\times\clB,V)$, denoted by $\omega(G)$, is the maximum winning probability of Alice and Bob, averaged over the distribution $\sfP$, over all classical strategies for $G$.

The quantum value of a game $G$, denoted by $\omega^*(G)$, is the maximum winning probability of Alice and Bob, averaged over the distribution $\sfP$ as well as inherent randomness in the strategy, over all quantum strategies for $G$.

The depolarized quantum value of $G$, denoted by $\omega^\dep_\mu(G)$, is the maximum winning probability of Alice and Bob, averaged over the distribution $\sfP$ as well as inherent randomness in the strategy, over all $\mu$-noisy quantum strategies for $G$.

We define the noisy quantum value of $G$ under other kinds of noise similarly, and denote them by $\omega^\uni_K(G), \omega^\res_{\lambda,b}(G)$, and $\omega^\era_\eps(G)$.
\end{definition}

For any game $G$, we can define its parallel-repeated version, and its threshold parallel-repeated version, as follows.
\begin{itemize}
\item In the standard parallel-repeated version of $G$, denoted by $G^n$, Alice and Bob get inputs $\bx = (x_1, \ldots, x_n)$ and $\by = (y_1,\ldots,y_n)$ from $\clX^n\times\clY^n$ according to the product input distribution $\sfP^{\otimes n}$, and must produce outputs $\ba = (a_1,\ldots,a_n)$ and $\bb = (b_1,\ldots,b_n)$ in $\clA^n\times\clB^n$ such that $V(x_i,y_i,a_i,b_i)=1$ for every $i\in[n]$. We use $V^n$ to denote this larger predicate on $\bx,\by,\ba,\bb$.
\item In the $t$-out-of-$n$ threshold version of $G$, denoted by $G^{t/n}$, Alice and Bob get the same inputs as in $G^n$ and must produce outputs from $\clA^n\times\clB^n$ such that their inputs and outputs satisfy
\[ \left|\left\{i\in [n]: V(x_i, y_i, a_i, b_i)=1\right\}\right| \geq t.\]
We use $V^{t/n}$ to denote this larger predicate on $\bx,\by,\ba,\bb$.
\end{itemize}

\paragraph{Unique games.} Of particular interest to us will be a class of games called unique games. In a unique game, for a fixed input pair $(x,y)$, and a fixed Alice-output $a$, there is a single Bob-output $b$ that is accepted by $V$, and vice versa. We use $\pi^\B_{xy}(a)$ to denote the Bob-output that is accepted on inputs $(x,y)$ and Alice-output $a$, and $\pi^\A_{xy}(b)$ to denote the Alice-output that is accepted on $(x,y)$ and Bob-output $b$. For a fixed $(x,y)$, some $a$ may not have any valid Bob-outputs, or vice versa. In these cases, we can assign $\pi^\B_{xy}(a)=\bot$ or $\pi^\A_{xy}(b)=\bot$.

\subsubsection{The CHSH game}
A very well-studied unique game is the CHSH game. In a single copy of the CHSH game, the inputs and outputs are all bits, and they must satisfy $a\oplus b = x\cdot y$. The input distribution is uniform.

The classical value of a single copy of CHSH is $\frac{3}{4}$, while the quantum value is $\cos^2(\pi/8) \approx 0.85$. We state some useful properties of the optimal quantum strategy for CHSH.
\begin{theorem}\label{thm:chsh-prop}
There exists a quantum strategy for CHSH achieving the quantum value $\cos^2(\pi/8)$, which has the following properties:
\begin{enumerate}[label=(\roman*)]
\item It uses a single EPR pair.
\item For every $x,y,a,b\in\{0,1\}$, Alice and Bob's measurement operators are projectors satisfying $\Tr(M^x_a)=\Tr(N^y_b)=1$.
\item For every input pair $(x,y)$, the probability (over the internal randomness of the strategy) that the outputs $(a,b)$ satisfy the winning condition is $\cos^2(\pi/8)$.
\end{enumerate}
\end{theorem}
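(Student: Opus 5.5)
The strategy is constructed explicitly and then checked. Alice and Bob share a single EPR pair $\ket{\Phi^+}$. For $x\in\bits$, Alice's observable is $A_x$ with $A_0=\sigma_3$ and $A_1=\sigma_1$. For $y\in\bits$, Bob's observable is $B_y$ with $B_0=(\sigma_3+\sigma_1)/\sqrt2$ and $B_1=(\sigma_3-\sigma_1)/\sqrt2$. Each is a Hermitian unitary with eigenvalues $\pm1$. The measurement operators are the spectral projectors
\[ M^x_a=\frac{I_2+(-1)^aA_x}{2},\qquad N^y_b=\frac{I_2+(-1)^bB_y}{2}. \]
This settles (i) immediately. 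For (ii), each $A_x$ and $B_y$ is a traceless $\pm1$ observable, so every $M^x_a$ and $N^y_b$ is a rank-one projector on $\bbC^2$ and has trace $1$.

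For (iii), fix $(x,y)$. The winning probability is
\[ \sum_{a\oplus b=xy}\bra{\Phi^+}M^x_a\otimes N^y_b\ket{\Phi^+}=\frac12+\frac{(-1)^{xy}}{2}\bra{\Phi^+}A_x\otimes B_y\ket{\Phi^+}. \]
Using the maximally entangled state identity, the correlator is $\bra{\Phi^+}A_x\otimes B_y\ket{\Phi^+}=\frac12\Tr(A_x^{\mathsf T}B_y)$. All the matrices involved are real symmetric, so the transposes do nothing. Since $\frac12\Tr(\sigma_i\sigma_j)=\delta_{ij}$, each correlator is $\pm\frac1{\sqrt2}$, with a negative sign only for $(x,y)=(1,1)$. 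This gives $(-1)^{xy}\cdot\text{corr}=\frac1{\sqrt2}$ for every $(x,y)$, so each input pair is won with probability $\frac12+\frac1{2\sqrt2}=\cos^2(\pi/8)$.

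Because this holds for every input pair, averaging over the uniform input distribution also gives $\cos^2(\pi/8)$, which is the quantum value by Tsirelson's bound. The one point needing care is the sign bookkeeping in the transpose identity, so that the $(1,1)$ correlator comes out negative and exactly matches the flipped winning condition.
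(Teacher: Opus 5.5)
Your proof is correct. It is the standard Tsirelson-optimal strategy ($A_0=\sigma_3$, $A_1=\sigma_1$, $B_{0,1}=(\sigma_3\pm\sigma_1)/\sqrt2$ on $\ket{\Phi^+}$), which the paper states as a known fact without proof, and your verification of (i)--(iii) via $\bra{\Phi^+}A\otimes B\ket{\Phi^+}=\frac12\Tr(A^{\mathsf T}B)$, giving winning probability $\frac12+\frac1{2\sqrt2}$ on every input pair, is complete.
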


Theorem~\ref{thm:chsh-prop} has the following consequence for the noisy quantum value of the CHSH game under depolarizing, biased reset and erasure noise.
\begin{corollary}\label{cor:mu-chsh}
The noisy quantum values of the CHSH game under depolarizing, biased reset and erasure noise have the lower bound
\[ \omega^{\mathrm{noise}}_p(\chsh) \geq p\cos^2\left(\frac\pi8\right) + \frac{1-p}{2},\]
where $p$ is the probability with which the noise channel does nothing, i.e., $p=\mu$ for depolarizing noise, $p=\lambda$ for biased reset noise, and $p=1-\eps$ for erasure noise.

Moreover, the strategy achieving this value satisfies the winning condition with the same probability on every input pair.
\end{corollary}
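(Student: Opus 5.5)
The plan is to run the optimal noiseless CHSH strategy of Theorem~\ref{thm:chsh-prop} on a single noisy EPR pair, and to write the noisy state as a convex combination of a perfect EPR pair (weight $p$) and a state on which the strategy wins with probability exactly $\frac12$ on every input (weight $1-p$). Let $M^x_a, N^y_b$ be the projectors from Theorem~\ref{thm:chsh-prop}. For each noise channel $\Delta$ we have
\[
(\operatorname{id}\otimes\Delta)(\state{\Phi^+}) = p\,\state{\Phi^+} + (1-p)\,\frac{I_2}{2}\otimes\tau,
\]
where $\tau=I_2/2$ for depolarizing noise, $\tau=\tau_b$ for biased reset noise, and $\tau=\state{e}$ for erasure noise. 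These decompositions are exactly the formulas for $\Omega^\dep_\mu$, $\Omega^\res_{\lambda,b}$ and $\Omega^\era_\eps$ displayed earlier. The winning probability on input $(x,y)$ is linear in the state, so it splits into the EPR term plus the product-state term.

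The EPR term contributes $p\cos^2(\pi/8)$ on every input pair, by property (iii) of Theorem~\ref{thm:chsh-prop}. For the product term, Alice's marginal is $I_2/2$. Property (ii) says $\Tr(M^x_a)=1$, so Alice's output $a$ is a uniformly random bit, independent of Bob's side, whatever Bob outputs.

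For erasure noise, Bob's system is the flag $\ket e$, so his measurement has to be extended. The simplest choice is to let Bob apply any fixed completion on the flag, for example outputting $b=0$, equivalently adding $\state{e}$ to $N^y_0$. In every case Bob's output $b$ has some distribution independent of $a$. Since $a$ is uniform and independent of $b$, the event $a\oplus b = x\cdot y$ occurs with probability exactly $\frac12$ for each fixed $(x,y)$.

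Summing the two contributions gives success probability exactly $p\cos^2(\pi/8)+\frac{1-p}{2}$ on every input pair. This yields both the lower bound and the uniformity claim.

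The only point needing care is the erasure case, where Bob's measurement must be defined on the enlarged space $\mathrm{span}\{\ket0,\ket1,\ket e\}$. The argument does not depend on how Bob treats the flag, because uniformity of Alice's output alone already forces the $\frac12$.
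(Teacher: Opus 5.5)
Your proposal is correct and follows essentially the same route as the paper: run the optimal single-EPR CHSH strategy on one noisy pair, split the state by linearity into the EPR part (winning with probability $\cos^2(\pi/8)$ on every input by Item~(iii)) and the noise part, on which Alice's output is uniform by Item~(ii), so the XOR condition holds with probability $\frac12$ on every input. Your explicit extension of Bob's measurement to the erasure flag fills in a detail the paper leaves implicit.
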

\begin{proof}
By Item~(i) of Theorem~\ref{thm:chsh-prop}, the optimal strategy for CHSH uses a single EPR pair. We consider the depolarized strategy in which the same measurements as in the noiseless strategy are performed on a single depolarized EPR pair. Let $\sfP_{XYAB}$ be the input-output distribution of the optimal quantum strategy for CHSH, and let $\sfU_{XYAB}$ be the uniform distribution. Then, by Item~(ii) of Theorem~\ref{thm:chsh-prop}, the input-output distribution of this depolarized quantum strategy is
\[ \mu\sfP_{XYAB} + (1-\mu)\sfU_{XYAB}.\]
It can be checked that $\sfU_{XYAB}$ satisfies the winning condition of CHSH with probability $\frac12$. Therefore, from this strategy we get
\[ \omega^\dep_\mu(\chsh) \geq \mu\cos^2\left(\frac\pi8\right) + \frac{1-\mu}{2}.\]

The $\sfP$ part of the depolarized quantum strategy satisfies the winning condition with the same probability on every input --- this is inherited from the noiseless strategy by Item~(iii) of Theorem~\ref{thm:chsh-prop}. It can be checked that the $\sfU$ part of the strategy also satisfies the winning condition with the same probability on every input pair. Therefore, the noisy strategy overall has this property.

The proof for biased reset and erasure noise is similar; the output distribution corresponding to the actual noise part of the biased reset or erasure noise are not necessarily uniform --- Alice's marginal distribution is uniform, but Bob's is not. Regardless, the XOR condition is satisfied with probability $\frac{1}{2}$ as long as Alice's output is uniform.
\end{proof}

We can get an analogous result for unital noise via the Horodecki criterion \cite{Hor95}, which gives necessary and sufficient conditions for arbitrary two-qubit states to violate the CHSH inequality. The following lemma is an equivalent form of the Horodecki criterion, and the strategy that achieves the maximum CHSH violation for a two-qubit state that satisfies the criterion.
\begin{lemma}[\cite{Hor95}]\label{lem:chsh-uni-lb}
Suppose the Bloch matrix $K$ has singular values $s_1 \geq s_2 \geq s_3 \geq 0$. Then
\[ \omega^\uni_K(\chsh) \geq \frac12+\frac14\sqrt{s_1^2+s_2^2}.\]
Moreover, the strategy achieving this value uses only a single copy of $\Omega^\uni_K$, and its success probabilities on the different input pairs are
\begin{align*}
\frac12\left(1+\frac{s_1^2}{\sqrt{s_1^2+s_2^2}}\right) &  \qquad \text{on } (0,0), (0,1) \\
\frac12\left(1+\frac{s_2^2}{\sqrt{s_1^2+s_2^2}}\right) & \qquad  \text{on } (1,0), (1,1).
\end{align*}
\end{lemma}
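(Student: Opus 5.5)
The plan is to exhibit the strategy directly on a single copy of $\Omega^\uni_K$. It is built from local unitary rotations of the standard CHSH measurements, followed by an explicit computation of the correlator $\langle A_x\otimes B_y\rangle$ in terms of $K$.

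The first step is to compute correlators. For observables $A=\hat a\cdot\vec\sigma$ and $B=\hat b\cdot\vec\sigma$ with unit vectors $\hat a,\hat b\in\bbR^3$, we expand $\Omega^\uni_K$ in the Pauli basis, as in the proof of Lemma~\ref{lem:MC-unital}. Using the maximally entangled state identity and the adjoint coefficient map $K^{\mathsf T}$, this gives
\[
\Tr\left(\Omega^\uni_K (A\otimes B)\right)=\hat a^{\mathsf T}R_{\mathrm{trans}}K^{\mathsf T}\hat b .
\]
So the correlation matrix is $T=R_{\mathrm{trans}}K^{\mathsf T}$, which has the same singular values $s_1\ge s_2\ge s_3$ as $K$. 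Take a singular value decomposition $T=\sum_i s_i u_i v_i^{\mathsf T}$ with orthonormal $u_i$ and $v_i$. Since Alice's marginal is maximally mixed and $A$ is traceless, the winning probability on $(x,y)$ is $\frac12\bigl(1+(-1)^{xy}\hat a_x^{\mathsf T}T\hat b_y\bigr)$, where $\pm1$-valued outputs are identified with $a=0,1$.

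The second step chooses the measurements in the plane spanned by the top two singular directions. Bob uses
\[
\hat b_0=\frac{s_1v_1+s_2v_2}{\sqrt{s_1^2+s_2^2}}\qquad\text{and}\qquad \hat b_1=\frac{s_1v_1-s_2v_2}{\sqrt{s_1^2+s_2^2}},
\]
and Alice uses $\hat a_0=u_1$ and $\hat a_1=u_2$. This is the Horodecki choice. The correlators are then:
\begin{itemize}
\item $u_1^{\mathsf T}T\hat b_0=u_1^{\mathsf T}T\hat b_1=\frac{s_1^2}{\sqrt{s_1^2+s_2^2}}$;
\item $u_2^{\mathsf T}T\hat b_0=\frac{s_2^2}{\sqrt{s_1^2+s_2^2}}$;
\item $u_2^{\mathsf T}T\hat b_1=-\frac{s_2^2}{\sqrt{s_1^2+s_2^2}}$.
\end{itemize}
The sign flip on input $(1,1)$ matches the factor $(-1)^{xy}$. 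This yields exactly the stated per-input success probabilities: $\frac12\bigl(1+s_1^2/\sqrt{s_1^2+s_2^2}\bigr)$ on $(0,0),(0,1)$ and $\frac12\bigl(1+s_2^2/\sqrt{s_1^2+s_2^2}\bigr)$ on $(1,0),(1,1)$.

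Finally, we average over the uniform input distribution:
\[
\frac12+\frac14\cdot\frac{s_1^2+s_2^2}{\sqrt{s_1^2+s_2^2}}=\frac12+\frac14\sqrt{s_1^2+s_2^2}.
\]
The main point of care is the first step: getting the transpose and sign conventions right so that the correlator is a bilinear form in a matrix with the singular values of $K$. This follows the computation already carried out in Lemma~\ref{lem:MC-unital}, and since $R_{\mathrm{trans}}$ is orthogonal it does not affect the singular values. Two degenerate cases need handling. If $s_1=s_2=0$, the bound is $\frac12$, achieved trivially. If $s_2=0<s_1$, we use the same formula with $\hat b_0=\hat b_1=v_1$. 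Only one copy of the state is used.
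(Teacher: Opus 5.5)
Your proposal is correct: the correlator $\Tr\left(\Omega^\uni_K(A\otimes B)\right)=\hat a^{\mathsf T}R_{\mathrm{trans}}K^{\mathsf T}\hat b$ agrees with the computation in the proof of Lemma~\ref{lem:MC-unital}, and your measurements in the span of the top two singular directions give exactly the stated per-input success probabilities and their average $\frac12+\frac14\sqrt{s_1^2+s_2^2}$. The paper does not prove this lemma itself but cites the Horodecki criterion \cite{Hor95}, whose optimal single-copy strategy is the one you construct; one minor point is that $\frac12\bigl(1+(-1)^{xy}\langle A_x\otimes B_y\rangle\bigr)$ holds for any $\pm1$-valued observables, so you do not need to appeal to Alice's maximally mixed marginal for it.
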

Note that even though the Horodecki criterion gives necessary and sufficient conditions and the optimal CHSH violation, the above result for $\omega^\uni_K(\chsh)$ is a lower bound and not an equality, since the optimal strategy for $\chsh$ with unital noisy EPR pairs may involve using more than one copy.

It will be useful for us to give a symmetrized version of the above strategy, which we do in the next lemma.
\begin{lemma}\label{lem:wchsh-uni-lb}
There is a unital quantum strategy for CHSH that achieves the same value as in Lemma~\ref{lem:chsh-uni-lb}, and has the same success probability on all input pairs $(x,y)$. Moreover, this strategy uses a single copy of $\Omega^\uni_K$ and two classical bits of randomness.
\end{lemma}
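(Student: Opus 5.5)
The plan is to symmetrize the Horodecki strategy of Lemma~\ref{lem:chsh-uni-lb} using the symmetries of the CHSH game. Two classical bits of shared randomness $(r,s)$ select a relabeling of the inputs and outputs. Concretely, conditioned on $(r,s)$, Alice on input $x$ plays the Horodecki measurement for input $x\oplus r$, and Bob on input $y$ plays the Horodecki measurement for input $y\oplus s$. They then post-process their outputs with XOR corrections so that the winning condition $a\oplus b=x\cdot y$ is restored.

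First we check the correction. We have $(x\oplus r)(y\oplus s)=xy\oplus xs\oplus ry\oplus rs$. Alice knows $x,r,s$, so she XORs her output with $xs\oplus rs$. Bob knows $y,r$, so he XORs his output with $ry$. Then for inputs $(x,y)$, the pair of outputs wins CHSH on $(x,y)$ exactly when the underlying Horodecki outputs win CHSH on $(x\oplus r,y\oplus s)$.

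As $(r,s)$ ranges uniformly over $\{0,1\}^2$, every input pair $(x,y)$ is mapped to a uniformly random pair $(x',y')$. The success probability on every input pair is therefore the average of the four per-input success probabilities listed in Lemma~\ref{lem:chsh-uni-lb}:
\[
\frac14\left[2\cdot\frac12\left(1+\frac{s_1^2}{\sqrt{s_1^2+s_2^2}}\right)+2\cdot\frac12\left(1+\frac{s_2^2}{\sqrt{s_1^2+s_2^2}}\right)\right]=\frac12+\frac14\sqrt{s_1^2+s_2^2}.
\]
This matches the value in Lemma~\ref{lem:chsh-uni-lb}, so the average value is unchanged. The strategy uses only a single copy of $\Omega^\uni_K$, the same copy as the Horodecki strategy, plus the two random bits.

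The only point requiring care is that the randomness must be shared rather than private, since both players must use the same $(r,s)$ in their corrections. This is permitted under the paper's convention that noisy strategies may use shared randomness as an additional resource. The remaining step, checking that the XOR corrections are locally computable, was handled above.
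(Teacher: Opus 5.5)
Your proposal is correct and is essentially the paper's proof. Shared bits shift both inputs, Alice corrects by $xs\oplus rs$ and Bob by $ry$ (the paper's $rx\oplus qr$ and $qy$ with the bits renamed), and uniformity of the shifted input pair makes every input's success probability equal the average. Your explicit computation of that average from the per-input probabilities in Lemma~\ref{lem:chsh-uni-lb}, giving $\frac12+\frac14\sqrt{s_1^2+s_2^2}$, is a small check that the paper leaves implicit.
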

\begin{proof}
We symmetrize the strategy in Lemma~\ref{lem:chsh-uni-lb} using shared random bits. Let Alice and Bob share the uniformly random bits $q, r$. On inputs $x,y$, they feed
\[
x'=x\oplus q,
\qquad
y'=y\oplus r
\]
to the strategy from Lemma~\ref{lem:chsh-uni-lb} and obtain outputs $a',b'$. They return
\[
a=a'\oplus rx\oplus qr,
\qquad
b=b'\oplus qy.
\]
Since
\[
x'y'
=
xy\oplus rx\oplus qy\oplus qr,
\]
we have
\[
a\oplus b=xy
\quad\Longleftrightarrow\quad
a'\oplus b'=x'y'.
\]
For every fixed pair $(x,y)$, the pair $(x',y')$ is uniform over the four CHSH inputs. The symmetrized strategy therefore wins on every input with probability equal to the average winning probability of the original strategy.
\end{proof}

The optimal strategy for a parallel-repeated game is not always to do the optimal strategy for the single-copy game $n$ times. The exact values of parallel-repeated games are therefore often not known. However, for the CHSH game, the parallel-repeated classical and quantum values are in fact well-characterized.
\begin{theorem}[\cite{CSUU08}]
The quantum value of $n$ copies of the CHSH game is
\[ \omega^*(\chsh^n) = \cos^{2n}\left(\frac{\pi}{8}\right).\]
\end{theorem}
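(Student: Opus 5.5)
The lower bound $\omega^*(\chsh^n)\geq\cos^{2n}(\pi/8)$ is immediate: Alice and Bob share $n$ EPR pairs and play the optimal strategy of Theorem~\ref{thm:chsh-prop} independently on each coordinate. The inputs are distributed as $\sfP^{\otimes n}$ and the EPR pairs are independent, so the $n$ winning events are independent, each with probability $\cos^2(\pi/8)$. The content of the statement is therefore the matching upper bound. I plan to obtain it from the fact that CHSH is an XOR game, following the approach of \cite{CSUU08}.

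\textbf{Step 1 (expand into XOR games).} Fix any quantum strategy for $\chsh^n$ with $\pm1$ observables $A^{\bx}_i, B^{\by}_i$ for the $i$-th output bits. For each $S\subseteq[n]$, set $A^{\bx}_S=\prod_{i\in S}A^{\bx}_i$ and $B^{\by}_S=\prod_{i\in S}B^{\by}_i$. The outputs for a fixed input are jointly measured (commuting within a player), so these products are again $\pm1$ observables. The indicator of winning every coordinate is
\[
\prod_{i=1}^n\frac{1+(-1)^{a_i\oplus b_i\oplus x_iy_i}}{2}=\frac1{2^n}\sum_{S\subseteq[n]}(-1)^{\bigoplus_{i\in S}(a_i\oplus b_i\oplus x_iy_i)}.
\]
Taking expectations, the winning probability equals $2^{-n}\sum_S \beta_S$. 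Here $\beta_S$ is the bias achieved, by the strategy $(A^{\bx}_S, B^{\by}_S)$, in the XOR game $\chsh^{\oplus S}$: its inputs are $(x_i,y_i)_{i\in S}$ and its predicate is $\bigoplus_{i\in S}(a_i\oplus b_i)=\bigoplus_{i\in S}x_iy_i$. The coordinates outside $S$ are only extra inputs that the players may ignore, and averaging over them can only lower the bias.

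\textbf{Step 2 (multiplicativity of XOR bias).} The main step is to show that the quantum bias of the XOR game $\chsh^{\oplus k}$ is exactly $(1/\sqrt2)^k$. By Tsirelson's theorem, the quantum bias of an XOR game with cost matrix $C$ (entries $C_{xy}=\sfP(x,y)(-1)^{f(x,y)}$) equals the SDP value $\max\{\sum_{x,y}C_{xy}\langle u_x,v_y\rangle:\ u_x,v_y \text{ unit vectors}\}$. I would write down the dual SDP: minimize $\tfrac12(\sum_x\alpha_x+\sum_y\beta_y)$ subject to $\begin{pmatrix}\diag(\alpha)&-C\\-C^{\mathsf T}&\diag(\beta)\end{pmatrix}\succeq0$. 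The XOR composition of games has cost matrix $C\otimes C'$. Given optimal dual solutions for the two games, $(\alpha\otimes\alpha',\beta\otimes\beta')$ is feasible for the composed game, because the required block matrix is a PSD combination of Kronecker products of the two PSD certificates. This is where some care is needed. I would first rescale so that $\alpha=\beta$ (possible by the symmetry of the SDP), and then use $\diag(\alpha)\succeq C\,\diag(\beta)^{-1}C^{\mathsf T}$ in Schur-complement form, which tensorizes directly. The dual value of the composed game is then at most the product of the two dual values, and strong duality together with the tensor-product primal strategy gives equality. For CHSH the bias is $2\cos^2(\pi/8)-1=1/\sqrt2$, so $\beta_S\le 2^{-|S|/2}$.

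\textbf{Step 3 (sum).} Substituting the bound from Step 2,
\[
\omega^*(\chsh^n)\le\frac1{2^n}\sum_{k=0}^n\binom nk 2^{-k/2}=\left(\frac{1+1/\sqrt2}{2}\right)^n=\cos^{2n}(\pi/8),
\]
which matches the lower bound. I expect Step 2, the tensorization of the dual certificate, to be the main obstacle; the other two steps are bookkeeping.
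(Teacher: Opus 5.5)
Your argument is correct and is essentially the proof from \cite{CSUU08}, which the paper cites without reproducing. That proof expands the all-win indicator into parities over subsets $S\subseteq[n]$, bounds each parity term by the Tsirelson SDP value of the XOR-composed game, and gets multiplicativity of that value by tensoring dual certificates. One clarification on Step 2: feasibility of $(\alpha\otimes\alpha',\beta\otimes\beta')$ needs no normalization, since after conjugating by $\diag(I,-I)$ the composed block matrix is a principal submatrix of the Kronecker product of the two certificates; the normalization is only needed so that the composed objective $\frac12\bigl(\sum\alpha\sum\alpha'+\sum\beta\sum\beta'\bigr)$ equals the product of the values, and for that, balancing $\sum_x\alpha_x=\sum_y\beta_y$ via $(t\alpha,\beta/t)$ suffices in general, while for CHSH the explicit choice $\alpha=\beta=\frac{1}{2\sqrt2}(1,1)$ does it directly.
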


The following upper bound on the classical value of $\chsh^n$ is due to Ambainis, following \cite{DS14}.
\begin{theorem}[\cite{DS14,Amb26}]\label{thm:chsh-n}
The classical value of $n$ copies of the CHSH game satisfies
\[ \omega(\chsh^n) \leq \left(\frac{1+\sqrt{5}}{4}\right)^n \approx (0.81)^n.\]
Moreover, for large enough $n$, the inequality is an equality.
\end{theorem}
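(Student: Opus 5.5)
The plan is to prove the upper bound with an analytic, Dinur--Steurer-style relaxation that is multiplicative under parallel repetition. For CHSH this relaxation reduces to the top eigenvalue of a fixed $2\times2$ matrix. I will then show tightness separately with an explicit recursive strategy.

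\textbf{Upper bound: fixing Bob's answer.} Shared randomness does not help, so fix deterministic strategies $f,g:\bits^n\to\bits^n$. Write $x\wedge y$ for the coordinatewise AND. The winning probability is
\[
\Pr_{\bx,\by}\bigl[g(\by)=f(\bx)\oplus(\bx\wedge\by)\bigr].
\]
For a fixed $\by$, Bob's answer is a single string. So the probability is at most
\[
\bbE_{\by}\max_{\bb}\Pr_{\bx}\bigl[f(\bx)\oplus(\bx\wedge\by)=\bb\bigr].
\]

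\textbf{Upper bound: a collision-type relaxation.} I would bound the maximum inside the expectation by a collision-type quantity. The natural choice is $\|\cdot\|_\infty\le\|\cdot\|_2$ applied to the distribution of $f(\bx)\oplus(\bx\wedge\by)$, after which Cauchy--Schwarz is used over $\by$.

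This turns the value into an expression of the form $\langle h, T^{\otimes n} h\rangle$. Here $h$ is the indicator of the graph of $f$, and $T$ is a single-coordinate operator on functions of $(x,a)\in\bits^2$. The operator $T$ records when two Alice input--output pairs collide, meaning they induce the same Bob answer for a random $y$.

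The single-copy operator can be reduced to the $2\times2$ matrix
\[
\begin{pmatrix}1/2&1/2\\1/2&0\end{pmatrix},
\]
whose top eigenvalue is $(1+\sqrt5)/4$. This uses the symmetry $a\mapsto a\oplus1$ and the fact that the $x=0$ row is unaffected by $y$. The norm of a tensor power is multiplicative, which is the analogue of Lemma~\ref{lem:tensor}. Hence the $n$-copy bound is $((1+\sqrt5)/4)^n$, possibly up to a square root.

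\textbf{Main obstacle and equality.} The main obstacle is choosing the relaxation exactly right, so that no square root is lost and the base comes out as $(1+\sqrt5)/4$ rather than its square root. I would first try the Dinur--Steurer ``value of projection games'' relaxation. That relaxation is exactly multiplicative and is sandwiched between $\omega(G)$ and $\omega(G)^{1/2}$ only in general, so I would have to verify that for CHSH the first-step bound is already tight without squaring. If it is not, the fallback is Ambainis's refinement: average over Bob's inputs before applying Cauchy--Schwarz, coordinate by coordinate, using an inductive inequality whose recursion is $w_n\le\frac12 w_{n-1}+\frac14 w_{n-2}$. The characteristic root of this recursion is again $(1+\sqrt5)/4$.

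For the equality claim, I would construct strategies on blocks of copies whose winning counts obey the Fibonacci recursion. On each copy, the players branch according to whether $x_i=0$, in which case any consistent answer wins with probability $1/2$ and the block continues, or whether they commit. Solving the recursion then shows that the value is $\Theta\bigl(((1+\sqrt5)/4)^n\bigr)$, matching the upper bound asymptotically.
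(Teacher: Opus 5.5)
\textbf{There is a genuine gap in each half.} Your collision relaxation provably cannot give the base $(1+\sqrt5)/4$. The recursion you fall back on, which would give it, is asserted rather than proved. The equality part is also not established. (The paper itself gives no proof of this statement; it imports it from Dinur--Steurer and Ambainis. So a complete argument would have to reproduce those works.)

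\textbf{Upper bound.} The collision quantity you set up can be computed exactly for CHSH.
\begin{itemize}
\item Fix $\bx,\bx'$ and put $\mathbf d=\bx\oplus\bx'$. A collision means $f(\bx)\oplus f(\bx')=\mathbf d\wedge\by$. Since $\mathbf d\wedge\by$ is uniform on $2^{|\mathbf d|}$ strings, the $n$-fold collision value is at most $\bbE_{\mathbf d}2^{-|\mathbf d|}=(3/4)^n$, with equality for constant $f$.
\item So the route $\|\cdot\|_\infty\le\|\cdot\|_2$ plus Cauchy--Schwarz yields exactly $\omega(\chsh^n)\le(3/4)^{n/2}\approx0.866^n$.
\item Any multiplicative relaxation sitting above the square root of the collision value, Dinur--Steurer's included, is therefore stuck at base $\sqrt3/2>(1+\sqrt5)/4$.
\item Dropping the square root is impossible. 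It would give $\omega(\chsh^2)\le 9/16$, but Alice answering $(x_1x_2,0)$ and Bob answering $(0,y_1y_2)$ win exactly when $x_2=y_1$ or $x_1=y_2=0$, i.e.\ with probability $5/8$.
\item The matrix $\bigl(\begin{smallmatrix}1/2&1/2\\1/2&0\end{smallmatrix}\bigr)$ is not a reduction of any CHSH collision operator. It is the companion matrix of $w_n=\frac12w_{n-1}+\frac14w_{n-2}$ conjugated by $\operatorname{diag}(1,2)$, so it encodes the answer rather than deriving it.
\item Your fallback then asserts that recursion, but this inequality is the whole theorem. Conditioning on one coordinate only gives $w_n\le w_{n-1}$. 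Using the single-copy CHSH constraint pointwise does not help either: the tempting $w_n\le\frac34 w_{n-1}$ is already false at $n=2$.
\end{itemize}
Some strengthened invariant is required, for instance an induction over an enlarged family of games or a two-component potential. The proposal does not supply one, so no upper bound is actually proved.

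\textbf{Equality.} Literal equality cannot hold for any $n\ge1$. We have $\omega(\chsh^n)\in 4^{-n}\mathbb Z$, while $((1+\sqrt5)/4)^n$ is irrational. So only tightness of the exponential base can be meant, which is how you read it. Your strategy for this is not well defined, however:
\begin{itemize}
\item Only Alice can branch on $x_i$.
\item Bob's answers must be identical across all of Alice's branches. In particular, Bob cannot tell whether coordinate $i$ is the first or the second coordinate of a ``committed'' block. This is exactly what blocks naive chaining of the proxy trick above.
\item The natural implementable version applies the $5/8$ pairing to disjoint pairs of coordinates. It gives $(5/8)^{n/2}\approx 0.791^n$, strictly below $(1+\sqrt5)/4\approx0.809$.
\end{itemize}
So no lower bound matching the claimed base is established either.
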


\subsection{Communication complexity}
The communication model is similar to the non-local game model in that Alice and Bob get inputs from $\clX\times\clY$, share some resource, and want to produce outputs in $\clA\times\clB$ that satisfy a predicate $V$. In this setting, however, Alice and Bob are allowed to communicate interactively before producing their outputs. In this work, we consider only models in which this communication is classical, although Alice and Bob may share entanglement. The literature also studies a model in which Alice and Bob do not share entanglement and communicate using quantum states.
\begin{itemize}
\item In the classical deterministic model, Alice and Bob do not share any resource or have access to private randomness, and must produce deterministic outputs $a, b$ such that for every $x, y$, $V(x,y,a,b)=1$.
\item In the classical randomized model, Alice and Bob have perfectly correlated public randomness and private randomness, collectively denoted by $r$, and they produce random outputs $A,B$ that are functions of their inputs, communicated messages, and randomness. Their outputs must satisfy $\Pr_r[V(x,y,A,B)=1] \geq 2/3$ for all $x,y$.
\item In the quantum model with perfect entanglement, Alice and Bob pre-share an arbitrary entangled state, perform measurements on their respective parts of the state depending on their inputs and the transcript so far, and send the measurement outcomes as messages. As in the randomized model, their random outputs $A,B$ must satisfy $\Pr[V(x,y,A,B)=1] \geq 2/3$ for all $x,y$, where the probability is over the internal randomness arising from the quantum measurements.

Note that this model is equivalent (up to a factor of $2$ in the communication) to the model where Alice and Bob share entanglement, and are also able to send quantum messages, due to quantum teleportation.
\item The quantum model with noisy entanglement is identical to the quantum model with perfect entanglement, except that Alice and Bob may share only unboundedly many copies of $\Omega^\dep_\mu$ or one of the other noisy EPR states. Alice and Bob may perform arbitrary measurements on their parts of the entangled state and may attach noiseless ancillas. As usual, the correctness condition is $\Pr[V(x,y,A,B)=1]\geq 2/3$ for all $x,y$.
\item The quantum model with noisy entanglement and shared randomness is identical to the quantum model with noisy entanglement, except for the fact that Alice and Bob are allowed to share perfectly correlated randomness in addition to the noisy entanglement. Alice and Bob's measurements and messages can depend on this shared randomness, and the success probability is considered on average over it.

Note that if we want the noisy quantum model to be at least as powerful as the classical randomized model, it is important to give Alice and Bob this additional shared randomness. This is in contrast to the non-local games case, where giving additional shared randomness made no difference, because giving shared randomness also made no difference in the classical model.
\end{itemize}

There is also a quantum communication model in which Alice and Bob do not share entanglement, but send quantum states to each other to communicate. Since we are specifically interested in noisy entanglement as a resource, we will not be considering this model in this work.

With some abuse of notation, we will use $V\subseteq(\clX\times\clY)\times(\clA\times\clB)$ to denote the relation $\{(x,y,a,b) : V(x,y,a,b)=1\}$. We now define the communication complexities of $V$ in the different models.
\begin{definition}[Communication complexities, $\D, \R, \Q^*, \Q^\dep_\mu, \Q^{\dep,\pub}_\mu$]
The deterministic communication complexity $\D(V)$ of $V$ is the minimum communication of a deterministic protocol that computes $V$.

The randomized communication complexity $\R^\pub(V)$ of $V$ is the minimum communication of a randomized protocol that computes $V$.

The quantum communication complexity $\Q^*(V)$ of $V$ is the minimum communication of a quantum protocol with perfect entanglement that computes $V$.

The depolarized-entanglement quantum communication complexity $\Q^\dep_\mu(V)$ of $V$ is the minimum communication of a quantum protocol that computes $V$ when the shared entanglement consists of copies of $\Omega^\dep_\mu$.

The depolarized-entanglement and shared-randomness quantum communication complexity $\Q^{\dep,\pub}_\mu(V)$ of $V$ is the minimum communication of a quantum protocol computing $V$ when the shared entanglement consists of copies of $\Omega^\dep_\mu$ and the players additionally share perfectly correlated randomness.

$\Q^\uni_K(V), \Q^\res_{\lambda,b}(V), \Q^\era_\eps(V)$ and $\Q^{\uni,\pub}_K(V), \Q^{\res,\pub}_{\lambda,b}(V), \Q^{\era,\pub}_\eps(V)$ are defined similarly to $\Q^\dep_\mu(V)$ and $\Q^{\dep,\pub}_\mu(V)$, respectively, using the noisy EPR pairs from the corresponding noise model.
\end{definition}

It is clear that
\[ \D(V) \geq \R^\pub(V) \geq \Q^{\dep,\pub}_\mu(V) \geq \Q^*(V), \]
and moreover,
\[ \D(V) \geq \Q^\dep_\mu(V) \geq \Q^{\dep,\pub}_\mu(V) \geq \Q^*(V).\]
Similar statements hold for all the other noise models. Strictly speaking, $\R^\pub(V)$ and $\Q^\dep_\mu(V)$ are incomparable. However, since in $\Q^\dep_\mu$, Alice and Bob are allowed to use ancillas, they can use these ancillas to generate private randomness. Let $\R^{\mathrm{priv}}(V)$ denote the minimum communication cost of a protocol for $V$ in a model where Alice and Bob are only allowed to use private randomness (and the correctness is measured over this private randomness). It is a well-known result of Newman \cite{New91} that\footnote{Newman's theorem is usually stated for Boolean functions, but it is not difficult to generalize it to relations.}
\[ \R^{\mathrm{priv}}(V) \leq \R^\pub(V)+O(\log\log(|\clX|\cdot|\clY|)).\]
Using a similar argument, we can also show for any $\mu$,
\[ \Q^\dep_\mu(V) \leq \R^\pub(V)+O(\log\log(|\clX|\cdot|\clY|)),\]
and a similar statement holds for any other noise model.

\begin{definition}[Distributional communication complexities]
For any communication model, correctness can be measured with respect to a distribution $\sfP$ over the inputs instead of in the worst case. That is, we require $\Pr_{\sfP}[V(x,y,A,B)=1] \geq 2/3$, where the probability is over the input distribution and the internal randomness of the protocol. We use $\D(V,\sfP)$, $\R^\pub(V,\sfP)$, $\Q^*(V,\sfP)$, $\Q^\dep_\mu(V,\sfP)$, and $\Q^{\dep,\pub}_\mu(V,\sfP)$ to denote the minimum communication of the corresponding types of protocols that satisfy average-case correctness over $\sfP$. Similar notation is used for all the other noise models.
\end{definition}

\begin{theorem}[Yao's lemma]\label{thm:yao}
Models with shared perfect randomness (explicit or implicit like in the case of $\Q^*$), such as $\R^\pub, \Q^*, \Q^{\dep,\pub}_\mu$ satisfy
\begin{align*}
\R^\pub(V) & = \sup_\sfP\D(V,\sfP)\footnotemark \\
\Q^*(V) & = \sup_\sfP\Q^*(V,\sfP) \\
\Q^{\dep,\pub}_\mu(V) & = \sup_\sfP\Q^{\dep,\pub}_\mu(V,\sfP).
\end{align*}
\footnotetext{For reasons similar to non-local games, when we consider correctness with respect to a distribution, extra shared randomness does not really help in communication. Therefore, $\R^\pub(V,\sfP)$ is actually just equal to $\D(V,\sfP)$. Similarly, $\Q^{\dep,\pub}_\mu(V,\sfP)$ is equal to $\Q^\dep_\mu(V,\sfP)$.}
Similar statements hold for all the other noise models.
\end{theorem}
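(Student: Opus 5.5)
The plan is to prove each equality by the standard minimax argument, uniformly across the models. The direction $\sup_\sfP \mathcal{M}(V,\sfP)\le \mathcal{M}(V)$, for $\mathcal{M}\in\{\R^\pub,\Q^*,\Q^{\dep,\pub}_\mu,\ldots\}$, is immediate. A protocol succeeding with probability at least $2/3$ on every $(x,y)$ also succeeds with probability at least $2/3$ on average over any $\sfP$.

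For the $\R^\pub$ line, first note that $\R^\pub(V,\sfP)=\D(V,\sfP)$. A public-coin protocol's average success over $\sfP$ and the coins is an average over deterministic protocols, so some fixed setting of the randomness achieves at least the average. It therefore suffices to work with $\R^\pub(V,\sfP)$ throughout.

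For the reverse direction, fix $c=\sup_\sfP \mathcal{M}(V,\sfP)$. Consider the set $S_c\subseteq[0,1]^{\clX\times\clY}$ of success-probability vectors $s(x,y)=\Pr[V(x,y,A,B)=1]$ achieved by protocols of cost at most $c$ in the given model. The first key step is to show $S_c$ is convex. Given two protocols $\Pi_1,\Pi_2$ of cost at most $c$ and $t\in[0,1]$, the players use shared randomness to choose $\Pi_1$ with probability $t$ and $\Pi_2$ otherwise; the cost is still at most $c$ and the success vector is $ts_1+(1-t)s_2$. This is exactly where perfect shared randomness is needed. In $\R^\pub$ and $\Q^{\dep,\pub}_\mu$ (and the other $\pub$ models) it is given explicitly. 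In $\Q^*$ it is obtained by measuring additional EPR pairs in the computational basis, and a biased coin to arbitrary precision is obtained by measuring many of them. This step fails for $\Q^\dep_\mu$ without public randomness, which explains why that model is absent from the statement.

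The second step is separation. Suppose, for contradiction, that no protocol of cost at most $c$ is worst-case correct, i.e.\ $S_c\cap T=\emptyset$ for $T=\{v: v(x,y)\ge 2/3\ \forall x,y\}$. Both sets are convex in the finite-dimensional space $\bbR^{\clX\times\clY}$. By the separating hyperplane theorem there is a nonzero $w$ with $\langle w,s\rangle\le\langle w,v\rangle$ for all $s\in S_c$ and $v\in T$. Since $T$ is upward closed, $w\ge0$, and after normalising $w$ is a probability distribution $\sfP$. This yields $\bbE_\sfP[s]\le 2/3$ for every $s\in S_c$. On the other hand $\mathcal{M}(V,\sfP)\le c$ gives some $s\in S_c$ with $\bbE_\sfP[s]\ge 2/3$.

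The main obstacle is turning this weak inequality into a strict contradiction, since the boundary case $\bbE_\sfP[s]=2/3$ is not excluded. For $\R^\pub$ there are finitely many deterministic protocols of cost at most $c$. Then $S_c$ is a polytope, hence closed, and strict separation from $T$ is available provided one checks that the distance between the two closed convex sets is positive. Here one may use that $T$ is a translated orthant, so the Minkowski difference $S_c-T$ is closed.

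For $\Q^*$ and the noisy models the entanglement dimension is unbounded, so $S_c$ need not be closed. I would first try to prove closedness by a compactness argument: a cost-$c$ protocol is determined by finitely many measurement families, and its success vector is a continuous function of them. A dimension bound, via Lemma~\ref{lem:clevebuhrman}-type vector representations together with Lemma~\ref{lem:jl-inner-products}, could then reduce to a compact set. If that fails, I would fall back on proving the statement up to an arbitrarily small loss in the success threshold. Such a loss is absorbed by the usual convention that the constant $2/3$ can be amplified by repetition at a constant-factor cost.
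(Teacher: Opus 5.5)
The paper gives no proof of this theorem; it quotes it as the standard Yao principle. Measured against the statement itself, your easy direction and your $\R^\pub$ line are correct. Finitely many deterministic protocols of cost at most $c$ make $S_c$ a polytope, so strict separation from $T$ goes through. Fixing the public coins gives $\R^\pub(V,\sfP)=\D(V,\sfP)$, and the same trick gives the footnote's $\Q^{\dep,\pub}_\mu(V,\sfP)=\Q^\dep_\mu(V,\sfP)$. Every use of the theorem in the paper (Theorems~\ref{thm:Rpub-Qt}, \ref{thm:Rpub-Quni} and \ref{thm:Q*-Qnoise}) only turns a distributional lower bound under $\sfU$ into a worst-case one. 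That needs only the direction $\sup_\sfP\mathcal{M}(V,\sfP)\le\mathcal{M}(V)$, which you have fully proved.

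The genuine gap is the reverse inequality for $\Q^*$ and for the noisy models with public randomness. Without closedness, your separation argument (equivalently, Sion's minimax theorem with the simplex of distributions as the compact side) yields only $\sup_{s\in S_c}\min_{x,y}s(x,y)\geq 2/3$. Convexity cannot upgrade this to attainment. For example, $S=\{(a,b)\in[0,1]^2:a+b<4/3\}\cup\{(a,b)\in[0,1]^2:a+b=4/3,\ a>2/3\}$ is convex and contains, for every distribution $\sfP$ on two points, a vector with $\bbE_\sfP[s]\geq 2/3$, yet it misses $T$. So the argument rests on an attainment property, and neither of your plans supplies it.

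\emph{The compactness route fails in this generality.} Already for $c=0$, $S_0$ is a linear image of the tensor-product quantum correlation set. Slofstra's non-closure theorem gives a linear system game that can be won with probability arbitrarily close to $1$, but not exactly $1$, by any tensor-product strategy. Its predicate (padded with $V\equiv1$ off its question support) puts the all-ones vector in $\overline{S_0}\setminus S_0$. Lemma~\ref{lem:clevebuhrman} with Lemma~\ref{lem:jl-inner-products} cannot repair this. Those vector families are only a necessary consequence of a protocol: not every such family comes from one, and JL preserves inner products only approximately. So no bounded-dimensional protocol is recovered.

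\emph{The fallback proves a different statement.} It gives threshold $2/3-\delta$, or a constant-factor loss in communication, not the stated equality. Moreover, the amplification it invokes is not available for general relations in which both players output and many output pairs are valid. No one can tell which repetition succeeded, so majority voting does not apply.

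A minor point in your convexity step: approximate coins from EPR pairs only give approximate convexity. In $\Q^*$ you can get an exact $t$-biased shared coin by measuring $\sqrt{t}\ket{00}+\sqrt{1-t}\ket{11}$.
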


We have the following analogue of the Yao--Kremer decomposition for the entanglement-assisted model with classical communication.
\begin{lemma}[{\cite[Section~3.3]{Lal24}}] \label{lem:clevebuhrman}
Fix a two-way entanglement-assisted classical-communication protocol with input sets $\clX,\clY$ and output bit $b$. Suppose the protocol communicates $C$ bits in the worst case and uses the initial entangled state $\Psi_{E^\A E^\B}$. There exist collections of positive semidefinite operators
\[
\{M^{x,z}\}_{x\in\clX,\,z\in\{0,1\}^C}\subseteq L^2(\Psi_{E^\A}),
\qquad
\{N^{y,z}\}_{y\in\clY,\,z\in\{0,1\}^C}\subseteq L^2(\Psi_{E^\B}),
\]
all with spectral norm at most one, such that, for all $x\in\clX$ and $y\in\clY$,
\[
\Pr[b=1|x,y]
=
\sum_{z\in\{0,1\}^C}
\Tr\!\left(\Psi_{E^\A E^\B}(M^{x,z}\otimes N^{y,z})\right).
\]
\end{lemma}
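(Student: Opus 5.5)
We prove Lemma~\ref{lem:clevebuhrman} by induction on the rounds of the protocol, tracking the unnormalized post-measurement state along each transcript.

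\textbf{Normal form.} First we put the protocol in a canonical form. Without loss of generality, every message is a single bit and the protocol tree has depth exactly $C$; shorter branches are padded with dummy bits sent deterministically. We also assume the output bit $b$ is a deterministic function of the full transcript $z\in\{0,1\}^C$ (for instance its last bit). This costs at most a constant change in $C$, or can be arranged by letting the final message carry $b$. Each player's private ancilla is absorbed into their local operation. By Naimark dilation together with the fact that discarding the post-measurement system is allowed, each message-generating step can be modelled as a two-outcome instrument. Concretely, when it is Alice's turn after partial transcript $z_{<k}$, she applies Kraus operators $K^{x,z_{<k}}_{z_k}$ on her side, with $\sum_{z_k}(K^{x,z_{<k}}_{z_k})^\dagger K^{x,z_{<k}}_{z_k}\le I$; Bob acts analogously.

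\textbf{Factorisation.} For a fixed transcript $z$, the unnormalized final state is $(A^{x,z}\otimes B^{y,z})\Psi(A^{x,z}\otimes B^{y,z})^\dagger$. Here $A^{x,z}$ is the ordered product of Alice's Kraus operators along $z$, and $B^{y,z}$ is the product of Bob's. The key point is that Alice's and Bob's operations act on different tensor factors and therefore commute. As a result, the probability of transcript $z$ factorises as $\Tr(\Psi(M^{x,z}\otimes N^{y,z}))$, where $M^{x,z}=(A^{x,z})^\dagger A^{x,z}$ and $N^{y,z}=(B^{y,z})^\dagger B^{y,z}$. Each of these is positive semidefinite with spectral norm at most $1$, because each factor is a contraction.

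\textbf{Conclusion.} Summing over the transcripts $z$ with $b(z)=1$ gives $\Pr[b=1\mid x,y]$. To obtain the stated form, a sum over all $z\in\{0,1\}^C$, we set $M^{x,z}=0$ whenever $b(z)=0$.

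The main technical care is in handling Alice's local ancillas and instruments whose output spaces differ from their input spaces. We resolve this by defining $M^{x,z}$ as the effect operator restricted to $E^\A$: take the partial trace against the fixed ancilla state, i.e.\ $M^{x,z}=\bra{0}(A^{x,z})^\dagger A^{x,z}\ket{0}$. This operator is still PSD with norm at most $1$.

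Finally, since these operators are defined on the whole space, they lie in $L^2(\Psi_{E^\A})$ after the identification of operators of zero weighted norm.
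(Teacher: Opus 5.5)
Your argument is the standard Kraus-product factorization (Yao--Kremer style), which is presumably what the decomposition in \cite{Lal24} amounts to; the paper gives no proof of its own and simply cites that reference. The core step is right. For a fixed transcript $z$, every local operation is determined by $(x,z)$ on Alice's side or by $(y,z)$ on Bob's. Operators on different tensor factors commute, so the transcript probability is $\Tr(\Psi((A^{x,z})^\dagger A^{x,z}\otimes (B^{y,z})^\dagger B^{y,z}))$. Products of contractions then give $0\preceq M^{x,z},N^{y,z}\preceq I$. Instruments with several Kraus operators per outcome are handled by keeping the Stinespring environment, or by summing over Kraus indices. Ancillas are handled by absorbing the ancilla-appending isometry, or by compressing against $\ket{0}$ as you do.

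The one step that does not give the statement as written is your normal form for the output. If you make $b$ a function of the transcript by having the final message carry it, that costs an extra bit. You would then get a sum over $z\in\{0,1\}^{C+1}$, not $\{0,1\}^{C}$. This is harmless for the paper's applications, but it is not the lemma. You do not need it. Suppose Bob produces $b$ after transcript $z$ by a local two-outcome measurement $\{F^{y,z}_0,F^{y,z}_1\}$. Then set $N^{y,z}=(B^{y,z})^\dagger F^{y,z}_1 B^{y,z}$ and keep $M^{x,z}=(A^{x,z})^\dagger A^{x,z}$; the symmetric choice works if Alice outputs. These operators are still positive semidefinite with spectral norm at most one. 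Summing $\Tr(\Psi(M^{x,z}\otimes N^{y,z}))$ over all padded transcripts in $\{0,1\}^C$ then gives exactly $\Pr[b=1\mid x,y]$. Transcripts that cannot occur contribute zero automatically, because their Kraus products vanish.
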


\subsubsection{Simultaneous message passing model}
In the simultaneous message passing model of communication complexity, Alice and Bob cannot directly communicate with each other. Instead, both of them, after getting their inputs, send a single message to a referee, who does not have access to any input. The referee must then produce an output based on these two messages alone, and we require that the output is correct with probability at least $\frac{2}{3}$ as before. The communication cost of an SMP protocol is the total number of bits communicated by Alice and Bob to the referee.

The shared resource (such as randomness and entanglement) can be of several different types in the SMP model:
\begin{itemize}
\item Alice and Bob share a resource between themselves but share nothing with the referee;
\item Alice and Bob separately share resources with the referee, but nothing with each other;
\item Alice, Bob, and the referee share a joint resource.
\end{itemize}
In this work, we will only be dealing with the first type of shared resource. Communication complexities in the SMP model are denoted by $\D^{\parallel}(V), \R^\parallel(V), \Q^\parallel(V)$.

We will only define the noisy randomized version of SMP communication complexity.
\begin{definition}[Noisy randomized SMP communication complexity]
In the noisy randomized SMP model, Alice and Bob share noisy randomness, i.e., they respectively have random strings $R^\A$ and $R^\B$ that have $\MC(R^\A;R^\B)=\rho$ for some $\rho\in(0,1)$. The referee will not have access to $R^\A$ or $R^\B$. Alice and Bob use $R^\A$ and $R^\B$ respectively, alongside their inputs, in order to produce their messages to the referee. The noisy randomized SMP communication cost of $V$, denoted by $\R^{\parallel,\mathrm{noise}}_\rho(V)$, is the minimum communication cost of a noisy randomized SMP protocol with parameter $\rho$, for $V$.
\end{definition}

\section{Upper bounds on noisy quantum values of non-local games}\label{sec:game-ub}
In this section, we prove that the unital quantum value of a non-local game can be upper bounded by an NPA local level 1 SDP with an additional constraint. We will also upper bound the noisy quantum value of the CHSH game specifically under other kinds of noise.

Before getting into the results of this section, we will first introduce the NPA local level 1 formulation.

\subsection{The NPA hierarchy}
Let $G=(\sfP,\clX\times\clY,\clA\times\clB,V)$ be a two-player non-local game.

For every $x,a,y,b$, introduce formal Hermitian symbols
$M^x_a$ and $N^y_b$. Define the local level 1 word set
\[
\mathcal W_{\mathrm{loc},1}
=
\{I\}
\cup
\{M^x_a:x\in\mathcal X,\ a\in\mathcal A\}
\cup
\{N^y_b:y\in\mathcal Y,\ b\in\mathcal B\}
\cup
\{M^x_aN^y_b:x\in\mathcal X,\ y\in\mathcal Y,\
  a\in\mathcal A,\ b\in\mathcal B\}.
\]
Thus, the NPA local level 1 moment matrix $\Gamma$ is a square matrix of size
\[
1+|\mathcal X||\mathcal A|
+|\mathcal Y||\mathcal B|
+|\mathcal X||\mathcal Y||\mathcal A||\mathcal B|
\]
that satisfies certain consistency constraints. For an actual strategy with shared state $\Psi_{E^\A E^\B}$, its entries would be
\[
\Gamma_{u,v}=\Tr\!\left(\Psi_{E^\A E^\B}u^\dagger v\right).
\]
The consistency constraints encode the operator identities satisfied by the POVMs $\{M^x_a\}$ and $\{N^y_b\}$.

A real symmetric matrix
$\Gamma=(\Gamma_{u,v})_{u,v\in\mathcal W_{\mathrm{loc},1}}$ is
\emph{POVM-consistent} at local level 1 if every real linear
identity among the products $u^\dagger v$ that follows from
\[
(M^x_a)^\dagger=M^x_a,
\qquad
(N^y_b)^\dagger=N^y_b,
\qquad
[M^x_a,N^y_b]=0,
\]
and
\[
\sum_{a\in\mathcal A}M^x_a=I,
\qquad
\sum_{b\in\mathcal B}N^y_b=I
\]
is imposed on the corresponding entries of $\Gamma$. More explicitly,
whenever real coefficients $\alpha_{u,v}$ satisfy
\[
\sum_{u,v\in\mathcal W_{\mathrm{loc},1}}
\alpha_{u,v}u^\dagger v=0
\]
as a formal consequence of the displayed relations, we impose
\[
\sum_{u,v\in\mathcal W_{\mathrm{loc},1}}
\alpha_{u,v}\Gamma_{u,v}=0.
\]
This is only a finite system of linear equations: it is obtained by
reducing the finitely many products $u^\dagger v$ using the displayed
relations.

For projective measurements, one adds the standard projector
relations
\[
M^x_aM^x_{a'}=\delta_{a,a'}M^x_a,
\qquad
N^y_bN^y_{b'}=\delta_{b,b'}N^y_b.
\]
We then say that $\Gamma$ is \emph{projector-consistent} at local level 1. For example, these relations imply
\[
\Gamma_{M^x_a,M^x_a}=\Gamma_{I,M^x_a},
\qquad
\Gamma_{M^x_aN^y_b,M^x_aN^y_b}
=
\Gamma_{I,M^x_aN^y_b},
\]
as well as the corresponding orthogonality and completeness
identities.

\begin{theorem}[NPA local level 1 value \cite{NPA08}]
\label{thm:NPA1}
For any two-player non-local game $G=(\sfP,\clX\times\clY,\clA\times\clB,V)$, its quantum value $\omega^*(G)$ is upper bounded by the following SDP:
\[
\begin{aligned}
\text{\rm maximize}\quad
&
\sum_{x,y}\sfP(x,y)
\sum_{a,b}V(x,y,a,b)
\Gamma_{I,M^x_aN^y_b}
\\
\text{\rm subject to}\quad
&
\Gamma\succeq0,\qquad
\Gamma\text{ real symmetric},\qquad
\Gamma_{I,I}=1,
\\
&
\Gamma\text{ is POVM-consistent},
\\
&
\Gamma_{M^x_a,M^x_a}\leq\Gamma_{I,M^x_a},
\qquad
\Gamma_{N^y_b,N^y_b}\leq\Gamma_{I,N^y_b},
\\
&
\Gamma_{M^x_aN^y_b,M^x_aN^y_b}
\leq
\Gamma_{I,M^x_aN^y_b},
\end{aligned}
\]
for every $x,x'\in\mathcal X$, $y,y'\in\mathcal Y$,
$a,a'\in\mathcal A$, and $b,b'\in\mathcal B$.
If the players are restricted to projective measurements, one may
replace POVM consistency by projector consistency.
\end{theorem}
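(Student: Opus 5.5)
The plan is to show that every genuine quantum strategy yields a feasible point of the SDP with objective value equal to its winning probability. Fix a strategy with shared state $\Psi_{E^\A E^\B}$ and POVMs $\{M^x_a\}$ on Alice's side and $\{N^y_b\}$ on Bob's side. By purification and Naimark dilation we may assume $\Psi=\state{\psi}$ is pure and the POVMs act on possibly enlarged spaces. For the projective clause, we instead keep the given projectors. Identify $M^x_a$ with $M^x_a\otimes I$ and $N^y_b$ with $I\otimes N^y_b$; these commute. Define the complex moment matrix
\[ \tilde\Gamma_{u,v}=\bra{\psi}u^\dagger v\ket{\psi} \]
for words $u,v\in\mathcal W_{\mathrm{loc},1}$, and set $\Gamma=\mathrm{Re}\,\tilde\Gamma$.

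First, positivity. $\tilde\Gamma$ is the Gram matrix of the vectors $\{v\ket\psi\}_{v}$, so it is Hermitian PSD. Its real part $\frac12(\tilde\Gamma+\overline{\tilde\Gamma})$ is therefore real symmetric and PSD, since complex conjugation preserves positivity. Also $\Gamma_{I,I}=1$. The objective entries are real already: $\Gamma_{I,M^x_aN^y_b}=\bra\psi M^x_a\otimes N^y_b\ket\psi$, which is the probability $\Pr[a,b|x,y]$ because $M^x_a\otimes N^y_b\succeq0$. Hence the SDP objective equals the winning probability.

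Second, consistency. Any formal real identity $\sum\alpha_{u,v}u^\dagger v=0$ derived from Hermiticity, commutation, and completeness holds as an operator identity for the concrete operators. Taking the expectation in $\psi$ gives $\sum\alpha_{u,v}\tilde\Gamma_{u,v}=0$, and because the $\alpha_{u,v}$ are real, taking real parts gives the same identity for $\Gamma$. In the projective case the projector relations also hold as operator identities, so the same argument gives projector consistency.

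Third, the inequality constraints. Since $0\preceq M^x_a\preceq I$, we have $(M^x_a)^2\preceq M^x_a$, which gives $\Gamma_{M^x_a,M^x_a}\le\Gamma_{I,M^x_a}$, and likewise for $N^y_b$. For the product word, $u=M^x_aN^y_b=M^x_a\otimes N^y_b$ has $u^\dagger u=(M^x_a)^2\otimes(N^y_b)^2\preceq M^x_a\otimes N^y_b$, because both tensor factors are between $0$ and $I$ and the tensor product of PSD-ordered commuting factors preserves the order. Taking expectations yields the last constraint. Maximizing over strategies then gives $\omega^*(G)\le$ SDP value.

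The only delicate point is the passage from complex to real. The real part of $\tilde\Gamma$ must still satisfy all the imposed linear identities and the diagonal inequalities. This works because the identities have real coefficients and the diagonal and objective entries are already real. Beyond that, the argument is routine bookkeeping.
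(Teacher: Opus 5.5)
Your proof is correct and is the standard argument. The paper does not prove this theorem itself; it cites \cite{NPA08}. The construction it implicitly relies on, visible in the proof of Theorem~\ref{thm:noisy-NPA}, is exactly your $\Gamma_{u,v}=\operatorname{Re}\Tr(\Psi\, u^\dagger v)$ built from an actual strategy, and your explicit handling of the real part fills in a point the paper's description of the moment matrix glosses over.

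The purification and Naimark steps are harmless but unnecessary. Positivity already holds for mixed $\Psi$ because $\Tr(\Psi X^\dagger X)\geq 0$, and the original POVM elements satisfy POVM-consistency directly.
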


\subsection{NPA local level 1 bound on noisy value of a general game}
In this section, we show that the value of a game with noisy entangled states that have maximally mixed marginals and a known maximal correlation can be upper bounded by the NPA local level 1 SDP with some extra constraints. We first prove a lemma needed for the main result of the section.

\begin{lemma}
\label{lem:commutator}
Suppose that the state $\Psi_{E^\A E^\B}$ has maximally mixed marginals and
$\MC(\Psi)=\rho$. For any Hermitian operators
$M,M'$ on $E^\A$ and $N,N'$ on
$E^\B$,
\[
\left|
\Tr\!\left(
\Psi_{E^\A E^\B}
[M,M']\otimes[N,N']
\right)
\right| \leq
\rho\,
\|[M,M']\|_\infty
\|[N,N']\|_\infty.
\]
In particular:
\begin{enumerate}
\item if $0\preceq M,M',N,N'\preceq I$, then
\[
\left|
\Tr\!\left(
\Psi_{E^\A E^\B}
[M,M']\otimes[N,N']
\right)
\right|
\leq \frac{\rho}{4};
\]
\item if $M,M',N,N'$ are observables with eigenvalues in
$\{-1,+1\}$, then
\[
\left|
\Tr\!\left(
\Psi_{E^\A E^\B}
[M,M']\otimes[N,N']
\right)
\right|
\leq 4\rho.
\]
\end{enumerate}
\end{lemma}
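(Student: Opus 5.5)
The approach is to apply the maximal-correlation inequality \eqref{eq:MC} with $M^\dagger$ replaced by $X=[M,M']$ and $N$ replaced by $Y=[N,N']$.

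First we check the centring conditions. The marginals are maximally mixed, so centring means tracelessness. Any commutator is traceless, since $\Tr(MM')=\Tr(M'M)$. Hence $\Tr(X)=\Tr(Y)=0$.

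Next we fit the definition's form $M^\dagger$. Set $\tilde M = X^\dagger$, which is also traceless, so that $\tilde M^\dagger = X$. Note that $X$ is anti-Hermitian, so $X^\dagger=-X$; this only affects a sign and does not matter inside the absolute value. Applying \eqref{eq:MC} gives
\[
\left|\Tr\!\left(\Psi\,X\otimes Y\right)\right|
\le \rho\sqrt{\Tr(\Psi_{E^\A}XX^\dagger)}\sqrt{\Tr(\Psi_{E^\B}Y^\dagger Y)}.
\]

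We then bound each weighted norm by the spectral norm. Write $d_\A$ for the dimension of $E^\A$. Since $\Psi_{E^\A}=I/d_\A$ is a density matrix, $\Tr(\Psi_{E^\A}XX^\dagger)\le \|XX^\dagger\|_\infty=\|X\|_\infty^2$. The same argument gives $\Tr(\Psi_{E^\B}Y^\dagger Y)\le\|Y\|_\infty^2$. This proves the main inequality. The argument uses only that the marginals are states; maximal mixedness is needed just to make centring equivalent to tracelessness.

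For the special cases, it remains to bound $\|[M,M']\|_\infty$.

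In case 1 we have $0\preceq M,M'\preceq I$. Shift to $A=M-\tfrac12 I$ and $B=M'-\tfrac12 I$; shifting by multiples of the identity leaves the commutator unchanged, and now $\|A\|_\infty,\|B\|_\infty\le \tfrac12$. For a unit vector $v$, write $[A,B]$ through the Hermitian operators $A,B$:
\[
|\langle v,[A,B]v\rangle| = 2|\mathrm{Im}\langle Av,Bv\rangle|\le 2\|Av\|\|Bv\|\le 2\cdot\tfrac14=\tfrac12.
\]
Because $i[A,B]$ is Hermitian, its spectral norm equals its numerical radius, so $\|[M,M']\|_\infty\le \tfrac12$. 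The same bound holds for $[N,N']$, giving the constant $\rho/4$.

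In case 2, $M,M'$ are $\pm1$ observables, so the same computation gives $\|[M,M']\|_\infty\le 2\|M\|_\infty\|M'\|_\infty=2$. Likewise $\|[N,N']\|_\infty\le 2$, giving $4\rho$.

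The main obstacle is the constant $\tfrac12$ in case 1; the naive estimate $\|MM'\|+\|M'M\|\le 2$ is too weak. Both the shift by $\tfrac12 I$ and the numerical-radius argument are needed to reach the sharp constant.
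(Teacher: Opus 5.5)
Your proof is correct and follows the paper's route: apply \eqref{eq:MC} to the traceless commutators, bound the weighted norms by spectral norms, and for Item~1 shift to $M-\tfrac12 I$ (the paper's $\tfrac12(2M-I)$) before bounding the commutator. One correction to your closing remark: the numerical-radius step is not needed, because after the shift the plain estimate $\|[A,B]\|_\infty\le\|AB\|_\infty+\|BA\|_\infty\le 2\|A\|_\infty\|B\|_\infty\le\tfrac12$ already gives the sharp constant, and this is exactly what the paper does.
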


\begin{proof}
The two commutators in the statement are anti-Hermitian, so
$-i[M,M']$ and $-i[N,N']$ are Hermitian. They satisfy
\[
\Tr\!\left(\Psi_{E^\A}[M,M']\right)
=
\frac{1}{d_{E^\A}}\Tr([M,M'])
=0,
\]
and similarly on Bob's side. Equation~\eqref{eq:MC}
therefore gives
\[
\begin{aligned}
&\left|
\Tr\!\left(
\Psi_{E^\A E^\B}
[M,M']\otimes[N,N']
\right)
\right|\\
&\quad\leq
\rho
\sqrt{
\Tr\!\left(
\Psi_{E^\A}[M,M']^\dagger[M,M']
\right)}
\sqrt{
\Tr\!\left(
\Psi_{E^\B}[N,N']^\dagger[N,N']
\right)}\\
&\quad\leq
\rho\,\|[M,M']\|_\infty\|[N,N']\|_\infty.
\end{aligned}
\]

For Item~1, if $0\preceq M,M'\preceq I$, then
$2M-I$ and $2M'-I$ are Hermitian contractions. Therefore, $\|2M-I\|_\infty, \|2M'-I\|_\infty \leq 1$. Moreover, $[M,M']=\frac14[2M-I,2M'-I]$. Thus
\begin{align*}
\|[M,M']\|_\infty
& =
\frac14
\|[2M-I,2M'-I]\|_\infty \\
& \leq \frac14\left(\|(2M-I)(2M'-I)\|_\infty + \|(2M'-I)(2M-I)\|_\infty\right) \\
& \leq \frac12\|2M-I\|_\infty\|2M'-I\|_\infty \leq\frac12.
\end{align*}
The same estimate holds for $N,N'$, proving Item~1.

If the four operators are observables, the elementary estimate
\[
\|[M,M']\|_\infty
\leq
2\|M\|_\infty\|M'\|_\infty
=2
\]
and its Bob-side analogue prove Item~2.
\end{proof}

We will now state our NPA local level 1 upper bound on the value of a non-local game with unital noise (which produces maximally mixed marginals when acting on EPR pairs).
\begin{theorem}\label{thm:noisy-NPA}
For any two-player non-local game $G=(\sfP,\clX\times\clY,\clA\times\clB,V)$, its noisy quantum value with unital noise that produces maximal correlation $\rho$ on EPR pairs is upper bounded by the following SDP:
\[
\begin{aligned}
\text{\rm maximize}\quad
&
\sum_{x,y}\sfP(x,y)
\sum_{a,b}V(x,y,a,b)
\Gamma_{I,M^x_aN^y_b}
\\
\text{\rm subject to}\quad
&
\Gamma\succeq0,\qquad
\Gamma\text{ real symmetric},\qquad
\Gamma_{I,I}=1,
\\
&
\Gamma\text{ is POVM-consistent},
\\
&
\Gamma_{M^x_a,M^x_a}\leq\Gamma_{I,M^x_a},
\qquad
\Gamma_{N^y_b,N^y_b}\leq\Gamma_{I,N^y_b},
\\
&
\Gamma_{M^x_aN^y_b,M^x_aN^y_b}
\leq
\Gamma_{I,M^x_aN^y_b}, \\
&
\left|
\Gamma_{M^x_aN^y_b,M^{x'}_{a'}N^{y'}_{b'}}
-\Gamma_{M^x_aN^{y'}_{b'},M^{x'}_{a'}N^y_b}
-\Gamma_{M^{x'}_{a'}N^y_b,M^x_aN^{y'}_{b'}}
+\Gamma_{M^{x'}_{a'}N^{y'}_{b'},M^x_aN^y_b}
\right|
\leq\frac{\rho}{4}
\end{aligned}
\]
for every $x,x'\in\mathcal X$, $y,y'\in\mathcal Y$,
$a,a'\in\mathcal A$, and $b,b'\in\mathcal B$.
\end{theorem}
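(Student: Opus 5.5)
Given any noisy strategy for $G$, I will build a feasible $\Gamma$ for the SDP whose objective equals the strategy's winning probability. By Lemma~\ref{lem:maxcorrproperties}(2), $m$ copies of $\Omega^\uni_K$ form a state $\Psi=(\Omega^\uni_K)^{\otimes m}$ with $\MC(\Psi)=\MC(\Omega^\uni_K)=\rho$, and its marginals are maximally mixed. The players may attach ancillas, but private ancillas in a fixed pure state can be absorbed into the POVMs by Naimark dilation compressed back. More simply, I will keep the POVMs $M^x_a$ on $E^\A$ and $N^y_b$ on $E^\B$ acting directly on the shared state, since any measurement using ancillas induces an effective POVM on $E^\A$ (resp.\ $E^\B$). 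This matters because Lemma~\ref{lem:commutator} needs maximally mixed marginals on the spaces where $M,N$ act.

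I will define $\Gamma_{u,v}=\operatorname{Re}\Tr(\Psi\, u^\dagger v)$, where the words are interpreted as the operators $M^x_a\otimes I$, $I\otimes N^y_b$ and their products. The first step is the standard NPA verification. The complex moment matrix is Hermitian PSD as a Gram matrix, so its real part is real symmetric and PSD. We have $\Gamma_{I,I}=1$. POVM-consistency holds because every formal identity is an operator identity, and taking real parts preserves real linear relations. The inequalities $\Gamma_{M,M}\le\Gamma_{I,M}$ follow from $M^2\preceq M$ for $0\preceq M\preceq I$. Likewise $(MN)^\dagger MN=M^2\otimes N^2\preceq M\otimes N$, since $M^2\preceq M$ and $N^2\preceq N$ are commuting positive tensor factors. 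The objective is $\sum\sfP V\Tr(\Psi M^x_a\otimes N^y_b)$, which is the winning probability and is already real.

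The new step is the last constraint. Writing $M=M^x_a,M'=M^{x'}_{a'},N=N^y_b,N'=N^{y'}_{b'}$, the four words give operator products $(MN)^\dagger M'N'=MM'\otimes NN'$, $(MN')^\dagger M'N=MM'\otimes N'N$, $(M'N)^\dagger MN'=M'M\otimes NN'$ and $(M'N')^\dagger MN=M'M\otimes N'N$. The signed combination is therefore exactly
\[
MM'\otimes NN'-MM'\otimes N'N-M'M\otimes NN'+M'M\otimes N'N=[M,M']\otimes[N,N'].
\]
Taking real parts and applying $|\operatorname{Re} z|\le|z|$ with Lemma~\ref{lem:commutator}, Item~1 (valid since $0\preceq M,M',N,N'\preceq I$), the combination is bounded by $\rho/4$. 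Finally, the SDP value dominates the supremum over strategies.

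The main obstacle I expect is the ancilla/shared-randomness issue. I must check that reducing to effective POVMs on $E^\A,E^\B$ is legitimate. It is: if the state is $\Psi\otimes\sigma_{A'}\otimes\tau_{B'}$, the effective operator $\Tr_{A'}((I\otimes\sigma)M)$ is again a POVM element on $E^\A$. Shared randomness is not allowed in the game model, and it changes nothing anyway, because the SDP is a convex program and so convex combinations of feasible points stay feasible. A second point to confirm is that the real-part convention matches the one used in Theorem~\ref{thm:NPA1}. It does: that theorem uses the same construction, and our added constraint is simply one more valid inequality.
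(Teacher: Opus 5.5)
Your proposal is correct and follows the paper's proof. Like the paper, you realize $\Gamma$ as the moment matrix of the strategy and inherit the standard NPA constraints. You then identify the signed combination in the new constraint as $\Tr\bigl(\Psi^{\otimes m}[M,M']\otimes[N,N']\bigr)$ and bound it by $\rho/4$ using Item~2 of Lemma~\ref{lem:maxcorrproperties} and Item~1 of Lemma~\ref{lem:commutator}.

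Two points you add are left implicit in the paper, and both are sound. First, you reduce ancilla-assisted measurements to effective POVMs on $E^\A$ and $E^\B$, which keeps the maximally mixed marginals that Lemma~\ref{lem:commutator} needs. Second, you note that shared randomness is harmless because the feasible set is convex.
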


\begin{proof}
From the standard NPA local level 1 SDP, we know that any state and measurements $\{M^x_a\}, \{N^y_b\}$ for $G$ must satisfy all constraints except the last one. It remains to show that the last constraint is satisfied when the shared state is $\Psi_{E^\A E^\B}^{\otimes m}$ for some $\Psi_{E^\A E^\B}$ with maximally mixed marginals and maximal correlation $\rho$.

For fixed $x,x',y,y',a,a',b,b'$, the expression inside the absolute
value in the last constraint equals
\[
\operatorname{Re}\Tr\!\Bigl(
\Psi^{\otimes m}
\bigl(
M^x_aM^{x'}_{a'}\otimes N^y_bN^{y'}_{b'}
-M^x_aM^{x'}_{a'}\otimes N^{y'}_{b'}N^y_b
-M^{x'}_{a'}M^x_a\otimes N^y_bN^{y'}_{b'}
+M^{x'}_{a'}M^x_a\otimes N^{y'}_{b'}N^y_b
\bigr)
\Bigr).
\]
The operator in parentheses is
\[
[M^x_a,M^{x'}_{a'}]\otimes[N^y_b,N^{y'}_{b'}].
\]
It is Hermitian, so its expectation is already real. Item~2 of Lemma~\ref{lem:maxcorrproperties} and Item~1 of Lemma~\ref{lem:commutator} prove the required
$\rho/4$ bound on
\[ \left|
\Gamma_{M^x_aN^y_b,M^{x'}_{a'}N^{y'}_{b'}}
-\Gamma_{M^x_aN^{y'}_{b'},M^{x'}_{a'}N^y_b}
-\Gamma_{M^{x'}_{a'}N^y_b,M^x_aN^{y'}_{b'}}
+\Gamma_{M^{x'}_{a'}N^{y'}_{b'},M^x_aN^y_b}
\right|
. \qedhere \]
\end{proof}

We can get a slightly improved SDP for games with binary outputs, since an optimal binary strategy can be taken to use projective measurements.\footnote{Naimark dilation alone does not justify this reduction here: appending pure ancillas would destroy the maximally mixed marginal assumption in Theorem~\ref{thm:noisy-NPA}. Instead, the proof below uses the fact that a linear functional over the interval $[0,I]$ attains its maximum at a projection, without changing the shared state.}

For binary games, we can write the NPA SDP in terms of observables. For projective measurements $\{M^x_a\}, \{N^y_b\}$, we define the observables
\[
M^x=M^x_0-M^x_1,
\qquad
N^y=N^y_0-N^y_1.
\]
These satisfy
$(M^x)^2=(N^y)^2=I$. We will use the smaller word set
\[
\mathcal W_{\mathrm{bin}}
=
\{I\}
\cup\{M^x:x\in\mathcal X\}
\cup\{N^y:y\in\mathcal Y\}
\cup\{M^xN^y:x\in\mathcal X,\ y\in\mathcal Y\}.
\]
The moment matrix consequently has size
\[
1+|\mathcal X|+|\mathcal Y|+|\mathcal X||\mathcal Y|.
\]

\begin{corollary}\label{cor:noisy-NPA-bin}
Under the hypotheses of Theorem~\ref{thm:noisy-NPA}, the value of
any binary-output game with unital noise is upper bounded by
\[
\begin{aligned}
\text{\rm maximize}\quad
&
\sum_{x,y}\frac{\sfP(x,y)}{4}
\sum_{a,b\in\{0,1\}}V(x,y,a,b)
\Bigl(
1+(-1)^a\Gamma_{I,M^x}
\\[-2pt]
&\hspace{52mm}
+(-1)^b\Gamma_{I,N^y}
+(-1)^{a+b}\Gamma_{I,M^xN^y}
\Bigr)
\\
\text{\rm subject to}\quad
&
\Gamma\succeq0,\qquad
\Gamma\text{ real symmetric},
\\
&
\Gamma_{u,u}=1
\quad\text{for every }u\in\mathcal W_{\mathrm{bin}},
\\
&
\Gamma\text{ obeys all reductions following from}
\\[-2pt]
&
(M^x)^2=(N^y)^2=I
\quad\text{and}\quad
[M^x,N^y]=0,
\\
&
\left|
\Gamma_{M^xN^y,M^{x'}N^{y'}}
-\Gamma_{M^xN^{y'},M^{x'}N^y}
-\Gamma_{M^{x'}N^y,M^xN^{y'}}
+\Gamma_{M^{x'}N^{y'},M^xN^y}
\right|
\leq4\rho
\end{aligned}
\]
for every $x,x'\in\mathcal X$ and $y,y'\in\mathcal Y$.
\end{corollary}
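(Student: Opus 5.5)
The plan is to reduce to projective measurements without changing the shared state, and then translate the SDP of Theorem~\ref{thm:noisy-NPA} into observable form.

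\textbf{Step 1: reduction to projective measurements.} Fix any noisy strategy for a binary-output game, with shared state $\Psi^{\otimes m}$ (unital noise, so maximally mixed marginals and maximal correlation $\rho$) and POVMs $\{M^x_0,M^x_1=I-M^x_0\}$, $\{N^y_0,N^y_1\}$. Freeze Bob's measurements. The winning probability is then an affine function of each $M^x_0$ separately, of the form $c+\Tr(W_x M^x_0)$ for a Hermitian $W_x$ depending only on the state, Bob's operators and $V$. Over the operator interval $0\preceq M^x_0\preceq I$, such a functional is maximized at the projector onto the positive eigenspace of $W_x$. Replacing each $M^x_0$ by this projector does not decrease the value. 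Do the same for Bob's operators. The shared state is untouched throughout, so the maximally mixed marginals and maximal correlation $\rho$ are preserved. Hence we may assume all measurements are projective, and the observables $M^x=M^x_0-M^x_1$, $N^y=N^y_0-N^y_1$ satisfy $(M^x)^2=(N^y)^2=I$.

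\textbf{Step 2: define the moment matrix and check the constraints.} Set $\Gamma_{u,v}=\operatorname{Re}\Tr(\Psi^{\otimes m}u^\dagger v)$ for $u,v\in\mathcal W_{\mathrm{bin}}$. Taking the real part of a PSD Hermitian Gram matrix keeps it PSD and makes it real symmetric.
\begin{itemize}
\item $\Gamma_{u,u}=\Tr(\Psi^{\otimes m}u^\dagger u)=1$, because each word is a product of commuting involutions, so $u^\dagger u=I$.
\item The reductions from $(M^x)^2=I$ and $[M^x,N^y]=0$ hold exactly for the actual operators, so they hold for the entries of $\Gamma$.
\item For the objective, use $M^x_a=\frac12(I+(-1)^aM^x)$ and similarly for Bob. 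Expanding $\Tr(\Psi^{\otimes m}M^x_aN^y_b)$ gives $\frac14(1+(-1)^a\Gamma_{I,M^x}+(-1)^b\Gamma_{I,N^y}+(-1)^{a+b}\Gamma_{I,M^xN^y})$. These expectations are real since the operators are Hermitian.
\end{itemize}

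\textbf{Step 3: the commutator constraint.} As in the proof of Theorem~\ref{thm:noisy-NPA}, the four-term combination equals $\Tr(\Psi^{\otimes m}[M^x,M^{x'}]\otimes[N^y,N^{y'}])$. This quantity is real because the tensor product of two anti-Hermitian operators is Hermitian. By Item~2 of Lemma~\ref{lem:maxcorrproperties}, $\Psi^{\otimes m}$ still has maximal correlation $\rho$, and it has maximally mixed marginals. Item~2 of Lemma~\ref{lem:commutator}, applied to $\pm1$ observables, then bounds its absolute value by $4\rho$.

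\textbf{Main obstacle.} The expected difficulty is Step 1: the reduction must avoid Naimark dilation, as the paper's footnote warns, since adding ancillas would destroy the maximally mixed marginals. Optimizing one operator at a time with the others fixed settles this. Because the value is multilinear in the measurement operators, the sequential replacements never decrease it.
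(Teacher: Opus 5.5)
Your proposal is correct and follows the paper's proof essentially step for step. It uses the same one-effect-at-a-time replacement of $M^x_0$ (and then $N^y_0$) by the positive spectral projection of its Hermitian coefficient, the same rewriting of the objective via $M^x_a=\frac12(I+(-1)^aM^x)$, and the same identification of the four-term combination with $\Tr(\Psi^{\otimes m}[M^x,M^{x'}]\otimes[N^y,N^{y'}])$, bounded by Item~2 of Lemma~\ref{lem:commutator}. Your explicit checks, namely taking the real part of the Gram matrix and deriving $\Gamma_{u,u}=1$ from $u^\dagger u=I$, fill in details the paper leaves implicit.
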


\begin{proof}
The fact that the objective function has the form in the corollary statement can be checked by just writing out $\Tr(M^x_aN^y_b\Psi^{\otimes m})$ in terms of $\Tr(M^x\Psi^{\otimes m}), \Tr(N^y\Psi^{\otimes m})$ and $\Tr(M^xN^y\Psi^{\otimes m})$, using the relations:
\[ M^x_a=\frac{I+(-1)^aM^x}{2},
\qquad
N^y_b=\frac{I+(-1)^bN^y}{2}.
\]
Moreover, for the last constraint, it can be checked that the quantity inside the absolute value is
\[ \Tr\left([M^x,M^{x'}]\otimes[N^y,N^{y'}]\Psi^{\otimes m}\right).\]
If $M^x_a,N^y_b$ are projectors, then the corresponding observables have eigenvalues in $\{-1,+1\}$ and therefore satisfy the conditions of Item~2 of Lemma~\ref{lem:commutator}. This gives the $4\rho$ upper bound.

It remains to argue that $\{M^x_a\},\{N^y_b\}$ can be taken to be projective, even when $\Psi^{\otimes m}$ is fixed. To do this, note that after all
other measurements are fixed, the payoff is affine in a binary POVM
element $0\leq M^x_0\leq I$. A linear functional on the interval
$[0,I]$ attains its maximum at a projection: one takes the spectral
projection onto the positive eigenspace of the Hermitian coefficient
of $M^x_0$. Replacing the effects one at a time on Alice's and Bob's
sides never decreases the payoff and produces a projective binary
strategy. Thus the reflection SDP upper bounds arbitrary binary-POVM
strategies, not only strategies initially presented as projective.
\end{proof}

Finally, the NPA local level 1 formulation can be used to get an upper bound on the noisy quantum value of the CHSH game, when the noise channel is unital. We will get a different upper bound on the value of the CHSH game under more general noise in the next subsection.
\begin{corollary}\label{cor:noisy-NPA-chsh}
The unital quantum value of the CHSH game under $\Delta^\uni_K$, whose maximal correlation is $\|K\|_\infty=\rho$, is upper bounded by
\[ \omega^\uni_K(\chsh) \leq \frac12+\frac{\sqrt{1+\rho}}4.\]
\end{corollary}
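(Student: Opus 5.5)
The plan is to use the Tsirelson-style "square the Bell operator" argument. The commutator term that appears in the square will be controlled by the maximal-correlation constraint, namely the last constraint in Corollary~\ref{cor:noisy-NPA-bin}, or equivalently Item~2 of Lemma~\ref{lem:commutator}.

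\textbf{Step 1: reduce to projective measurements and write the Bell operator.} Following the proof of Corollary~\ref{cor:noisy-NPA-bin}, we may assume Alice's and Bob's measurements are projective. That argument replaces effects one at a time by spectral projections and keeps the shared state $(\Omega^\uni_K)^{\otimes m}$ fixed. We pass to observables $M^0,M^1,N^0,N^1$ with eigenvalues $\pm1$, each acting on its own party's share $E^\A$ or $E^\B$. Writing $M^x_a=\frac{I+(-1)^aM^x}{2}$ and similarly for Bob, the CHSH winning probability becomes
\[
\frac12+\frac18\,\Tr\!\left(\Psi^{\otimes m}\,\mathcal{B}\right),
\]
where $\Psi=\Omega^\uni_K$ and
\[
\mathcal{B}=M^0\otimes(N^0+N^1)+M^1\otimes(N^0-N^1).
\]

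\textbf{Step 2: square the Bell operator.} A direct expansion using $(M^x)^2=(N^y)^2=I$ gives the standard identity
\[
\mathcal{B}^2=4I-[M^0,M^1]\otimes[N^0,N^1].
\]
In this expansion the cross terms $M^0M^1\otimes(\cdots)$ and $M^1M^0\otimes(\cdots)$ combine into the commutator, and the diagonal terms each contribute $2I$. The operator $\mathcal{B}$ is Hermitian and $\Psi^{\otimes m}$ is a state, so Cauchy--Schwarz gives
\[
\Tr(\Psi^{\otimes m}\mathcal{B})^2\le\Tr(\Psi^{\otimes m}\mathcal{B}^2).
\]

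\textbf{Step 3: bound the commutator term.} By Item~2 of Lemma~\ref{lem:maxcorrproperties}, $\MC(\Psi^{\otimes m})=\rho$. By Lemma~\ref{lem:MC-unital}, $\Psi$, and hence $\Psi^{\otimes m}$, has maximally mixed marginals. Item~2 of Lemma~\ref{lem:commutator} therefore gives
\[
\bigl|\Tr\bigl(\Psi^{\otimes m}[M^0,M^1]\otimes[N^0,N^1]\bigr)\bigr|\le4\rho.
\]
Consequently $\Tr(\Psi^{\otimes m}\mathcal{B}^2)\le4+4\rho$, so $|\Tr(\Psi^{\otimes m}\mathcal{B})|\le2\sqrt{1+\rho}$. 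Plugging this into Step~1 yields the claimed bound $\frac12+\frac{\sqrt{1+\rho}}4$. As a sanity check, at $\rho=1$ this recovers $\frac12+\frac{\sqrt2}{4}=\cos^2(\pi/8)$.

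Equivalently, the whole argument can be phrased inside the SDP of Corollary~\ref{cor:noisy-NPA-bin}. One takes the vector $\mathcal{B}$ in the span of the words $M^xN^y$ and uses positive semidefiniteness of $\Gamma$ to obtain $\Gamma_{I,\mathcal{B}}^2\le\Gamma_{\mathcal{B},\mathcal{B}}$. One then evaluates $\Gamma_{\mathcal{B},\mathcal{B}}$ via the reductions and the $4\rho$ constraint. The operator version above is cleaner, so I would present that.

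The main point needing care is Step~1: the projectivity reduction must not alter the shared state, since appending ancillas would break the maximally-mixed-marginal hypothesis. This is exactly what the argument in Corollary~\ref{cor:noisy-NPA-bin} provides. The remaining steps are routine expansion.
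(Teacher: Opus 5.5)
Your proof is correct and is essentially the paper's argument. The paper applies Cauchy--Schwarz to $\ket{C}=\ket{M^0N^0}+\ket{M^0N^1}+\ket{M^1N^0}-\ket{M^1N^1}$ in a Gram representation of the SDP of Corollary~\ref{cor:noisy-NPA-bin}, reduces $\langle C|C\rangle$ to $4-\Delta_\Gamma$, and bounds $|\Delta_\Gamma|\le 4\rho$; this is your identity $\mathcal{B}^2=4I-[M^0,M^1]\otimes[N^0,N^1]$ together with Item~2 of Lemma~\ref{lem:commutator}, phrased at the level of moment-matrix entries. The only difference is that the paper bounds the SDP relaxation, a formally stronger statement, whereas you bound actual strategies directly, which suffices for the corollary.
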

\begin{proof}
We will calculate the value of the SDP from Corollary~\ref{cor:noisy-NPA-bin} for CHSH. It can be checked that the commutator constraints are just a single constraint for CHSH. The objective function for CHSH is
\[ \frac{1}{2} + \frac{1}{8}(\Gamma_{I,M^0N^0} + \Gamma_{I,M^0N^1} + \Gamma_{I,M^1N^0} - \Gamma_{I,M^1N^1} ).\]

Because $\Gamma\succeq0$, there are vectors
$\{\ket{u}:u\in\mathcal W_{\mathrm{bin}}\}$ such that
\[
\Gamma_{u,v}=\langle u| v\rangle.
\]
With these vectors, define
\[
\ket{C}
=
\ket{M^0N^0}
+\ket{M^0N^1}
+\ket{M^1N^0}
-\ket{M^1N^1}
\]
Then the objective function is
\begin{equation}\label{eq:gamma-CS}
\frac12+\frac18\langle I| C\rangle \leq \frac12+\frac18\sqrt{\langle I|I\rangle\langle C|C\rangle} = \frac12+\frac18\sqrt{\langle C|C\rangle}
\end{equation}
where we have used the Cauchy--Schwarz inequality, and the fact that $\Gamma_{I,I}=1$.

We will now calculate this $\langle C|C\rangle$ using the consistency relations. Because all four observables $M^0, M^1, N^0, N^1$ square to \(I\),
\[
(N^0+N^1)^2+(N^0-N^1)^2=4I.
\]
We can use the consistency relations corresponding to this in the SDP to simplify the expression for $\langle C|C\rangle$ to
\[
\langle C|C\rangle
=
4-\langle M^0N^0|M^1N^1\rangle + \langle M^0N^1|M^1N^0\rangle +\langle M^1N^0|M^0N^1\rangle -\langle M^1N^1|M^0N^0\rangle = 4 - \Delta_\Gamma.
\]
We now notice that $\Delta_\Gamma$ above is
\[
\Delta_\Gamma =
\Gamma_{M^0N^0,M^1N^1}
-\Gamma_{M^0N^1,M^1N^0}-
\Gamma_{M^1N^0,M^0N^1}
+\Gamma_{M^1N^1,M^0N^0}.
\]
which is precisely the expression that appears inside the absolute value in the last constraint of the SDP. By this constraint, $|\Delta_\Gamma|\leq 4\rho$. Using \eqref{eq:gamma-CS}, the objective value of the SDP is at most
\[ \frac12+\frac18\sqrt{4-\Delta_\Gamma} \leq \frac12+\frac18\sqrt{4+4\rho} = \frac12+\frac14\sqrt{1+\rho}.\]

With a bit more effort, it is also possible to prove that the upper bound is actually obtained by the SDP. In any case
\[ \omega^\uni_K(\chsh) \leq \frac12+\frac{\sqrt{1+\rho}}4. \qedhere\]
\end{proof}

\begin{remark}
Unlike the NPA level 1 SDP for the CHSH game, which shows perfect parallel repetition, the SDP from Corollary~\ref{cor:noisy-NPA-bin} for CHSH does not show parallel repetition. In fact, if we write out the SDP for two copies of CHSH, its value for any $\rho\in[0,1]$ is simply $\cos^4(\pi/8)$, i.e., just the quantum value. This is why we will show a parallel repetition theorem for the noisy quantum value of CHSH in Section~\ref{sec:parrep} with entirely different techniques.
\end{remark}

\subsection{Improved upper bound on the noisy quantum value of CHSH}
In this section, we prove an upper bound on the value of the CHSH game with noisy entanglement for any noise model with bounded maximal correlation.
\begin{theorem}\label{thm:chsh-c}
The noisy quantum value of the CHSH game with noisy entangled states that have maximal correlation $\rho\in[0,1)$ is upper bounded by $\frac12+\frac1{4\sqrt{1-\rho^2}}$.
\end{theorem}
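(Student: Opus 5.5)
The plan is to reduce to $\pm1$ observables and write the CHSH bias as a sum of two correlators. Each correlator is then bounded with the covariance form \eqref{eq:MC-var} of maximal correlation, and what remains is a small scalar optimization. First, exactly as in the proof of Corollary~\ref{cor:noisy-NPA-bin}, I fix the shared state $\Psi^{\otimes m}$ (and any shared randomness, which can be conditioned on). Replacing binary effects one at a time by spectral projections never decreases the winning probability. So I may take Alice's and Bob's measurements to be observables $A_0,A_1,B_0,B_1$ with $A_x^2=B_y^2=I$. By Item~2 of Lemma~\ref{lem:maxcorrproperties}, $\MC(\Psi^{\otimes m})=\rho$. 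The winning probability is then
\[
\frac12+\frac18\,\Tr\!\left(\Psi^{\otimes m}\bigl(A_0\otimes(B_0+B_1)+A_1\otimes(B_0-B_1)\bigr)\right),
\]
so it suffices to show that the trace term is at most $2/\sqrt{1-\rho^2}$.

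Next I apply \eqref{eq:MC-var} to each of the two terms. I write $\alpha_x=\Tr(\Psi_{E^\A}A_x)$, $\beta_\pm=\Tr(\Psi_{E^\B}(B_0\pm B_1))$ and $s_\pm=\Tr(\Psi_{E^\B}(B_0\pm B_1)^2)$. This gives
\[
\Tr\!\left(\Psi\, A_0\otimes(B_0+B_1)\right)\leq \alpha_0\beta_++\rho\sqrt{1-\alpha_0^2}\sqrt{s_+-\beta_+^2},
\]
and similarly for the second term. Applying Cauchy--Schwarz to the vectors $(\alpha_0,\sqrt{1-\alpha_0^2})$ and $(\beta_+,\rho\sqrt{s_+-\beta_+^2})$ removes Alice's parameters. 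Each term is then at most $\sqrt{(1-\rho^2)\beta_\pm^2+\rho^2 s_\pm}$.

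The remaining constraints on Bob's side come from two identities. The operator identity $(B_0+B_1)^2+(B_0-B_1)^2=4I$ gives $s_++s_-=4$. Also $|\beta_+|+|\beta_-|=2\max(|\Tr\Psi_{E^\B}B_0|,|\Tr\Psi_{E^\B}B_1|)\leq2$, and $s_\pm\geq\beta_\pm^2$ holds by Cauchy--Schwarz. The whole problem therefore reduces to the scalar inequality
\[
\sqrt{(1-\rho^2)\beta_+^2+\rho^2 s_+}+\sqrt{(1-\rho^2)\beta_-^2+\rho^2 s_-}\leq\frac{2}{\sqrt{1-\rho^2}}
\]
under $s_++s_-=4$, $|\beta_+|+|\beta_-|\leq2$ and $s_\pm\geq\beta_\pm^2$.

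This scalar step is the one I expect to be the main obstacle, since it is where the divergent form of the bound has to come out. The sanity checks are reassuring. At $\rho=0$ the left side is $|\beta_+|+|\beta_-|\leq2$, which recovers the classical value $3/4$. The symmetric point $\beta_\pm=1$, $s_\pm=2$ gives $2\sqrt{1+\rho^2}\leq2/\sqrt{1-\rho^2}$. My first attempt will be to apply Cauchy--Schwarz to the sum of square roots:
\[
\sum_\pm\sqrt{u_\pm}\leq\sqrt{\textstyle\sum_\pm w_\pm}\,\sqrt{\textstyle\sum_\pm u_\pm/w_\pm},
\]
with weights chosen to balance the linear constraint on $|\beta_\pm|$ against the constraint on $s_\pm$. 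An alternative is to bound each root tangentially, using $\sqrt{u}\leq\frac{u}{2t}+\frac t2$ with $t$ tuned to $\rho$. The resulting expression is linear in $s_\pm$ and quadratic in $|\beta_\pm|$, and can then be maximized over the constraint polytope. If neither gives exactly $2/\sqrt{1-\rho^2}$, I would fall back to a Lagrange-multiplier analysis of this two-variable concave problem.
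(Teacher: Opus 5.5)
There is a genuine gap: the scalar inequality you reduce to is false, so none of your proposed finishing strategies (weighted Cauchy--Schwarz, tangent bounds, Lagrange multipliers) can work. Everything up to
\[
\Tr\!\left(\Psi\bigl(A_0\otimes(B_0+B_1)+A_1\otimes(B_0-B_1)\bigr)\right)\le\sqrt{\beta_+^2+\rho^2(s_+-\beta_+^2)}+\sqrt{\beta_-^2+\rho^2(s_--\beta_-^2)}
\]
is correct. The problem is that the constraints $s_++s_-=4$, $|\beta_+|+|\beta_-|\le2$, $s_\pm\ge\beta_\pm^2$ are too weak.

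Take $\rho=1/10$, $\beta_+=0.15$, $\beta_-=1.85$, $s_-=\beta_-^2=3.4225$ and $s_+=4-s_-=0.5775$. All your constraints hold. The left-hand side is $1.85+\sqrt{0.02805}\approx2.0175$, while $2/\sqrt{0.99}\approx2.0101$. More generally, the choice $\beta_+=\sqrt2\rho$, $\beta_-=2-\sqrt2\rho$, $s_-=\beta_-^2$ gives $2+2\rho^2+O(\rho^3)$. The target is $2+\rho^2+O(\rho^4)$, so the relaxation fails for every small $\rho>0$, and numerically still at $\rho=1/3$. Your sanity checks only probed $\rho=0$ and the symmetric point, whereas the violation occurs at a very asymmetric point.

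What your relaxation loses is how Bob's variance splits between $B_0+B_1$ and $B_0-B_1$. The identity $s_++s_-=4$ only fixes the total $\Var(B_0+B_1)+\Var(B_0-B_1)=2\Var(B_0)+2\Var(B_1)$, and your constraints let all of it sit on one combination.

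In the example, $b_y=\Tr(\Psi_{E^\B}B_y)$ are $b_0=1$ and $b_1=-0.85$. So $B_0$ is constant on the support of $\Psi_{E^\B}$, and in fact $\Var(B_0\pm B_1)=\Var(B_1)=0.2775$ for both signs, not $0.555$ and $0$. The missing constraint is Minkowski's inequality for the variance seminorm in $L^2(\Psi_{E^\B})$:
\[
\sqrt{\Var_{\Psi_{E^\B}}(B_0\pm B_1)}\le\sqrt{1-b_0^2}+\sqrt{1-b_1^2}.
\]
Using this in place of $s_++s_-=4$, your bound becomes
\[
\sqrt{(b_0+b_1)^2+\rho^2(b_0^\perp+b_1^\perp)^2}+\sqrt{(b_0-b_1)^2+\rho^2(b_0^\perp+b_1^\perp)^2},\qquad b_y^\perp=\sqrt{1-b_y^2}.
\]
This is exactly the two-square-root expression the paper reaches, with Alice and Bob interchanged. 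The paper gets there by applying \eqref{eq:MC-var} to each of the four correlators separately and then maximizing out one party.

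The paper then finishes as follows. Substitute $b_y=\cos\theta_y$ and apply sum-to-product identities in the midpoint and half-difference angles. Then set $\sin\phi=\sqrt{1-\rho^2}\sin\theta_{\mathrm{mid}}$ and $\sin\psi=\sqrt{1-\rho^2}\cos\theta_{\mathrm{diff}}$. This turns half the bias into $\sin(\phi+\psi)/\sqrt{1-\rho^2}\le1/\sqrt{1-\rho^2}$.

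Your reduction to projective observables is harmless but unnecessary: $-I\preceq M\preceq I$ already gives $\Var\le1-\alpha^2$.
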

Besides applying to more general noise, this result performs better than Corollary~\ref{cor:noisy-NPA-chsh} for unital noise in some regimes. When the unital noise is depolarizing, $\rho=\mu$, the visibility parameter of the Werner state. A comparison of the two upper bounds is given in Figure~\ref{fig:chsh-bounds}.

Combining Corollary~\ref{cor:noisy-NPA-chsh} and Theorem~\ref{thm:chsh-c}, and the maximal correlation calculations for various noise models we've considered, we get the following theorem.
\begin{corollary}\label{cor:chsh-comb}
The noisy quantum value of the CHSH game with noisy entangled states that have maximal correlation $\rho$ is upper bounded by
\[ \min\left\{\frac12+\frac1{4\sqrt{1-\rho^2}}, \frac12 + \frac1{2\sqrt{2}}\right\}.\]
Further, if the noise is unital, then the game value is upper bounded by
\[ \min\left\{\frac12+\frac1{4\sqrt{1-\rho^2}}, \frac12 + \frac{\sqrt{1+\rho}}{4}\right\}.\]
\end{corollary}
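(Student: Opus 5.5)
This corollary is a direct combination of two earlier results, and the plan is simply to take, for each noise class, the minimum of the upper bounds that are available for it.

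For the first (general) bound, I will use two facts. Theorem~\ref{thm:chsh-c} says that for any shared state with maximal correlation $\rho\in[0,1)$, the noisy CHSH value is at most $\frac12+\frac1{4\sqrt{1-\rho^2}}$. Separately, any noisy strategy is in particular a quantum strategy with some finite-dimensional shared state, namely $\Psi^{\otimes m}$ together with local ancillas. Its value is therefore at most $\omega^*(\chsh)=\cos^2(\pi/8)=\frac12+\frac1{2\sqrt2}$. Since both bounds hold simultaneously, their minimum is an upper bound.

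Two edge cases need care. When $\rho=1$, the first expression is read as $+\infty$, and the minimum reduces to the noiseless quantum value. I also need the $\rho$ in Theorem~\ref{thm:chsh-c} to be the maximal correlation of the whole resource. Item~2 of Lemma~\ref{lem:maxcorrproperties} gives $\MC(\Psi^{\otimes m})=\MC(\Psi)=\rho$. I would also check that Theorem~\ref{thm:chsh-c} tolerates local ancillas; adjoining a product ancilla does not raise maximal correlation, by tensorization with a product state of maximal correlation $0$.

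For the unital case, Corollary~\ref{cor:noisy-NPA-chsh} gives $\frac12+\frac{\sqrt{1+\rho}}4$ with $\rho=\|K\|_\infty$. By Lemma~\ref{lem:MC-unital}, this $\|K\|_\infty$ is exactly $\MC(\Omega^\uni_K)$. Taking the minimum with Theorem~\ref{thm:chsh-c} gives the second displayed bound. The entries of Table~\ref{tab:chsh-ub-lb} then follow by substituting the computed maximal correlations from Lemmas~\ref{lem:MC-werner}, \ref{lem:MC-unital}, \ref{lem:MC-reset}, and~\ref{lem:MC-erasure}. The unital bound $\frac12+\frac{\sqrt{1+\rho}}4$ never exceeds $\frac12+\frac1{2\sqrt2}$, since $\sqrt{1+\rho}\le\sqrt2$. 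So it is consistent that the noiseless-value term is dropped in the unital case.

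There is no real obstacle. The only point needing care is confirming that the hypotheses of the two cited results match. Theorem~\ref{thm:chsh-c} requires only a maximal correlation bound. Corollary~\ref{cor:noisy-NPA-chsh} requires maximally mixed marginals, which unital noise on EPR pairs provides. Both also allow arbitrarily many copies of the state.
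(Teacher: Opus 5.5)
Your proposal is correct and follows the paper's own argument. The paper likewise obtains the corollary by taking the minimum of Theorem~\ref{thm:chsh-c} with the noiseless value $\cos^2(\pi/8)=\frac12+\frac1{2\sqrt2}$ in general, and with Corollary~\ref{cor:noisy-NPA-chsh} in the unital case, using Item~2 of Lemma~\ref{lem:maxcorrproperties} to handle many copies. Your treatment of ancillas is sound, though the point is unnecessary: a local measurement using an ancilla is just a POVM on the shared noisy state, and both cited results already allow arbitrary POVMs.
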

Using Lemmas~\ref{lem:MC-werner}, \ref{lem:MC-unital}, \ref{lem:MC-reset}, and \ref{lem:MC-erasure}, the upper bounds on the noisy quantum value of the CHSH game under the respective noise models from Corollary~\ref{cor:chsh-comb} are those given in Table~\ref{tab:chsh-ub-lb}.

\begin{figure}[!ht]
\centering
\includegraphics[width=0.9\linewidth]{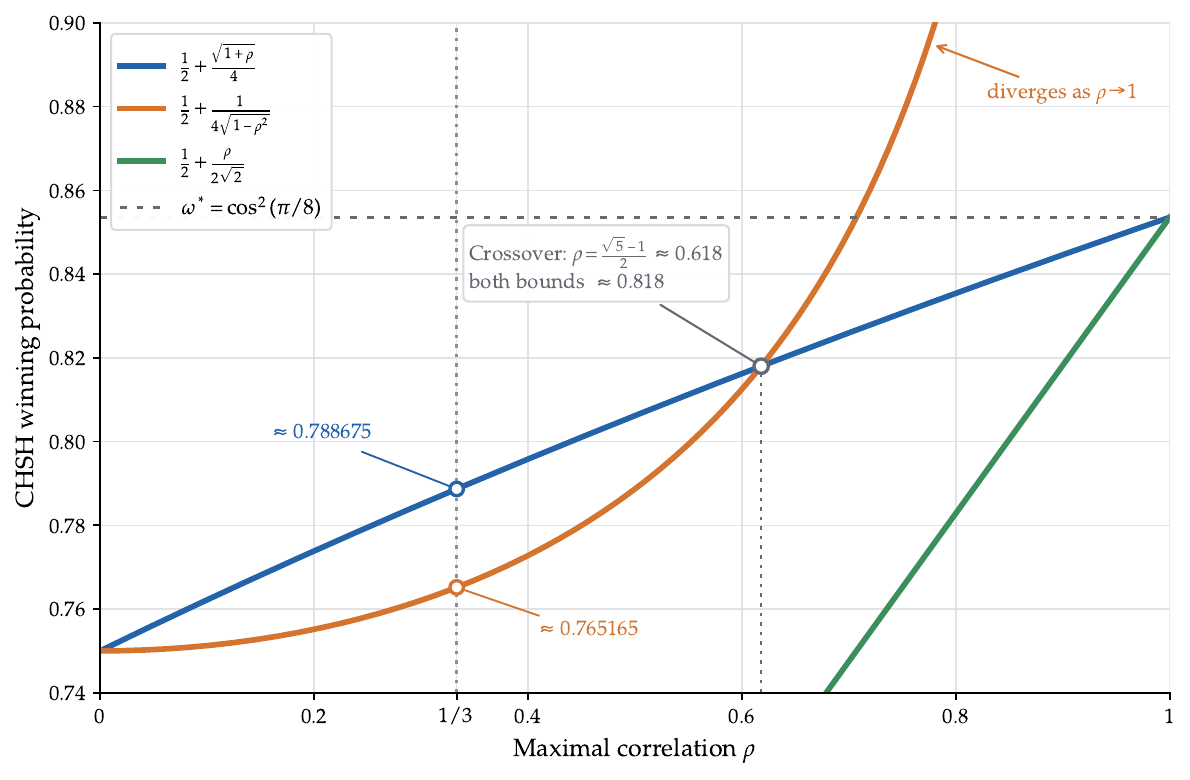}
\caption{The two upper bounds for $\omega^\uni_K(\chsh)$, and the lower bound. The bound from Theorem~\ref{thm:chsh-c} performs better up to $\|K\|_\infty = \rho=\frac{\sqrt{5}-1}2$, but diverges as $\rho$ approaches $1$. For depolarizing noise, $\rho=\mu$, the Werner visibility. At the entanglement threshold $\rho=\frac13$ for depolarizing noise, the bound from Theorem~\ref{thm:chsh-c} is better, although both bounds are higher than the correct classical value of $0.75$. There is a gap between both upper bounds and the lower bound in all regimes.}
\label{fig:chsh-bounds}
\end{figure}
 
\begin{proof}[Proof of Theorem~\ref{thm:chsh-c}]
Let $\Psi_{E^\A E^\B}$ be the noisy shared state. This state may itself consist of copies of a smaller state; by Item~2 of Lemma~\ref{lem:maxcorrproperties}, both have maximal correlation $\rho$. Let $M^0,M^1,N^0,N^1$ be the observables used in the CHSH strategy. These observables are formed from arbitrary, possibly non-projective POVMs, so they do not necessarily satisfy $(M^x)^2=(N^y)^2=I$. However, since $M^x=M^x_0-M^x_1=2M^x_0-I$ for some $0 \preceq M^x_0 \preceq I$, we have
\[
-I\preceq M^x\preceq I,
\qquad
-I\preceq N^y\preceq I.
\]
Hence the observables squared are at most $I$.

Introduce the local traces
\[
\alpha_x
=
\Tr\bigl(\Psi_{E^{\A}}M^x\bigr),
\qquad
\beta_y
=
\Tr\bigl(\Psi_{E^{\B}}N^y\bigr),
\]
and their complementary magnitudes
\[
\alpha_x^\perp=\sqrt{1-\alpha_x^2},
\qquad
\beta_y^\perp=\sqrt{1-\beta_y^2}.
\]
Because the observables are Hermitian,
\[
\begin{aligned}
\Var_{\Psi_{E^{\A}}}(M^x)
&=
\Tr\bigl(\Psi_{E^{\A}}(M^x)^2\bigr)-\alpha_x^2
\leq 1-\alpha_x^2,\\
\Var_{\Psi_{E^{\B}}}(N^y)
&=
\Tr\bigl(\Psi_{E^{\B}}(N^y)^2\bigr)-\beta_y^2
\leq 1-\beta_y^2.
\end{aligned}
\]
Equation~\eqref{eq:MC-var} therefore gives, for every
\(x,y\in\{0,1\}\),
\begin{equation}\label{eq:chsh-var}
\left|
\Tr\bigl(\Psi_{E^{\A}E^{\B}}M^x\otimes N^y\bigr)
-\alpha_x\beta_y
\right|
\leq
\rho\,\alpha_x^\perp\beta_y^\perp.
\end{equation}

Define the CHSH bias of this strategy as
\[ \bias(M^0,M^1;N^0,N^1) = \sum_{x,y}(-1)^{xy}\Tr(\Psi_{E^\A E^\B}M^x\otimes N^y).\]
The value of the game with this strategy is then $\frac12+\frac18\bias(M^0,M^1;N^0,N^1)$. Using the bound in \eqref{eq:chsh-var} on each term in the bias we get
\begin{equation}\label{eq:bias-bound}
\bias(M^0,M^1;N^0,N^1)
\leq
\alpha_0(\beta_0+\beta_1)
+\alpha_1(\beta_0-\beta_1) +
\rho(\alpha_0^\perp+\alpha_1^\perp)
 (\beta_0^\perp+\beta_1^\perp).
\end{equation}

Fix Alice's two local traces and write $\rho^{\A} = \rho(\alpha_0^\perp+\alpha_1^\perp)$.
The two terms in the RHS of \eqref{eq:bias-bound} depending on \(\beta_0\) and \(\beta_1\) can now be
maximized independently. For every \(h\in\bbR\) and \(k\geq0\),
\[
\max_{-1\leq z\leq1}
\left\{
hz+k\sqrt{1-z^2}
\right\}
=
\sqrt{h^2+k^2}.
\]
Indeed, \((z,\sqrt{1-z^2})\) ranges over the upper unit semicircle, and
Cauchy--Schwarz gives the upper bound. Choosing
\[
z=\frac{h}{\sqrt{h^2+k^2}}
\]
attains the upper bound when \((h,k)\neq(0,0)\). Applying this to the RHS of \eqref{eq:bias-bound}, separately with $z=\beta_0$, $h=\alpha_0+\alpha_1$, $k=\rho^\A$, and with $z=\beta_1$, $h=\alpha_0-\alpha_1$, $k=\rho^\A$, gives
\[
\bias(M^0,M^1;N^0,N^1)
\leq
\sqrt{(\alpha_0+\alpha_1)^2+(\rho^{\A})^2}+
\sqrt{(\alpha_0-\alpha_1)^2+(\rho^{\A})^2}.
\]

Choose \(\theta_0,\theta_1\in[0,\pi]\) such that
\[
\alpha_0=\cos\theta_0, \quad \alpha_0^\perp = \sin\theta_0
\qquad
\alpha_1=\cos\theta_1, \quad \alpha_1^\perp = \sin\theta_1.
\]
Define the midpoint and half-difference angles
\[
\theta_{\mathrm{mid}}
=
\frac{\theta_0+\theta_1}{2},
\qquad
\theta_{\mathrm{diff}}
=
\frac{\theta_0-\theta_1}{2}.
\]
Since $\theta_{\mathrm{mid}}\in[0,\pi]$ and $\theta_{\mathrm{diff}}\in[-\pi/2,\pi/2]$, both \(\sin\theta_{\mathrm{mid}}\) and
\(\cos\theta_{\mathrm{diff}}\) are nonnegative. The elementary
sum-to-product identities for $\sin$ and $\cos$ give
\[
\begin{aligned}
\alpha_0+\alpha_1
&=
2\cos\theta_{\mathrm{mid}}\cos\theta_{\mathrm{diff}},\\
\alpha_0-\alpha_1
&=
-2\sin\theta_{\mathrm{mid}}\sin\theta_{\mathrm{diff}},\\
\alpha_0^\perp+\alpha_1^\perp
&=
2\sin\theta_{\mathrm{mid}}\cos\theta_{\mathrm{diff}}.
\end{aligned}
\]
Substituting these identities into the preceding bound for the bias gives
\begin{equation}\label{eq:bias-bound-2}
\frac12\bias(M^0,M^1;N^0,N^1)
\leq \cos\theta_{\mathrm{diff}}
\sqrt{
\cos^2\theta_{\mathrm{mid}}
+\rho^2\sin^2\theta_{\mathrm{mid}}
}
+
\sin\theta_{\mathrm{mid}}
\sqrt{
\sin^2\theta_{\mathrm{diff}}
+\rho^2\cos^2\theta_{\mathrm{diff}}
}.
\end{equation}

Setting $\rho^\perp=\sqrt{1-\rho^2}$, the RHS of \eqref{eq:bias-bound-2} can be rewritten as
\[
\cos\theta_{\mathrm{diff}}
\sqrt{1-(\rho^\perp)^2\sin^2\theta_{\mathrm{mid}}} +
\sin\theta_{\mathrm{mid}}
\sqrt{1-(\rho^\perp)^2\cos^2\theta_{\mathrm{diff}}}.
\]
Choose \(\phi,\psi\in[0,\pi/2]\) satisfying
\[
\sin\phi
=
\rho^\perp\sin\theta_{\mathrm{mid}},
\qquad
\sin\psi
=
\rho^\perp\cos\theta_{\mathrm{diff}}.
\]
Multiplying both sides of \eqref{eq:bias-bound-2} by \(\rho^\perp\) therefore yields
\[
\frac{\rho^\perp}{2}
\bias(M^0,M^1;N^0,N^1)
\leq
\sin\psi\cos\phi+\sin\phi\cos\psi =
\sin(\phi+\psi) \leq 1.
\]
Since \(\rho^\perp=\sqrt{1-\rho^2}\), we conclude that the value of the game with this noisy entangled state has the upper bound
\[
\frac12+\frac18\bias(M^0,M^1;N^0,N^1)
\leq
\frac12+\frac{1}{4\sqrt{1-\rho^2}}. \qedhere
\]
\end{proof}

\section{Noisy parallel repetition theorems for non-local games}\label{sec:parrep}
In this section, we prove a parallel repetition theorem for the noisy quantum value of general games, followed by improved parallel repetition theorems for unique games and the CHSH game. We first state and prove the results for depolarizing noise, since it is the simplest case. We later generalize them to other noise models.
\subsection{General games}\label{sec:gen-parrep}
\begin{theorem}\label{thm:parrep}
Suppose for a non-local game $G=(\sfP,\clX\times\clY, \clA\times\clB, V)$, the classical value of $G^n$ is upper bounded by $\omega(G^n) \leq w_G^n$ for some $w_G\in(0,1)$. Then for any visibility level $0 \leq \mu \leq 1$, the (depolarized) noisy quantum value of $G^n$ is upper bounded by
\begin{align*}
\omega^\dep_\mu(G^n) & \leq \left(2^{\mu\log(|\clA|\cdot|\clB|)}\cdot w_G\right)^{\frac{n}{1+\mu}}.
\end{align*}
\end{theorem}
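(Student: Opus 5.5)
Fix a strategy for $G^n$ that uses $m$ copies of $\Omega^\dep_\mu$ and POVMs $\{M^\bx_\ba\}$, $\{N^\by_\bb\}$, and let $\sfP_{\bA\bB|\bx\by}(\ba,\bb)=\Tr[(M^\bx_\ba\otimes N^\by_\bb)(\Omega^\dep_\mu)^{\otimes m}]$. The first step is a pointwise bound in terms of the marginals. By \eqref{eq:choi-identity} we have
\[
\sfP(\ba,\bb|\bx,\by)=\frac1{2^m}\Tr\!\left[M^\bx_\ba\,(\Delta^\dep_\mu)^{\otimes m}((N^\by_\bb)^{\mathsf T})\right].
\]
Applying the normalized Hölder inequality with exponents $q'$ and $q$, and then Theorem~\ref{thm:depolarizing-hypercontractivity} with $p=1+\mu^2(q-1)$, bounds this by $\|M^\bx_\ba\|^\norm_{q'}\|N^\by_\bb\|^\norm_p$. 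Because $0\preceq M,N\preceq I$, we get $(\|M\|^\norm_r)^r\le \frac1{2^m}\Tr M=\sfP_{\bA|\bx}(\ba)$, since the marginals are maximally mixed. The same holds for $N$. Thus $\sfP(\ba,\bb|\bx,\by)\le \sfP_{\bA|\bx}(\ba)^{1/q'}\sfP_{\bB|\by}(\bb)^{1/p}$. We choose $q$ so that $1/q'=1/p$, i.e.\ $1-1/q=1/(1+\mu^2(q-1))$. Solving gives $q=1+1/\mu$ and $p=1+\mu$, which yields
\[
\sfP_{\bA\bB|\bx\by}(\ba,\bb)\le\bigl(\sfP_{\bA|\bx}(\ba)\sfP_{\bB|\by}(\bb)\bigr)^{1/(1+\mu)}.
\]
The edge cases $\mu=0$ (product state) and $\mu=1$ are checked directly.

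Next, let $\sfQ$ be the distribution $\sfP^{\otimes n}(\bx,\by)\sfP_{\bA|\bx}(\ba)\sfP_{\bB|\by}(\bb)$ on inputs and outputs. It is realized by a classical strategy with independent private randomness, which is a convex combination of deterministic strategies. Hence $\sfQ(\text{win})\le\omega(G^n)\le w_G^n$. Write $\sfP$ also for the joint distribution with inputs. With $\alpha=(1+\mu)/\mu$, the pointwise bound gives $\sfP^\alpha\sfQ^{1-\alpha}\le \sfP^{\otimes n}(\bx,\by)\,(\sfP_{\bA|\bx}\sfP_{\bB|\by})^{\alpha/(1+\mu)+1-\alpha}$. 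The exponent here is $1/\mu+1-1/\mu=0$, so
\[
\sum\sfP^\alpha\sfQ^{1-\alpha}\le \sum_{\bx,\by}\sfP^{\otimes n}(\bx,\by)\,|\clA|^n|\clB|^n=(|\clA||\clB|)^n .
\]
Therefore $2^{\frac{\alpha-1}{\alpha}D_\alpha}=(\sum\sfP^\alpha\sfQ^{1-\alpha})^{1/\alpha}\le(|\clA||\clB|)^{n\mu/(1+\mu)}$, and $(\alpha-1)/\alpha=1/(1+\mu)$.

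Finally, Lemma~\ref{lem:event-transfer} applied to the winning event gives
\[
\sfP(\text{win})\le(|\clA||\clB|)^{n\mu/(1+\mu)}\,w_G^{n/(1+\mu)}=\left(2^{\mu\log(|\clA||\clB|)}w_G\right)^{n/(1+\mu)} .
\]
We take the supremum over strategies and over $m$. If shared randomness is allowed, the bound holds for each fixed value of the randomness, and averaging preserves it.

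The main obstacle is the pointwise bound. Getting the Hölder exponents right is routine, but we must justify using the normalized norms with the transpose: the transpose preserves Schatten norms and positivity. We must also handle $q=\infty$ or $p=1$ at the boundary values of $\mu$. The rest is the exponent bookkeeping above.
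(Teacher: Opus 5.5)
Your proof is correct and matches the paper's argument step for step: the pointwise bound $\sfP_{\bA\bB|\bx\by}(\ba,\bb)\le(\sfP_{\bA|\bx}(\ba)\sfP_{\bB|\by}(\bb))^{1/(1+\mu)}$ via the Choi identity, normalized H\"older with exponents $1+\mu$ and $(1+\mu)/\mu$, and depolarizing hypercontractivity, followed by the R\'enyi-divergence bound against the product-of-marginals classical strategy and the event-transfer lemma. Two small remarks: the exponent in your R\'enyi step should read $\frac1\mu+1-\frac{1+\mu}{\mu}=0$ rather than $\frac1\mu+1-\frac1\mu$ (your conclusion is still right), and your explicit handling of $\mu=0$ (a product state, hence a classical strategy) covers a case the paper leaves implicit, since its lemmas assume $\mu\in(0,1]$.
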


The above theorem is nontrivial for any $\mu>\frac{1}{3}$, as long as 
\[ 2^{\mu\log(|\clA|\cdot|\clB|)}\cdot w_G < 1\]
but it may be fairly weak. For the CHSH game in particular, we know exact parallel repetition theorems in the classical and noiseless quantum case. The theorem gives the following upper bound on $\omega^\dep_\mu(\chsh^n)$:
\[ \omega^\dep_\mu(\chsh^n) \leq \left(2^{2\mu}\cdot\left(\frac{1+\sqrt{5}}{4}\right)\right)^{\frac{n}{1+\mu}}.\]
Since the quantum value of $\chsh^n$ is $(\cos^2(\pi/8))^n =\left(\frac12+\frac1{2\sqrt{2}}\right)^n$, the upper bound on $\omega^\dep_\mu(\chsh^n)$ in the above theorem is only nontrivial for noise levels $\mu$ satisfying
\[ 2^{\frac{2\mu}{1+\mu}}\cdot\left(\frac{1+\sqrt{5}}{4}\right)^{\frac1{1+\mu}} \leq \frac{1+\sqrt{2}}{2\sqrt{2}} \qquad \Rightarrow \qquad \mu < 0.03.\]
However, the Werner state $\Omega^\dep_\mu$ is already unentangled for $\mu\leq \frac{1}{3}$, so this theorem does not give a nontrivial result for the CHSH game.

\paragraph{Setup.}
In order to prove Theorem~\ref{thm:parrep}, we first define some notation.
We use $\sfP_{XY}$ to denote the input distribution of the single-copy
game, and $\sfP_{\bX\bY}=\sfP_{XY}^{\otimes n}$ to denote the input
distribution of the $n$-fold $G$ game. The shared state between Alice and
Bob, when the noise parameter is $\mu$, is $(\Omega^\dep_\mu)^{\otimes m}$ for
some finite $m$.

Fix a noisy quantum strategy for this game in which, for every input
$\bx\in\clX^n$, Alice uses a POVM
$\{M^\bx_\ba\}_{\ba\in\clA^n}$, and for every
$\by\in\clY^n$, Bob uses a POVM
$\{N^\by_\bb\}_{\bb\in\clB^n}$. The resulting conditional output
distribution is
\[
\sfP_{\bA\bB|\bx\by}(\ba,\bb)
=
\Tr\!\left[
(M^\bx_\ba\otimes N^\by_\bb)(\Omega^\dep_\mu)^{\otimes m}
\right].
\]
Note that we are not considering additional shared or private randomness
in this strategy, since, as discussed, it cannot increase the winning
probability.

Because the local reduced states of $(\Omega^\dep_\mu)^{\otimes m}$ are
maximally mixed, Alice and Bob's marginal output distributions satisfy
\[
\sfP_{\bA|\bx}(\ba)
=
\frac{\Tr(M^\bx_\ba)}{2^m},
\qquad
\sfP_{\bB|\by}(\bb)
=
\frac{\Tr(N^\by_\bb)}{2^m}.
\]
We now define the additional distribution
\[
\sfQ_{\bX\bY\bA\bB}(\bx,\by,\ba,\bb)
=
\sfP_{\bX\bY}(\bx,\by)
\cdot
\sfP_{\bA|\bx}(\ba)
\cdot
\sfP_{\bB|\by}(\bb).
\]
Note that $\sfQ_{\bX\bY\bA\bB}$ is an input-output distribution that can
be obtained by a classical strategy for $G$: on inputs $\bx$ and $\by$,
Alice and Bob produce outputs from the marginal distributions
$\sfP_{\bA|\bx}$ and $\sfP_{\bB|\by}$ using private randomness.

We will first upper bound the conditional probabilities
$\sfP_{\bA\bB|\bx\by}(\ba,\bb)$ in terms of the product of the
marginals, using hypercontractivity of the depolarizing noise channel. The following lemma is similar to Lemma~15 of \cite{DB14}.
\begin{lemma}
\label{lem:pointwise}
For every $\bx,\by,\ba,\bb$ and every $\mu\in(0,1]$, the conditional
output distribution $\sfP_{\bA\bB|\bx\by}$ satisfies
\[
\sfP_{\bA\bB|\bx\by}(\ba,\bb)
\le
\left(
\sfP_{\bA|\bx}(\ba)
\sfP_{\bB|\by}(\bb)
\right)^{1/(1+\mu)}.
\]
\end{lemma}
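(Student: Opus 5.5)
Fix $\bx,\by,\ba,\bb$ and write $M=M^\bx_\ba$, $N=N^\by_\bb$; both are positive semidefinite with $0\preceq M,N\preceq I$ on $m$ qubits. By the Choi identity \eqref{eq:choi-identity},
\[
\sfP_{\bA\bB|\bx\by}(\ba,\bb)=\frac1{2^m}\Tr\!\left[M\,(\Delta^\dep_\mu)^{\otimes m}(N^{\mathsf T})\right].
\]
The plan is to apply the normalized H\"older inequality with a conjugate pair $(q',q)$, then bound the noisy factor with Theorem~\ref{thm:depolarizing-hypercontractivity}. Finally, we use $0\preceq M,N\preceq I$ to turn every Schatten norm into a power of a normalized trace, that is, of a marginal probability.

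Concretely, H\"older gives
\[
\sfP_{\bA\bB|\bx\by}(\ba,\bb)\le \|M\|^\norm_{q'}\,\bigl\|(\Delta^\dep_\mu)^{\otimes m}(N^{\mathsf T})\bigr\|^\norm_{q},\qquad \tfrac1q+\tfrac1{q'}=1 .
\]
Hypercontractivity bounds the second factor by $\|N^{\mathsf T}\|^\norm_p=\|N\|^\norm_p$, provided $\mu\le\sqrt{(p-1)/(q-1)}$. For $0\preceq X\preceq I$ we have $X^r\preceq X$ for $r\ge1$, hence $\|X\|^\norm_r\le (2^{-m}\Tr X)^{1/r}$. Since the marginals of $(\Omega^\dep_\mu)^{\otimes m}$ are maximally mixed, $2^{-m}\Tr M=\sfP_{\bA|\bx}(\ba)$ and $2^{-m}\Tr N=\sfP_{\bB|\by}(\bb)$. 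This yields
\[
\sfP_{\bA\bB|\bx\by}(\ba,\bb)\le \sfP_{\bA|\bx}(\ba)^{1/q'}\,\sfP_{\bB|\by}(\bb)^{1/p}.
\]

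The main step is choosing exponents so that both powers equal $1/(1+\mu)$. We need $1/q'=1-1/q=1/(1+\mu)$ and $1/p=1/(1+\mu)$, so $q=(1+\mu)/\mu$ and $p=1+\mu$. Then $(p-1)/(q-1)=\mu/(1/\mu)=\mu^2$, so the hypercontractivity condition $\mu\le\sqrt{(p-1)/(q-1)}$ holds with equality. We also need $1<p\le q<\infty$, which holds for $\mu\in(0,1]$: indeed $1+\mu\le (1+\mu)/\mu$ because $\mu\le1$. This is exactly why the range $\mu\in(0,1]$ is excluded at $\mu=0$, where $q=\infty$.

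The only delicate points are using the transpose/self-adjointness in \eqref{eq:choi-identity} to place the channel on Bob's operator, and checking $\|N^{\mathsf T}\|_p=\|N\|_p$, which holds because transposition preserves the spectrum. I expect no real obstacle beyond this exponent bookkeeping.
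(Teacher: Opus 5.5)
Your proposal is correct and follows essentially the same route as the paper. Both arguments use the Choi identity \eqref{eq:choi-identity}, normalized H\"older with the conjugate pair $1+\mu$ and $(1+\mu)/\mu$, depolarizing hypercontractivity applied to $N^{\mathsf T}$, and $\Tr(X^r)\le\Tr(X)$ for $0\preceq X\preceq I$ to turn both norms into $1/(1+\mu)$ powers of the maximally-mixed-marginal probabilities.
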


\begin{proof}
Using \eqref{eq:choi-identity},
\[
\sfP_{\bA\bB|\bx\by}(\ba,\bb)
=
\frac{1}{2^m}
\Tr\!\left[
M^\bx_\ba
(\Delta^\dep_\mu)^{\otimes m}
\bigl((N^\by_\bb)^{\mathsf T}\bigr)
\right].
\]
Using normalized H\"older's inequality on the right-hand side above, we
have
\begin{equation}
\label{eq:prob-holder}
\sfP_{\bA\bB|\bx\by}(\ba,\bb)
\le
\|M^\bx_\ba\|^\norm_r
\left\|
(\Delta^\dep_\mu)^{\otimes m}
\bigl((N^\by_\bb)^{\mathsf T}\bigr)
\right\|^\norm_s
\end{equation}
for H\"older-conjugate $r$ and $s$. We take
\[
r=1+\mu,
\qquad
s=\frac{1+\mu}{\mu}.
\]
Aside from being H\"older conjugate, these satisfy
\[
\sqrt{\frac{r-1}{s-1}}
=
\sqrt{\frac{\mu}{1/\mu}}
=
\mu.
\]
Therefore, Theorem~\ref{thm:depolarizing-hypercontractivity} gives
\begin{equation}
\label{eq:dep-Bob}
\left\|
(\Delta^\dep_\mu)^{\otimes m}
\bigl((N^\by_\bb)^{\mathsf T}\bigr)
\right\|^\norm_s
\le
\left\|(N^\by_\bb)^{\mathsf T}\right\|^\norm_r
=
\left\|N^\by_\bb\right\|^\norm_r.
\end{equation}

Because $M^\bx_\ba$ is a POVM element, $0\le M^\bx_\ba\le I$, and all
eigenvalues of $M^\bx_\ba$ are in $[0,1]$. The scalar inequality
$t^r\le t$ for $t\in[0,1]$ and $r\geq1$ implies
\[
\Tr\!\left((M^\bx_\ba)^r\right)
\le
\Tr(M^\bx_\ba).
\]
Therefore,
\[
\begin{aligned}
\left(
\frac{\Tr((M^\bx_\ba)^r)}{2^m}
\right)^{1/r}
&\le
\left(
\frac{\Tr(M^\bx_\ba)}{2^m}
\right)^{1/r}\\
&=
\sfP_{\bA|\bx}(\ba)^{1/r}.
\end{aligned}
\]
Similarly,
\[
\left\|N^\by_\bb\right\|^\norm_r
\le
\sfP_{\bB|\by}(\bb)^{1/r}.
\]

Combining these estimates with \eqref{eq:prob-holder} and
\eqref{eq:dep-Bob}, we obtain
\[
\sfP_{\bA\bB|\bx\by}(\ba,\bb)
\le
\sfP_{\bA|\bx}(\ba)^{1/(1+\mu)}
\sfP_{\bB|\by}(\bb)^{1/(1+\mu)}.
\qedhere
\]
\end{proof}

Using the above lemma, we now upper bound the $D_\alpha$ divergence
between $\sfP$ and $\sfQ$.

\begin{lemma}
\label{lem:Renyi}
For $\mu\in(0,1]$ and
\[
\alpha=\frac{1+\mu}{\mu},
\]
the distributions $\sfP$ and $\sfQ$ defined above satisfy
\[
D_\alpha\!\left(
\sfP_{\bX\bY\bA\bB}
\middle\|
\sfQ_{\bX\bY\bA\bB}
\right)
\le
\mu\log(|\clA|\cdot|\clB|)n.
\]
\end{lemma}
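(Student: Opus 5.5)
Write out the sum defining the divergence directly and bound it termwise using Lemma~\ref{lem:pointwise}. Since $\sfP$ and $\sfQ$ share the input marginal $\sfP_{\bX\bY}$,
\[
\sum_{\bx,\by,\ba,\bb}\sfP(\bx,\by,\ba,\bb)^\alpha\sfQ(\bx,\by,\ba,\bb)^{1-\alpha}
=\sum_{\bx,\by}\sfP_{\bX\bY}(\bx,\by)\sum_{\ba,\bb}\sfP_{\bA\bB|\bx\by}(\ba,\bb)^\alpha\left(\sfP_{\bA|\bx}(\ba)\sfP_{\bB|\by}(\bb)\right)^{1-\alpha}.
\]
With $\alpha=\frac{1+\mu}{\mu}$, Lemma~\ref{lem:pointwise} gives $\sfP_{\bA\bB|\bx\by}(\ba,\bb)^{\alpha}\le\left(\sfP_{\bA|\bx}(\ba)\sfP_{\bB|\by}(\bb)\right)^{\alpha/(1+\mu)}=\left(\sfP_{\bA|\bx}(\ba)\sfP_{\bB|\by}(\bb)\right)^{1/\mu}$. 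Also $1-\alpha=-\frac1\mu$, so each summand is at most $1$. Terms where a marginal vanishes also have $\sfP_{\bA\bB|\bx\by}(\ba,\bb)=0$ by the pointwise bound, so they are excluded by the usual convention and cause no trouble.

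The inner sum is therefore at most the number of pairs $(\ba,\bb)$, namely $(|\clA|\cdot|\clB|)^n$. Averaging over inputs leaves this unchanged. Hence
\[
D_\alpha(\sfP\|\sfQ)\le\frac{1}{\alpha-1}\log\left((|\clA|\cdot|\clB|)^n\right)=\mu n\log(|\clA|\cdot|\clB|),
\]
since $\alpha-1=1/\mu$.

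The only delicate point is this exponent bookkeeping. The choice of $\alpha$ is exactly what makes the pointwise bound cancel the $\sfQ^{1-\alpha}$ factor. The crude count of output pairs is where the $|\clA|\cdot|\clB|$ loss enters, and this is the loss the unique-game argument later avoids. No further obstacle is expected.
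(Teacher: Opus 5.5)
Your proposal is correct and follows essentially the same route as the paper: both bound each term $\sfP^\alpha\sfQ^{1-\alpha}$ by $1$ using Lemma~\ref{lem:pointwise}, where the choice $\alpha=\frac{1+\mu}{\mu}$ makes the exponents cancel, and both then count the $(|\clA|\cdot|\clB|)^n$ output pairs and average over inputs. The only small difference is the zero-marginal case: you read $\sfP_{\bA\bB|\bx\by}(\ba,\bb)=0$ off the pointwise bound directly, while the paper argues that a PSD operator with zero trace vanishes; both are valid.
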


\begin{proof}
By Lemma~\ref{lem:pointwise}, we have for every
$\bx,\by,\ba,\bb$,
\[
\sfP_{\bA\bB|\bx\by}(\ba,\bb)
\le
\sfQ_{\bA\bB|\bx\by}(\ba,\bb)^{1/(1+\mu)}.
\]

If $\sfQ_{\bA\bB|\bx\by}(\ba,\bb)=0$, then either $\Tr(M^\bx_\ba)=0$ or $\Tr(N^\by_\bb)=0$. A positive semidefinite operator with zero trace is zero, so in this case $\sfP_{\bA\bB|\bx\by}(\ba,\bb)=0$.

For $\sfQ_{\bA\bB|\bx\by}(\ba,\bb)>0$, raising both sides of the
inequality in Lemma~\ref{lem:pointwise} to the power $\alpha$ and
multiplying by
$\sfQ_{\bA\bB|\bx\by}(\ba,\bb)^{1-\alpha}$ gives
\[
\sfP_{\bA\bB|\bx\by}(\ba,\bb)^\alpha
\sfQ_{\bA\bB|\bx\by}(\ba,\bb)^{1-\alpha}\le
\sfQ_{\bA\bB|\bx\by}(\ba,\bb)^{\alpha/(1+\mu)+1-\alpha}
=
1.
\]
Here we have used
\[
\alpha=\frac{1+\mu}{\mu},
\qquad
\frac{\alpha}{1+\mu}+1-\alpha=0.
\]

For every fixed input pair $(\bx,\by)$, there are at most
\[
|\clA^n\times\clB^n|
=
(|\clA|\cdot|\clB|)^n
\]
output pairs $(\ba,\bb)$ on which
$\sfQ_{\bA\bB|\bx\by}(\ba,\bb)$ is nonzero. Therefore,
\[
\sum_{\ba,\bb}
\sfP_{\bA\bB|\bx\by}(\ba,\bb)^\alpha
\sfQ_{\bA\bB|\bx\by}(\ba,\bb)^{1-\alpha}
\le
(|\clA|\cdot|\clB|)^n.
\]

Furthermore, because $\sfP_{\bX\bY\bA\bB}$ and
$\sfQ_{\bX\bY\bA\bB}$ have the same input distribution,
\[
\begin{aligned}
&\sum_{\bx,\by,\ba,\bb}
\sfP_{\bX\bY\bA\bB}(\bx,\by,\ba,\bb)^\alpha
\sfQ_{\bX\bY\bA\bB}(\bx,\by,\ba,\bb)^{1-\alpha}\\
&=
\sum_{\bx,\by}
\sfP_{\bX\bY}(\bx,\by)
\sum_{\ba,\bb}
\sfP_{\bA\bB|\bx\by}(\ba,\bb)^\alpha
\sfQ_{\bA\bB|\bx\by}(\ba,\bb)^{1-\alpha}\\
&\le
(|\clA|\cdot|\clB|)^n
\sum_{\bx,\by}\sfP_{\bX\bY}(\bx,\by)\\
&=
(|\clA|\cdot|\clB|)^n.
\end{aligned}
\]

Consequently,
\[
D_\alpha\!\left(
\sfP_{\bX\bY\bA\bB}
\middle\|
\sfQ_{\bX\bY\bA\bB}
\right)
 \le
\frac{1}{\alpha-1}
\log\left((|\clA|\cdot|\clB|)^n\right) =
\mu\log(|\clA|\cdot|\clB|)n.
\qedhere \]
\end{proof}

With this, we can complete the proof of
Theorem~\ref{thm:parrep}.

\begin{proof}[Proof of Theorem~\ref{thm:parrep}]
For a strategy for the $n$-fold game $G$, let $E_n$ be the event that
all $n$ games are won. As noted above, $\sfQ$ is a classical
strategy for the $n$-fold game $G$, so, by definition,
\[
\sfQ(E_n)\leq w_G^n.
\]
Therefore, by Lemmas~\ref{lem:Renyi} and \ref{lem:event-transfer}, for
\[
\alpha=\frac{1+\mu}{\mu},
\]
we have
\[
\sfP(E_n)
\le
2^{\frac{\alpha-1}{\alpha}D_\alpha(\sfP\|\sfQ)}
\sfQ(E_n)^{\frac{\alpha-1}{\alpha}}\le
2^{\frac{\mu}{1+\mu}
\log(|\clA|\cdot|\clB|)n}
w_G^{\frac{n}{1+\mu}}.
\]
Taking $\sfP$ to correspond to the optimal strategy for $G^n$ completes
the proof.
\end{proof}

\subsection{Unique games}\label{sec:u-parrep}
In this section, we prove a parallel repetition theorem for unique games with depolarizing noise. This theorem is tighter than Theorem~\ref{thm:parrep} and gives a nontrivial result for CHSH.

\paragraph{Game collision operators and collision values.} Before stating the theorem, we need to set up some notation and define some terms. Let $\pi^\B_{xy}:\clA\to\clB\cup\{\bot\}$ and $\pi^\A_{xy}:\clB\to\clA\cup\{\bot\}$ be the functions determining the unique outputs of Bob and Alice that are accepted for a given output of Alice and Bob, respectively, on inputs $(x,y)$. We use $\sfP_{XY}$ to denote the input distribution of the single-copy unique game $G$. We also define auxiliary input-output distributions for a single party, unrelated to the input-output distribution of any particular strategy, by
\[ \hat{\sfQ}_{XA}(x,a) = \sfP_X(x)\cdot\frac{1}{|\clA|} \qquad \hat{\sfQ}_{YB}(y,b) = \sfP_Y(y)\cdot\frac{1}{|\clB|}.\]
Let $V^\A$ be the vector space consisting of functions from Alice's input-output sets $\clX\times\clA$ to $\bbR$. Similarly, let $V^\B$ be the vector space of functions from $\clY\times\clB$ to $\bbR$. Note that $\hat{\sfQ}_{XA}$ and $\hat{\sfQ}_{YB}$ are valid weight schemes for defining weighted $\ell_p$ norms on $V^\A$ and $V^\B$. That is, for $f \in V^\A, h \in V^\B$,
\[ \|f\|_{p(\hat{\sfQ}_{XA})} = \left(\sum_{xa}\hat{\sfQ}_{XA}(x,a)|f(x,a)|^p\right)^{1/p} \qquad \|h\|_{p(\hat{\sfQ}_{YB})} = \left(\sum_{yb}\hat{\sfQ}_{YB}(y,b)|h(y,b)|^p\right)^{1/p}.\]

Using $\{\pi^\B_{xy}\}_{xy}, \{\pi^\A_{xy}\}_{xy}$, and the distributions $\hat{\sfQ}_{XA}$ and $\hat{\sfQ}_{YB}$, we will now define two collision operators for $G$, acting between $V^\A$ and $V^\B$. Consider a function $f: \clX\times\clA\to\bbR$, which is in $V^\A$. The $\A\to\B$ collision operator $T^{\A\to\B}_G: V^\A \to V^\B$ acts on it to produce a function $T^{\A\to \B}_Gf$ in $V^\B$ as follows:
\[ (T^{\A\to \B}_Gf)(y,b) = \sum_x\sfP_{X|y}(x)\mathbbm{1}[\pi^\A_{xy}(b)\in\clA]f(x,\pi_{xy}^\A(b)).\]
Similarly, for a function $h:\clY\times\clB\to\bbR$ in $V^\B$, the $\B\to \A$ collision operator $T^{\B\to \A}_G:V^\B\to V^\A$ acts on it to produce a function $T^{\B\to \A}_Gh$ in $V^\A$ as follows:
\[ (T^{\B\to \A}_Gh)(x,a) = \sum_y\sfP_{Y|x}(y)\mathbbm{1}[\pi^\B_{xy}(a)\in\clB]h(y,\pi_{xy}^\B(a)).\]
Note that the distributions $\hat{\sfQ}_{XA}$ and $\hat{\sfQ}_{YB}$ can be used to define weighted induced norms on the collision operators.

We now define a quantity that will be used in the statement of our refined noisy parallel repetition theorem. Roughly speaking, this quantity measures the collision of compatible output masses of Alice and Bob for the game $G$; it will appear in a collision list lemma we will later prove.
\begin{definition}[$p$-collision value]\label{def:col-value}
For $p>0$, the $\A\to \B$ and $\B\to \A$ $p$-collision values of a unique game $G=(\sfP,\clX\times\clY, \clA\times\clB,V)$ are defined as follows:
\begin{align*}
\col^{\A\to \B}_p(G) & = |\clA|^{1/2}|\clB|^{-1/p}\left\|T^{\A\to \B}_G\right\|_{p(\hat{\sfQ}_{XA})\to2(\hat{\sfQ}_{YB})} \\
\col^{\B\to \A}_p(G) & = |\clA|^{-1/p}|\clB|^{1/2}\left\|T^{\B\to \A}_G\right\|_{p(\hat{\sfQ}_{YB})\to2(\hat{\sfQ}_{XA})}.
\end{align*}
The $p$-collision value of $G$ is defined as
\[ \col_p(G) = \sqrt{\col^{\A\to \B}_p(G)\cdot \col^{\B\to \A}_p(G)}.\]
\end{definition}
The reasons for the $|\clA|^{1/2}|\clB|^{-1/p}$ and $|\clA|^{-1/p}|\clB|^{1/2}$ normalization values in the above definition will become clear later.

We are now ready to state our refined noisy parallel repetition theorem for unique games.
\begin{theorem}\label{thm:u-parrep}
Suppose $G$ is a unique game whose parallel-repeated classical value is upper bounded as $\omega(G^n) \leq w_G^n$. Then for any visibility level $0 \leq \mu < 1$, the depolarized quantum value of $G^n$ is upper bounded by
\[ \omega^\dep_\mu(G^n) \leq O(n^2)\cdot\left((\col_{1+\mu}(G))^{4\mu}w_G^{1-\mu}\right)^{\frac{n}{1+3\mu}}.\]
\end{theorem}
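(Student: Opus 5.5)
We follow the strategy sketched in the introduction: a pointwise bound, a bucket decomposition, two list bounds per pair of buckets, and interpolation. Fix an optimal depolarized strategy for $G^n$ with marginals $\sfP_{\bA|\bx}$ and $\sfP_{\bB|\by}$. By uniqueness and Lemma~\ref{lem:pointwise}, the winning probability is at most
\[ \sum_{\bx,\by}\sfP_{\bX\bY}(\bx,\by)\sum_{\ba}\left(\sfP_{\bA|\bx}(\ba)\,\sfP_{\bB|\by}(\pi^\B_{\bx\by,n}(\ba))\right)^{1/(1+\mu)}. \]

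\textbf{Buckets.} Partition Alice's input--output pairs $(\bx,\ba)$ into buckets $\clA_k=\{(\bx,\ba):2^{-k}<\sfP_{\bA|\bx}(\ba)\le 2^{-k+1}\}$, and Bob's pairs $(\by,\bb)$ into buckets $\clB_j$ in the same way. Each fibre $\clA_k(\bx)$ has size at most $2^k$, and similarly $|\clB_j(\by)|\le 2^j$. We truncate at $k,j\le Cn$ for a constant $C$ depending on $|\clA|,|\clB|$ and the target rate. Pairs with $k>Cn$ have total marginal mass at most $|\clA|^n2^{-Cn}$. Since each term is also at most $\sfP_{\bA\bB|\bx\by}$ (equivalently, bounded by the $\sfQ$-mass via the pointwise bound), these pairs contribute an exponentially negligible amount. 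This truncation leaves $O(n)\times O(n)$ bucket pairs and is the source of the $O(n^2)$ factor. For each pair define
\[ S_{kj}=\sum_{\bx,\by}\sfP(\bx,\by)\,\bigl|\{\ba:(\bx,\ba)\in\clA_k,\ (\by,\pi^\B_{\bx\by,n}(\ba))\in\clB_j\}\bigr|. \]
The contribution of the pair $(k,j)$ is then at most $S_{kj}\,2^{-(k+j-2)/(1+\mu)}$.

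\textbf{Hard list bound.} Consider the classical strategy in which Alice, on input $\bx$, outputs a uniform element of $\clA_k(\bx)$ (padded arbitrarily if the fibre is smaller than $2^k$), and Bob does the same with $\clB_j(\by)$. It wins with probability at least $S_{kj}2^{-k-j}$, so
\[ S_{kj}\le 2^{k+j}w_G^n. \]

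\textbf{Collision list bound.} Write $S_{kj}$ as a pairing of indicator functions against the collision operator: $S_{kj}$ equals, up to the normalizing factors of $\hat\sfQ^{\otimes n}$, the inner product $\langle \mathbbm 1_{\clB_j},T^{\A^n\to\B^n}_{G^n}\mathbbm 1_{\clA_k}\rangle_{\hat\sfQ_{YB}^{\otimes n}}$. Applying Cauchy--Schwarz and then the $p\to2$ norm with $p=1+\mu$ gives
\[ S_{kj}\lesssim \|\mathbbm 1_{\clB_j}\|_2\,\|T^{\A^n\to\B^n}_{G^n}\|_{p\to2}\,\|\mathbbm 1_{\clA_k}\|_p. \]
The factors $\|\mathbbm 1_{\clA_k}\|_p\le(2^k|\clA|^{-n})^{1/p}$ and $\|\mathbbm 1_{\clB_j}\|_2\le(2^j|\clB|^{-n})^{1/2}$ come from the fibre sizes. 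By Lemma~\ref{lem:tensor}, $T^{\A^n\to\B^n}_{G^n}=(T^{\A\to\B}_G)^{\otimes n}$ has norm $\|T^{\A\to\B}_G\|^n$, and the normalizations in Definition~\ref{def:col-value} are chosen so that the $|\clA|,|\clB|$ powers cancel. This yields
\[ S_{kj}\le (\col^{\A\to\B}_p)^n\,2^{k/p+j/2}. \]
Symmetrically, $S_{kj}\le(\col^{\B\to\A}_p)^n 2^{j/p+k/2}$. Taking the geometric mean of the two,
\[ S_{kj}\le \col_p(G)^n\,2^{(k+j)(1/p+1/2)/2}. \]

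\textbf{Interpolation.} For each bucket pair we bound $S_{kj}$ by $\min(\text{hard},\text{collision})\le \text{hard}^{1-\theta}\,\text{collision}^{\theta}$. The exponent of $2^{k+j}$ in the resulting bound on the $(k,j)$ contribution is then
\[ (1-\theta)+\tfrac{\theta}{2}\left(\tfrac1p+\tfrac12\right)-\tfrac1{1+\mu}. \]
The choice of $\theta$ must depend on $(k,j)$, because a global choice loses too much. We split at the threshold $k+j=L$ where the two bounds coincide. Below $L$ we use the hard bound, whose $2^{k+j}$ exponent $\mu/(1+\mu)$ is positive, so that geometric sum is dominated by its top term. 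Above $L$ we use the collision bound, whose exponent (for $p=1+\mu$) is negative, so that sum is dominated by its bottom term. Both sides therefore evaluate at $2^{L}$, and solving for $L$ in terms of $\col_p(G)^n$ and $w_G^n$ should give, after simplification, $\bigl(\col_{1+\mu}(G)^{4\mu}w_G^{1-\mu}\bigr)^{n/(1+3\mu)}$; the exponents $4\mu$, $1-\mu$ and $1+3\mu$ should fall out of this elimination.

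The main obstacle is this last bookkeeping: verifying that at $p=1+\mu$ the collision-side exponent is strictly negative, so both geometric series converge to their endpoint terms, and that the elimination of $L$ produces exactly those exponents. The constant offsets (the $2^{2/(1+\mu)}$ factors and the truncation tails) must also be absorbed into the $O(n^2)$ prefactor.
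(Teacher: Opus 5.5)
Your proposal is correct, and up to the final summation it is the paper's proof. The shared steps are Lemma~\ref{lem:pointwise}, dyadic buckets, a hard list bound from a classical strategy, and a collision list bound from Cauchy--Schwarz plus the tensorized $p\to2$ norm at $p=1+\mu$. Your geometric mean of the two directional collision bounds is exactly the paper's choice $\beta=\gamma$.

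The difference is how the bucket sum is handled. The paper bounds the minimum by $u^\alpha v^\beta w^\gamma$ with one set of exponents for all bucket pairs. It then sums a double geometric series, with $\alpha=\frac{1-\mu}{1+3\mu}-\frac1n$ placed just inside the convergence region, so its $O(n^2)$ is the factor $(1-2^{-\Theta(1/n)})^{-2}$. You instead use the minimum exactly, switching at the crossover $k+j=L$.

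The bookkeeping you left open does close:
\begin{itemize}
\item With $r=1/(1+\mu)$, equality of the two bounds gives $2^{L(3/2-r)/2}=(\col_{1+\mu}(G)/w_G)^n$.
\item Since $2(1-r)/(3/2-r)=\frac{4\mu}{1+3\mu}$, the crossover term $w_G^n2^{L(1-r)}$ equals $\bigl(\col_{1+\mu}(G)^{4\mu}w_G^{1-\mu}\bigr)^{n/(1+3\mu)}$.
\item The hard-side exponent $1-r=\frac{\mu}{1+\mu}$ is nonnegative, and the collision-side exponent $(\frac12-r)/2$ is negative for $\mu<1$. So each sum is at most a polynomial factor times the crossover term.
\end{itemize}
Your version makes the origin of the exponents transparent. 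If you sum the two series with multiplicity $k+j$ instead of counting buckets, it even gives a prefactor $O_\mu(n)$ for fixed $\mu\in(0,1)$. The paper's version gives a closed form in which $\alpha,\beta,\gamma$ can be tuned game by game.

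Three small corrections.

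\begin{itemize}
\item A global interpolation exponent does not lose anything in the rate. Taking $\theta=\frac{4\mu}{1+3\mu}+\frac1n$ for every $(k,j)$ is precisely the paper's choice and gives the same base. What does lose is taking the minimum of the two bounds after summing.
\item The truncation is unnecessary, and its justification is garbled, since Lemma~\ref{lem:pointwise} gives $\sfP_{\bA\bB|\bx\by}\le(\sfP_{\bA|\bx}\sfP_{\bB|\by})^{1/(1+\mu)}$, not the reverse. If you keep it, discard answers with $\sfP_{\bA|\bx}(\ba)\le 2^{-Cn}$ (and similarly for Bob) from the true winning probability before applying the pointwise bound; this costs at most $(|\clA|^n+|\clB|^n)2^{-Cn}$.
\item Your pairing actually produces the per-copy prefactor $|\clA|^{-1/p}|\clB|^{1/2}$ in front of $\|T^{\A\to\B}_G\|$. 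Definition~\ref{def:col-value} attaches that factor to the other direction, and the proof of Lemma~\ref{lem:col-list} has the same interchange. So your directional bounds are literally right only when $|\clA|=|\clB|$. However, $\col_p(G)$ is invariant under the interchange, so your geometric-mean bound, and hence the theorem, are unaffected.
\end{itemize}
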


We note that our techniques give an upper bound on $\omega^\dep_\mu(G^n)$ in terms of $\col_{1+\mu}(G)$ and $w_G$, but the exponents $\frac{4\mu}{1+3\mu}$ and $\frac{1-\mu}{1+3\mu}$ do not necessarily give the best possible upper bound for every game $G$. As will be clear from the proof of Theorem~\ref{thm:u-parrep}, these exponents can be optimized for an individual game. We have stated the theorem using the exponent choices that are best for the CHSH game, since $w_\chsh < \col_{1+\mu}(\chsh)$ for the relevant range of $\mu$. For other games, different functions $f,h$ satisfying certain conditions may give a tighter bound of the form $O(n^2)\left((\col_{1+\mu}(G))^{f(\mu)}w_G^{h(\mu)}\right)^n$.

\paragraph{Setup.} Throughout the proof of Theorem~\ref{thm:u-parrep}, as in Section~\ref{sec:gen-parrep}, we use $\sfP_{\bX\bY}=\sfP_{XY}^{\otimes n}$ to denote the input distribution of the $n$-fold game, and $\sfP_{\bX\bY\bA\bB}$ to denote the input-output distribution of a fixed strategy. We use $\hat{\sfQ}_{\bX\bA}$ to denote $\hat{\sfQ}_{XA}^{\otimes n}$ and $\hat{\sfQ}_{\bY\bB}$ to denote $\hat{\sfQ}_{YB}^{\otimes n}$.

For fixed inputs $\bx, \by$, using the marginal distributions $\sfP_{\bA|\bx}$ and $\sfP_{\bB|\by}$, we define probability buckets for $k, l \geq 0$,
\[ \clL^\A_{\bx,k} = \{\ba \in \clA^n: 2^{-(k+1)} < \sfP_{\bA|\bx}(\ba) \leq 2^{-k}\} \qquad \clL^\B_{\by,l} = \{\bb \in \clB^n: 2^{-(l+1)} < \sfP_{\bB|\by}(\bb) \leq 2^{-l}\}.\]
For the family of buckets $(\clL^\A_k, \clL^\B_l) = ((\clL^\A_{\bx,k})_{\bx},(\clL^\B_{\by,l})_{\by})$, define the winning probability score as
\begin{equation}\label{eq:score-1}
S_G\left(\clL^\A_{k},\clL^\B_{l}\right) = \bbE_{\sfP_{\bX\bY}}\left[\sum_{\ba\in\clA^n}\mathbbm{1}[\ba\in \clL^\A_{\bx,k}]\mathbbm{1}[\pi^\B_{\bx\by,n}(\ba)\in \clL^\B_{\by,l}]\right]
\end{equation}
where
\[ \pi^\B_{\bx\by,n}(\ba) = (\pi^\B_{x_1y_1}(a_1),\ldots,\pi^\B_{x_ny_n}(a_n)).\]
Note that the score $S_G\left(\clL^\A_k,\clL^\B_l\right)$ represents the average number of winning answer pairs in the family $(\clL^\A_{\bx,k}\times \clL^\B_{\by,l})_{\bx,\by}$. The score can equivalently be defined as
\begin{equation}\label{eq:score-2}
S_G\left(\clL^\A_{k},\clL^\B_{l}\right) = \bbE_{\sfP_{\bX\bY}}\left[\sum_{\bb\in\clB^n}\mathbbm{1}[\pi^\A_{\bx\by,n}(\bb)\in \clL^\A_{\bx,k}]\mathbbm{1}[\bb\in \clL^\B_{\by,l}]\right]
\end{equation}
with $\pi^\A_{\bx\by,n}(\bb)$ defined similarly.

Finally, define the quantity
\begin{equation}\label{eq:lambda-n}
\Lambda_{n,\mu}(G) = \sum_{\bx,\by}\sfP_{\bX\bY}(\bx,\by)\sum_{\ba,\bb: \,V^n(\bx,\by,\ba,\bb)=1}\left(\sfP_{\bA|\bx}(\ba)\sfP_{\bB|\by}(\bb)\right)^{1/(1+\mu)}
\end{equation}
which is an upper bound on the probability of winning $G^n$ by this strategy, by Lemma~\ref{lem:pointwise}.

We now prove two separate upper bounds on the score of a family $(\clL^\A_k, \clL^\B_l)$.
\begin{lemma}[Hard list bound]\label{lem:hard-list}
The score of a family of buckets $(\clL^\A_k, \clL^\B_l)$ satisfies
\[ S_G\left(\clL^\A_{k},\clL^\B_{l}\right) \leq 2^{k+l+2}w_G^n.\]
\end{lemma}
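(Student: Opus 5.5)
We will build a classical strategy for $G^n$ out of the buckets and apply the hypothesis $\omega(G^n)\le w_G^n$. The obstacle is that each bucket may contain many answers, while a classical strategy outputs only one answer per input. We overcome this by sampling answers with probability proportional to their marginal mass, which the bucket definition controls up to a factor of $2$.

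The strategy is as follows. On input $\bx$, Alice outputs $\ba$ with probability $\sfP_{\bA|\bx}(\ba)$, that is, she samples from her marginal using private randomness. On input $\by$, Bob independently outputs $\bb$ with probability $\sfP_{\bB|\by}(\bb)$. This is exactly the product-of-marginals strategy $\sfQ$ from Section~\ref{sec:gen-parrep}. It is a classical (privately randomized) strategy, so its winning probability is at most $\omega(G^n)\le w_G^n$.

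We now lower bound this winning probability in terms of the score. Since $G^n$ is a unique game, the winning probability equals
\[
\bbE_{\sfP_{\bX\bY}}\Bigl[\sum_{\ba}\sfP_{\bA|\bx}(\ba)\,\sfP_{\bB|\by}\bigl(\pi^\B_{\bx\by,n}(\ba)\bigr)\Bigr],
\]
where terms with $\pi^\B_{\bx\by,n}(\ba)=\bot$ contribute zero. All terms are nonnegative, so we may drop every $\ba$ except those with $\ba\in\clL^\A_{\bx,k}$ and $\pi^\B_{\bx\by,n}(\ba)\in\clL^\B_{\by,l}$. For each retained term, the bucket definitions give $\sfP_{\bA|\bx}(\ba)>2^{-(k+1)}$ and $\sfP_{\bB|\by}(\pi^\B_{\bx\by,n}(\ba))>2^{-(l+1)}$. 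The retained sum is therefore at least
\[
2^{-(k+l+2)}\,\bbE_{\sfP_{\bX\bY}}\Bigl[\sum_{\ba}\mathbbm{1}[\ba\in\clL^\A_{\bx,k}]\,\mathbbm{1}[\pi^\B_{\bx\by,n}(\ba)\in\clL^\B_{\by,l}]\Bigr]
=2^{-(k+l+2)}\,S_G\bigl(\clL^\A_k,\clL^\B_l\bigr),
\]
by \eqref{eq:score-1}.

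Combining the two bounds gives $2^{-(k+l+2)}S_G(\clL^\A_k,\clL^\B_l)\le w_G^n$, which is the claim. No step is a real obstacle. The points to state carefully are that private randomness is permitted in classical strategies, since it cannot beat the deterministic value, and that $\bot$ entries of $\pi^\B_{\bx\by,n}$ lie in no bucket. One could instead have Alice and Bob sample uniformly within their buckets, but the marginal sampling above avoids bounding bucket sizes.
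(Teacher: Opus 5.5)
Your proof is correct, but it takes a slightly different route from the paper's.

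\textbf{The paper's route.} The paper first bounds the bucket sizes, $|\clL^\A_{\bx,k}|\le 2^{k+1}$ and $|\clL^\B_{\by,l}|\le 2^{l+1}$. It then bounds the score by a sum over at most $2^{k+1}\cdot 2^{l+1}$ deterministic strategies, each picking a single answer from Alice's bucket and a single answer from Bob's bucket. Each such strategy wins with probability at most $w_G^n$. This is the derandomized form of the ``sample uniformly within the buckets'' alternative you mention. Its displayed double sum over $\ba\in\clL^\A_{\bx,k}$, $\bb\in\clL^\B_{\by,l}$ implicitly requires indexing (and padding) the bucket elements uniformly in $\bx$ and $\by$.

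\textbf{Your route.} You use the single product-of-marginals strategy $\sfQ$ together with the lower thresholds $2^{-(k+1)}$ and $2^{-(l+1)}$ on bucket elements. This needs neither a size bound nor any enumeration. It also gives more for free. The buckets partition the support of the marginals, so the same computation yields
\[
\sum_{k,l\ge 0}2^{-(k+l+2)}S_G(\clL^\A_k,\clL^\B_l)\le \sfQ(E_n)\le w_G^n .
\]
This is an aggregate constraint over all bucket pairs simultaneously, stronger than the per-pair bounds the paper feeds into the proof of Theorem~\ref{thm:u-parrep}.

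\textbf{What the paper's version buys.} It uses only the cardinalities of the lists. It therefore applies verbatim to any families of answer lists of sizes at most $2^{k+1}$ and $2^{l+1}$, whether or not they arise as marginal buckets of a strategy. This is the sense in which it is a ``list'' bound, parallel to the collision list bound, which likewise uses only $|\clL^\A_{\bx,k}|\le 2^{k+1}$.
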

\begin{proof}
We first note that since every probability in $\clL^\A_{\bx,k}$ is at least $2^{-(k+1)}$, $\clL^\A_{\bx,k}$ can have at most $2^{k+1}$ elements; similarly, $\clL^\B_{\by,l}$ can have at most $2^{l+1}$ elements. Consider a classical strategy for $G^n$ where on input $\bx$, Alice outputs a single value $\ba$ in $\clL^\A_{\bx,k}$, and Bob outputs a single value $\bb\in \clL^\B_{\by,l}$. $S_G(\{\ba\}_\bx, \{\bb\}_\by)$ is equal to the winning probability of this classical strategy, and therefore must be upper bounded by $w_G^n$. Therefore, to upper bound the score of the whole family, we can write
\[ S_G\left(\clL^\A_{k},\clL^\B_{l}\right) \leq \sum_{\ba\in \clL^\A_{\bx,k}}\sum_{\bb\in \clL^\B_{\by,l}}S_G(\{\ba\}_\bx, \{\bb\}_\by)\leq 2^{k+l+2}w_G^n. \qedhere\]
\end{proof}

\begin{lemma}[Collision list bound]\label{lem:col-list}
The score of a family of buckets $(\clL^\A_k, \clL^\B_l)$ satisfies, for any $0 < p \leq 2$,
\[ S_G\left(\clL^\A_{k},\clL^\B_{l}\right) \leq \min\left\{2^{(k+1)/p+(l+1)/2}(\col^{\A\to\B}_p(G))^n, 2^{(k+1)/2+(l+1)/p}(\col^{\B\to\A}_p(G))^n\right\}.\]
\end{lemma}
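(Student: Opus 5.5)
The plan is to write the score as a bilinear form $\langle h, T f\rangle$ between two indicator functions, apply Cauchy--Schwarz, and then bound the resulting induced norm using tensorization. I prove the first bound in the minimum; the second follows by the symmetric argument with the roles of Alice and Bob exchanged, starting from \eqref{eq:score-1} instead of \eqref{eq:score-2}.

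\textbf{Step 1: the collision operator of $G^n$.} Take $G^n$ with input distribution $\sfP_{\bX\bY}$, and weights $\hat{\sfQ}_{\bX\bA}=\hat{\sfQ}_{XA}^{\otimes n}$ and $\hat{\sfQ}_{\bY\bB}=\hat{\sfQ}_{YB}^{\otimes n}$. Its $\A\to\B$ collision operator is $T^{\A\to\B}_{G^n}=(T^{\A\to\B}_G)^{\otimes n}$. This holds because $\sfP_{\bX|\by}$ factorizes coordinatewise, $\pi^\A_{\bx\by,n}$ acts coordinatewise, and the indicator $\mathbbm{1}[\pi^\A_{\bx\by,n}(\bb)\in\clA^n]$ is a product of single-coordinate indicators.

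\textbf{Step 2: the score as an inner product.} Define the bucket indicators
\[
f(\bx,\ba)=\mathbbm{1}[\ba\in\clL^\A_{\bx,k}], \qquad h(\by,\bb)=\mathbbm{1}[\bb\in\clL^\B_{\by,l}].
\]
Using \eqref{eq:score-2} and conditioning on $\by$,
\[
S_G(\clL^\A_k,\clL^\B_l)=\sum_{\by}\sfP_{\bY}(\by)\sum_{\bb}h(\by,\bb)\,(T^{\A\to\B}_{G^n}f)(\by,\bb)=|\clB|^n\,\bbE_{\hat{\sfQ}_{\bY\bB}}\!\left[h\cdot T^{\A\to\B}_{G^n}f\right].
\]
Cauchy--Schwarz in $L^2(\hat{\sfQ}_{\bY\bB})$, followed by the definition of the induced norm, gives
\[
S_G(\clL^\A_k,\clL^\B_l)\le |\clB|^n\,\|h\|_{2(\hat{\sfQ}_{\bY\bB})}\,\|T^{\A\to\B}_{G^n}\|_{p\to2}\,\|f\|_{p(\hat{\sfQ}_{\bX\bA})}.
\]

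\textbf{Step 3: bounding the three factors.} Each bucket has size $|\clL^\A_{\bx,k}|\le 2^{k+1}$ and $|\clL^\B_{\by,l}|\le2^{l+1}$, as in the proof of Lemma~\ref{lem:hard-list}. Hence
\[
\|f\|_{p}^p=\sum_\bx\sfP(\bx)\,\frac{|\clL^\A_{\bx,k}|}{|\clA|^n}\le \frac{2^{k+1}}{|\clA|^n}, \qquad \|h\|_2^2\le\frac{2^{l+1}}{|\clB|^n}.
\]
Since $p\le2$, Lemma~\ref{lem:tensor} applies and gives $\|T^{\A\to\B}_{G^n}\|_{p\to2}=\|T^{\A\to\B}_G\|_{p\to2}^n$. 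Multiplying everything together yields
\[
S_G(\clL^\A_k,\clL^\B_l)\le 2^{(k+1)/p+(l+1)/2}\left(|\clB|^{1/2}|\clA|^{-1/p}\,\|T^{\A\to\B}_G\|_{p\to2}\right)^n .
\]

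\textbf{Main obstacle: matching the normalization.} The bucket sizes and tensorization are routine; the delicate part is bookkeeping the normalization factors. With the operator $T^{\A\to\B}_G$ (mapping $V^\A$ to $V^\B$), this computation produces the prefactor $|\clA|^{-1/p}|\clB|^{1/2}$. The definition attaches that prefactor to $\col^{\B\to\A}_p$, while $\col^{\A\to\B}_p$ carries $|\clA|^{1/2}|\clB|^{-1/p}$.

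So I would carefully recheck which operator accompanies which exponent pattern $2^{(k+1)/p+(l+1)/2}$ versus $2^{(k+1)/2+(l+1)/p}$, and relabel if needed. The symmetric computation uses $T^{\B\to\A}$, $h$ in the $p$-norm, and $f$ in the $2$-norm, and produces the other pair of normalization and exponent. The minimum of the two bounds is therefore exactly as claimed, up to this naming convention.
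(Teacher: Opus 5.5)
Your argument is correct and is essentially the paper's proof. Both write the score as a weighted bilinear form between the two bucket indicators, apply weighted Cauchy--Schwarz, pass to the $p\to2$ norm, tensorize via Lemma~\ref{lem:tensor}, and bound the indicator norms using $|\clL^\A_{\bx,k}|\leq2^{k+1}$ and $|\clL^\B_{\by,l}|\leq2^{l+1}$. The paper writes out the $T^{\B\to\A}_G$ half starting from \eqref{eq:score-1} and rescales the indicators by $2^{-(k+1)},2^{-(l+1)}$, which is only cosmetic.

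The normalization mismatch you flagged is real, and the paper's own proof has the same issue. It derives $2^{(k+1)/2+(l+1)/p}\bigl(|\clA|^{1/2}|\clB|^{-1/p}\|T^{\B\to\A}_G\|\bigr)^n$ and then asserts this equals $(\col^{\B\to\A}_p(G))^n$ ``by the definition''. However, Definition~\ref{def:col-value} attaches $|\clA|^{-1/p}|\clB|^{1/2}$ to $T^{\B\to\A}_G$.

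This is not something relabelling fixes. Your computation shows that the operators and the exponent patterns are already paired correctly in the statement. What must change is the definition, which should read
$\col^{\A\to\B}_p(G)=|\clA|^{-1/p}|\clB|^{1/2}\|T^{\A\to\B}_G\|$ and $\col^{\B\to\A}_p(G)=|\clA|^{1/2}|\clB|^{-1/p}\|T^{\B\to\A}_G\|$. These are just the $p\to2$ norms taken with $\sfP_X$ (resp.\ $\sfP_Y$) on inputs and counting measure on outputs.

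With the definition as printed, the lemma is actually false whenever $|\clA|\neq|\clB|$. Take one input pair, $|\clA|=1$, $|\clB|=K\geq3$, $n=1$, and both players answering deterministically with the accepted pair. Then $S_G(\clL^\A_0,\clL^\B_0)=1$. On the other hand, $\|T^{\A\to\B}_G\|=K^{-1/2}$ and $\|T^{\B\to\A}_G\|=K^{1/p}$, so the claimed minimum equals $(2/K)^{1/p+1/2}<1$.

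The downstream results survive. The product $\col^{\A\to\B}_p(G)\,\col^{\B\to\A}_p(G)$, and hence $\col_p(G)$, is unchanged by swapping the prefactors. Only this product enters the final bound of Theorem~\ref{thm:u-parrep} once $\beta=\gamma$, and for CHSH $|\clA|=|\clB|$ anyway.
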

\begin{proof}
We first note that by Lemma~\ref{lem:tensor}, we have for any $0< p \leq 2$,
\begin{align*}
\left\|\left(T^{\A\to \B}_G\right)^{\otimes n}\right\|_{p(\hat{\sfQ}_{XA}^{\otimes n})\to2(\hat{\sfQ}_{YB}^{\otimes n})} & = \left(\left\|T^{\A\to \B}_G\right\|_{p(\hat{\sfQ}_{XA})\to2(\hat{\sfQ}_{YB})}\right)^n, \\
\left\|\left(T^{\B\to \A}_G\right)^{\otimes n}\right\|_{p(\hat{\sfQ}_{YB}^{\otimes n})\to2(\hat{\sfQ}_{XA}^{\otimes n})} & = \left(\left\|T^{\B\to \A}_G\right\|_{p(\hat{\sfQ}_{YB})\to2(\hat{\sfQ}_{XA})}\right)^n.
\end{align*}
Now define the functions
\[ F(\bx,\ba) = \frac{\mathbbm{1}[\ba \in \clL^\A_{\bx,k}]}{2^{k+1}}, \qquad H(\by,\bb) = \frac{\mathbbm{1}[\bb \in \clL^\B_{\by,l}]}{2^{l+1}}.\]
By the definition of $T^{\B\to\A}_G$, we have
\begin{align*}
\left((T^{\B\to\A}_G)^{\otimes n}H\right)(\bx,\ba) & = \sum_\by\sfP_{\bY|\bx}(\by)\mathbbm{1}[\pi^\B_{\bx\by,n}(\ba)\in\clB^n]\frac{\mathbbm{1}[\pi^\B_{\bx\by,n}(\ba)\in \clL^\B_{\by,l}]}{2^{l+1}} \\
 & = \frac{1}{2^{l+1}}\sum_\by\sfP_{\bY|\bx}(\by)\mathbbm{1}[\pi^\B_{\bx\by,n}(\ba)\in \clL^\B_{\by,l}]
\end{align*}
where the last equality is due to the fact that $\clL^\B_{\by,l} \subseteq \clB^n$. 

Therefore, using this, the definition of $S_G(\clL^\A_k,\clL^\B_l)$ from \eqref{eq:score-1}, and the definition of $\hat{\sfQ}_{\bX\bA}$ we have,
\begin{align}
& S_G\left(\clL^\A_{k},\clL^\B_{l}\right) \nonumber \\
& = \sum_{\bx\by}\sfP_\bX(\bx)\sfP_{\bY|\bx}(\by)\sum_{\ba\in\clA^n}\mathbbm{1}[\ba\in \clL^\A_{\bx,k}]\mathbbm{1}[\pi^\B_{\bx\by,n}(\ba)\in \clL^\B_{\by,l}] \nonumber \\
& = 2^{k+l+2}\sum_{\bx,\ba}\sfP_{\bX}(\bx)F(\bx,\ba)\frac{1}{2^{l+1}}\sum_\by\sfP_{\bY|\bx}(\by)\mathbbm{1}[\pi^\B_{\bx\by,n}(\ba)\in \clL^\B_{\by,l}] \nonumber \\
& = 2^{k+l+2}\cdot|\clA|^n\sum_{\bx,\ba}\hat{\sfQ}_{\bX\bA}(\bx,\ba)F(\bx,\ba)\frac{1}{2^{l+1}}\sum_\by\sfP_{\bY|\bx}(\by)\mathbbm{1}[\pi^\B_{\bx\by,n}(\ba)\in \clL^\B_{\by,l}] \nonumber \\
& = 2^{k+l+2}\cdot|\clA|^n\sum_{\bx,\ba}\hat{\sfQ}_{\bX\bA}(\bx,\ba)F(\bx,\ba)\left((T^{\B\to\A}_G)^{\otimes n}H\right)(\bx,\ba) \nonumber \\
& \overset{\textcolor{red}{a}}{\leq} 2^{k+l+2}\cdot|\clA|^n\left(\sum_{\bx,\ba}\hat{\sfQ}_{\bX\bA}(\bx,\ba)F(\bx,\ba)^2\right)^{1/2}\left(\sum_{\bx,\ba}\hat{\sfQ}_{\bX\bA}(\bx,\ba)\left((T^{\B\to\A}_G)^{\otimes n}H\right)(\bx,\ba)^2\right)^{1/2} \nonumber \\
& = 2^{k+l+2}\cdot|\clA|^n\|F\|_{2(\hat{\sfQ}_{\bX\bA})}\left\|(T^{\B\to\A}_G)^{\otimes n}H\right\|_{2(\hat{\sfQ}_{\bX\bA})} \nonumber \\
& \leq 2^{k+l+2}\cdot|\clA|^n\|F\|_{2(\hat{\sfQ}_{\bX\bA})}\left\|(T^{\B\to\A}_G)^{\otimes n}\right\|_{p(\hat{\sfQ}_{\bY\bB})\to2(\hat{\sfQ}_{\bX\bA})}\|H\|_{p(\hat{\sfQ}_{\bY\bB})} \qquad (\text{for arbitrary } 0 < p \leq 2) \nonumber \\
& = 2^{k+l+2}\cdot|\clA|^n\|F\|_{2(\hat{\sfQ}_{\bX\bA})}\left(\left\|T^{\B\to\A}_G\right\|_{p(\hat{\sfQ}_{YB})\to2(\hat{\sfQ}_{XA})}\right)^n\|H\|_{p(\hat{\sfQ}_{\bY\bB})}. \label{eq:CS-norm}
\end{align}
In the above calculation, inequality \textcolor{red}{$a$} is due to the weighted version of the Cauchy--Schwarz inequality.

We will now calculate the norms $\|F\|_{2(\hat{\sfQ}_{\bX\bA})}$ and $\|H\|_{p(\hat{\sfQ}_{\bY\bB})}$. It is easy to see from the definition of $F$ and $H$ that
\begin{align*}
\|F\|_{2(\hat{\sfQ}_{\bX\bA})} & = \left(\sum_\bx\frac{\sfP_\bX(\bx)}{|\clA|^n}\cdot\frac{|\clL^\A_{\bx,k}|}{2^{2(k+1)}}\right)^{1/2} \\
 & \overset{\textcolor{red}{b}}{\leq} \left(\sum_\bx\frac{\sfP_\bX(\bx)}{|\clA|^n}\cdot\frac{1}{2^{k+1}}\right)^{1/2} \\
 & = 2^{-(k+1)/2}|\clA|^{-n/2}
\end{align*}
where the inequality \textcolor{red}{$b$} uses $|\clL^\A_{\bx,k}|\leq2^{k+1}$. Similarly,
\[ \|H\|_{p(\hat{\sfQ}_{\bY\bB})} \leq 2^{-(l+1)(1-1/p)}|\clB|^{-n/p}.\]
Putting these into \eqref{eq:CS-norm} we get
\begin{align*}
S_G\left(\clL^\A_{k},\clL^\B_{l}\right) & \leq 2^{(k+1)/2+(l+1)/p}\left(|\clA|^{1/2}|\clB|^{-1/p}\left\|T^{\B\to\A}_G\right\|_{p(\hat{\sfQ}_{YB})\to2(\hat{\sfQ}_{XA})}\right)^n \\
 & = 2^{(k+1)/2+(l+1)/p}\left(\col^{\B\to\A}_p(G)\right)^n
\end{align*}
by the definition of $\col^{\B\to\A}_p(G)$.

We can get a similar upper bound of $2^{(k+1)/p+(l+1)/2}(\col^{\A\to\B}_p(G))^n$ using the definition of $S_G(\clL^\A_k,\clL^\B_l)$ from \eqref{eq:score-2}. The lemma thus follows.
\end{proof}

We are now ready to prove the main theorem.
\begin{proof}[Proof of Theorem~\ref{thm:u-parrep}]
Let $E_n$ be the event that $G^n$ is won with our fixed strategy. By Lemma~\ref{lem:pointwise}, we can upper bound the probability of $E_n$ as
\[
\Pr[E_n] = \sum_{\bx,\by}\sfP_{\bX\bY}(\bx,\by)\sum_{\ba,\bb: \,V^n(\bx,\by,\ba,\bb)=1}\sfP_{\bA\bB|\bx\by}(\ba,\bb) \leq \Lambda_{n,\mu}(G).
\]
Now that we have a product of the marginal conditional probabilities, we can make use of the probability buckets (which are defined on the marginals). Writing $p=1+\mu$,
\begin{align}
& \Lambda_{n,p-1}(G) \nonumber \\
& = \sum_{\bx,\by}\sfP_{\bX\bY}(\bx,\by)\sum_{\ba,\bb: \,V^n(\bx,\by,\ba,\bb)=1}\left(\sfP_{\bA|\bx}(\ba)\sfP_{\bB|\by}(\bb)\right)^{1/p} \nonumber \\
& = \sum_{\bx,\by}\sfP_{\bX\bY}(\bx,\by)\sum_{k,l\geq 0}\sum_{\substack{\ba \in \clL^\A_{\bx,k}, \bb \in \clL^\B_{\by,l}: \\V^n(\bx,\by,\ba,\bb)=1}}\left(\sfP_{\bA|\bx}(\ba)\sfP_{\bB|\by}(\bb)\right)^{1/p} \nonumber \\
& \leq \sum_{k,l\geq 0}2^{-(k+l)/p}\sum_{\bx,\by}\sfP_{\bX\bY}(\bx,\by)\sum_{\substack{\ba \in \clL^\A_{\bx,k}, \bb \in \clL^\B_{\by,l}: \\ \pi^\B_{\bx\by,n}(\ba)=\bb}}1 \nonumber \\
& = \sum_{k,l\geq 0}2^{-(k+l)/p}S_G(\clL^\A_{k},\clL^\B_l) \nonumber \\
& \leq \sum_{k,l\geq 0}2^{-(k+l)/p}\min\left\{2^{k+l+2}w_G^n, 2^{(k+1)/p+(l+1)/2}(\col^{\A\to\B}_p(G))^n, 2^{(k+1)/2+(l+1)/p}(\col^{\B\to\A}_p(G))^n\right\} \label{eq:En-opt}
\end{align}
where in the last line, we have used Lemmas~\ref{lem:hard-list} and \ref{lem:col-list}. Note that the choice $p=1+\mu$ satisfies the requirement $0 < p \leq 2$ for Lemma~\ref{lem:col-list}. Note that we could have upper bounded $S_G(\clL^\A_k,\clL^\B_l)$ using $\col^{\A\to\B}_q$ and $\col^{\B\to\A}_q$ for any $q$ satisfying $0 < q \leq 2$; $q=p$ just happens to be a convenient choice that produces some cancellations.

Let $1/p=r$, $\col_p^{\A\to\B}(G)=c_1$, and $\col^{\B\to\A}_p(G)=c_2$. Then we can write a single term in the summand in \eqref{eq:En-opt} as
\[ \min\left\{4w_G^n2^{(k+l)(1-r)}, \sqrt{2}c_1^n2^{r+(r-1/2)k}, \sqrt{2}c_2^n2^{r+(r-1/2)l}\right\}. \]
Now, we have for any positive $\alpha, \beta, \gamma$ satisfying $\alpha+\beta+\gamma=1$,
\[ \min\{u, v, w\} \leq u^\alpha v^\beta w^\gamma.\]
Using this, the summand is at most
\[ 4^\alpha2^{(r+1/2)(\beta+\gamma)}(w_G^\alpha c_1^\beta c_2^\gamma)^n2^{(\alpha t- \beta s)k}2^{(\alpha t- \gamma s)l},\]
where $s=r-\frac{1}{2}$, and $t=1-r$. The sum is then a double geometric series in $k$ and $l$. The series converges provided $\beta s > \alpha t$ (for $k$), and $\gamma s > \alpha t$ (for $l$). Under these conditions, \eqref{eq:En-opt} is upper bounded by
\begin{align*}
\Lambda_{n,p-1}(G) & \leq 4^\alpha2^{(r+1/2)(\beta+\gamma)}(w_G^\alpha c_1^\beta c_2^\gamma)^n\sum_{k\geq 0}2^{(\alpha t- \beta s)k}\sum_{l\geq0}2^{(\alpha t- \gamma s)l} \\
& \leq \frac{4^\alpha2^{(r+1)(1-\alpha)}}{(1-2^{\alpha t- \beta s})(1-2^{\alpha t - \gamma s})}(w_G^\alpha c_1^\beta c_2^\gamma)^n. \\
\end{align*}
Picking the best strategy for $G^n$ and optimizing over $\alpha, \beta, \gamma$ gives
\begin{equation}\label{eq:En-opt2}
\omega^\dep_\mu(G^n) \leq \Lambda_{n, \mu}(G) \leq \inf_{\substack{\alpha, \beta, \gamma \geq 0: \\ \alpha+\beta+\gamma=1, \\ \beta s > \alpha t, \\ \gamma s > \alpha t}}\frac{4^\alpha2^{(r+1)(1-\alpha)}}{(1-2^{\alpha t- \beta s})(1-2^{\alpha t - \gamma s})}(w_G^\alpha c_1^\beta c_2^\gamma)^n.
\end{equation}

In order to get the best parallel repetition rate for a specific game, the optimization above can be carried out with the specific numbers $w_G, c_1, c_2$. We state a general result with one possible choice (which will later turn out to be the optimal choice for CHSH --- see Lemma~\ref{lem:chsh-alpha}). Pick $\alpha=\frac{r-1/2}{3/2-r} - \frac1n = \frac{1-\mu}{1+3\mu}-\frac1n$, $\beta=\gamma=\frac{1-\alpha}{2}$. Then $\beta s - \alpha t= \gamma s - \alpha t = \frac{3/2-r}{2}\cdot\frac1n = \frac{1+3\mu}{4(1+\mu)}\cdot\frac1n$. Consequently,
\begin{align*}
\omega^\dep_\mu(G^n) & \leq \frac{4^{\frac{1-\mu}{1+3\mu}-\frac{1}{n}}2^{\frac{(3+\mu)}{2(1+\mu)}\left(\frac{4\mu}{1+3\mu}+\frac1n\right)}}{\left(1-2^{-\frac{1+3\mu}{4(1+\mu)n}}\right)^2}\left(w_G^{\frac{1-\mu}{1+3\mu}}(c_1c_2)^{\frac{2\mu}{1+3\mu}}\right)^n.
\end{align*}
Since $1-2^{-\frac{1+3\mu}{4(1+\mu)n}} = \Theta_\mu(1/n)$, substituting the values of $c_1$ and $c_2$ gives
\[ \omega^\dep_\mu(G^n) \leq O(n^2)\cdot\left((\col_{1+\mu}(G))^{4\mu}w_G^{1-\mu}\right)^{\frac{n}{1+3\mu}}. \qedhere \]
\end{proof}

\begin{remark}
In principle, it is possible to do a proof similar to that of Theorem~\ref{thm:u-parrep} for non-unique games as well. If a game has a bound $q^\A$ on the number of Alice-answers that are acceptable for a fixed Bob-answer and $x,y$, and a bound $q^\B$ on the number of Bob-answers that are acceptable for a fixed Alice-answer, then it should be possible to define the collision values $\col^{\A\to\B}_p(G)$ and $\col^{\B\to\A}_p(G)$ with appropriate pre-factors of $q^\A$ and $q^\B$ in order to get a very similar result.
\end{remark}

\subsection{The CHSH game}
In this section, we compute the values of $\col^{\A\to\B}_{1+\mu}(\chsh), \col^{\B\to\A}_{1+\mu}(\chsh)$ for the CHSH game, and justify that the choice of $\alpha, \beta, \gamma$ for the optimization in \eqref{eq:En-opt2} that are picked in the proof of Theorem~\ref{thm:u-parrep} are optimal for CHSH.

With these ingredients, we can finally prove our optimized noisy parallel repetition theorem for CHSH. It gives a nontrivial bound---smaller than the noiseless quantum parallel-repetition value---for $\mu\leq 0.36$. When $\mu\leq\frac{1}{3}$, the noisy EPR pairs are separable, so Theorem~\ref{thm:chsh-n} gives a better upper bound. The new content of our theorem therefore lies in the range $\frac{1}{3}<\mu\leq 0.36$.

\begin{lemma}\label{lem:chsh-col}
The $p$-collision value of the CHSH game is
\[ (\col_p(\chsh))^2
 =
 2^{2/p-1}
 \max_{\substack{z_i\geq0\\ \sum_i z_i^p=1}}
 \frac12
 \left[
  \sum_i z_i^2
  +(z_0+z_1)(z_2+z_3)
 \right].
 \]
\end{lemma}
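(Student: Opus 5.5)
The plan is to compute $\col^{\B\to\A}_p(\chsh)$ directly from Definition~\ref{def:col-value}, and then to show that $\col^{\A\to\B}_p(\chsh)$ takes the same value by symmetry. It follows that $(\col_p(\chsh))^2=\col^{\A\to\B}_p\cdot\col^{\B\to\A}_p=(\col^{\B\to\A}_p(\chsh))^2$. The whole computation happens in the single-copy game, where $\clX=\clY=\clA=\clB=\{0,1\}$, the input distribution is uniform, and $\pi^\B_{xy}(a)=a\oplus xy$. Since $\pi^\B_{xy}$ is a bijection, the indicator in the collision operator is always $1$.

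First I write the $\B\to\A$ collision operator explicitly. For $h\in V^\B$ set $z_0=h(0,0)$, $z_1=h(0,1)$, $z_2=h(1,0)$, $z_3=h(1,1)$. Using $\sfP_{Y|x}(y)=\frac12$ gives
\[
(T^{\B\to\A}_{\chsh}h)(x,a)=\tfrac12\bigl(h(0,a)+h(1,a\oplus x)\bigr).
\]
On the four points $(x,a)$, this takes the values $\frac12(z_0+z_2)$, $\frac12(z_1+z_3)$, $\frac12(z_0+z_3)$ and $\frac12(z_1+z_2)$. Both weights $\hat{\sfQ}_{XA}$ and $\hat{\sfQ}_{YB}$ are uniform, equal to $1/4$ on each point. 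Expanding the four squares, the cross terms collect into $2(z_0+z_1)(z_2+z_3)$, so
\[
\|T^{\B\to\A}_{\chsh}h\|_{2(\hat{\sfQ}_{XA})}^2=\tfrac18\Bigl[\textstyle\sum_i z_i^2+(z_0+z_1)(z_2+z_3)\Bigr].
\]
For the input side, $\|h\|_{p(\hat{\sfQ}_{YB})}=4^{-1/p}\bigl(\sum_i|z_i|^p\bigr)^{1/p}$.

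Next I reduce the optimization to nonnegative $z$. The operator $T^{\B\to\A}_{\chsh}$ has nonnegative coefficients, so $|Th|\le T|h|$ pointwise, while $h$ and $|h|$ have the same $p$-norm. Hence the supremum defining the induced norm may be taken over $z_i\ge0$. By homogeneity I can also normalize $\sum_i z_i^p=1$. This gives
\[
\|T^{\B\to\A}_{\chsh}\|^2_{p\to2}=4^{2/p}\cdot\tfrac18\max_{z\ge0,\ \sum z_i^p=1}\Bigl[\textstyle\sum_i z_i^2+(z_0+z_1)(z_2+z_3)\Bigr].
\]
Multiplying by the prefactor $|\clA|^{-2/p}|\clB|=2^{1-2/p}$ yields $2^{1-2/p}\cdot2^{4/p}\cdot\frac18=2^{2/p-1}\cdot\frac12$ times the maximum, which is exactly the claimed expression.

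Finally I handle the $\A\to\B$ direction. The CHSH predicate $a\oplus b=xy$ and the uniform distribution are invariant under swapping the roles of $(x,a)$ and $(y,b)$. Under this relabeling, $T^{\A\to\B}_{\chsh}$ becomes the same matrix as $T^{\B\to\A}_{\chsh}$, and the normalization prefactors $|\clA|^{1/2}|\clB|^{-1/p}$ and $|\clA|^{-1/p}|\clB|^{1/2}$ coincide because $|\clA|=|\clB|=2$. So the two directed collision values are equal, and their geometric mean is the common value. The only step needing care is this bookkeeping of the weights and prefactors; the rest is direct expansion.
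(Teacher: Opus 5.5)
Your proof is correct and follows the same route as the paper. You use the $(x,a)\leftrightarrow(y,b)$ symmetry to reduce to a single explicit $4\times 4$ averaging operator, expand $\|T_\chsh z\|^2$ into $\sum_i z_i^2+(z_0+z_1)(z_2+z_3)$ up to constants, normalize $\sum_i z_i^p=1$, and insert the prefactor. Your version is slightly more careful than the paper's in two places: you track the uniform weights explicitly, getting $4^{2/p}\cdot\frac18\cdot 2^{1-2/p}=2^{2/p-1}\cdot\frac12$ where the paper loosely says the weight factors cancel, and you justify restricting to $z_i\ge 0$ via the nonnegativity of $T_\chsh$.
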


\begin{proof}
The CHSH game is symmetric with respect to Alice and Bob, so $T^{\A\to\B}_\chsh$ and $T^{\B\to\A}_\chsh$ are the same; we denote this operator by $T_\chsh$. Both are operators on the space of functions from $\{0,1\}^2\to\bbR$. We use the basis of normalized indicator functions on this space. The functions $2\delta_{x,a}$ form an orthonormal basis for $(x,a)\in\{0,1\}^2$. In this basis, we have
\[ T_\chsh = \frac12\begin{pmatrix}
 1&0&1&0\\
 0&1&0&1\\
 1&0&0&1\\
 0&1&1&0
 \end{pmatrix}.
 \]
 The distributions $\hat{\sfQ}_{XA}$ and $\hat{\sfQ}_{YB}$ are both uniform over $\{0,1\}^2$. Their normalization factors cancel between the numerator and denominator of $\|T_\chsh\|_{p(\hat{\sfQ}_{XA})\to2(\hat{\sfQ}_{YB})}$, so we may consider the unweighted norms. For a vector $z=(z_0,z_1,z_2,z_3)$, its $\ell_p$ norm is $\left(\sum_{i=0}^3z_i^p\right)^{1/p}$. Moreover,
 \[ T_\chsh z = \frac12\begin{pmatrix}
 1&0&1&0\\
 0&1&0&1\\
 1&0&0&1\\
 0&1&1&0
 \end{pmatrix}
 \begin{pmatrix}
 z_0 \\
 z_1 \\
 z_2 \\
 z_3
 \end{pmatrix}
 =
 \frac12\begin{pmatrix}
z_0+z_2 \\
z_1+z_3 \\
z_0+z_3 \\
z_1+z_2
 \end{pmatrix}.
 \]
 So it can be checked that $\|T_\chsh z\|_2^2 = \frac12\left(\sum_{i=0}^3z_i^2 + (z_0+z_1)(z_2+z_3)\right)$. Therefore, $\|T_\chsh\|_{p\to 2}$ can be written as
 \begin{align}
\|T_\chsh\|^2_{p\to 2} & = \max_z \frac{\frac12\left(\sum_{i=0}^3z_i^2 + (z_0+z_1)(z_2+z_3)\right)}{\left(\sum_{i=0}^3z_i^p\right)^{2/p}} \nonumber \\
 & = \max_{z: \, \sum_{i=0}^3z_i^p = 1} \frac12\left(\sum_{i=0}^3z_i^2 + (z_0+z_1)(z_2+z_3)\right). \label{eq:Q}
 \end{align}
 Since $|\clA|=|\clB|=2$, putting in the normalization factor in the definition of $\col_p(\chsh)$, we get the lemma.
\end{proof}

In order to do the optimization over $z$ in Lemma~\ref{lem:chsh-col}, we can set up a Lagrangian
\[ L(z_0,z_1,z_2,z_3,\lambda) = f(z_0,z_1,z_2,z_3) - \lambda g(z_0,z_1,z_2,z_3)\]
where $f(z_0,z_1,z_2,z_3) = \frac12\sum_{i=0}^3z_i^2+(z_0+z_1)(z_2+z_3)$ and $g(z_0,z_1,z_2,z_3)=\sum_{i=0}^3z_i^p-1$. We can obtain its stationary points by setting all partial derivatives of $L$ to be zero:
\[ \frac{\partial f}{\partial z_i}=\lambda\frac{\partial g}{\partial z_i} \quad \forall i=0,1,2,3, \qquad g(z_0,z_1,z_2,z_3)=0\]
where the last condition arises from the derivative with respect to $\lambda$. The solutions to these equations are the candidate optimizers of $f$. We can check the value of $f$ at each of these points (and potentially other boundary cases) to find the maximum value of $f$.

We have done this optimization numerically for some values of $\mu\geq \frac13$, and they are listed in Table~\ref{tab:c_mu}. See Appendix~\ref{app:chsh-cert} for how this calculation is done for a specific value $\mu=0.36$.
\begin{table}[!ht]
\centering
\begin{tabular}{|c|c|}
\hline
$\mu$ & $\col_{1+\mu}(\chsh)$ \\
\hline
$\frac{1}{3}$ & $0.873935$ \\
$0.36$ & $0.873606$ \\
$0.4$ & $0.875137$ \\
$0.5$ & $0.890899$ \\
$0.6$ & $0.917004$ \\
$0.75$ & $0.951695$ \\
\hline
\end{tabular}
\caption{Values of $\col_{1+\mu}(\chsh)$ for different values of the Werner visibility $\mu$}
\label{tab:c_mu}
\end{table}
The function $\col_{1+\mu}(\chsh)$ does not appear to be strictly increasing in $\mu$: numerically, within the relevant range $\mu\geq\frac13$, the minimum appears to be attained around $\mu=0.3545553$.

\begin{lemma}\label{lem:chsh-alpha}
The choice of $\alpha=\frac{r-1/2}{3/2-r}-\eps, \beta=\gamma=\frac{1-\alpha}{2}$, for $\eps\to0$, is optimal for the CHSH game in the optimization \eqref{eq:En-opt2}.
\end{lemma}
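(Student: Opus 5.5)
The plan is to reduce the optimization in \eqref{eq:En-opt2} to a one-parameter problem in $\alpha$. Optimality is understood at the level of the base of the exponent: the prefactor $\frac{4^\alpha2^{(r+1)(1-\alpha)}}{(1-2^{\alpha t-\beta s})(1-2^{\alpha t-\gamma s})}$ does not depend on $n$, so the exponential rate is governed by $w_G^\alpha c_1^\beta c_2^\gamma$.

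First I will use the symmetry of CHSH. As noted in the proof of Lemma~\ref{lem:chsh-col}, $T^{\A\to\B}_\chsh=T^{\B\to\A}_\chsh$, so $c_1=c_2=c:=\col_p(\chsh)$. The quantity to minimize is therefore
\[
w_\chsh^\alpha c^{\beta+\gamma}=w_\chsh^\alpha c^{1-\alpha}.
\]
Its logarithm is $\alpha\log w_\chsh+(1-\alpha)\log c$, which is affine in $\alpha$ with slope $\log(w_\chsh/c)$. If $w_\chsh<c$, this is strictly decreasing in $\alpha$. The problem then becomes: make $\alpha$ as large as the constraints allow.

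Next I will compute the supremum of feasible $\alpha$. The constraints are $\beta s>\alpha t$ and $\gamma s>\alpha t$, with $\beta+\gamma=1-\alpha$, $s=r-\frac12>0$ and $t=1-r\geq0$. Adding them gives $(1-\alpha)s>2\alpha t$, that is,
\[
\alpha<\frac{s}{s+2t}=\frac{r-1/2}{3/2-r}.
\]
Conversely, every $\alpha$ below this threshold is feasible with $\beta=\gamma=\frac{1-\alpha}{2}$, since then both constraints coincide with the summed one. So the infimum of the base over the feasible region is approached exactly by $\alpha=\frac{r-1/2}{3/2-r}-\eps$, $\beta=\gamma=\frac{1-\alpha}2$ as $\eps\to0$. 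The boundary value itself is excluded only because the geometric series diverges there, which is why we take $\eps=1/n$ and pay the $O(n^2)$ factor.

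The main obstacle is verifying $w_\chsh=\frac{1+\sqrt5}{4}<c$ for the relevant $p=1+\mu$. I would first try the test vector $z=(1,0,0,0)$ in Lemma~\ref{lem:chsh-col}. It gives
\[
c^2\geq 2^{2/p-1}\cdot\tfrac12 ,\qquad\text{so}\qquad c\geq 2^{1/p-1}.
\]
This exceeds $0.809$ whenever $p\lesssim1.44$, which covers $\mu\in[\frac13,0.36]$ and somewhat beyond. For larger $\mu$ I would instead invoke the numerically certified values in Table~\ref{tab:c_mu} (all at least $0.87$). Alternatively I would use other explicit feasible $z$, for example $z$ supported on $\{z_0,z_2\}$, to get lower bounds on $c$ exceeding $w_\chsh$. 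If $c$ were ever at most $w_\chsh$, the optimum would instead be at $\alpha=0$, so this comparison is exactly what makes the stated choice optimal.
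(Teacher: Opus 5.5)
Your proposal is correct and follows the paper's argument. You use $c_1=c_2$ to reduce the base to $w_\chsh^\alpha c^{1-\alpha}$, note that it decreases in $\alpha$ when $w_\chsh<c$, and push $\alpha$ up to the boundary $\frac{s}{s+2t}=\frac{r-1/2}{3/2-r}$. Summing the two constraints makes explicit the paper's ``clearly $\beta=\gamma$'' step.

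The one real difference is how you verify $w_\chsh<\col_{1+\mu}(\chsh)$: the paper reads it off Table~\ref{tab:c_mu}, while you use an explicit test vector in Lemma~\ref{lem:chsh-col}. In fact your suggested alternative $z=(2^{-1/p},0,2^{-1/p},0)$ gives $\col_p(\chsh)^2\geq\frac34$, hence $\col_p(\chsh)\geq\frac{\sqrt3}{2}>w_\chsh$ for every $p$, so no numerics are needed at all.
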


\begin{proof}
For the CHSH game, in the notation of \eqref{eq:En-opt2}, $c_1=c_2=\col_{1+\mu}(\chsh)$. So it is clear $\beta=\gamma$ should be an optimal choice. Moreover, from Table~\ref{tab:c_mu} we can see that for all $\mu$ in the relevant range $\mu\geq \frac13$, $w_\chsh = \frac{1+\sqrt{5}}4 < \col_{1+\mu}(\chsh)$. Since $c_1, c_2$ are the same, the three-variable optimization in \eqref{eq:En-opt2} actually simplifies to
\begin{align*}
\Lambda_{n,\mu}(\chsh) & \leq \inf_{\substack{0 \leq \alpha \leq 1: \\ \frac{1-\alpha}{2}s > \alpha t}}\frac{4^\alpha2^{(r+1/2)(1-\alpha)}}{\left(1-2^{\alpha t - (1-\alpha)s/2}\right)^2}\left(w_\chsh^\alpha(\col_{1+\mu}(\chsh))^{1-\alpha}\right)^n \\
 & = \inf_{\substack{0 \leq \alpha \leq 1: \\ \frac{1-\alpha}{2}s > \alpha t}} C_{\alpha,\mu}\left(w_\chsh^\alpha(\col_{1+\mu}(\chsh))^{1-\alpha}\right)^{n}
\end{align*}
where $C_{\alpha,\mu}$ absorbs the $n$-independent pre-factor.

Since $C_{\alpha,\mu}$ does not affect the asymptotic (in $n$) rate, we essentially only need to make the base of the exponent $n$ as small as possible in the feasible region of the optimization. The boundary of the feasible region satisfies $\alpha t = \frac{1-\alpha}{2}s$, implying $\alpha = \frac{s}{s+2t} = \frac{r-1/2}{3/2-r}$. Making $\alpha$ as large as possible gives the best exponent base, since $w_\chsh < \col_{1+\mu}(\chsh)$. So clearly $\alpha = \frac{r-1/2}{3/2-r} - \eps$ for $\eps\to 0$ is the best choice.
\end{proof}

We now prove the optimized parallel repetition theorem for the CHSH game.
\begin{theorem}\label{thm:chsh-parrep}
For large enough $n$, the depolarized quantum value of $\chsh^n$ is upper bounded by
\[ \omega^\dep_\mu(\chsh^n) \leq \alpha_\mu^n,\]
where $\alpha_\mu < \cos^2(\frac{\pi}{8})$ for $0 \leq \mu \leq 0.36$.
\end{theorem}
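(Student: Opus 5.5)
The plan is to split the range of $\mu$ into two regimes and use a different earlier result in each.

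\emph{Low-noise-threshold regime, $0\le\mu\le\frac13$.} Here the Werner state $\Omega^\dep_\mu$ is separable by Werner's theorem. Any strategy using copies of a separable state is therefore a convex combination of classical strategies with shared randomness, so $\omega^\dep_\mu(\chsh^n)\le\omega(\chsh^n)$. Theorem~\ref{thm:chsh-n} then gives the bound with $\alpha_\mu=\frac{1+\sqrt5}{4}\approx0.809<\cos^2(\pi/8)$, valid for every $n$.

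\emph{Entangled regime, $\frac13<\mu\le0.36$.} Here I apply inequality \eqref{eq:En-opt2} from the proof of Theorem~\ref{thm:u-parrep}, which is optimized for CHSH in Lemma~\ref{lem:chsh-alpha}. Since CHSH is symmetric, $c_1=c_2=\col_{1+\mu}(\chsh)$, and $w_\chsh=\frac{1+\sqrt5}{4}$. With $\alpha=\frac{1-\mu}{1+3\mu}-\frac1n$, this yields
\[
\omega^\dep_\mu(\chsh^n)\le O(n^2)\,\gamma_\mu^n,\qquad \gamma_\mu=\bigl(\col_{1+\mu}(\chsh)^{4\mu}w_\chsh^{1-\mu}\bigr)^{\frac1{1+3\mu}}.
\]
If $\gamma_\mu<\cos^2(\pi/8)$, I pick any $\alpha_\mu\in(\gamma_\mu,\cos^2(\pi/8))$. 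The polynomial prefactor is then absorbed: $O(n^2)\gamma_\mu^n\le\alpha_\mu^n$ for all large enough $n$. This is exactly why the statement says "for large enough $n$."

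\emph{Main obstacle: certifying $\gamma_\mu<\cos^2(\pi/8)$ on all of $(\frac13,0.36]$.} This requires a rigorous upper bound on $\col_{1+\mu}(\chsh)$ via the four-variable optimization in Lemma~\ref{lem:chsh-col}. The margin is tiny: at $\mu=0.36$ the base is about $0.85347$ against $0.85355$. I would handle it in three steps.

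First, reduce the maximization using the symmetries of $T_\chsh$ (swapping $z_0\leftrightarrow z_1$ together with $z_2\leftrightarrow z_3$, and swapping the pairs $(z_0,z_1)\leftrightarrow(z_2,z_3)$). The Lagrange conditions with $p=1+\mu<2$ then leave only finitely many candidate families, plus boundary cases where some $z_i=0$. Each of these can be bounded with explicit interval arithmetic; this is the certificate deferred to Appendix~\ref{app:chsh-cert}.

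Second, extend the bound from a single point to the whole interval. I would avoid needing $\gamma_\mu$ to be monotone. Instead, I use that the right-hand side of \eqref{eq:En-opt2} is computed from $\mu$ continuously, and check a finite grid of $\mu$ values with a Lipschitz bound on $\mu\mapsto\col_{1+\mu}(\chsh)$. That Lipschitz bound comes from differentiating $\|z\|_{1+\mu}$, which is smooth for $z$ bounded away from zero and controlled near the boundary.

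Third, as a shortcut, if the paper's intended statement only needs the endpoint value: for $\mu\in(\frac13,0.36]$ the two factors in $\gamma_\mu$ move in compatible directions. Certifying at $\mu=0.36$, together with the tabulated value at $\frac13$, and bounding $\col$ by its maximum over the interval, may already be enough. I would first try this crude uniform bound, $\col_{1+\mu}(\chsh)\le0.874$ on the interval, inserted into $\gamma_\mu$, and check numerically whether the inequality survives.
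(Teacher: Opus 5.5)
\textbf{Genuine gap: the range $\frac13<\mu<0.36$ is never actually proved.}

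Your case $\mu\le\frac13$ is fine. Copies of $\Omega^\dep_\mu$ are then separable, so every strategy is a local hidden-variable strategy, and Theorem~\ref{thm:chsh-n} gives $\alpha_\mu=\frac{1+\sqrt5}{4}$.

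You correctly reduce the remaining range to showing $\gamma_\mu<\cos^2(\pi/8)$ for every $\mu$ in $(\frac13,0.36)$, but you never establish it:
\begin{itemize}
\item The only rigorous certificate (Appendix~\ref{app:chsh-cert}) is for the single exponent $p=34/25$.
\item Your Lipschitz bound on $\mu\mapsto\col_{1+\mu}(\chsh)$ is only asserted; ``controlled near the boundary'' is not an argument.
\item The shortcut leans on the tabulated value at $\mu=\frac13$, which is an uncertified numerical estimate. It also relies on an unexplained claim that the two factors of $\gamma_\mu$ ``move in compatible directions.''
\end{itemize}
Since $\gamma_\mu$ is not known to be monotone and the margin is of order $10^{-4}$, values at the two endpoints say nothing about the interior. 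As written, you have the theorem only at $\mu=0.36$ and for $\mu\le\frac13$.

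\textbf{The missing idea: use monotonicity in $\mu$ before passing to $\gamma_\mu$.}

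Since $\sfP_{\bA|\bx}(\ba)\sfP_{\bB|\by}(\bb)\in[0,1]$, Lemma~\ref{lem:pointwise} at any visibility $\mu\le0.36$ implies $\sfP_{\bA\bB|\bx\by}(\ba,\bb)\le(\sfP_{\bA|\bx}(\ba)\sfP_{\bB|\by}(\bb))^{1/1.36}$. So the winning probability is at most $\Lambda_{n,0.36}(\chsh)$ from \eqref{eq:lambda-n}, formed from that strategy's own marginals.

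The bucketing analysis that bounds $\Lambda_{n,0.36}$ (Lemmas~\ref{lem:hard-list} and~\ref{lem:col-list}) uses only that these marginals are probability distributions, not which state produced them. Hence $\omega^\dep_\mu(\chsh^n)\le O(n^2)\gamma_{0.36}^n$ for every $\mu\le0.36$. The single certificate at $p=34/25$ then finishes the proof uniformly, with no grid, no Lipschitz estimate, and no separate separability case.

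Your grid route could also be made rigorous more cheaply than by differentiating norms. The map $p\mapsto\|T_\chsh\|_{p\to2}$ is non-decreasing in $p$, because $\|z\|_p$ is non-increasing, while $2^{2/p-1}$ is decreasing. Together these sandwich $\col_p(\chsh)$ on each grid cell. But that route would still need many certified evaluations where one suffices.
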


\begin{proof}
From Theorems~\ref{thm:u-parrep} and \ref{thm:chsh-n},
\[ \omega^\dep_\mu(\chsh^n) \leq \Lambda_{n,\mu}(\chsh) \leq O_\mu(n^2)\left(\left(\frac{1+\sqrt{5}}{4}\right)^{1-\mu}\cdot\left(\col_{1+\mu}(\chsh)\right)^{4\mu}\right)^{\frac{n}{1+3\mu}}.\]
Using Lemma~\ref{lem:chsh-col}, we certify that $\col_{1+\mu}(\chsh) < 0.873606$ at $\mu=0.36$ in Appendix~\ref{app:chsh-cert}. Consequently, at this $\mu$ value, the base of the $n$-exponent in the upper bound is
\[ \left(\frac{1+\sqrt{5}}{4}\right)^{4/13}\cdot(0.8736041)^{9/13} \leq 0.8534679 < \cos^2\left(\frac{\pi}{8}\right).\]
Taking $\alpha_{0.36} = 0.85347$ (we are increasing the base of the exponent slightly so that we can absorb the $O(n^2)$ factor) we can thus say at $\mu=0.36$, for large enough $n$,
\[ \Lambda_{n,\mu}(\chsh) \leq \alpha_\mu^n.\]
Now from \eqref{eq:lambda-n} we can see that $\Lambda_{n,\mu}$ is a non-decreasing function of $\mu$, since
\[ (\sfP_{\bA|\bx}(\ba)\sfP_{\bB|\by}(\bb))^{1/(1+\mu_1)} \leq (\sfP_{\bA|\bx}(\ba)\sfP_{\bB|\by}(\bb))^{1/(1+\mu_2)}\]
for $\mu_1 \leq \mu_2$. Therefore, we must have for all $0\leq\mu\leq 0.36$,
\[ \omega^\dep_\mu(\chsh^n) \leq \Lambda_{n,\mu}(\chsh) \leq \alpha_\mu^n,\]
for some $\alpha_\mu < \cos^2(\pi/8)$. Note that this holds despite the fact that $(w_\chsh)^{\frac{1-\mu}{1+3\mu}}\col_{1+\mu}(\chsh)^{\frac{4\mu}{1+3\mu}}$ itself might not be non-decreasing in $\mu$.
\end{proof}

\begin{remark}
In the proof of Theorem~\ref{thm:u-parrep}, in deriving \eqref{eq:En-opt2}, we could have used $\col^{\A\to\B}_q(G)$ and $\col^{\B\to\A}_q(G)$ for any $q$, but we went with the simple choice of $q=p$. We can instead keep $q$ as a free variable and also optimize the base of the exponent $n$ over it. This potentially pushes the nontrivial $\mu$ threshold for CHSH up to $0.367$. But numerically certifying this more complex optimization is tedious, and we have not bothered to do so for this small gain.
\end{remark}

\subsection{Other noise models}
In this section, we prove the analogues of Theorems~\ref{thm:parrep}, \ref{thm:u-parrep} and \ref{thm:chsh-parrep} for general unital noise and erasure noise. We are not able to prove a parallel repetition theorem for biased reset noise, since we do not have an appropriate hypercontractivity result for the noise channel.

It suffices to prove the analogue of Lemma~\ref{lem:pointwise} for unital and erasure noise, since the noise model does not appear in the proofs after that lemma. The analogous lemma follows relatively straightforwardly from the hypercontractivity properties of unital and erasure noise.
\begin{lemma}
Let $\sfP^\uni_{\bX\bY\bA\bB}$ and $\sfP^\era_{\bX\bY\bA\bB}$ be the input-output distributions of a strategy for an $n$-fold game $G$ under unital noise $\Delta^\uni_K$ and erasure noise $\Delta^\era_\eps$, respectively. Then
\begin{align*}
\sfP^\uni_{\bA\bB|\bx\by}(\ba,\bb)
& \le
\left(
\sfP^\uni_{\bA|\bx}(\ba)
\sfP^\uni_{\bB|\by}(\bb)
\right)^{1/(1+\|K\|_\infty)}, \\
\sfP^\era_{\bA\bB|\bx\by}(\ba,\bb)
& \le
\left(
\sfP^\era_{\bA|\bx}(\ba)
\sfP^\era_{\bB|\by}(\bb)
\right)^{1/(1+\sqrt{1-\eps})}.
\end{align*}
\end{lemma}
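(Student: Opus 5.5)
The plan is to follow the proof of Lemma~\ref{lem:pointwise} verbatim, swapping in the appropriate hypercontractivity theorem for each noise model.

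For unital noise, set $\rho=\|K\|_\infty$, $r=1+\rho$ and $s=(1+\rho)/\rho$.

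First, write $\sfP^\uni_{\bA\bB|\bx\by}(\ba,\bb)$ as $\frac{1}{2^m}\Tr[M^\bx_\ba\,\Lambda(N^\by_\bb)]$. Here $\Lambda$ is the appropriate $m$-fold tensor power built from $\Delta^\uni_K$: its adjoint, conjugated by transposition. This follows by moving the channel onto Bob's operator, as in the proof of Lemma~\ref{lem:MC-unital}, and then applying the maximally entangled state identity.

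Second, check that this conjugated adjoint is again a unital qubit channel. Its Bloch matrix is $R_{\mathrm{trans}}K^{\mathsf T}R_{\mathrm{trans}}$, which has the same operator norm $\rho$.

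Third, apply normalized H\"older with $(r,s)$. Then apply Theorem~\ref{thm:unital-hypercontractivity} with $p=r\le 2\le q=s$ and $\lambda=\rho=\sqrt{(r-1)/(s-1)}$. This gives
\[
\|\Lambda(N^{\mathsf T})\|^\norm_s\le\|N\|^\norm_r .
\]

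Finally, bound the normalized $r$-norms of $M^\bx_\ba$ and $N^\by_\bb$ by the $1/r$-th powers of the marginal probabilities, using $t^r\le t$ on $[0,1]$. The marginals are maximally mixed, so the marginal probabilities are exactly normalized traces. The case $\rho=0$ is trivial, since then the state is product.

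For erasure noise, set $\rho=\sqrt{1-\eps}$, $r=1+\rho$ and $s=1/\rho+1$. Alice's marginal is maximally mixed, but Bob's is not, so here I decompose Bob's side.

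Bob's system for each copy is a qutrit, $\mathrm{span}\{\ket0,\ket1\}\oplus\ket e$. Expanding $(\Omega^\era_\eps)^{\otimes m}$ over the erased set $S$ gives
\[
\sfP(\ba,\bb)=\sum_S(1-\eps)^{m-|S|}\eps^{|S|}\frac1{2^m}\Tr\bigl[M^\bx_\ba\,(N_S^{\mathsf T}\otimes I_S)\bigr],
\]
where $N_S$ is the block of $N^\by_\bb$ that is qubit on $[m]\setminus S$ and flag on $S$. The mismatched off-diagonal blocks do not contribute.

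Rewrite each term as a normalized trace on $[m]\setminus S$ of $N_S^{\mathsf T}$ against $X_S:=2^{-|S|}\Tr_S(M^\bx_\ba)$. Then apply H\"older twice:
\begin{itemize}
\item first, normalized H\"older with $(s,r)$ inside each $S$-term;
\item then H\"older over the probability weights on $S$, with the same exponents.
\end{itemize}
This gives
\[
\sfP\le\Bigl(\sum_S w_S\bigl(\|X_S\|^\norm_s\bigr)^s\Bigr)^{1/s}\Bigl(\sum_S w_S\bigl(\|N_S\|^\norm_r\bigr)^r\Bigr)^{1/r}.
\]

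The first factor is at most $\|M^\bx_\ba\|^\norm_r\le\sfP_{\bA|\bx}(\ba)^{1/r}$. This follows from Theorem~\ref{thm:erasure-hypercontractivity}(i), since $1-\eps=\rho^2=(r-1)/(s-1)$ and $r\le2\le s$.

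For the second factor, use $\|N_S\|_r^r\le\Tr N_S$, which holds because $0\preceq N_S\preceq I$ as a compression. Then $\sum_S w_S 2^{-(m-|S|)}\Tr N_S$ is exactly $\Tr\bigl((\Omega^\era_\eps)_{E^\B}^{\otimes m}N^\by_\bb\bigr)=\sfP_{\bB|\by}(\bb)$. Combining the two factors gives the claimed bound.

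The main obstacle I expect is the bookkeeping in the erasure case. I need to match the normalization $2^{-(m-|S|)}$ on the surviving qubits with the normalized partial trace in Theorem~\ref{thm:erasure-hypercontractivity}. I also need to confirm that the weights $w_S$ there are exactly the erasure probabilities attached to $S$. If the roles of Alice's and Bob's operators need swapping to fit that theorem's form, I will first try placing the hypercontractive norm on Alice's side, as above.
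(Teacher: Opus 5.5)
Your proposal is correct and follows the paper's proof. The erasure case matches it step for step: expansion over the erased set $S$, normalized H\"older in each term, scalar H\"older over the weights $(1-\eps)^{m-|S|}\eps^{|S|}$, Theorem~\ref{thm:erasure-hypercontractivity}(i) on Alice's factor with $r=1+\sqrt{1-\eps}$, and $\|N_S\|_r^r\le\Tr N_S$ summing to Bob's marginal.

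For unital noise the paper only says the depolarizing argument carries over verbatim. Identity~\eqref{eq:choi-identity} relied on the depolarizing channel being self-adjoint and commuting with transposition, which a general unital channel need not do. Your extra step, replacing the channel by its transpose-conjugated adjoint (again a unital qubit channel, with Bloch matrix $R_{\mathrm{trans}}K^{\mathsf T}R_{\mathrm{trans}}$ of the same norm), fills in exactly that detail.
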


\begin{proof}
The proof for unital noise is exactly the same as for depolarizing noise, since the marginals are also uniform and the hypercontractivity result in Theorem~\ref{thm:unital-hypercontractivity} has exactly the same form as Theorem~\ref{thm:depolarizing-hypercontractivity}.

 For the erasure noise case, the claim is immediate when $\eps=1$, because the shared state is a product state. Assume $\eps<1$. We have
\[ \sfP^\era_{\bA\bB|\bx\by}(\ba,\bb) = \Tr\left[(M^\bx_\ba\otimes N^\by_\bb)(\Omega^\era_\eps)^{\otimes m}\right] \]
for some integer $m$. Let Alice and Bob's registers in the $i$-th copy of $\Omega^\era_\eps$ be denoted by $E^\A_i, E^\B_i$. We also use $E^\A_S, E^\B_S$ to denote all the registers $E^\A_i, E^\B_i$ respectively for all $i\in S$. Then
\[ (\Omega^\era_\eps)^{\otimes m} = \sum_{S \subseteq [m]}(1-\eps)^{m-|S|}\eps^{|S|}\Omega_S,\]
where
\[ \Omega_S = \frac{I_{E^\A_S}}{2^{|S|}}\otimes\state{e^{|S|}}_{E^\B_S}\otimes\state{\Phi^+_{2^{m-|S|}}}_{E^\A_{S^c}E^\B_{S^c}}\]
and $\ket{e^{|S|}}$ denotes $|S|$ copies of $\ket{e}$. We can assume the $M^\bx_\ba$ have no component in the erasure subspace, since Alice's qubit is not erased in $\Omega^\era_\eps$. $N^\by_\bb$ may have components in the erasure subspace, and we use
 \[ (N^\by_\bb)_{S^c} = \bra{e^{|S|}}_{E^\B_S}N^\by_\bb\ket{e^{|S|}}_{E^\B_S}.\]
 Then using the expansion of $(\Omega^\era_\eps)^{\otimes m}$,
 \begin{align*}
 \Tr\left[(M^\bx_\ba\otimes N^\by_\bb)(\Omega^\era_\eps)^{\otimes m}\right] & = \sum_{S\subseteq [m]}(1-\eps)^{m-|S|}\eps^{|S|}\Tr\left[\frac{\Tr_{E^\A_S}(M^\bx_\ba)}{2^{|S|}}\otimes (N^\by_\bb)_{S^c}\state{\Phi^+_{2^{m-|S|}}}_{S^c}\right] \\
 & = \sum_{S\subseteq [m]}(1-\eps)^{m-|S|}\eps^{|S|}\frac{1}{2^{m-|S|}}\Tr\left[\frac{\Tr_{E^\A_S}(M^\bx_\ba)}{2^{|S|}}((N^\by_\bb)_{S^c})^{\mathsf T}\right].
 \end{align*}
 Choose
 \[ r = 1+\sqrt{1-\eps}, \qquad s = \frac{1+\sqrt{1-\eps}}{\sqrt{1-\eps}},\]
 which are H\"older conjugates and satisfy condition (i) in Theorem~\ref{thm:erasure-hypercontractivity}. Applying normalized H\"older's inequality to each summand and then scalar H\"older's inequality gives
 \begin{align}
 & \Tr\left[(M^\bx_\ba\otimes N^\by_\bb)(\Omega^\era_\eps)^{\otimes m}\right] \nonumber \\
 & \leq \sum_{S \subseteq [m]}(1-\eps)^{m-|S|}\eps^{|S|}\left\|\frac{\Tr_{E^\A_S}(M^\bx_\ba)}{2^{|S|}}\right\|_s^\norm\left\|((N^\by_\bb)_{S^c})^{\mathsf T}\right\|_r^\norm \nonumber \\
 & \leq \left(\sum_{S\subseteq [m]}(1-\eps)^{m-|S|}\eps^{|S|}\left(\left\|\frac{\Tr_{E^\A_S}(M^\bx_\ba)}{2^{|S|}}\right\|_s^\norm\right)^s\right)^{1/s}\left(\sum_{S\subseteq [m]}(1-\eps)^{m-|S|}\eps^{|S|}\left(\left\|((N^\by_\bb)_{S^c})^{\mathsf T}\right\|_r^\norm\right)^r\right)^{1/r} \label{eq:era-hyp-prod}
 \end{align}
 where the last step is obtained by applying H\"older a second time on the scalar sequences (which can be thought of as diagonal matrices)
 \[\left\{\left((1-\eps)^{m-|S|}\eps^{|S|}\right)^{1/s}\left\|\frac{\Tr_{E^\A_S}(M^\bx_\ba)}{2^{|S|}}\right\|_s^\norm\right\}_S\]
 and
 \[ \left\{\left((1-\eps)^{m-|S|}\eps^{|S|}\right)^{1/r}\left\|((N^\by_\bb)_{S^c})^{\mathsf T}\right\|_r^\norm\right\}_S,\]
 since each summand in the preceding sum can be written as a product of two terms from these sequences. Now by Theorem~\ref{thm:erasure-hypercontractivity},
 \begin{align*}
\left(\sum_{S\subseteq [m]}(1-\eps)^{m-|S|}\eps^{|S|}\left(\left\|\frac{\Tr_{E^\A_S}(M^\bx_\ba)}{2^{|S|}}\right\|_s^\norm\right)^s\right)^{1/s} \leq \|M^\bx_\ba\|^\norm_r.
 \end{align*}
 By the same argument as in Lemma~\ref{lem:pointwise}, $\|M^\bx_\ba\|^\norm_r$ is at most $(\sfP^\era_{\bA|\bx}(\ba))^{1/r}$. Similarly,
 \begin{align*}
\left(\sum_{S\subseteq [m]}(1-\eps)^{m-|S|}\eps^{|S|}\left(\left\|((N^\by_\bb)_{S^c})^{\mathsf T}\right\|_r^\norm\right)^r\right)^{1/r} & = \left(\sum_{S\subseteq [m]}(1-\eps)^{m-|S|}\eps^{|S|}\left(\left\|(N^\by_\bb)_{S^c}\right\|_r^\norm\right)^r\right)^{1/r} \\
& \leq \left(\sum_{S\subseteq [m]}(1-\eps)^{m-|S|}\eps^{|S|}\frac{\Tr((N^\by_\bb)_{S^c})}{2^{m-|S|}}\right)^{1/r} \\
& = \left(\sfP^\era_{\bB|\by}(\bb)\right)^{1/r}
 \end{align*}
 where the last equality can be seen by writing out
 \[ \sfP^\era_{\bB|\by}(\bb) = \Tr\left[(I\otimes N^\by_\bb)(\Omega^\era_\eps)^{\otimes m}\right].\]
 Putting these into \eqref{eq:era-hyp-prod} we get
 \[ \sfP^\era_{\bA\bB|\bx\by}(\ba,\bb) = \Tr\left[(M^\bx_\ba\otimes N^\by_\bb)(\Omega^\era_\eps)^{\otimes m}\right] \leq \left(\sfP^\era_{\bA|\bx}(\ba)\sfP^\era_{\bB|\by}(\bb)\right)^{1/r}. \qedhere\]
\end{proof}

Using this pointwise bound lemma, we get the following analogues of Theorems~\ref{thm:parrep}, \ref{thm:u-parrep} and \ref{thm:chsh-parrep} for unital noise and erasure noise.
\begin{theorem}\label{thm:parrep-gen}
For a two-player non-local game $G=(\sfP,\clX\times\clY,\clA\times\clB,V)$, suppose
\[ \omega(G^n) \leq w_G^n.\]
Then its noisy quantum values under unital noise $\Delta^\uni_K$ and erasure noise $\Delta^\era_\eps$ satisfy
\[ \omega^\uni_K(G^n) \leq \left(2^{\|K\|_\infty\log(|\clA|\cdot|\clB|)}\cdot w_G\right)^{\frac{n}{1+\|K\|_\infty}}, \qquad \omega^\era_\eps(G^n) \leq \left(2^{\sqrt{1-\eps}\log(|\clA|\cdot|\clB|)}\cdot w_G\right)^{\frac{n}{1+\sqrt{1-\eps}}}. \]
\end{theorem}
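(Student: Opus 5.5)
The plan is to repeat the proof of Theorem~\ref{thm:parrep} verbatim, using the pointwise bound lemma just proved for unital and erasure noise in place of Lemma~\ref{lem:pointwise}. Write $\rho$ for the relevant exponent: $\rho=\|K\|_\infty$ for unital noise and $\rho=\sqrt{1-\eps}$ for erasure noise. Fix an arbitrary strategy for $G^n$ with input--output distribution $\sfP_{\bX\bY\bA\bB}$. Define the product-of-marginals distribution
\[
\sfQ_{\bX\bY\bA\bB}(\bx,\by,\ba,\bb)=\sfP_{\bX\bY}(\bx,\by)\,\sfP_{\bA|\bx}(\ba)\,\sfP_{\bB|\by}(\bb).
\]
As before, this distribution is realised by a classical strategy using private randomness, so $\sfQ(E_n)\le w_G^n$, where $E_n$ is the event of winning all $n$ copies. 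This step needs no assumption on the marginals of the shared state. In particular, for erasure noise $\sfP_{\bB|\by}$ is simply Bob's actual marginal and need not be $\Tr(N^\by_\bb)/2^m$.

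Next comes the analogue of Lemma~\ref{lem:Renyi} with $\alpha=\frac{1+\rho}{\rho}$. The pointwise bound gives $\sfP_{\bA\bB|\bx\by}(\ba,\bb)\le \sfQ_{\bA\bB|\bx\by}(\ba,\bb)^{1/(1+\rho)}$. Whenever $\sfQ_{\bA\bB|\bx\by}(\ba,\bb)>0$, this yields $\sfP^\alpha\sfQ^{1-\alpha}\le 1$ termwise, since $\frac{\alpha}{1+\rho}+1-\alpha=0$.

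The only step needing slightly more care than in the depolarizing case is the zero case. If $\sfQ_{\bA\bB|\bx\by}(\ba,\bb)=0$, then one of the marginals vanishes. Since $\sfP_{\bA\bB|\bx\by}(\ba,\bb)$ is at most either marginal, it vanishes too. This argument avoids the trace argument, which used maximally mixed marginals that fail for Bob under erasure. Alternatively, it follows directly from the pointwise bound.

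Summing over at most $(|\clA||\clB|)^n$ output pairs for each input and averaging over the common input distribution gives
\[
D_\alpha(\sfP\|\sfQ)\le \frac{1}{\alpha-1}\,n\log(|\clA||\clB|)=\rho\, n\log(|\clA||\clB|).
\]

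Finally, Lemma~\ref{lem:event-transfer}, using $\frac{\alpha-1}{\alpha}=\frac{1}{1+\rho}$, gives
\[
\sfP(E_n)\le 2^{\frac{\rho}{1+\rho}n\log(|\clA||\clB|)}\,w_G^{\frac{n}{1+\rho}}=\left(2^{\rho\log(|\clA||\clB|)}w_G\right)^{\frac{n}{1+\rho}}.
\]
Taking the optimal strategy and substituting $\rho=\|K\|_\infty$ or $\rho=\sqrt{1-\eps}$ yields both claimed bounds. The degenerate cases $\rho=0$ (for instance $\eps=1$ or $K=0$) give a product state, so the value is at most $\omega(G^n)\le w_G^n$, which matches the formula. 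There is no real obstacle here: everything noise-specific has already been absorbed into the pointwise lemma.
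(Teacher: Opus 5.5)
Your proposal is correct and follows the paper's route: the paper proves the unital/erasure analogue of Lemma~\ref{lem:pointwise} and then notes that the R\'enyi-divergence argument of Lemmas~\ref{lem:Renyi} and~\ref{lem:event-transfer} goes through unchanged with $\rho$ in place of $\mu$. Two of your additions are small but useful tidy-ups: you handle the zero case via $\sfP_{\bA\bB|\bx\by}(\ba,\bb)\le\min\{\sfP_{\bA|\bx}(\ba),\sfP_{\bB|\by}(\bb)\}$ instead of the paper's trace argument, which is phrased for maximally mixed marginals, and you treat the degenerate case $\rho=0$ separately.
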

\begin{theorem}\label{thm:u-parrep-gen}
Suppose $G$ is a unique game whose parallel-repeated classical value is upper bounded as $\omega(G^n) \leq w_G^n$. Then the noisy quantum values of $G^n$ under unital and erasure noise are upper bounded by
\begin{align*}
\omega^\uni_K(G^n) & \leq O(n^2)\cdot\left((\col_{1+\rho}(G))^{4\rho}w_G^{1-\rho}\right)^{\frac{n}{1+3\rho}}, \\
\omega^\era_\eps(G^n) & \leq O(n^2)\cdot\left((\col_{1+\rho}(G))^{4\rho}w_G^{1-\rho}\right)^{\frac{n}{1+3\rho}},
\end{align*}
where $\rho$ is the maximal correlation of the respective model: $\|K\|_\infty$ for unital noise and $\sqrt{1-\eps}$ for erasure noise.
\end{theorem}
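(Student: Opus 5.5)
The plan is to reuse the proof of Theorem~\ref{thm:u-parrep} verbatim, substituting the pointwise bound for unital and erasure noise from the preceding lemma in place of Lemma~\ref{lem:pointwise}. Fix a strategy for $G^n$ that uses $m$ copies of $\Omega^\uni_K$ or $\Omega^\era_\eps$, with input-output distribution $\sfP_{\bX\bY\bA\bB}$. Set $\rho=\|K\|_\infty$ in the unital case and $\rho=\sqrt{1-\eps}$ in the erasure case; these values agree with Lemmas~\ref{lem:MC-unital} and \ref{lem:MC-erasure}. The preceding lemma then gives
\[
\sfP_{\bA\bB|\bx\by}(\ba,\bb)\le\bigl(\sfP_{\bA|\bx}(\ba)\sfP_{\bB|\by}(\bb)\bigr)^{1/(1+\rho)}.
\]
Summing this over winning tuples bounds the winning probability by the quantity $\Lambda_{n,\rho}(G)$ of \eqref{eq:lambda-n}, with $\mu$ replaced by $\rho$ and the marginals taken from this strategy.

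The remaining steps never use the noise model. First, define the buckets $\clL^\A_{\bx,k}$ and $\clL^\B_{\by,l}$ from the marginals $\sfP_{\bA|\bx}$ and $\sfP_{\bB|\by}$. Then rewrite $\Lambda_{n,\rho}(G)$ as $\sum_{k,l}2^{-(k+l)/p}S_G(\clL^\A_k,\clL^\B_l)$ with $p=1+\rho\in(1,2]$. For each pair of buckets, bound the score by the minimum of the hard list bound (Lemma~\ref{lem:hard-list}) and the collision list bound (Lemma~\ref{lem:col-list}) at exponent $p$.

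Neither lemma depends on the noise. The hard list bound only needs $\omega(G^n)\le w_G^n$ and the bucket sizes. The collision list bound only needs Lemma~\ref{lem:tensor} and the inequality $|\clL^\A_{\bx,k}|\le 2^{k+1}$. Both hold for any probability distributions, so in particular for the marginals of a noisy strategy. Note that in the erasure case Bob's marginal is not $\Tr(N)/2^m$; this causes no issue, because the pointwise lemma is already stated in terms of the true marginals.

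Finally, apply $\min\{u,v,w\}\le u^\alpha v^\beta w^\gamma$ with the same choice $\alpha=\frac{1-\rho}{1+3\rho}-\frac1n$ and $\beta=\gamma=\frac{1-\alpha}{2}$, and sum the resulting double geometric series. The convergence conditions $\beta s>\alpha t$ and $\gamma s>\alpha t$ hold with margin $\Theta_\rho(1/n)$, as before. This produces the factor $O(n^2)$ and the base $\bigl((\col_{1+\rho}(G))^{4\rho}w_G^{1-\rho}\bigr)^{1/(1+3\rho)}$, with $c_1c_2=\col_{1+\rho}(G)^2$.

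The only point needing care, and the main potential obstacle, is that every earlier step used $\mu$ purely through the exponent $1/(1+\mu)$ in the pointwise bound. One has to check that nothing else, such as maximally mixed marginals or the range $\mu<1$, is used. In the edge cases $\eps=0$ and $K$ orthogonal we get $\rho=1$ and $p=2$; the exponent $\frac{1-\rho}{1+3\rho}$ is then $0$, and $\alpha$ becomes slightly negative. There I would instead take $\alpha=0$ and $\beta=\gamma=\frac12$, where the series converges provided $s>0$ strictly. Failing that, I would note that the bound is trivially implied by $1\ge$ the value whenever the base reaches $1$, or simply restrict to $\rho<1$ as in Theorem~\ref{thm:u-parrep}.
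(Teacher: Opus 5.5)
Your proposal is correct and matches the paper's argument: the paper likewise proves only the unital/erasure analogue of Lemma~\ref{lem:pointwise}, and observes that everything after it (bucketing, the hard and collision list bounds, and the geometric-series optimization) is independent of the noise model, so Theorem~\ref{thm:u-parrep} carries over with $\mu$ replaced by $\rho$. On your edge case, at $\rho=1$ one has $s=\frac{1}{1+\rho}-\frac12=0$, so the fallback $\alpha=0$, $\beta=\gamma=\frac12$ also gives a divergent series; the right resolution is to restrict to $\rho<1$, as Theorem~\ref{thm:u-parrep} implicitly does.
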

\begin{theorem}\label{thm:chsh-parrep-gen}
For large enough $n$ and unital noise $\Delta^\uni_K$, write $\rho=\|K\|_\infty$. The unital quantum value of $\chsh^n$ is upper bounded by
\[ \omega^\uni_K(\chsh^n) \leq (\beta_\rho)^n,\]
where $\beta_{\rho}<\cos^2(\pi/8)$ as long as $\rho\leq0.36$.

For erasure noise $\Delta^\era_\eps$, the erased quantum value of $\chsh^n$ is upper bounded by
\[ \omega^\era_\eps(\chsh^n) \leq \gamma_\eps^n\]
where $\gamma_\eps < \cos^2(\pi/8)$ as long as $\eps\geq 0.87$.
\end{theorem}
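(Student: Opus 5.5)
The plan is to reduce Theorem~\ref{thm:chsh-parrep-gen} to the depolarizing argument of Theorem~\ref{thm:chsh-parrep}. From the depolarizing case onward, the noise model enters only through the pointwise bound. The preceding lemma gives exactly the same form of pointwise bound as Lemma~\ref{lem:pointwise}, with the exponent $1/(1+\mu)$ replaced by $1/(1+\rho)$, where $\rho=\|K\|_\infty$ for unital noise and $\rho=\sqrt{1-\eps}$ for erasure noise. Hence, for any strategy with noisy EPR pairs, the winning probability of $\chsh^n$ is at most $\Lambda_{n,\rho}(\chsh)$ as defined in \eqref{eq:lambda-n}. The hard list bound (Lemma~\ref{lem:hard-list}), the collision list bound (Lemma~\ref{lem:col-list}), and the bucket/geometric-series argument in the proof of Theorem~\ref{thm:u-parrep} never use the quantum state. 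Repeating them verbatim with $p=1+\rho$ gives
\[ \Lambda_{n,\rho}(\chsh)\le O_\rho(n^2)\left(\left(\tfrac{1+\sqrt5}{4}\right)^{1-\rho}\col_{1+\rho}(\chsh)^{4\rho}\right)^{\frac{n}{1+3\rho}}, \]
using Theorem~\ref{thm:chsh-n} for $w_\chsh$.

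Next I will carry over the numerical certificate. At $\rho=0.36$, Appendix~\ref{app:chsh-cert} together with Lemma~\ref{lem:chsh-col} certifies $\col_{1.36}(\chsh)<0.873606$. This makes the base of the exponent at most $0.8534679<\cos^2(\pi/8)$. I then set the base to $0.85347$, slightly above this, so that the polynomial factor $O(n^2)$ is absorbed for large $n$. To cover all $\rho\le 0.36$, I will use the monotonicity observation from the proof of Theorem~\ref{thm:chsh-parrep}. Since each quantity $\sfP_{\bA|\bx}(\ba)\sfP_{\bB|\by}(\bb)$ lies in $[0,1]$, $\Lambda_{n,\rho}$ is non-decreasing in $\rho$. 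A strategy under noise with parameter $\rho\le 0.36$ therefore satisfies the bound $\Lambda_{n,\rho}\le\Lambda_{n,0.36}\le 0.85347^n$. This holds because the pointwise bound with exponent $1/(1+\rho)$ is stronger than the one with $1/(1.36)$ for the same marginals. This gives $\beta_\rho$.

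For erasure noise, $\rho=\sqrt{1-\eps}\le 0.36$ is equivalent to $1-\eps\le 0.1296$, that is, $\eps\ge 0.8704$. The statement says $\eps\ge 0.87$, which corresponds to $\rho=\sqrt{0.13}\approx 0.3606$, slightly larger than $0.36$. So the main obstacle is this threshold mismatch: the certificate at exactly $\rho=0.36$ does not cover $\eps\in[0.87,0.8704)$ by monotonicity alone. I would first check the slack in the certificate. The base $0.8534679$ is below $\cos^2(\pi/8)\approx0.8535534$ by about $8.5\times10^{-5}$, and Table~\ref{tab:c_mu} shows $\col_{1+\rho}(\chsh)$ is nearly flat near its minimum around $0.3546$. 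So I would rerun the certificate of Appendix~\ref{app:chsh-cert} at $\rho=\sqrt{0.13}$, bounding the four-variable maximization of Lemma~\ref{lem:chsh-col} via the Lagrangian stationary points and boundary cases, and verify that the resulting base is still below $\cos^2(\pi/8)$. Monotonicity in $\rho$ then handles all $\eps\ge0.87$ and yields $\gamma_\eps$.
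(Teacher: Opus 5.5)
Your proposal is correct and follows the paper's essentially one-line argument. The unital/erasure pointwise bound with exponent $1/(1+\rho)$ is fed unchanged into the bucket argument of Theorem~\ref{thm:u-parrep}, and the $\rho=0.36$ certificate plus monotonicity of $\Lambda_{n,\rho}$ in $\rho$ yields $\beta_\rho$.

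Your erasure remark is also correct and catches a rounding slip in the paper: it claims $\eps\geq0.87$, but $\sqrt{1-0.87}\approx0.3606>0.36$, so its argument only gives $\eps\geq0.8704$. Your repair is right: certify at some rational $\rho_0\geq\sqrt{0.13}$ (e.g.\ $\rho_0=0.361$, so that the appendix's exact rational-exponent arithmetic still applies), then use monotonicity. It should go through comfortably, because with $\col_{1.36}(\chsh)<0.873606$ the base at $\rho=0.36$ actually evaluates to about $0.8532$ (the quoted $0.8534679$ is loose), whereas moving to $\rho\approx0.3606$ changes it by a relative amount of only about $4\cdot10^{-5}$.
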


\section{Lower bounds for communication with noisy entanglement}\label{sec:comm-lb}
\subsection{Separation between communication complexities with noisy entanglement and perfect randomness}\label{sec:noisy-c-ub}
In this section, we prove that there exists a relational problem with low $\Q^{\mathrm{noise}}$ but high $\R^\pub$. We first prove an upper bound on the classical value of the threshold parallel-repeated CHSH. This result and a result similar to the next one were also shown in \cite{JK26}, but we reproduce the proof since we use similar threshold parallel repetition theorems in other models.
\begin{lemma}\label{lem:chsh-thres}
The classical value of winning $t$ out of $n$ CHSH games, for $t=((1+\sqrt{5})/4+\eps)n$, is upper bounded by
\[ \omega(\chsh^{t/n}) \leq e^{-2\eps^2n}. \]
\end{lemma}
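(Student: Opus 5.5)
The plan is to reduce the threshold statement to the generalized Chernoff bound (Theorem~\ref{thm:gen-chernoff}), using the classical parallel repetition bound for CHSH (Theorem~\ref{thm:chsh-n}) on every subset of coordinates.

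First, I fix an arbitrary classical strategy for $\chsh^{t/n}$. We may take it deterministic, $\ba=f(\bx)$ and $\bb=g(\by)$, since shared randomness does not increase the value. With inputs drawn uniformly, I define Boolean random variables
\[ X_i=\mathbbm{1}\bigl[a_i\oplus b_i=x_iy_i\bigr], \qquad i\in[n]. \]
The winning probability is then $\Pr\bigl[\sum_i X_i\geq t\bigr]$.

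The key step is to verify the hypothesis of Theorem~\ref{thm:gen-chernoff} with $\delta=\frac{1+\sqrt5}{4}$. That is, for every $S\subseteq[n]$ I need
\[ \Pr\Bigl[\bigwedge_{i\in S}X_i=1\Bigr]\leq\delta^{|S|}. \]
To see this, I build a classical strategy for $\chsh^{|S|}$ played on the coordinates in $S$. The players use shared randomness to sample the inputs on the coordinates in $S^c$ jointly and uniformly: Alice receives $x_{S^c}$ and Bob receives $y_{S^c}$. Since the input distribution is a product over coordinates, the joint input distribution is then exactly that of $\chsh^n$. Alice runs $f$ on the full $\bx$ and outputs $a_S$, and Bob does the same with $g$ and $\by$. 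This strategy wins $\chsh^{|S|}$ with probability exactly $\Pr[\bigwedge_{i\in S}X_i=1]$. Because shared randomness cannot help classically, Theorem~\ref{thm:chsh-n} bounds this probability by $\delta^{|S|}$, which holds for every $|S|$ (the case $S=\emptyset$ is trivial).

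Finally, I apply Theorem~\ref{thm:gen-chernoff} with $\gamma=\delta+\eps$; if $\gamma>1$ the statement is vacuous. This gives
\[ \Pr\Bigl[\sum_i X_i\geq(\delta+\eps)n\Bigr]\leq e^{-2\eps^2n}. \]
Taking the supremum over strategies yields the claim. The only step needing care is the subset embedding: I must confirm that Theorem~\ref{thm:chsh-n} is stated for all $n$ and not just asymptotically. As written it is an upper bound for every $n$, and equality holds only for large $n$, so the argument goes through.
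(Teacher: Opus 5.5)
Your proposal is correct and follows essentially the same route as the paper: you embed a $\chsh^{|S|}$ instance into the coordinates of $S$, sampling the remaining inputs from the CHSH distribution, to get the subset bound $\delta^{|S|}$ from Theorem~\ref{thm:chsh-n}. You then apply Theorem~\ref{thm:gen-chernoff} with $\delta=(1+\sqrt5)/4$ and $\gamma=\delta+\eps$; your remark on the vacuous case $\gamma>1$ is a small bit of extra care the paper leaves implicit.
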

\begin{proof}
First, we argue that for $n$ CHSH games, the probability of winning every game in any subset $S$ is at most $\left(\frac{1+\sqrt{5}}{4}\right)^{|S|}$. Suppose this is not true for some subset $S$. Then we can give a strategy for $\chsh^{|S|}$ as follows: use the strategy for $\chsh^n$, embed the actual $|S|$ inputs for $\chsh^{|S|}$ into the subset $S$ of $n$, and sample the $n-|S|$ other inputs according to the CHSH distribution. The probability of winning $\chsh^{|S|}$ using this strategy is exactly the probability of winning in the subset $S$ of the original strategy for $\chsh^n$. The probability of winning in $S$ thus cannot be higher than $\left(\frac{1+\sqrt{5}}{4}\right)^{|S|}$ by Theorem~\ref{thm:chsh-n}.

For each coordinate $i\in[n]$, let $W_i$ be the indicator that the strategy wins the $i$-th CHSH game. The preceding argument shows that, for every $S\subseteq[n]$,
\[
\Pr\left[\bigwedge_{i\in S}W_i=1\right]
\leq
\left(\frac{1+\sqrt5}{4}\right)^{|S|}.
\]
Apply Theorem~\ref{thm:gen-chernoff} with $\delta=(1+\sqrt5)/4$ and $\gamma=\delta+\eps$. We obtain
\[
\Pr\left[\sum_{i=1}^nW_i\geq(\delta+\eps)n\right]
\leq e^{-2\eps^2n},
\]
which is exactly the claimed upper bound.
\end{proof}

\begin{lemma}\label{lem:chsh-r-ub}
Any public-coin communication protocol for $\chsh^{t/n}$ for $t=((1+\sqrt{5})/4+\eps)n$ and communication $C$ must satisfy
\[ \Pr_{\sfU}\left[V^{t/n}(\bx,\by,\ba,\bb)=1\right] \leq 2^C\cdot e^{-2\eps^2n},\]
where the probability is over the uniform input distribution of $n$-fold CHSH, as well as the internal randomness.
\end{lemma}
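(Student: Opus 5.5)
The plan is the standard ``guess the transcript'' reduction, which turns a $C$-bit protocol into a zero-communication strategy and then appeals to Lemma~\ref{lem:chsh-thres}.

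First, fix the public randomness. By averaging over the public coins, there is some fixed value $r$ of the randomness for which the success probability under the uniform input distribution $\sfU$ is at least the average. It therefore suffices to treat a deterministic protocol, with private randomness absorbed into $r$ as well. It is a $C$-bit protocol whose outputs $\ba,\bb$ are functions of each party's input together with the full transcript $\tau\in\bits^C$. Shorter transcripts can be padded so that every transcript has length exactly $C$.

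Next, build a no-communication strategy for $\chsh^{t/n}$ from shared randomness. Alice and Bob share a uniformly random guess $\tau'\in\bits^C$.
\begin{itemize}
\item Alice simulates the protocol as if the transcript were $\tau'$. At each of her turns she checks whether her message on input $\bx$, given the prefix of $\tau'$, agrees with $\tau'$. She then outputs her answer computed from $(\bx,\tau')$.
\item Bob does the same on his side and outputs his answer computed from $(\by,\tau')$.
\end{itemize}
Whenever $\tau'$ equals the true transcript $\tau(\bx,\by)$, both parties produce exactly the outputs of the original protocol. This event has probability $2^{-C}$, and it is independent of the inputs because $\tau'$ is uniform and independent of $(\bx,\by)$. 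Consequently, for each input,
\[
\Pr[\text{strategy wins}] \geq 2^{-C}\Pr[\text{protocol wins}].
\]
Strictly, when the guess is wrong the outputs are arbitrary and may still win, but that only helps the inequality. No consistency checks are actually needed: both players simply act on $\tau'$, and agreement with the true transcript occurs with probability $2^{-C}$.

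This is a classical strategy with shared randomness. As noted in Section~\ref{sec:game-comm}, shared randomness does not increase the classical value, so Lemma~\ref{lem:chsh-thres} bounds its winning probability by $e^{-2\eps^2 n}$. Combining the two bounds gives
\[
\Pr_\sfU\!\left[V^{t/n}(\bx,\by,\ba,\bb)=1\right] \leq 2^C e^{-2\eps^2 n}.
\]

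The main point needing care is the independence claim: the probability that the guess equals the true transcript must be exactly $2^{-C}$ for every input, which holds because the guess is uniform and independent of the inputs. The only other thing to verify is that averaging over public coins, or fixing them, is legitimate. Both steps are routine.
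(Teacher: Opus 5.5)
Your proposal is correct and follows the paper's proof: a shared uniform guess of the $C$-bit transcript turns the protocol into a zero-communication strategy that is right with probability $2^{-C}$ on every input, and Lemma~\ref{lem:chsh-thres} then gives the bound. Your two deviations are harmless simplifications. Dropping the paper's round-by-round $\bot$ consistency checks is fine, since only the event ``guess equals true transcript'' is used. Fixing the public coins first is also fine, since bounding each fixed-coin protocol bounds the average.
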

\begin{proof}
Suppose the upper bound does not hold for some protocol $\clP$. Then we can give a strategy $\mathcal{S}$ for $\chsh^{t/n}$ without communication with winning probability more than $e^{-2\eps^2n}$, contradicting Lemma~\ref{lem:chsh-thres}. The way this strategy works is that Alice and Bob will use their shared randomness to guess the transcript of the communication protocol.

If $\clP$ has $J$ rounds, with, say, Alice communicating first and Bob last, then the randomness in $\mathcal{S}$ is divided into blocks as $\mathbf{r} = r^\A_1r^\B_2\ldots r^\A_{J-1}r^\B_J$, where $r^\A_i$ and $r^\B_j$ are the same length as Alice or Bob's message in the $i$-th or $j$-th rounds. If Bob communicates in the $j$-th round of $\clP$, in $\mathcal{S}$ he will pretend that $r^\A_{j-1}$ is the message he got from Alice in the $(j-1)$-th round, and then check whether $r^\B_j$ equals the message he would send given this message from Alice, his input, and his previous messages. If it is not, he will record $\bot$ privately, and otherwise he will carry on to the next round of his communication. Alice will behave similarly. After going through all the rounds of $\clP$, if Alice or Bob have recorded $\bot$ at any point, they will output arbitrarily; otherwise, they will output according to $\clP$.

We will prove that
\[ \Pr_{\mathcal{S}}\left[V^{t/n}(\bx,\by,\ba,\bb)=1\right] \geq 2^{-C}\Pr_{\clP}\left[V^{t/n}(\bx,\by,\ba,\bb)=1\right].\]
We note that if neither Alice nor Bob has recorded $\bot$ at any point, $\mathbf{r}$ is equal to the transcript of $\clP$, and the probability that the output of $\mathcal{S}$ satisfies $V^{t/n}$ is the same as that in $\clP$. Since $\mathbf{r}$ is a uniformly random string, the probability that it is equal to a $C$-length transcript is exactly $2^{-C}$. Since the outputs of $\mathcal{S}$ satisfy the win condition at least when $\mathbf{r}$ is equal to the transcript (and $\clP$ satisfies the win condition), we have the required result.
\end{proof}

We are now ready to prove the separation between $\Q^{\mathrm{noise}}$ and $\R^\pub$ for $\mathrm{noise} \in\{\dep,\res,\era\}$. The separation for $\Q^\uni_K$ will require a bit more work.
\begin{theorem}\label{thm:Rpub-Qt}
For $\mathrm{noise} \in\{\dep,\res,\era\}$, let $p$ be the probability that the noise channel does nothing. For every $p\geq 0.875$, there exists a relational problem whose input and output sets are both $\{0,1\}^n\times\{0,1\}^n$ for which $\Q^{\mathrm{noise}}_p$ is $0$, but $\R^\pub$ is $\Omega(n)$.
\end{theorem}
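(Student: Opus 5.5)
The relation is $\chsh^{t/n}$ with $t=\bigl(\tfrac{1+\sqrt5}{4}+\eps\bigr)n$ for a small constant $\eps>0$. Both parts of the argument are driven by the gap between the noisy single-copy CHSH value and the classical rate $w_\chsh=\tfrac{1+\sqrt5}{4}\approx0.809$.

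\textbf{Upper bound for the noisy model.} Alice and Bob share $n$ copies of the noisy EPR pair and use no communication. On coordinate $i$ they run the optimal single-copy CHSH strategy on the $i$-th noisy pair. By Corollary~\ref{cor:mu-chsh}, on every input pair each coordinate is won with probability
\[
q_p = p\cos^2(\pi/8)+\tfrac{1-p}{2},
\]
independently of the input. The copies of the state are independent and the measurements act on separate copies, so the indicators $W_i$ are independent Bernoulli$(q_p)$ variables for every fixed $(\bx,\by)$.

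At $p=0.875$ we get $q_p\approx 0.875\cdot0.85355+0.0625\approx0.8094$. This exceeds $w_\chsh\approx0.80902$, but only slightly (gap about $4\cdot10^{-4}$). We therefore choose $\eps$ with $0<\eps<q_p-w_\chsh$, for example $\eps=(q_p-w_\chsh)/2$. Note that $q_p$ increases with $p$, so the same choice works for all $p\geq0.875$.

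Theorem~\ref{thm:chernoff} then gives $\Pr[\sum_iW_i<t]\leq e^{-2(q_p-w_\chsh-\eps)^2n}$, which is below $1/3$ for large $n$. So the relation is solved in the worst case with zero communication, i.e.\ $\Q^{\mathrm{noise}}_p=0$. For small $n$ one can enlarge the constant or set the relation to be trivial. The same argument applies verbatim for depolarizing, biased reset and erasure noise, since Corollary~\ref{cor:mu-chsh} covers all three.

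\textbf{Lower bound for $\R^\pub$.} By Yao's lemma (Theorem~\ref{thm:yao}) it suffices to consider the uniform input distribution. Lemma~\ref{lem:chsh-r-ub} shows that any $C$-bit public-coin protocol succeeds with probability at most $2^Ce^{-2\eps^2n}$. Requiring success probability $\geq2/3$ forces $C\geq 2\eps^2 n\log e-O(1)=\Omega(n)$.

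\textbf{Main obstacle.} The only delicate point is the numerical check that $q_p>\tfrac{1+\sqrt5}{4}$ at $p=0.875$. Because the margin is so thin, $\eps$ and the resulting constant in $\Omega(n)$ are tiny but still fixed constants, and I would verify this inequality exactly rather than rely on rounded values. The worst-case correctness (not just average-case) comes for free from the input-independence clause in Corollary~\ref{cor:mu-chsh}.
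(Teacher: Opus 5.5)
Your proposal is correct and follows the paper's proof essentially step for step. The paper uses the same worst-case relation $\chsh^{t/n}$, with $t$ placed halfway between $\frac{1+\sqrt5}{4}$ and the single-copy noisy value at $p=0.875$; the same independent per-copy strategy from Corollary~\ref{cor:mu-chsh} together with Theorem~\ref{thm:chernoff} for the zero-communication protocol; and the same combination of Lemma~\ref{lem:chsh-r-ub} and Theorem~\ref{thm:yao} for the $\Omega(n)$ lower bound on $\R^\pub$. The margin at $p=0.875$ is about $3.4\cdot 10^{-4}$ rather than $4\cdot 10^{-4}$, which does not affect the argument.
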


\begin{proof}
The relation in question will be the worst-case version of $\chsh^{t/n}$, i.e., the inputs, outputs and win condition will be the same as $\chsh^{t/n}$ but we will require winning on every input pair. We use $\wchsh^{t/n}$ to denote this.

By Corollary~\ref{cor:mu-chsh}, there is a noisy quantum strategy for CHSH under all the relevant noise models that wins with worst-case probability $p\cos^2(\pi/8)+\frac{1-p}{2}$. This is greater than $\frac{1+\sqrt{5}}{4}$ when $p > \frac{\sqrt{5}-1}{\sqrt{2}} \approx 0.874032$. The range of $p$ in the theorem therefore satisfies the lower bound. Let $\eps = 0.875\times\cos^2(\pi/8)+\frac{1-0.875}{2} - \frac{1+\sqrt{5}}4$, and take $t=\left(\frac{1+\sqrt{5}}4+\frac\eps2\right)n$ in $\chsh^{t/n}$. The noisy quantum strategy for $\wchsh^{t/n}$ plays this noisy single-copy strategy independently using $n$ EPR pairs. By Theorem~\ref{thm:chernoff}, the probability of winning fewer than $t$ copies is at most $e^{-\eps^2n/2}$ for every $p \geq 0.875$. Therefore, for this $t$, $\Q^{\mathrm{noise}}_p(\wchsh^{t/n})=0$ for every $p\geq 0.875$.

On the other hand, by Lemma~\ref{lem:chsh-r-ub}, the average-case success probability of $\chsh^{t/n}$ for $t=\left(\frac{1+\sqrt{5}}4+\frac\eps2\right)n$ (over the uniform distribution $\sfU$) of any randomized strategy that communicates $C$ bits is at most $2^Ce^{-\eps^2n/2}$. Therefore, achieving average-case success probability at least $2/3$ requires $\Omega(n)$ communication. This means $\R^\pub(\chsh^{t/n},\sfU) = \Omega(n)$, and thus by Theorem~\ref{thm:yao}, $\R^\pub(\wchsh^{t/n}) = \Omega(n)$.
\end{proof}

In order to prove the analogue of Theorem~\ref{thm:Rpub-Qt} for unital noise, we will use Lemma~\ref{lem:wchsh-uni-lb} instead of Corollary~\ref{cor:mu-chsh}. Note that since additional randomness is required in the strategy for Lemma~\ref{lem:wchsh-uni-lb}, we will correspondingly get a $\Q^{\uni,\pub}_K$ protocol from this result, rather than a $\Q^\uni_K$ protocol.

\begin{theorem}\label{thm:Rpub-Quni}
Let $\Delta^\uni_K$ be a unital noise channel whose Bloch matrix $K$ has two largest singular values $s_1,s_2$. If $s_1^2+s_2^2\geq 1.528$, there exists a relational problem whose input and output sets are both $\{0,1\}^n\times\{0,1\}^n$ for which $\Q^{\uni,\pub}_K$ is $0$, but $\R^\pub$ is $\Omega(n)$.
\end{theorem}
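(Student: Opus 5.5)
The plan mirrors the proof of Theorem~\ref{thm:Rpub-Qt}, with Lemma~\ref{lem:wchsh-uni-lb} replacing Corollary~\ref{cor:mu-chsh}. The relation is again $\wchsh^{t/n}$. The noisy protocol has Alice and Bob run the symmetrized strategy of Lemma~\ref{lem:wchsh-uni-lb} independently on each of the $n$ coordinates. Each coordinate uses one copy of $\Omega^\uni_K$ and two fresh shared random bits, which are supplied by the public randomness in the $\Q^{\uni,\pub}_K$ model. On every input pair, each coordinate is then won independently with probability exactly $q:=\frac12+\frac14\sqrt{s_1^2+s_2^2}$. This per-coordinate uniformity is the reason for symmetrizing: the unsymmetrized Horodecki strategy of Lemma~\ref{lem:chsh-uni-lb} has input-dependent success probabilities, and inputs concentrated on $(1,0),(1,1)$ would lower the worst-case rate.

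The key numerical step is to check that $q>\frac{1+\sqrt5}{4}$ whenever $s_1^2+s_2^2\geq1.528$. The condition $q>\frac{1+\sqrt5}{4}$ is equivalent to $\sqrt{s_1^2+s_2^2}>\sqrt5-1$, that is, $s_1^2+s_2^2>(\sqrt5-1)^2=6-2\sqrt5\approx1.5279$. The threshold $1.528$ sits just above this, which yields a fixed gap. I would set $\eps=\frac12+\frac14\sqrt{1.528}-\frac{1+\sqrt5}4>0$ and choose $t=\left(\frac{1+\sqrt5}4+\frac\eps2\right)n$. Since $q$ is increasing in $s_1^2+s_2^2$, every allowed $K$ gives $q\geq\frac{1+\sqrt5}4+\eps$.

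By Theorem~\ref{thm:chernoff}, for every fixed input pair the probability of winning fewer than $t$ coordinates is at most $e^{-\eps^2n/2}$. This is below $1/3$ for large $n$, and small $n$ can be handled by absorbing constants. Hence $\Q^{\uni,\pub}_K(\wchsh^{t/n})=0$.

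For the lower bound, the argument is unchanged from Theorem~\ref{thm:Rpub-Qt}. Lemma~\ref{lem:chsh-r-ub} shows that a $C$-bit public-coin protocol succeeds on the uniform distribution with probability at most $2^Ce^{-\eps^2n/2}$. Reaching $2/3$ therefore forces $C=\Omega(n)$, so $\R^\pub(\chsh^{t/n},\sfU)=\Omega(n)$, and Theorem~\ref{thm:yao} lifts this to the worst-case bound $\R^\pub(\wchsh^{t/n})=\Omega(n)$.

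The only step needing care is confirming that the symmetrized strategy's success on each coordinate is exactly $q$ and independent across coordinates given the inputs. This holds because fresh copies and fresh shared bits are used for every coordinate, so Chernoff applies. Everything else is routine.
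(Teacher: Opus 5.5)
Your proposal is correct and follows essentially the same route as the paper. The paper also uses the relation $\wchsh^{t/n}$ and the symmetrized single-copy strategy of Lemma~\ref{lem:wchsh-uni-lb}, whose two random bits are why the model is $\Q^{\uni,\pub}_K$. It sets $t=\left(\frac{1+\sqrt5}{4}+\frac\eps2\right)n$ with $\eps$ the gap at $s_1^2+s_2^2=1.528$, uses Chernoff for the zero-communication upper bound, and uses Lemma~\ref{lem:chsh-r-ub} together with Yao's lemma for the $\Omega(n)$ lower bound.
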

\begin{proof}
We have $\omega^\uni_K(\chsh) > \frac{1+\sqrt{5}}4$ if $s_1^2+s_2^2 > (\sqrt{5}-1)^2 \approx 1.527864$. Therefore, by the same argument as in Theorem~\ref{thm:Rpub-Qt}, taking $\wchsh^{t/n}$ as the relation for $t=\left(\frac{1+\sqrt{5}}4+\frac\eps2\right)n$, where $\eps$ is the gap at $s_1^2+s_2^2=1.528$, gives the result. The communication model here is $\Q^{\uni,\pub}_K$, since Lemma~\ref{lem:wchsh-uni-lb} needs the random bits to achieve worst-case success probability $\frac12+\frac14\sqrt{s_1^2+s_2^2}$.
\end{proof}

\begin{remark}
The strategy in Lemma~\ref{lem:chsh-uni-lb} has success probability at least $\frac12+\frac{s_2^2}{2\sqrt{s_1^2+s_2^2}}$ on all inputs. Therefore, we could get a separation between $\Q^\uni_K$ and $\R^\pub$ with slightly worse dependence, if we used this fact. This needs $\frac12+\frac{s_2^2}{2\sqrt{s_1^2+s_2^2}} > \frac{1+\sqrt{5}}4$, implying $s_2^2 > \frac{\sqrt{5}-1}{2}\sqrt{s_1^2+s_2^2}$.
\end{remark}

\subsection{Separation between communication complexities with noisy and noiseless entanglement}
\begin{theorem}\label{thm:Q*-Qnoise}
For $\mu \leq 0.36$, there exists a relational problem whose input and output sets are both $\{0,1\}^n\times\{0,1\}^n$, for which $\Q^*$ is $0$ but $\Q^\dep_\mu$ is $\Omega(n)$.

A similar separation holds between $\Q^*$ and $\Q^\uni_K$ for unital noise with $\|K\|_\infty \leq 0.36$, and $\Q^*$ and $\Q^\era_\eps$ for erasure noise with $\eps \geq 0.87$.
\end{theorem}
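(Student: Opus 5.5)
The relation will again be the worst-case threshold game $\wchsh^{t/n}$, now with $t=(\cos^2(\pi/8)-\delta)n$ for a small constant $\delta>0$ to be fixed later. The upper bound $\Q^*=0$ is immediate. Alice and Bob play the optimal CHSH strategy of Theorem~\ref{thm:chsh-prop} independently on $n$ EPR pairs. By Item~(iii) each coordinate is won with probability exactly $\cos^2(\pi/8)$ on every input pair, and the coordinates are independent. Theorem~\ref{thm:chernoff} then shows that fewer than $t$ coordinates are won with probability at most $e^{-2\delta^2n}$, which is below $1/3$ for large $n$, uniformly over inputs.

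For the lower bound, I first prove a threshold version of Theorem~\ref{thm:chsh-parrep}, mirroring Lemma~\ref{lem:chsh-thres}. Fix $\mu\le0.36$ and any noisy strategy for $\chsh^n$, with an arbitrary number $m$ of copies of $\Omega^\dep_\mu$. For any subset $S\subseteq[n]$, I embed a $\chsh^{|S|}$ instance into the coordinates of $S$. The remaining inputs are sampled from the uniform distribution using private randomness: Alice samples her own $x_i$ and Bob his own $y_i$ independently, which is legitimate because the CHSH input distribution is a product distribution, so no shared randomness is needed. This yields a noisy strategy for $\chsh^{|S|}$ that uses the same resource. Theorem~\ref{thm:chsh-parrep} then bounds the probability of winning all of $S$ by $\alpha_\mu^{|S|}$ once $|S|$ is large. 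To handle small $|S|$, I would use the explicit $O(|S|^2)$ prefactor from Theorem~\ref{thm:u-parrep}. Replacing $\alpha_\mu$ by any $\alpha'$ with $\alpha_\mu<\alpha'<\cos^2(\pi/8)$, the bound $C\alpha'^{|S|}$ holds for all $|S|$ with a constant $C$. If $C>1$, I absorb it by using a slightly larger $\alpha''$ for $|S|\ge s_0$, and for $|S|<s_0$ I use the trivial bound $1\le\alpha''^{|S|}\cdot\alpha''^{-s_0}$. Theorem~\ref{thm:gen-chernoff} requires $\delta^{|S|}$ for every $S$ with no constant, so I expect this constant-absorption to be the fiddly step. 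The cleanest alternative is to apply the generalized Chernoff bound to random sub-blocks, or to use the version of \cite{PS97} that tolerates a multiplicative constant, which only changes the exponent by a constant. Either way, the outcome is that $\Pr[\text{win}\ge(\alpha''+\delta')n]\le 2^{-\Omega(n)}$. I then choose $\delta$ so that $\cos^2(\pi/8)-\delta>\alpha''+\delta'$.

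Next comes the communication-to-game reduction of Lemma~\ref{lem:chsh-r-ub}. A noisy-entanglement protocol with $C$ bits of communication gives a zero-communication noisy strategy in which both players guess the transcript. This requires shared randomness, which the $\Q^\dep_\mu$ model does not provide. However, since the threshold bound holds for every strategy, it also holds on average over shared randomness, so I may grant the strategy shared randomness for free. The non-local value bound is unaffected, because a strategy with shared randomness is a convex combination of strategies without it. The guessing strategy succeeds with probability at least $2^{-C}$ times the protocol's success probability over the uniform distribution. Hence success probability $2/3$ forces $2^{-C}\cdot\frac23\le 2^{-\Omega(n)}$, that is, $C=\Omega(n)$. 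Worst-case correctness implies average-case correctness, so $\Q^\dep_\mu(\wchsh^{t/n})=\Omega(n)$.

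The unital and erasure cases follow verbatim by replacing Theorem~\ref{thm:chsh-parrep} with Theorem~\ref{thm:chsh-parrep-gen}, using the stated ranges $\|K\|_\infty\le0.36$ and $\eps\ge0.87$. The main obstacle is the second step. There I must check that the subset-embedding preserves the noisy resource, which holds because the inputs are product so private sampling suffices, and that the $n$-dependent prefactor is compatible with the generalized Chernoff bound.
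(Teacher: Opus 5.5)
Your proposal is correct and follows essentially the same route as the paper: the relation $\wchsh^{t/n}$, a threshold version of Theorem~\ref{thm:chsh-parrep} via the generalized Chernoff bound, the transcript-guessing reduction with shared randomness granted for free (the paper in fact concludes the stronger bound on $\Q^{\dep,\pub}_\mu$), Yao's principle, and Chernoff for $\Q^*=0$. The paper covers the threshold step only by ``an argument similar to Lemma~\ref{lem:chsh-thres}''. Your handling of the $O(|S|^2)$ prefactor and small-$|S|$ constants via the multiplicative-constant version of \cite{PS97}, and your private sampling of the embedded inputs, fill in details that the paper leaves implicit.
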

\begin{proof}
By Theorem~\ref{thm:chsh-parrep}, for all $\mu \leq 0.36$,
\[ \omega^\dep_\mu(\chsh^n) \leq \left(\cos^2\frac{\pi}{8}-\eps\right)^n \]
for some $\eps=\Omega(1)$. Therefore, by an argument similar to that of Lemma~\ref{lem:chsh-thres}, for $t=\left(\cos^2\frac{\pi}{8}-\frac{\eps}{2}\right)n$
\[ \omega^\dep_\mu(\chsh^{t/n}) \leq e^{-\eps^2n/2}.\]
Moreover, by an argument similar to that of Lemma~\ref{lem:chsh-r-ub}, any noisy-entanglement protocol with visibility $\mu$, shared randomness, and $C$ bits of communication for $\chsh^{t/n}$ must satisfy
\[ \Pr_{\sfU}\left[V^{t/n}(\bx,\by,\ba,\bb)=1\right] \leq 2^C\cdot e^{-\eps^2n/2}.\]
Note that in the simulation of the communication protocol by a non-local game strategy in this case, Alice and Bob are producing messages corresponding to the protocol by measurements on their parts of the shared noisy states, rather than deterministically from the transcript and randomness. However, the same arguments still hold, since once the measurement outcomes are fixed, the protocol is deterministic, and local consistency for Alice and Bob implies global consistency of the transcript.

Now if we take $\wchsh^{t/n}$, with $t$ as we've fixed before, to be our relational problem, we can say that $\Q^{\dep,\pub}_\mu(\chsh^{t/n}, \sfU) = \Omega(n)$. Therefore, by Theorem~\ref{thm:yao},
\[ \Q^{\dep,\pub}_\mu(\wchsh^{t/n}) = \Omega(n)\]
for every $\mu \leq 0.36$.
On the other hand, by playing the noiseless quantum strategy for $\chsh$ $n$ times independently, by Theorem~\ref{thm:chernoff}, $t$ out of $n$ games are won with high probability. Therefore, $\Q^*(\wchsh^{t/n})=0$.

The results for unital noise and erasure noise follow similarly using Theorem~\ref{thm:chsh-parrep-gen}.
\end{proof}

\section{The noisy entanglement cost of communication protocols}\label{sec:ent-lb}

\subsection{A more resource-efficient protocol for estimating inner products}

The purpose of this subsection is to prove Theorem~\ref{thm:canonnebetter}. First, we have the following lemma which shows that two-way quantum communication protocols can be reduced to estimating real inner products.

\begin{lemma}[Reducing communication to real inner product] \label{lem:reductiontoIP}
    Fix an entanglement-assisted protocol with two-way classical communication and $n$-bit inputs that communicates $C$ bits and produces an output bit $b$. Take $\eps > 0$. For some
    \[
    D=O\!\left(\frac{n2^{2C}}{\eps^2}\right),
    \]
    there exist sub-unit vectors $\{\ket{v_x}\}_{x \in \{0,1\}^n}$ and $\{\ket{w_y}\}_{y \in \{0,1\}^n}$ in $\bbR^D$ such that, for all $x,y$,
    \[|\Pr[b=1|x,y] - 2^C \braket{v_x | w_y}| < \eps.\]
\end{lemma}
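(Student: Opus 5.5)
Start from Lemma~\ref{lem:clevebuhrman}: there are PSD operators $M^{x,z}$ on $E^\A$ and $N^{y,z}$ on $E^\B$, each of spectral norm at most one, with
\[
\Pr[b=1|x,y]=\sum_{z\in\{0,1\}^C}\Tr\!\left(\Psi(M^{x,z}\otimes N^{y,z})\right).
\]
The plan has three steps. First write each summand as an inner product of real vectors in one Hilbert space, whose dimension may depend on the entanglement. Then concatenate over $z$. Finally apply Lemma~\ref{lem:jl-inner-products} to reduce the dimension to $O(n2^{2C}/\eps^2)$.

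For the first step, purify $\Psi$ to $\ket{\psi}$ on $E^\A E^\B R$. Then
\[
\Tr(\Psi(M\otimes N))=\bra{\psi}(M\otimes I)(I\otimes N)\ket{\psi}=\langle a|b\rangle,
\]
where $\ket{a}=(M\otimes I)\ket{\psi}$ and $\ket{b}=(I\otimes N)\ket{\psi}$. This uses that $M$ is Hermitian and that operators on different tensor factors commute. Both vectors have norm at most $1$, since $\|M\|_\infty,\|N\|_\infty\leq1$. The sum over $z$ is real, being a probability, so we may pass to real parts. Identify $\bbC^k\cong\bbR^{2k}$, under which $\mathrm{Re}\langle a|b\rangle$ is the real Euclidean inner product. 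Each summand $\Tr(\Psi(M^{x,z}\otimes N^{y,z}))$ is itself real, because the product of commuting PSD operators is PSD, so taking real parts loses nothing. Now define
\[
\ket{v'_x}=2^{-C/2}\bigoplus_z \ket{a_{x,z}},\qquad \ket{w'_y}=2^{-C/2}\bigoplus_z \ket{b_{y,z}}.
\]
These are sub-unit vectors in $\bbR^{D_0}$ with $\Pr[b=1|x,y]=2^C\langle v'_x|w'_y\rangle$ exactly.

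For the second step, apply Lemma~\ref{lem:jl-inner-products} to the $N=2\cdot2^n$ vectors $\{v'_x\}\cup\{w'_y\}$ with accuracy $\eps'=\eps2^{-C}/2$. This gives sub-unit vectors in dimension
\[
D=O(\log(2^{n+2})/\eps'^2)=O(n2^{2C}/\eps^2),
\]
and all pairwise inner products are preserved to within $\eps'$. Multiplying by $2^C$, the approximation error becomes $\eps/2<\eps$, which is the stated bound.

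The main point requiring care is the first step. The operators must be placed so that Alice's and Bob's vectors live in the same space and give exactly the trace. Using the purification, or equivalently the vectorization $\Psi^{1/2}$ in the Hilbert–Schmidt space, handles this. The realness issue is resolved by the real-part argument above. The dimension reduction itself is routine.
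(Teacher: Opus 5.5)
Your proof is correct and is essentially the paper's. The paper also builds $2^{-C/2}\sum_z\ket{z}\otimes(M^{x,z}\otimes I)\ket{\Psi}$ and $2^{-C/2}\sum_z\ket{z}\otimes(I\otimes N^{y,z})\ket{\Psi}$ from a pure shared state, then applies Lemma~\ref{lem:jl-inner-products} at precision of order $\eps2^{-C}$. The differences are cosmetic: the paper assumes a real pure state and real measurements from the outset instead of purifying and realifying afterwards, and your choice of precision $\eps2^{-C}/2$ yields the strict inequality, whereas the paper's precision $\eps2^{-C}$ together with the non-strict Johnson--Lindenstrauss guarantee literally gives only $\leq\eps$.
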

\begin{proof}
    We can assume without loss of generality that the initial shared entangled state is pure, and also that all measurements as well as the initial state are real-valued. Write it as $\ket{\Psi}_{E^\A E^\B} \in \bbR^m \otimes \bbR^m$. We can assume this vector is real by the standard embedding of complex vectors into real vectors. By Lemma~\ref{lem:clevebuhrman} (and its realification), there exist collections of $m \times m$ positive semidefinite matrices
    \[\{M^{x,z}\}_{x \in \{0,1\}^n, z \in \{0,1\}^C}, \{N^{y,z}\}_{y \in \{0,1\}^n, z \in \{0,1\}^C}\]
    with spectral norm at most one such that, for all $x,y \in \{0,1\}^n$,
    \[\Pr[b=1|x,y] = \sum_z \bra{\Psi} M^{x,z} \otimes N^{y,z} \ket{\Psi}\]
    Writing, for $x,y \in \{0,1\}^n$:
    \[\ket{v'_x} = \frac{1}{\sqrt{2^C}} \sum_z \ket{z} \otimes ((M^{x,z} \otimes I) \ket{\Psi})\]
    \[\ket{w'_y} = \frac{1}{\sqrt{2^C}}  \sum_z \ket{z} \otimes ((I \otimes N^{y,z}) \ket{\Psi})\]
    We see that $\Pr[b=1|x,y] = 2^C \braket{v'_x | w'_y}$. Moreover,
    \[
    \|v'_x\|_2^2
    =2^{-C}\sum_z
    \bra{\Psi}(M^{x,z})^2\otimes I\ket{\Psi}
    \leq1,
    \]
    because $0\preceq M^{x,z}\preceq I$ implies $(M^{x,z})^2\preceq I$. Similarly, $\|w'_y\|_2\leq1$.

    Apply Lemma~\ref{lem:jl-inner-products} to the union
    \[
    \{v'_x:x\in\{0,1\}^n\}\cup\{w'_y:y\in\{0,1\}^n\},
    \]
    which contains $2^{n+1}$ vectors, with approximation parameter $\eps/2^C$. It gives a target dimension
    \[
    D=O\!\left(\frac{\log(2^{n+2})}{(\eps/2^C)^2}\right)
    =O\!\left(\frac{n2^{2C}}{\eps^2}\right)
    \]
    and sub-unit vectors $v_x,w_y\in\bbR^D$ such that, for all $x,y$,
    \[|\braket{v'_x | w'_y} -\braket{v_x | w_y}| < \eps/2^C.\]
    Multiplying this inequality by $2^C$ proves the lemma.
\end{proof}

This shows that the simulation of entangled protocols with classical communication reduces to that of estimating inner products of real vectors, defined formally as follows.
\begin{definition} \label{def:ipn}
    For $D \in \mathbb{N}$ and $\eps, \delta > 0$, the problem $\IP_{D, \eps, \delta}$ is defined as follows. Alice and Bob are given sub-unit vectors $u, v \in \bbR^D$. They are to compute $\langle u,v \rangle$ to within additive precision $\eps$ with failure probability at most $\delta$.
\end{definition}

Theorem~\ref{thm:canonnebetter} will then follow from Lemma~\ref{lem:reductiontoIP} along with the following theorem.
\begin{theorem}\label{thm:ipsimulation}
Let $R^\A,R^\B$ be dependent finite-valued random variables, and write $\rho=\MC(R^\A;R^\B)>0$.
For any integer $d\geq1$ and $0<\eps,\delta<1$, there is a classical SMP protocol for $\IP_{D,\eps,\delta}$ with communication
\[
\frac{2^{O(d)}\log(2/\delta)}{\eps^6\rho^{6d}}
\]
that uses
\[
\frac{d\,2^{O(d)}D^{1/d}\log(2/\delta)}{\eps^6\rho^{6d}}
\]
independent copies of $(R^\A,R^\B)$. The constants implicit in $O$ depend only on the distribution of $(R^\A,R^\B)$, apart from the displayed dependence on $\rho$.
\end{theorem}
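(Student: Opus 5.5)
The protocol follows the sketch given in the techniques section. Fix $m$ minimal with $\binom md\geq D$; since $\binom md\geq (m/d)^d$, we may take $m\leq d\lceil D^{1/d}\rceil$. Embed the coordinates of $u,v$ injectively into $\binom{[m]}d$, padding with zeros. Take the maximal-correlation witnesses $f,g$ from Definition~\ref{def:maximalcorrelation1}. On one block of $m$ copies, Alice computes
\[
Z^\A=\sum_{S}u_S\prod_{j\in S}f(R^\A_j)
\]
and Bob computes $Z^\B$ analogously with $v$ and $g$. Expanding the product and using independence across copies, a term with $S\neq S'$ contains some copy $j$ appearing in only one monomial. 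That copy contributes a factor $\bbE[f]=0$ or $\bbE[g]=0$, so the term vanishes. Matching terms give $\prod_{j\in S}\bbE[f(R^\A_j)g(R^\B_j)]=\rho^d$. Hence $\bbE[Z^\A Z^\B]=\rho^d\langle u,v\rangle$. The same orthogonality, using $\bbE f^2=\bbE g^2=1$, gives $\bbE[(Z^\A)^2]=\|u\|^2\leq1$ and $\bbE[(Z^\B)^2]\leq1$. Corollary~\ref{cor:finite-space-hypercontractivity} then gives $\bbE[(Z^\A)^4],\bbE[(Z^\B)^4]\leq K^d$ with $K=144\max(\bbE f^4,\bbE g^4)$, a constant depending only on the source.

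Next comes clipping and rounding. Let $\tZ^\A,\tZ^\B$ be the clippings of $Z^\A,Z^\B$ to $[-T,T]$. Alice privately samples $a\in\{\pm1\}$ with $\bbE[a\mid \tZ^\A]=\tZ^\A/T$, and Bob samples $b$ likewise. Private rounding is conditionally independent, so $\bbE[T^2ab]=\bbE[\tZ^\A\tZ^\B]$. The clipping bias is
\[
\bigl|\bbE[Z^\A Z^\B]-\bbE[\tZ^\A\tZ^\B]\bigr|\leq \bbE\bigl[|Z^\A Z^\B|\,\mathbbm 1(|Z^\A|>T\text{ or }|Z^\B|>T)\bigr].
\]
I would bound this with Cauchy--Schwarz and Markov on the fourth moments. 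The piece from $|Z^\A|>T$ is at most $\bbE[(Z^\A)^2(Z^\B)^2]^{1/2}\Pr[|Z^\A|>T]^{1/2}$, which is at most $K^{d/2}\cdot K^{d/2}/T^2$; the symmetric piece is identical. The bias is therefore $O(K^d/T^2)$. After dividing by $\rho^d$, we need $K^d/(T^2\rho^d)\leq\eps/2$, so we choose $T^2=\Theta(K^d/(\eps\rho^d))$.

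Each estimate $T^2ab/\rho^d$ lies in $[-T^2/\rho^d,T^2/\rho^d]$. The protocol runs $k$ independent blocks, in each of which Alice and Bob send one bit to the referee, and the referee outputs the average of $T^2a_ib_i/\rho^d$. By Theorem~\ref{thm:chernoff}, error $\eps/2$ with probability $1-\delta$ needs
\[
k=O\bigl(T^4\rho^{-2d}\eps^{-2}\log(2/\delta)\bigr)=O\bigl(K^{2d}\rho^{-4d}\eps^{-4}\log(2/\delta)\bigr).
\]
This is within the claimed $2^{O(d)}\rho^{-6d}\eps^{-6}\log(2/\delta)$; the stated exponents leave slack. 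Communication is $2k$, and the number of source copies is $km=O(d\,D^{1/d}k)$, which matches the statement. The protocol is SMP because each message depends only on that party's input, samples and private coins.

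The main obstacle is the clipping bias, the only place the heavy tails of $Z$ matter. Getting $\rho$ only to the power $d$ in the bias requires the fourth-moment bound of Corollary~\ref{cor:finite-space-hypercontractivity}, which applies since $f(R^\A)$ is a centred, normalized finite-valued variable. One must also check that $f,g$ are taken real: on finite spaces the maximizers in Definition~\ref{def:maximalcorrelation1} are real functions.
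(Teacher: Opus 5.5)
Your proposal is correct and follows the paper's proof step for step: the same protocol as Figure~\ref{fig:IPprotocol}, the same orthogonality computation giving $\bbE[Z^\A Z^\B]=\rho^d\langle u,v\rangle$, Corollary~\ref{cor:finite-space-hypercontractivity} for fourth moments, private sign rounding, Hoeffding, and $m\leq d\lceil D^{1/d}\rceil$.

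The one place you deviate is the clipping bias, and there your estimate is sharper than the paper's.

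\begin{itemize}
\item The paper writes $Z^\A Z^\B-\tZ^\A\tZ^\B=Z^\A(Z^\B-\tZ^\B)+\tZ^\B(Z^\A-\tZ^\A)$. It then applies Cauchy--Schwarz with second moments together with $\bbE[Z^2\mathbbm{1}[|Z|\geq T]]\leq\bbE[Z^4]/T^2$ (Lemma~\ref{lem:protocolok}). This gives a bias of order $\HC^{2d}/T$, which forces $T\sim\eps^{-1}\rho^{-d}$ and hence $k\sim\eps^{-6}\rho^{-6d}$.
\item You bound the bias by $\bbE[|Z^\A Z^\B|\,\mathbbm{1}(|Z^\A|>T\text{ or }|Z^\B|>T)]$. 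You then use Cauchy--Schwarz and Markov on the fourth moments of both variables, which gives order $K^d/T^2$ (note $K=\HC(R^\A,R^\B)^4$).
\item Consequently $T^2\sim\eps^{-1}\rho^{-d}$ already suffices, and $k=O(K^{2d}\eps^{-4}\rho^{-4d}\log(2/\delta))$. This is strictly better than, and implies, the stated $\eps^{-6}\rho^{-6d}$ bound.
\end{itemize}

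The one step you should spell out is the pointwise inequality $|Z^\A Z^\B-\tZ^\A\tZ^\B|\leq|Z^\A Z^\B|\,\mathbbm{1}(|Z^\A|>T\text{ or }|Z^\B|>T)$. Clipping acts as $\tZ=cZ$ with $c\in[0,1]$ and $c=1$ whenever $|Z|\leq T$. Hence $Z^\A Z^\B-\tZ^\A\tZ^\B=(1-c_\A c_\B)Z^\A Z^\B$, which has modulus at most $|Z^\A Z^\B|$ and vanishes unless one of the two variables is actually clipped.
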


We first give the proof of Theorem~\ref{thm:canonnebetter} assuming Theorem~\ref{thm:ipsimulation}.
\begin{proof}[Proof of Theorem~\ref{thm:canonnebetter}]
Fix a constant target approximation error. Lemma~\ref{lem:reductiontoIP} reduces the acceptance probability of the original protocol to an inner product in dimension $D=O(n2^{2C})$, up to half of the target error. Because the inner product is multiplied by $2^C$, it suffices to estimate it to additive error $2^{-C}$ times the remaining constant error.

Apply Theorem~\ref{thm:ipsimulation} with this precision and constant failure probability. Its communication bound is $\rho^{-6d}2^{O(C+d)}$. Its sample bound has the additional factor
\[
dD^{1/d}=O\left(dn^{1/d}2^{2C/d}\right).
\]
Thus the protocol uses $\rho^{-6d}2^{O(C+d)}n^{1/d}$ source samples. The constants implicit in $O$ depend only on the distribution of $(R^\A,R^\B)$ and the target approximation error, apart from the displayed dependence on $\rho$.
\end{proof}

The protocol promised by this theorem is given in Figure~\ref{fig:IPprotocol}, and its parameter choices for $T$ and $k$ are given in Lemma~\ref{lem:IP-parameters}.

\begin{figure}
\begin{tcolorbox}[colback=white,colframe=black,sharp corners,boxrule=0.5pt] 
\textbf{Parameters.}
Natural numbers $m,d,k \geq 1$ and a cutoff $T > 0$.

\medskip

\noindent\textbf{Inputs.}

\begin{itemize}
    \item Letting $D = \binom{m}{d}$, Alice and Bob receive sub-unit vectors $u,v \in \bbR^D$.
    \item They respectively receive $\{R^\A_{i,j}\}_{i \in [k], j \in [m]}$ and $\{R^\B_{i,j}\}_{i \in [k], j \in [m]}$, where the pairs $(R^\A_{i,j},R^\B_{i,j})$ are i.i.d. copies of finite-valued random variables $(R^\A,R^\B)$.
\end{itemize}

\medskip
\noindent\textbf{Protocol.}

\begin{enumerate}

    \item Write $\rho=\MC(R^\A;R^\B)>0$, and take $f \in L^2(R^\A)$ and $g \in L^2(R^\B)$ achieving $\rho$ as in Definition~\ref{def:maximalcorrelation1}. After receiving their inputs, for $i \in [k]$, Alice and Bob compute
    \[Z_i^\A = \sum_{S \in \binom{[m]}{d}} u_{S} \left(\prod_{j \in S} f(R^\A_{i,j})\right),\]
    \[Z_i^\B = \sum_{S\in\binom{[m]}{d}} v_{S} \left(\prod_{j \in S} g(R^\B_{i,j})\right).\]
    \item For every $i$, they set:
    \[\tZ_i^\A = \min(T, \max(-T, Z_i^\A)),\]
    \[\tZ_i^\B = \min(T, \max(-T, Z_i^\B)).\]
    Using private randomness, they sample $a_i, b_i \in \{-1,1\}$ according to the following probabilities: 
    \[\Pr[a_i = 1|\tZ_i^\A] = \frac{\tZ_i^\A + T}{2T} \qquad \Pr[b_i = 1|\tZ_i^\B] = \frac{\tZ_i^\B + T}{2T},\]
    and send the $a_i, b_i$ to the referee. 
    \item The referee's estimate for $\langle u, v\rangle$ is:
    \[\frac{T^2}{k\rho^d} \sum_{i \in [k]} a_i b_i.\]
\end{enumerate}

\end{tcolorbox}
\caption{The SMP protocol for $\IP_{D, \eps, \delta}$ using noisy shared randomness}
\label{fig:IPprotocol}
\end{figure}

We first prove a preliminary lemma which will help us prove the main analysis Lemma~\ref{lem:IP-parameters}.
\begin{lemma} \label{lem:protocolok}
Fix $d\geq1$ and a finite-valued real random variable $X$ satisfying $\bbE[X]=0$ and $\bbE[X^2]=1$. For $m\geq d$, let $X_1,\ldots,X_m$ be i.i.d. copies of $X$. Let
\[
\sum_{S\in\binom{[m]}d}c_S^2\leq1,
\]
where the $c_S$ are real coefficients, and define the scalar random variable
\[
Z=\sum_{S\in\binom{[m]}d}c_S\prod_{i\in S}X_i.
\]
Then
\[
\lim_{T\to\infty}\sup_{m\geq d}\sup_{\sum_Sc_S^2\leq1}
\bbE\left[Z^2\mathbbm{1}[|Z|\geq T]\right]=0,
\]
where $\mathbbm{1}[E]$ is the indicator of the event $E$.
\end{lemma}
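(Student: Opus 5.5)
The plan is to control the truncated second moment uniformly by a fourth-moment bound, using Corollary~\ref{cor:finite-space-hypercontractivity}. The basic inequality is that on the event $|Z|\geq T$ we have $Z^2\leq Z^4/T^2$. Hence
\[
\bbE\left[Z^2\mathbbm{1}[|Z|\geq T]\right]\leq \frac{\bbE[Z^4]}{T^2}.
\]
It therefore suffices to bound $\bbE[Z^4]$ by a constant that depends only on $d$ and the law of $X$, and not on $m$ or on the coefficients.

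First I compute the second moment. Since the $X_i$ are independent, centred and of unit variance, distinct multilinear monomials $\prod_{i\in S}X_i$ are orthonormal in $L^2$. To see this, take $S\neq S'$ and pick an index in the symmetric difference; conditioning on everything else kills the expectation. It follows that
\[
\bbE[Z^2]=\sum_S c_S^2\leq1.
\]

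Next I apply Corollary~\ref{cor:finite-space-hypercontractivity} to the homogeneous degree-$d$ multilinear polynomial $Z$. This gives
\[
\bbE[Z^4]\leq (2\sqrt3)^{4d}\,\bbE[X^4]^d\,\bbE[Z^2]^2\leq (2\sqrt3)^{4d}\,\bbE[X^4]^d=:K_{d,X}.
\]
The constant $K_{d,X}$ is finite because $X$ is finite-valued, and it is independent of $m$ and of $(c_S)$.

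Combining the two steps yields
\[
\sup_{m\geq d}\ \sup_{\sum_S c_S^2\leq 1}\bbE\left[Z^2\mathbbm{1}[|Z|\geq T]\right]\leq \frac{K_{d,X}}{T^2}.
\]
The right-hand side tends to $0$ as $T\to\infty$, which is the claim.

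There is no real obstacle here. The only step needing care is the uniformity in $m$, and Corollary~\ref{cor:finite-space-hypercontractivity} supplies it directly because its constant is dimension-free. I would also note that the explicit rate $O(T^{-2})$ is what the later parameter analysis (Lemma~\ref{lem:IP-parameters}) will actually use when choosing $T$.
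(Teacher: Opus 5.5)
Your proposal is correct and follows the paper's own proof essentially line for line. Orthogonality of the monomials gives $\bbE[Z^2]=\sum_S c_S^2\le 1$, and Corollary~\ref{cor:finite-space-hypercontractivity} then bounds $\bbE[Z^4]$ independently of $m$ and $(c_S)$; combined with $Z^2\le Z^4/T^2$ on $\{|Z|\ge T\}$, this yields the uniform $O(T^{-2})$ bound.
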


\begin{proof}
Expanding $Z^2$ and taking expectation shows why orthogonality enters here. If $S\neq S'$, then some index belongs to exactly one of $S,S'$, so the corresponding factor has mean zero and the cross term vanishes. Hence
\[
\bbE[Z^2]=\sum_Sc_S^2\leq1.
\]
Corollary~\ref{cor:finite-space-hypercontractivity} therefore gives
\[
\bbE[Z^4]
\leq
\left(2\sqrt{3}\right)^{4d}\bbE[X^4]^d.
\]
When the event $|Z|\geq T$ happens, we have $Z^2\leq Z^4/T^2$. Consequently,
\[
\bbE\left[Z^2\mathbbm{1}[|Z|\geq T]\right]
\leq
\frac{\bbE[Z^4]}{T^2}
\leq
\frac{\left(2\sqrt{3}\right)^{4d}\bbE[X^4]^d}{T^2}.
\]
This bound is uniform in $m$ and the coefficients $(c_S)_S$, and it converges to zero as $T\to\infty$.
\end{proof}

We can now complete the analysis of the protocol in Figure~\ref{fig:IPprotocol}, and subsequently the proof of Theorem~\ref{thm:ipsimulation}.
\begin{lemma} \label{lem:IP-parameters}
Fix finite-valued random variables $(R^\A,R^\B)$ with maximal correlation $\rho>0$, and let $f,g$ be the maximal-correlation witnesses used in Figure~\ref{fig:IPprotocol}. Set
\[
\HC(R^\A,R^\B)=2\sqrt{3}\max\left\{\bbE[f(R^\A)^4]^{1/4},\bbE[g(R^\B)^4]^{1/4}\right\}.
\]
This is a finite constant depending only on the fixed source and the chosen maximizing witnesses $f,g$.
For any $0<\eps,\delta<1$, take
\[
T=\frac{4\HC(R^\A,R^\B)^{2d}}{\eps\rho^d}, \qquad
k=\left\lceil\frac{2048\,\HC(R^\A,R^\B)^{8d}\log(2/\delta)}{\eps^6\rho^{6d}}\right\rceil.
\]
Then the estimate in Figure~\ref{fig:IPprotocol} has additive error at most $\eps$ with probability at least $1-\delta$.
\end{lemma}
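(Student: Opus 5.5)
The argument splits the error into a bias term from clipping and a sampling term, and I bound each by $\eps/2$.

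\textbf{Unbiasedness before clipping.} For a single repetition $i$, the pairs $(R^\A_{i,j},R^\B_{i,j})$ are independent across $j$. The witnesses satisfy $\bbE f=\bbE g=0$, $\bbE f^2=\bbE g^2=1$ and $\bbE[f(R^\A)g(R^\B)]=\rho$. Expanding the product gives $\bbE[Z_i^\A Z_i^\B]=\sum_{S,S'}u_Sv_{S'}\bbE[\prod_{j\in S}f\prod_{j\in S'}g]$. If $S\neq S'$, some index lies in exactly one of the two sets. The factor at that index is then a lone centred $f$ or $g$, independent of the rest, so the term vanishes. Matching pairs contribute $\rho^d$, so $\bbE[Z_i^\A Z_i^\B]=\rho^d\langle u,v\rangle$.

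\textbf{Effect of the randomized rounding.} Conditional on the $R$'s, $a_i$ and $b_i$ are independent with means $\tZ_i^\A/T$ and $\tZ_i^\B/T$. Hence $\bbE[T^2a_ib_i]=\bbE[\tZ_i^\A\tZ_i^\B]$.

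\textbf{Bias from clipping.} I write
\[
Z^\A Z^\B-\tZ^\A\tZ^\B=(Z^\A-\tZ^\A)Z^\B+\tZ^\A(Z^\B-\tZ^\B).
\]
Clipping gives $|Z^\A-\tZ^\A|\le|Z^\A|\mathbbm{1}[|Z^\A|\ge T]$ and $|\tZ^\A|\le|Z^\A|$. By Cauchy--Schwarz, the bias is at most
\[
\sqrt{\bbE[Z_\A^2\mathbbm{1}[|Z^\A|\ge T]]}\,\sqrt{\bbE[(Z^\B)^2]}+(\A\leftrightarrow\B).
\]
Following the proof of Lemma~\ref{lem:protocolok}, and using Corollary~\ref{cor:finite-space-hypercontractivity} with the fourth-moment constant absorbed into $\HC$, we get $\bbE[Z^2\mathbbm{1}[|Z|\ge T]]\le \HC^{4d}/T^2$. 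Since $\bbE[Z^2]\le1$, the bias on $\bbE[\tZ\tZ]$ is at most $2\HC^{2d}/T$. Dividing by $\rho^d$ and substituting $T=4\HC^{2d}/(\eps\rho^d)$ gives a bias of at most $\eps/2$ for $\frac{T^2}{\rho^d}\bbE[a_ib_i]$ against $\langle u,v\rangle$.

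\textbf{Concentration.} The summands $\frac{T^2}{k\rho^d}a_ib_i$ are independent across $i$ and lie in an interval of width $2T^2/(k\rho^d)$. Theorem~\ref{thm:chernoff} with $t=\eps/2$ bounds the failure probability by
\[
2\exp\!\left(-\frac{\eps^2k\rho^{2d}}{8T^4}\right).
\]
Substituting $T^4=256\HC^{8d}/(\eps^4\rho^{4d})$, this is at most $\delta$ once $k\ge 2048\,\HC^{8d}\log(2/\delta)/(\eps^6\rho^{6d})$, which is exactly the stated $k$. Combining the bias and sampling bounds, the estimate has additive error at most $\eps$ with probability at least $1-\delta$.

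The main obstacle is the clipping bias. It requires a fourth-moment bound that is uniform in $m$ and in the coefficients; Corollary~\ref{cor:finite-space-hypercontractivity} supplies this. The remaining care is in tracking constants, in particular checking that $\HC$ dominates $(2\sqrt3)^{d}\bbE[f^4]^{d/4}$ after squaring.
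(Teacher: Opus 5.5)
Your proposal is correct and follows the paper's proof essentially step for step. It uses the orthogonality computation $\bbE[Z_i^\A Z_i^\B]=\rho^d\langle u,v\rangle$, the Cauchy--Schwarz bound on the clipping bias via Lemma~\ref{lem:protocolok} giving $2\HC^{2d}/T=\eps\rho^d/2$, unbiasedness of the randomized sign rounding, and Hoeffding with exponent $\eps^2k\rho^{2d}/(8T^4)$; the only cosmetic difference is that you split $Z^\A Z^\B-\tZ^\A\tZ^\B$ as $(Z^\A-\tZ^\A)Z^\B+\tZ^\A(Z^\B-\tZ^\B)$ instead of the paper's mirror-image decomposition.
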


\begin{proof}
Because $f(R^\A)$ and $g(R^\B)$ satisfy $\bbE[f(R^\A)]=\bbE[g(R^\B)]=0$, $\bbE[f(R^\A)^2]=\bbE[g(R^\B)^2]=1$, and $\bbE[f(R^\A)g(R^\B)]=\rho$, independence across $j$ for the source pairs $(R^\A_{i,j},R^\B_{i,j})$ gives
\begin{equation}\label{eq:ZAB-bound}
\bbE[Z_i^\A Z_i^\B]
=\rho^d\sum_{S\in\binom{[m]}d}u_{S}v_{S}
=\rho^d\langle u,v\rangle.
\end{equation}
Indeed, every term indexed by distinct sets $S\neq S'$ contains a zero-expectation factor that appears only once and therefore has expectation zero. The same orthogonality calculation gives
\begin{equation}\label{eq:Z2-bound}
\bbE[(Z_i^\A)^2]\leq1, \qquad \bbE[(Z_i^\B)^2]\leq1.
\end{equation}

By definition, $|\tZ_i^\A|,|\tZ_i^\B|\leq T$, so the rounding probabilities in Figure~\ref{fig:IPprotocol} lie in $[0,1]$. Clipping can only decrease absolute value, so
\[
\bbE[(\tZ_i^\A)^2]\leq\bbE[(Z_i^\A)^2]\leq1,
\qquad
\bbE[(\tZ_i^\B)^2]\leq\bbE[(Z_i^\B)^2]\leq1.
\]
Moreover, $Z_i^\A-\tZ_i^\A=0$ when $|Z_i^\A|<T$, whereas $|Z_i^\A-\tZ_i^\A|\leq|Z_i^\A|$ when $|Z_i^\A|\geq T$. The same statements hold on Bob's side. Consequently,
\begin{equation}\label{eq:Z2-indicator}
(Z_i^\A-\tZ_i^\A)^2
\leq (Z_i^\A)^2\mathbbm{1}[|Z_i^\A|\geq T],
\qquad
(Z_i^\B-\tZ_i^\B)^2
\leq (Z_i^\B)^2\mathbbm{1}[|Z_i^\B|\geq T].
\end{equation}
We now calculate:
\begin{align*}
& \left|\bbE[Z_i^\A Z_i^\B]-\bbE[\tZ_i^\A\tZ_i^\B]\right| \\
& \leq \left|\bbE[Z_i^\A(Z_i^\B-\tZ_i^\B)] + \bbE[\tZ_i^\B(Z_i^\A-\tZ_i^\A)]\right| \\
& \overset{\textcolor{red}{a}}{\leq} \bbE[(Z_i^\A)^2]^{1/2}\bbE[(Z_i^\B-\tZ_i^\B)^2]^{1/2}
+\bbE[(\tZ_i^\B)^2]^{1/2}\bbE[(Z_i^\A-\tZ_i^\A)^2]^{1/2}\\
& \overset{\textcolor{red}{b}}{\leq} \bbE[(Z_i^\B-\tZ_i^\B)^2]^{1/2}
+\bbE[(Z_i^\A-\tZ_i^\A)^2]^{1/2}\\
& \overset{\textcolor{red}{c}}{\leq} \bbE\left[(Z_i^\B)^2\mathbbm{1}[|Z_i^\B|\geq T]\right]^{1/2}
+\bbE\left[(Z_i^\A)^2\mathbbm{1}[|Z_i^\A|\geq T]\right]^{1/2} \\
& \overset{\textcolor{red}{d}}{\leq} \frac{(2\sqrt{3})^{2d}\bbE[f(R^\A)^4]^{d/2}}{T} + \frac{(2\sqrt{3})^{2d}\bbE[g(R^\B)^4]^{d/2}}{T},
\end{align*}
where \textcolor{red}{$a$} is the Cauchy--Schwarz inequality, \textcolor{red}{$b$} uses the second-moment bounds above, \textcolor{red}{$c$} uses \eqref{eq:Z2-indicator}, and \textcolor{red}{$d$} follows by taking square roots in Lemma~\ref{lem:protocolok}. By the definition of $\HC(R^\A,R^\B)$, each of the final two terms is at most $\HC(R^\A,R^\B)^{2d}/T$. With the stated choice of $T$, we therefore have
\[ \left|\bbE[Z_i^\A Z_i^\B]-\bbE[\tZ_i^\A\tZ_i^\B]\right| \leq \eps\rho^d/2.\]
Using \eqref{eq:ZAB-bound}, this gives
\begin{equation}\label{eq:tZAB-bound}
\left|\frac{1}{\rho^d}\bbE[\tZ_i^\A\tZ_i^\B] - \langle u, v\rangle\right| \leq \eps/2. 
\end{equation}

Conditional on $\tZ_i^\A$ and $\tZ_i^\B$, $a_i,b_i$ are independent and satisfy
\[
\bbE[Ta_i|\tZ_i^\A]=\tZ_i^\A,
\qquad
\bbE[Tb_i|\tZ_i^\B]=\tZ_i^\B.
\]
Thus conditional independence and the tower property give
\[
\bbE\left[\frac{T^2}{\rho^d}a_ib_i\right]
=\frac{1}{\rho^d}\bbE[\tZ_i^\A\tZ_i^\B].
\]
The source blocks and private rounding randomness are independent across $i$, so the products $a_ib_i$ are independent and lie in $[-1,1]$. Theorem~\ref{thm:chernoff} therefore gives
\[
\Pr\left[
\left|
\frac{T^2}{k\rho^d}\sum_{i=1}^k a_ib_i
-\frac{1}{\rho^d}\bbE[\tZ_1^\A\tZ_1^\B]
\right|>\frac\eps2
\right]
\leq
2\exp\left(-\frac{k\eps^2\rho^{2d}}{8T^4}\right) \leq \delta
\]
for the stated choice of $T$ and $k$. Finally, \eqref{eq:tZAB-bound} gives
\[ \Pr\left[
\left|
\frac{T^2}{k\rho^d}\sum_{i=1}^k a_ib_i - \langle u, v\rangle\right| > \eps\right] \leq \delta. \qedhere\]
\end{proof}

\begin{proof}[Proof of Theorem~\ref{thm:ipsimulation}]
Set $m=d\lceil D^{1/d}\rceil$. The standard bound $\binom md\geq(m/d)^d$ gives $\binom md\geq D$. Alice and Bob pad their input vectors with zeros to dimension $\binom md$ and apply the protocol in Figure~\ref{fig:IPprotocol}. Padding does not change their inner product.

The protocol sends two bits for each $i\in[k]$, so its communication is $2k$. It uses one source sample for each pair $(i,j)\in[k]\times[m]$, for a total of $km$ samples. Since $D\geq1$,
\[
m=d\lceil D^{1/d}\rceil\leq2dD^{1/d}.
\]
The communication and sample bounds now follow from Lemma~\ref{lem:IP-parameters}.
\end{proof}

\subsection{Noisy entanglement lower bound for Equality}
We now turn to proving Theorem~\ref{thm:lowerboundnumberofcopies}, which gives a lower bound on the amount of noisy shared randomness needed to compute the Equality function with constant communication. We first have the following simple packing lemma. 
\begin{lemma} \label{lem:diagonalpackingbound}
Let $k,q \in \mathbb{N}$ and take $\eps, R, T > 0$. Take $0 \leq a_1,...,a_q \leq 1$ and set $\Lambda = \diag(a_1,...,a_q)$. 
Suppose there exist collections of vectors $\{v_l\}_{l \in [k]}, \{w_l\}_{l \in [k]}$ in $\bbC^q$, all of norm at most $R$, such that:
\begin{enumerate}
    \item For all $l$, 
    \[|\braket{v_l | \Lambda | w_l}| \geq T + \eps\]
    \item For all $l \neq l'$, 
    \[|\braket{v_l | \Lambda | w_{l'}}| \leq T.\]
\end{enumerate}
Then, setting
\[q' = \#\left\{i \in [q] : a_i > \frac{\eps}{4R^2}\right\},\]
we have:
    \[k \leq \left(1+\frac{4R^2}{\eps}\right)^{2q'}.\]
\end{lemma}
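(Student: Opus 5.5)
The plan is to truncate $\Lambda$ to its large diagonal entries and then apply the packing estimate of Lemma~\ref{lem:packingbound} to suitably rescaled vectors supported on those coordinates.

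Let $I=\{i\in[q]: a_i>\eps/(4R^2)\}$, so $|I|=q'$. Set $\Lambda'=\diag(a_i\mathbbm{1}[i\in I])$ and $\Lambda''=\Lambda-\Lambda'$. Every entry of $\Lambda''$ lies in $[0,\eps/(4R^2)]$. By Cauchy--Schwarz,
\[
|\langle v_l|\Lambda''|w_{l'}\rangle|\leq \frac{\eps}{4R^2}\|v_l\|_2\|w_{l'}\|_2\leq \frac\eps4
\]
for all $l,l'$. Consequently $|\langle v_l|\Lambda'|w_l\rangle|\geq T+3\eps/4$ and $|\langle v_l|\Lambda'|w_{l'}\rangle|\leq T+\eps/4$ for $l\neq l'$. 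This leaves a gap of $\eps/2$ after truncation.

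Next, define $x_l=\Lambda'^{1/2}w_l/R$. These vectors live in a $q'$-dimensional coordinate subspace $\cong\bbC^{q'}$, and they are sub-unit because $a_i\le1$. Similarly set $y_l=\Lambda'^{1/2}v_l/R$, which is also sub-unit. Then $\langle v_l|\Lambda'|w_{l'}\rangle=R^2\langle y_l|x_{l'}\rangle$.

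For $l\neq l'$, the reverse triangle inequality gives
\[
R^2|\langle y_l|x_l-x_{l'}\rangle|\geq \bigl|\langle v_l|\Lambda'|w_l\rangle\bigr|-\bigl|\langle v_l|\Lambda'|w_{l'}\rangle\bigr|\geq \eps/2.
\]
Since $\|y_l\|_2\le1$, Cauchy--Schwarz then yields $\|x_l-x_{l'}\|_2\geq \eps/(2R^2)$. Applying Lemma~\ref{lem:packingbound} in dimension $q'$ with separation $\eps/(2R^2)$ gives
\[
k\leq(1+4R^2/\eps)^{2q'},
\]
which is exactly the claim.

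The only delicate point is choosing the threshold and the scaling so that the constants match. Specifically, the discarded part must cost at most $\eps/4$ on each side, leaving a gap of $\eps/2$, and splitting $\Lambda'$ symmetrically as $\Lambda'^{1/2}\cdot\Lambda'^{1/2}$ keeps both families sub-unit after dividing by $R$. If $q'=0$, the argument shows that no two distinct indices can exist, so $k\le1$, which is consistent with the bound.
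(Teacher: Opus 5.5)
Your proof is correct and follows essentially the same route as the paper: discard the diagonal entries at most $\eps/(4R^2)$ at a total cost of $\eps/2$ in the gap, use Cauchy--Schwarz to get pairwise separation $\eps/(2R^2)$ for rescaled truncated vectors in a $q'$-dimensional coordinate subspace, and apply Lemma~\ref{lem:packingbound}. The only cosmetic difference is that you split $\Lambda'$ symmetrically as $\Lambda'^{1/2}\Lambda'^{1/2}$ and pack the vectors $\Lambda'^{1/2}w_l/R$, whereas the paper writes $\Lambda_1=\Lambda_1P$ and packs the projections $Pw_l/R$.
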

\begin{proof}
We can assume that the $a_i$ are sorted in decreasing order without loss of generality. Write:
\begin{align*}
\Lambda_1 & = \diag(a_1, \dots, a_{q'}, 0, \dots,0) \\
\Lambda_2 & = \diag(0, \dots,0, a_{q'+1}, \dots, a_{q}) \\
P & = \diag(\underbrace{1,\ldots,1}_{q'},
      \underbrace{0,\ldots,0}_{q-q'}).
\end{align*}
By definition, $\Lambda_1 = \Lambda-\Lambda_2$, and $\Lambda_1=\Lambda_1P=P\Lambda_1$. Therefore, for any vectors $u,u'$, $\langle u',\Lambda_1u\rangle=\langle\Lambda_1u',Pu\rangle$. The Cauchy--Schwarz inequality then gives $|\langle u',\Lambda_1u\rangle|\leq\|\Lambda_1u'\|_2\|Pu\|_2$. Applying this with $u'=v_l$ and $u=w_l-w_{l'}$ for $l \neq l'$, we get
    \begin{align*}
        \|Pw_l - Pw_{l'}\|_2 &\geq \frac{|\braket{v_l | \Lambda_1 | (w_l - w_{l'})}|}{R}\\
        & \overset{\textcolor{red}{a}}{\geq} \frac{|\braket{v_l | \Lambda | (w_l - w_{l'})}| - |\braket{v_l | \Lambda_2 | (w_l - w_{l'})}|}{R}\\
        & \overset{\textcolor{red}{b}}{\geq} \frac{|\braket{v_l | \Lambda | w_l}| - |\braket{v_l | \Lambda | w_{l'}}| - \|v_l\|_2 \|\Lambda_2(w_l - w_{l'})\|_2}{R}\\
        & \overset{\textcolor{red}{c}}{\geq} \frac{T+\eps-T - \frac{\eps}{4R^2}\|v_l\|_2 \|w_l - w_{l'}\|_2}{R}\\
        & \overset{\textcolor{red}{d}}{\geq} \frac{\eps - \frac\eps{4R^2}\cdot R\cdot 2R}{R}\\
        & = \frac{\eps}{2R}.
    \end{align*}
    Here \textcolor{red}{$a$} is the triangle inequality, and \textcolor{red}{$b$} uses the triangle inequality on the first term and the Cauchy--Schwarz inequality on the second. Step \textcolor{red}{$c$} uses the two hypotheses on $\Lambda$ and the bound $\|\Lambda_2\|_\infty \leq \eps/(4R^2)$. Finally, \textcolor{red}{$d$} uses $\|v_l\|_2,\|w_l\|_2,\|w_{l'}\|_2\leq R$ and the triangle inequality.
    
    This calculation shows that $\{Pw_l/R\}_{l \in [k]}$ is a set of sub-unit vectors in $\bbC^{q'}$ whose pairwise distances are at least $\eps/(2R^2)$. The conclusion follows from Lemma~\ref{lem:packingbound}.
\end{proof}

We can complete the proof of Theorem~\ref{thm:lowerboundnumberofcopies}.
\begin{proof}[Proof of Theorem~\ref{thm:lowerboundnumberofcopies}]
    Let $q_\A=\mathrm{rank}(\Psi_{E^\A})$, $q_\B=\mathrm{rank}(\Psi_{E^\B})$, and $q_0=\max\{q_\A^2,q_\B^2\}$. Let $C=O(1)$ be a universal bound on the communication of the protocols in the family. We can assume that the protocols all communicate exactly $C$ bits without loss of generality. Fix $n\in\mathbb{N}$, write $m=m(n)$ for the number of copies of $\Psi$ used to compute $\EQ_n$, and let $\rho=\MC(\Psi)$. Let $\{M^{x,z}\}_{x \in \{0,1\}^n, z \in \{0,1\}^C}$ and $\{N^{y,z}\}_{y \in \{0,1\}^n, z \in \{0,1\}^C}$ be the operators given by Lemma~\ref{lem:clevebuhrman} for this protocol.

    Apply Theorem~\ref{thm:efronstein} and extend its singular-vector families to orthonormal bases of the two weighted operator spaces. If the two spaces have different dimensions, pad the shorter coefficient vector with zeros. In this way, the one-copy correlation form is represented on $\bbC^{q_0}$ by a diagonal matrix with entries
    \[
    1=\rho_1\geq\rho_2\geq\cdots\geq\rho_{q_0}\geq0,
    \qquad \rho_2=\rho,
    \]
    where the extra entries introduced by padding are zero. Taking tensor products gives padded coefficient vectors $v_{x,z},w_{y,z}\in\bbC^{q_0^m}$ for which
    \begin{align*}
    \Pr[b=1|x,y]
    &= \sum_{z \in \{0,1\}^C}\Tr\!\left(\Psi^{\otimes m}((M^{x,z})^\dagger \otimes N^{y,z})\right) \\
    &= \sum_{\substack{z \in \{0,1\}^C \\ \bs \in [q_0]^m} }
    \left(\prod_{i \in [m]} \rho_{s_i} \right)
    (v_{x,z})^*_{\bs}(w_{y,z})_{\bs}.
    \end{align*}
    Hence, if we set:
    \[\lambda_{\bs} = \prod_{i \in [m]} \rho_{s_i},\qquad v_x = \bigoplus_{z \in \{0,1\}^C} v_{x,z},\qquad w_y = \bigoplus_{z \in \{0,1\}^C} w_{y,z},\]
    \[\Lambda = I_{2^C} \otimes \diag(\lambda_{\bs})_{\bs \in [q_0]^m},\]
    we get that:
    \[\Pr[b=1|x,y] = \braket{v_x | \Lambda | w_y}\]
    where $b$ is the output of the protocol on inputs $(x,y)$.
    
    Since $0\preceq M^{x,z},N^{y,z}\preceq I$, orthonormality in the weighted inner products gives
    \begin{align*}
    \|v_{x,z}\|_2^2
    &=\langle M^{x,z},M^{x,z}\rangle_{\Psi_{E^\A}^{\otimes m}}
    =\Tr\!\left(\Psi_{E^\A}^{\otimes m}(M^{x,z})^2\right)\leq1,\\
    \|w_{y,z}\|_2^2
    &=\langle N^{y,z},N^{y,z}\rangle_{\Psi_{E^\B}^{\otimes m}}
    =\Tr\!\left(\Psi_{E^\B}^{\otimes m}(N^{y,z})^2\right)\leq1.
    \end{align*}
    Therefore,
    \[ \|v_x\|_2^2 = \sum_{z\in\{0,1\}^C}\|v_{x,z}\|_2^2 \leq 2^C, \qquad \|w_y\|_2^2 \leq 2^C\]
    for all $x,y$.

    We now use the correctness of the protocol. We take the output bit $b=1$ to mean that the inputs are equal. Bounded-error correctness therefore gives $\Pr[b=1|x,x]\geq\frac23$ for every $x\in\bits^n$, whereas $\Pr[b=1|x,y]\leq\frac13$ whenever $x\neq y$. Equivalently,
\[
\braket{v_x|\Lambda|w_x}\geq\frac23,
\qquad
\braket{v_x|\Lambda|w_y}\leq\frac13
\quad\text{for }x\neq y.
\]
We can therefore apply Lemma~\ref{lem:diagonalpackingbound} to the $2^n$ pairs of vectors indexed by $x\in\bits^n$, with $R=2^{C/2}, T=\frac13, \eps=\frac13$.
The threshold for a diagonal entry of $\Lambda$ to be counted by the packing lemma is $\frac{\eps}{4R^2} = \frac{1}{12\cdot2^C}$.
Set
\[
N_{\mathrm{large}}
=
\#\left\{
\bs\in[q_0]^m:
\lambda_{\bs}>\frac{1}{12\cdot2^C}
\right\}.
\]
Each $\lambda_{\bs}$ occurs once in every one of the $2^C$ blocks of $\Lambda$. Thus the number of diagonal entries of $\Lambda$ above the packing threshold is $2^CN_{\mathrm{large}}$. Lemma~\ref{lem:diagonalpackingbound} gives
\[
2^n
\leq
\left(1+12\cdot2^C\right)^{2^{C+1}N_{\mathrm{large}}}.
\]
Taking logs yields
\begin{equation}\label{eq:large-modes-lower}
N_{\mathrm{large}}
\geq
\frac{n}{2^{C+1}\log(1+12\cdot2^C)}.
\end{equation}

It remains to upper bound $N_{\mathrm{large}}$ in terms of $m$. For $\bs\in[q_0]^m$, let
\[
|\bs|
=
\#\{i\in[m]:s_i\neq1\}.
\]
Since $\rho_1=1$, and $\MC(\Psi)=\rho_2 \geq \rho_3 \geq \ldots$, $\lambda_{\bs}\leq\rho_2^{|\bs|}$. If $\lambda_{\bs}>1/(12\cdot2^C)$, then
\[
|\bs|\log(1/\rho_2)
<
C+\log 12
<
C+4.
\]
Set
\[
L
=
\left\lceil
\frac{C+4}{\log(1/\rho)}
\right\rceil.
\]
Since $C$ and $\rho$ are constants, $L$ is a constant. Every sequence counted by $N_{\mathrm{large}}$ has at most $L$ entries different from $1$. If exactly $t$ entries differ from $1$, there are $\binom mt$ ways to choose their positions and $(q_0-1)^t$ ways to choose their values. Consequently,
\[
N_{\mathrm{large}}
\leq
\sum_{t=0}^L\binom mt(q_0-1)^t
\leq
(L+1)(q_0-1)^Lm^L.
\]

Combining this with the lower bound in \eqref{eq:large-modes-lower} gives
\[
\frac{n}{2^{C+1}\log(1+12\cdot2^C)}
\leq
(L+1)(q_0-1)^Lm^L.
\]
Therefore,
\[
m
\geq
\left(
\frac{n}{2^{C+1}\log(1+12\cdot2^C)(L+1)(q_0-1)^L}
\right)^{1/L}
=
\Omega(n^{1/L}).
\]
Since $C,q_0,L$ are constants, this proves $m=n^{\Omega(1)}$.
\end{proof}

\section*{Acknowledgments}
We thank Penghui Yao, Honghao Fu and Minglong Qin for helpful discussions. The second author acknowledges the support of the Natural Sciences and Engineering Research Council of Canada (NSERC) grants ALLRP-578455-2022 and RGPIN-2023-03731.
\paragraph{Statement of AI use.} During the development of this work, the authors used ChatGPT 5.6 Sol (at High and Extra High effort) and Codex as research and writing assistants. These tools were used to explore proof strategies, check intermediate calculations, identify possible gaps, and generate computational code. The AI assistants materially affected the proofs in Sections~\ref{sec:noisyentanglement}, \ref{sec:parrep}, and Appendix~\ref{app:chsh-cert}; substantive human guidance was involved in getting the results of Sections~\ref{sec:parrep}. All AI-generated material was critically reviewed and corrected by the authors. The authors independently verified the mathematical claims, proofs, numerical results, and citations appearing in the final manuscript and take full responsibility for its contents.

\bibliographystyle{alpha}
\bibliography{ref}

\appendix

\section{Certified upper bound on a collision value of CHSH}\label{app:chsh-cert}

In this section, we carry out the optimization from Lemma~\ref{lem:chsh-col} for $p=1.36=\frac{34}{25}$, i.e., $\mu=0.36$. We first prove some lemmas about the maximizers of the optimization problem.

\begin{lemma}\label{lem:interior}
Every maximizer of $\|T_\chsh z\|_2^2$ subject to
$\sum_i z_i^p=1$ has all four coordinates strictly positive.
\end{lemma}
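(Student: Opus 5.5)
The strategy is a first-order perturbation argument at the boundary, combined with a direct comparison that handles the one degenerate configuration. I will work with the objective of Lemma~\ref{lem:chsh-col},
\[
f(z)=\tfrac12\Bigl(\textstyle\sum_i z_i^2+(z_0+z_1)(z_2+z_3)\Bigr),
\]
on the compact set $\{z\ge 0:\sum_i z_i^p=1\}$ with $p=\frac{34}{25}$, so a maximizer exists. Both $f$ and the constraint are invariant under the symmetries $z_0\leftrightarrow z_1$, $z_2\leftrightarrow z_3$ and $(z_0,z_1)\leftrightarrow(z_2,z_3)$. So it suffices to rule out a maximizer $z$ with $z_0=0$.

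\emph{Step 1: the degenerate case $z_2+z_3=0$.} Here $z_0=0$ forces $z=(0,1,0,0)$, and $f(z)=\frac12$. I compare this with the uniform point $z_i=4^{-1/p}$, where
\[
f=\tfrac12\bigl(4\cdot 4^{-2/p}+4\cdot4^{-2/p}\bigr)=4^{1-2/p}.
\]
The inequality $4^{1-2/p}>\frac12$ is equivalent to $1-\frac2p>-\frac12$, that is, to $p>\frac43$. This holds since $1.36>4/3$, so such a $z$ is not a maximizer.

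\emph{Step 2: the generic case $z_0=0$ and $s:=z_2+z_3>0$.} For small $t>0$, let $z_t$ agree with $z$ except that its first coordinate is $t$. Set
\[
z'=\frac{z_t}{(1+t^p)^{1/p}},
\]
which is feasible. Since $f$ is homogeneous of degree $2$,
\[
f(z')=\frac{f(z)+\frac12 st+\frac12 t^2}{(1+t^p)^{2/p}}
= f(z)+\tfrac12 st-\tfrac{2}{p}f(z)\,t^p+O(t^2+t^{2p}).
\]
Because $p>1$, the term $t^p$ is $o(t)$. Hence the linear gain $\frac12 st$ dominates, and $f(z')>f(z)$ for all sufficiently small $t$. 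This contradicts maximality.

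The only delicate point is Step 1. There the first-order argument gives nothing, because the coordinate being increased has no partner mass, so a global comparison is needed. It works only because $p>4/3$, and that threshold is where the specific value $p=1.36$ enters. Everything else is routine expansion.
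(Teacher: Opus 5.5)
Your proof is correct. Step~2 is the paper's argument: raise $z_0$ to $t$, gain $\frac12 st$ at first order, and note that renormalizing costs only $O(t^p)=o(t)$. You depart from the paper only in the degenerate configuration $z=(0,1,0,0)$.

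There the paper stays local by switching on a coordinate in the \emph{opposite} pair, say $z_2=t$. The cross term $(z_0+z_1)(z_2+z_3)$ then equals $t$, so the unnormalized objective is $\frac12+\frac12 t+\frac12 t^2$. After dividing by $(1+t^p)^{2/p}$ this is $\frac12+\frac12 t-O(t^p)$, a strict increase for small $t$. So your remark that the first-order argument ``gives nothing'' there, and that a global comparison is needed, is mistaken. It gives nothing only if you insist on perturbing $z_0$, whose partner pair has no mass; $z_2$ has partner mass $z_0+z_1=1$.

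Your comparison $f(0,1,0,0)=\frac12<4^{1-2/p}$ is valid, but it ties the lemma to $p>\frac43$. At $p=\frac43$, the value behind the $\mu=\frac13$ row of Table~\ref{tab:c_mu}, the two sides are equal and the comparison is inconclusive. The paper's local perturbation proves the lemma for every $p>1$. For the case $p=\frac{34}{25}$ that the appendix actually certifies, either argument suffices.
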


\begin{proof}
Suppose first, for example, that $z_0=0$ and $z_2+z_3>0$.  Replace
$z_0$ by $\eps>0$, leaving the other coordinates unchanged.  Before restoring the constraint this changes the
objective
\[ \frac12\left(\sum_{i=0}^3z_i^2 + (z_0+z_1)(z_2+z_3)\right)\]
by
\[
 \frac{z_2+z_3}{2}\eps+\frac{\eps^2}{2}.
\]
The $p$th power of the $\ell_p$ norm changes from $1$ to
$1+\eps^p$.  Thus the perturbed vector must be multiplied by
$(1+\eps^p)^{-1/p}$ to restore the constraint, and the squared
$\ell_2$ objective is multiplied by
\[
 (1+\eps^p)^{-2/p}
 =
 1-\frac2p\eps^p+O(\eps^{2p}).
\]
Because $p>1$, the strictly positive term of order $\eps$ dominates
the loss of order $\eps^p$ caused by renormalization.  The normalized
perturbation therefore has a strictly larger objective for all sufficiently
small $\eps>0$.

The same argument applies to a zero coordinate in either pair whenever the
sum of the coordinates in the opposite pair is positive.  If the entire
opposite pair is zero, the constraint implies that at least one coordinate in
the same pair is positive.  Turning on either coordinate in the opposite pair
then produces a strictly positive linear cross term, and the same
renormalization argument applies. Hence no boundary point can maximize the
objective.
\end{proof}

At an interior stationary point, the Lagrange equations, after absorbing a
positive constant into $\lambda$, are
\begin{align}
 z_0+\frac{z_2+z_3}{2}&=\lambda z_0^{p-1},
 &z_1+\frac{z_2+z_3}{2}&=\lambda z_1^{p-1},\label{eq:lagrange-a}\\
 z_2+\frac{z_0+z_1}{2}&=\lambda z_2^{p-1},
 &z_3+\frac{z_0+z_1}{2}&=\lambda z_3^{p-1}.\label{eq:lagrange-b}
\end{align}
Multiplying these equations by $z_0,z_1,z_2,z_3$, respectively, and
summing gives
\[
 2\|T_\chsh z\|_2^2=\lambda\sum_i z_i^p.
\]
In particular, at a normalized stationary point,
\begin{equation}\label{eq:norm-half-lambda}
 \|T_\chsh z\|_2^2=\frac \lambda2.
\end{equation}

\begin{lemma}\label{lem:classification}
Up to swapping coordinates inside either of the pairs $(z_0,z_1)$ and $(z_2,z_3)$, swapping the two pairs (which leaves the objective function unchanged), and
rescaling, every positive stationary point of the optimization in Lemma~\ref{lem:chsh-col} has one of the following forms:
\begin{enumerate}
 \item $(u,u,u,u)$;
 \item $(u,u,t,1)$ with $t>1$;
 \item $(t,1,t,1)$ with $t>1$.
\end{enumerate}
\end{lemma}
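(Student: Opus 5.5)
Set $S=z_0+z_1$ and $U=z_2+z_3$. In \eqref{eq:lagrange-a}, $z_0$ and $z_1$ are both positive roots of the same scalar equation
\[ \phi(z)=\lambda z^{p-1}-z = \frac{U}{2}, \]
and in \eqref{eq:lagrange-b}, $z_2$ and $z_3$ are both positive roots of $\phi(z)=\frac{S}{2}$. The plan is to use the shape of $\phi$ to limit how many distinct values each pair can take, and then use the coupling between the two pairs to rule out the remaining configurations.

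\textbf{Shape of $\phi$.} Since $1<p<2$, the exponent $p-1$ lies in $(0,1)$, so $\phi$ is strictly concave on $(0,\infty)$ with $\phi(0)=0$. It increases up to a unique maximiser $z^*=\left(\lambda(p-1)\right)^{1/(2-p)}$ and decreases afterwards. Because the right-hand side $U/2$ is positive, each equation has at most two positive roots, one on each side of $z^*$. Hence within each pair the coordinates are either equal or form the two distinct roots of $\phi=c$ for the appropriate constant $c$. After swapping within a pair and rescaling, an unequal pair can be written as $(t,1)$ with $t>1$.

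\textbf{Case analysis.}
\begin{itemize}
\item[(a)] Both pairs are equal, $(u,u,v,v)$. Then $\phi(u)=v$ and $\phi(v)=u$. Subtracting gives $\lambda(u^{p-1}-v^{p-1})=0$, so $u=v$. This is form 1.
\item[(b)] Exactly one pair is unequal. Swapping the two pairs if necessary puts it in the form $(u,u,t,1)$, which is form 2.
\item[(c)] Both pairs are unequal. I expect this to be the main obstacle. The two pairs are the two roots of $\phi=U/2$ and of $\phi=S/2$ respectively. I would first show $S=U$. Write $\sigma(c)$ for the sum of the two roots of $\phi=c$. From the concave geometry, as $c$ rises toward $\phi(z^*)$ the two roots move toward each other, and I would aim to show that $\sigma$ is strictly monotone in $c$. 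This should follow by implicit differentiation, using $\phi'(z)=\lambda(p-1)z^{p-2}-1$ and the convexity of $z^{p-2}$. Granting monotonicity, the relations $S=\sigma(U/2)$ and $U=\sigma(S/2)$ with $\sigma$ monotone force $S=U$. Then both pairs solve the same equation, so they are the same unordered pair $\{t,1\}$. Up to symmetry this is either $(t,1,t,1)$ or $(t,1,1,t)$, and the second becomes the first after swapping coordinates inside the second pair. This is form 3.
\end{itemize}

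\textbf{Fallback for case (c).} If monotonicity of $\sigma$ is awkward, I would argue directly. Suppose $S>U$. The roots of $\phi=S/2$ then lie strictly between those of $\phi=U/2$, which gives bounds on $S$ and $U$ in terms of these roots. I would try to derive a contradiction from these bounds together with the fixed-point relations between $S$ and $U$.

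The remaining work in the classification is routine bookkeeping of the allowed symmetries.
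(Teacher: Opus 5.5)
Your reduction to the scalar equation $\phi(z)=\lambda z^{p-1}-z=c$ is fine, and so are cases (a) and (b). The key step of case (c) does not work, however. Strict monotonicity of $\sigma$ forces $S=U$ only if $\sigma$ is \emph{increasing}. If $\sigma$ is decreasing and $S>U$, you get $U=\sigma(S/2)<\sigma(U/2)=S$, which is no contradiction. The two relations only say that $(U/2,S/2)$ is a $2$-cycle of the decreasing map $c\mapsto\sigma(c)/2$, and decreasing maps can have $2$-cycles off the diagonal (think of $c\mapsto K-c$).

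And $\sigma$ is not increasing here. As $c\to0^+$ the roots tend to $0$ and $\lambda^{1/(2-p)}$, so $\sigma\to\lambda^{25/16}$. As $c\to\phi(z^*)$ both roots tend to $z^*$, so $\sigma\to 2(p-1)^{1/(2-p)}\lambda^{1/(2-p)}\approx 0.405\,\lambda^{25/16}$. So the implicit-differentiation argument you sketch could at best show that $\sigma$ decreases, which leaves case (c) open. Your fallback paragraph does not name any relation that would produce the contradiction.

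What is missing is the right monotone quantity. Adding the two Lagrange equations within each pair gives $\lambda(z_0^{p-1}+z_1^{p-1})=S+U=\lambda(z_2^{p-1}+z_3^{p-1})$. In your notation this is $\tau(U/2)=\tau(S/2)$, where $\tau(c)=(2c+\sigma(c))/\lambda$ is the sum of the $(p-1)$-th powers of the two roots of $\phi=c$. So you need $\tau$ to be injective (for instance $\sigma'>-2$ throughout), not merely $\sigma$ monotone.

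To prove this, pass to the variable $w=z^{p-1}$. The two roots are where the line $\lambda w-c$ meets the convex curve $h(w)=w^{1/(p-1)}$, so $\lambda$ is the divided difference of $h$ over $[w_1,w_2]$. Then $\tau'(c)>0$ is equivalent to $\tfrac12\bigl(h'(w_1)+h'(w_2)\bigr)>\lambda=\frac{1}{w_2-w_1}\int_{w_1}^{w_2}h'$. This is the strict trapezoid inequality for the strictly convex function $h'(w)=\tfrac{25}{9}w^{16/9}$.

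This is exactly the paper's argument: equal sums of $(p-1)$-th powers, equal divided differences $\lambda$, and strict monotonicity of the divided difference in the half-width at fixed midpoint. It is also where the specific exponent matters, namely $1/(p-1)>2$, which your plan never uses.
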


\begin{proof}
If both of the pairs $(z_0,z_1)$ and $(z_2,z_3)$ are constant (meaning $z_0=z_1$ and $z_2=z_3$), write the vector as $(u,u,v,v)$.  Equations
\eqref{eq:lagrange-a}--\eqref{eq:lagrange-b} give
\[
 u+v=\lambda u^{p-1}=\lambda v^{p-1},
\]
so $u=v$.

If exactly one pair is constant, the symmetries and homogeneity allow us to
write the vector as $(u,u,t,1)$ with $t>1$ (we can always take $z_3=1$, since we only care about solutions up to scaling).

It remains to consider the case in which both pairs are non-constant.
Adding the first two Lagrange equations and then the last two shows that
\[
 z_0^{p-1}+z_1^{p-1}=z_2^{p-1}+z_3^{p-1}.
\]
Subtracting the two equations within the first pair and within the second
pair gives
\[
 z_0-z_1
 =
 \lambda\bigl(z_0^{p-1}-z_1^{p-1}\bigr),
 \qquad
 z_2-z_3
 =
 \lambda\bigl(z_2^{p-1}-z_3^{p-1}\bigr).
\]
Both pairs are non-constant, so the two differences on the right are
nonzero.  Dividing and then writing
$z_i=(z_i^{p-1})^{1/(p-1)}$ yields
\[
 \lambda
 =
 \frac{
 (z_0^{p-1})^{1/(p-1)}-(z_1^{p-1})^{1/(p-1)}
 }{
 z_0^{p-1}-z_1^{p-1}
 }
 =
 \frac{
 (z_2^{p-1})^{1/(p-1)}-(z_3^{p-1})^{1/(p-1)}
 }{
 z_2^{p-1}-z_3^{p-1}
 }.
\]

We now explain why the preceding common sum and common divided difference
determine the two pairs up to ordering.  Consider any positive pair whose sum
is $2m$.  After ordering the pair, write it as $m+d,m-d$ with
$0<d<m$.  Its divided difference for the function
$x\mapsto x^{1/(p-1)}$ is
\[
 \frac{
 (m+d)^{1/(p-1)}-(m-d)^{1/(p-1)}
 }{2d}
 =
\frac1{2d}
 \int_{m-d}^{m+d}
 \frac1{p-1}x^{1/(p-1)-1}\,dx.
\]
Differentiating this expression with respect to $d$ gives
\[
 \frac1d\left[
 \frac{
 \frac1{p-1}(m-d)^{1/(p-1)-1}
 +
 \frac1{p-1}(m+d)^{1/(p-1)-1}
 }2
 -
 \frac1{2d}
 \int_{m-d}^{m+d}
 \frac1{p-1}x^{1/(p-1)-1}\,dx
 \right].
\]
Here
\[
 \frac1{p-1}x^{1/(p-1)-1}
 =
 \frac{25}{9}x^{16/9}
\]
is strictly convex on $(0,\infty)$.  The strict trapezoidal inequality for
a strictly convex function says that its average over
$[m-d,m+d]$ is strictly smaller than the average of its two endpoint
values.  Therefore the displayed derivative is positive, so the divided
difference is strictly increasing in $d$.

Apply this to the two pairs $\bigl(z_0^{p-1},z_1^{p-1}\bigr), \bigl(z_2^{p-1},z_3^{p-1}\bigr)$. They have the same sum, and their divided differences are both equal to
$\lambda$.  The strict monotonicity just proved forces the same value of
$d$ and hence
\[
 \{z_0^{p-1},z_1^{p-1}\}
 =
 \{z_2^{p-1},z_3^{p-1}\}.
\]
Since $x\mapsto x^{p-1}$ is injective on $(0,\infty)$,
\[
 \{z_0,z_1\}=\{z_2,z_3\}.
\]
Ordering within the two pairs consistently and scaling the smaller
coordinate to $1$ gives $(t,1,t,1)$ with $t>1$.
\end{proof}

We now bound the objective on each of the three stationary-point families in
Lemma~\ref{lem:classification}.  Two of the bounds use rigorous one-variable
interval calculations.  We first explain exactly what is meant by this.

\paragraph{Exact rational interval arithmetic.}
Every decimal endpoint appearing below is interpreted as an exact rational
number; for example, $78.4530$ means $784530/10000$.  Ordinary floating-point
arithmetic is not used to establish any inequality.

The only non-rational operations that occur are positive rational powers.
Suppose that $x=A/B$ is a nonnegative rational number and that we need to
enclose $x^{a/b}$, where $a$ and $b$ are positive integers.  The verification
program fixes $N=90$ and finds, by integer binary search, the largest integer
$m$ such that
\[
 m^bB^a\leq 2^{Nb}A^a.
\]
It follows, using only this integer comparison, that
\[
 \frac{m}{2^N}\leq x^{a/b}<\frac{m+1}{2^N}.
\]
If equality holds in the first comparison, the two endpoints can of course
be taken equal.  Thus every fractional power is enclosed between dyadic
rational numbers whose validity is checked using exact integer arithmetic.

The program then applies the usual outward-rounded interval rules.  For
example, if $x\in[x_-,x_+]$ and $y\in[y_-,y_+]$, then
\[
 x+y\in[x_-+y_-,x_++y_+],
\]
and the product is enclosed by taking the minimum and maximum of the four
endpoint products.  Division is multiplication by
$[y_-,y_+]^{-1}=[1/y_+,1/y_-]$ whenever $0<y_-\leq y_+$.  Since the powers
used below are increasing on the nonnegative real numbers, an interval power
is enclosed by applying the preceding dyadic construction to its two
endpoints.  Negative rational powers are enclosed by first bounding the
corresponding positive power and then taking reciprocals.

For an interval of possible values of $t$, these rules produce a rational
interval containing every possible value of the relevant left-hand side.
If the lower endpoint of the resulting interval is positive, the expression
is positive throughout the entire $t$-interval; if its upper endpoint is
negative, the expression is negative throughout.  When neither conclusion
follows, the program bisects the $t$-interval into two rational subintervals
and repeats the calculation.  In this way it constructs a finite rational
partition on which every asserted sign is rigorous.

There is one small refinement for
\[
 \lambda=\frac{t-1}{t^{p-1}-1}.
\]
For $0<p-1<1$, the denominator divided by $t-1$ is the secant slope of the
strictly concave function $t\mapsto t^{p-1}$ between $1$ and $t$.  This
secant slope is decreasing in $t$, so $\lambda$ is increasing.  The program
therefore encloses $\lambda$ on a $t$-interval by evaluating its two endpoints
with directed rational rounding.  This is substantially tighter than treating
the two occurrences of $t$ in the quotient as independent intervals.

\begin{lemma}\label{lem:Tchsh-uniform-case}
If a stationary point has the form $(u,u,u,u)$, then
\begin{equation}\label{eq:uniform}
 \frac{\|T_\chsh z\|_2^2}{\|z\|_p^2}
 =4^{1-2/p}=4^{-8/17}<0.520809.
\end{equation}
\end{lemma}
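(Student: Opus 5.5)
The plan is a direct substitution into the closed form for $\|T_\chsh z\|_2^2$ derived in the proof of Lemma~\ref{lem:chsh-col}, followed by a rigorous numerical bound on a single rational power. Since the ratio $\|T_\chsh z\|_2^2/\|z\|_p^2$ is invariant under rescaling $z$, we may take $u=1$. This reduces the claim to evaluating one explicit expression at $z=(1,1,1,1)$.

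The numerator comes from the formula
\[
\|T_\chsh z\|_2^2=\tfrac12\Bigl(\sum_{i=0}^3 z_i^2+(z_0+z_1)(z_2+z_3)\Bigr).
\]
At $z=(1,1,1,1)$ this gives $\tfrac12(4+2\cdot 2)=4$. The denominator is $\|z\|_p^2=(4)^{2/p}$. The ratio is therefore $4^{1-2/p}$. With $p=34/25$ we get $2/p=50/34=25/17$, so $1-2/p=-8/17$, which gives the middle equality of \eqref{eq:uniform}.

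For the numerical inequality, note that $4^{-8/17}=2^{-16/17}$. It suffices to show $2^{16/17}>1/0.520809$. Equivalently, clearing the fractional exponent, we need $2^{16}\cdot 0.520809^{17}>1$, where $0.520809=520809/10^6$ is read as an exact rational.

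This check can be done with exact integer arithmetic: verify $65536\cdot 520809^{17}>10^{102}$. The dyadic-enclosure procedure described in the exact rational interval arithmetic paragraph would also work, since it encloses $4^{8/17}$ and then takes a reciprocal. As a sanity check, $\ln(4^{-8/17})=-(16/17)\ln 2\approx-0.65238$, so $4^{-8/17}\approx0.52080$. The margin below $0.520809$ is therefore small but positive.

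The only delicate step is this last numerical comparison. The margin is on the order of $10^{-5}$, so it must be certified by the exact integer inequality above rather than by a floating-point estimate. Everything else is routine substitution.
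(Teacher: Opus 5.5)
Your proof is correct and follows the paper's argument. Homogeneity lets you take $u=1$, and the numerator is $4$: the paper sees this from the fact that every row of $T_\chsh$ sums to $1$, so $T_\chsh$ fixes $(u,u,u,u)$, while you read it off the closed form for $\|T_\chsh z\|_2^2$. This gives $4^{1-2/p}=4^{-8/17}$, and your exact integer check $65536\cdot 520809^{17}>10^{102}$ is an equally rigorous substitute for the paper's dyadic enclosure $4^{-8/17}<0.520808006$.

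One minor correction: since $4^{-8/17}\approx 0.5208080$, the margin below $0.520809$ is about $10^{-6}$ rather than $10^{-5}$. This only reinforces your point that a floating-point estimate would not suffice.
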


\begin{proof}
The ratio is homogeneous, so the value of the positive number $u$ does not
matter.  Since every row of $T_\chsh$ sums to $1$,
\[
 T_\chsh(u,u,u,u)=(u,u,u,u).
\]
Consequently,
\[
 \|T_\chsh(u,u,u,u)\|_2^2=4u^2,
 \qquad
 \|(u,u,u,u)\|_p^2=(4u^p)^{2/p}=4^{2/p}u^2.
\]
Dividing gives $4^{1-2/p}=4^{-8/17}$.  The exact rational-power enclosure
described above gives
\[
 4^{-8/17}<0.520808006<0.520809.
\qedhere \]
\end{proof}

\begin{lemma}\label{lem:Tchsh-one-constant-pair-case}
If a positive stationary point has, up to the symmetries and rescaling in
Lemma~\ref{lem:classification}, the form $(u,u,t,1)$ with $t>1$, then
\begin{equation}\label{eq:one-pair-bound}
 \frac{\|T_\chsh z\|_2^2}{\|z\|_p^2}<0.550769.
\end{equation}
\end{lemma}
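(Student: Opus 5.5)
We reduce the stationarity conditions for the family $(u,u,t,1)$ to a single equation in $t$. Then we express the objective as a one-variable function of $t$ and bound it over the admissible range of $t$ by the exact rational interval arithmetic described above.

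\textbf{Step 1: eliminate $\lambda$ and $u$.} With $z=(u,u,t,1)$ and $p=34/25$, the two equations in \eqref{eq:lagrange-b} read
\[
t+u=\lambda t^{p-1},\qquad 1+u=\lambda.
\]
Subtracting them gives $\lambda=\frac{t-1}{t^{p-1}-1}$, which is exactly the quantity singled out in the interval-arithmetic paragraph, and then $u=\lambda-1$. The pair of equations in \eqref{eq:lagrange-a} coincide and give
\[
u+\frac{t+1}{2}=\lambda u^{p-1}.
\]
Substituting $u=\lambda(t)-1$ leaves a single equation $\Phi(t)=0$, where
\[
\Phi(t)=\lambda(t)-1+\frac{t+1}{2}-\lambda(t)\bigl(\lambda(t)-1\bigr)^{p-1}.
\]
Positivity of $u$ also requires $\lambda(t)>1$.

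\textbf{Step 2: express the objective.} By homogeneity and \eqref{eq:norm-half-lambda}, at such a point
\[
\frac{\|T_\chsh z\|_2^2}{\|z\|_p^2}=\frac{\lambda}{2}\cdot\frac{\sum_i z_i^p}{\|z\|_p^2}=\frac{\lambda}{2}\,\|z\|_p^{p-2},
\]
with $\|z\|_p^p=2u^p+t^p+1$. This is an explicit function $R(t)$ of $t$ alone. The goal is to show $R(t)<0.550769$ at every root of $\Phi$ with $t>1$. Alternatively, one can evaluate the ratio directly as $\frac12\bigl(2u^2+t^2+1+2u(t+1)\bigr)/(2u^p+t^p+1)^{2/p}$; this is equivalent and may give tighter intervals.

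\textbf{Step 3: localize the roots.} For large $t$ we have $\lambda(t)\sim t^{2-p}/1$, so $u\sim t^{16/25}$. Then $\lambda u^{p-1}\sim t^{16/25}\cdot t^{0.64\cdot 0.36}$ grows much more slowly than $(t+1)/2$, so $\Phi(t)>0$ for large $t$. I would prove an explicit threshold $t_{\max}$ beyond which $\Phi>0$ by crude monotone bounds. I would also treat $t\to1^+$ separately: $\lambda(t)\to 1/(p-1)=25/9$, so $u\to 16/9$, and the sign of $\Phi$ near $1$ is determined by a Taylor expansion. On the compact range $[1+\eta,t_{\max}]$, the bisection procedure certifies $\Phi\neq0$ except on small subintervals. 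On each of those subintervals, I would enclose $R(t)$ and check that its upper endpoint is below $0.550769$.

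\textbf{Main obstacle.} The main difficulty is the endpoint behaviour. Near $t=1$, both $\lambda$ and $\Phi$ are $0/0$-type expressions, where naive interval enclosures blow up. Large $t$ is outside the reach of finite bisection. For $t$ near $1$, I would handle $\lambda$ with the monotonicity remark already in the paper, and either show $\Phi$ has constant sign on $(1,1+\eta]$ or directly bound $R(t)$ there, noting that $R(1^+)$ is near the $(u,u,1,1)$ value and hence well below the target. For large $t$, I would use elementary power comparisons to show no roots exist.
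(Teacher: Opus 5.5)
Your proposal is correct and is essentially the paper's proof. The paper eliminates $\lambda=\frac{t-1}{t^{p-1}-1}$ and $u=\lambda-1$ the same way, and disposes of $1<t\le 2$ (via $\lambda>25/9$ and monotonicity of $\lambda(\lambda-1)^{p-1}-(\lambda-1)$ in $\lambda$) and $t\ge 1000$ (via $u<\lambda$ and $2\lambda^p/t<1$) analytically. It then certifies the sign change on $[2,1000]$ by rational interval bisection down to $78.4530<t<78.4532$, and bounds $\frac{1+u}{2(2u^p+t^p+1)^{8/17}}$, which is your $\frac{\lambda}{2}\|z\|_p^{p-2}$, over that bracket.

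One caveat: that formula, like your claimed equivalence with the direct ratio, holds only at zeros of $\Phi$. So your fallback of directly bounding $R(t)$ near $t=1$ fails with it: its limit is $\frac{25}{18}\bigl(2(16/9)^{34/25}+2\bigr)^{-8/17}\approx 0.581>0.550769$, whereas the direct ratio tends to about $0.506$. Use your constant-sign option instead, which works since $\Phi(1^+)=\frac{25}{9}\bigl(1-(16/9)^{9/25}\bigr)\approx-0.64$.
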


\begin{proof}
Substituting $(u,u,t,1)$ into the Lagrange equations gives
\[
 u+\frac{t+1}{2}=\lambda u^{p-1},
 \qquad
 t+u=\lambda t^{p-1},
 \qquad
 1+u=\lambda.
\]
Subtracting the third equation from the second gives
\[
 t-1=\lambda(t^{p-1}-1).
\]
Since $t>1$, both sides of the denominator below are positive, and hence
\[
 \lambda=\frac{t-1}{t^{p-1}-1},
 \qquad
 u=\lambda-1=\frac{t-1}{t^{p-1}-1}-1.
\]
Substitution into the first Lagrange equation shows that $t$ must satisfy
\begin{equation}\label{eq:one-pair-root}
 \frac{t-1}{t^{p-1}-1}
 \left(\frac{t-1}{t^{p-1}-1}-1\right)^{p-1}
 -
 \left(\frac{t-1}{t^{p-1}-1}-1\right)
 -
 \frac{t+1}{2}
 =0.
\end{equation}
We first deal analytically with the ranges close to $1$ and close to infinity.
This leaves a compact interval bounded away from the removable singularity at
$t=1$ for the interval calculation.

\begin{claim}\label{clm:one-pair-near-one}
The left-hand side of \eqref{eq:one-pair-root} is strictly positive for every
$1<t\leq2$.
\end{claim}

\begin{proof}
Strict concavity of $t^{p-1}=t^{9/25}$ gives the strict tangent-line bound
\[
 t^{p-1}-1<(p-1)(t-1)
\]
for every $t>1$.  Therefore
\[
 \lambda=\frac{t-1}{t^{p-1}-1}>\frac1{p-1}=\frac{25}{9},
 \qquad
 u=\lambda-1>\frac{16}{9}.
\]

Consider the part of the left-hand side of \eqref{eq:one-pair-root} that
depends on $\lambda$:
\[
 \lambda(\lambda-1)^{p-1}-(\lambda-1).
\]
Its derivative with respect to $\lambda$ is
\[
 (\lambda-1)^{p-1}
 +(p-1)\lambda(\lambda-1)^{p-2}
 -1.
\]
For $\lambda\geq25/9$, one has $\lambda-1\geq16/9>1$.  The first term
in this derivative is therefore greater than $1$, while the second term is
positive.  Hence the displayed expression is strictly increasing in
$\lambda$ on the entire range under consideration.  It follows that
\[
 \lambda(\lambda-1)^{p-1}-(\lambda-1)
 >
 \frac{25}{9}\left(\frac{16}{9}\right)^{9/25}
 -\frac{16}{9}.
\]
The last quantity is greater than $3/2$.  Indeed, this is equivalent to
\[
 \left(\frac{16}{9}\right)^{9/25}>\frac{59}{50},
\]
or, after raising both sides to the power $25/9$,
\[
 \frac{16}{9}>\left(\frac{59}{50}\right)^{25/9}.
\]
Because $25/9=2+7/9$ and $x^{7/9}$ is strictly concave,
\[
 \left(\frac{59}{50}\right)^{25/9}
 =
 \left(\frac{59}{50}\right)^2
 \left(1+\frac{9}{50}\right)^{7/9}
 <
 \left(\frac{59}{50}\right)^2\frac{57}{50}
 =
 \frac{198417}{125000}
 <
 \frac{16}{9}.
\]
Finally, $(t+1)/2\leq3/2$ for $1<t\leq2$.  Thus the positive part of
the left-hand side of \eqref{eq:one-pair-root} is strictly greater than
$3/2$, while the final term subtracted from it is at most $3/2$.  The whole
left-hand side is therefore strictly positive.
\end{proof}

\begin{claim}\label{clm:one-pair-large-t}
The left-hand side of \eqref{eq:one-pair-root} is strictly negative for every
$t\geq1000$.
\end{claim}

\begin{proof}
Since $u=\lambda-1<\lambda$, the left-hand side of
\eqref{eq:one-pair-root} is strictly smaller than
\[
 \lambda^p-\frac t2.
\]
Moreover,
\[
 \lambda
 =
 \frac{t-1}{t^{p-1}-1}
 =
 t^{2-p}\frac{1-t^{-1}}{1-t^{1-p}}
 <
 \frac{t^{2-p}}{1-t^{1-p}}.
\]
It follows that
\[
 \frac{2\lambda^p}{t}
 <
 2t^{p(2-p)-1}(1-t^{1-p})^{-p}
 =
 2t^{-(p-1)^2}(1-t^{1-p})^{-p}.
\]
Both $t^{-(p-1)^2}$ and $(1-t^{1-p})^{-p}$ are decreasing for $t>1$:
the first assertion is immediate, while $t^{1-p}$ decreases, so
$1-t^{1-p}$ increases and its negative power decreases.  The right-hand
side is therefore largest on $[1000,\infty)$ at $t=1000$.  The exact
rational-power procedure described above gives
\[
 2(1000)^{-81/625}
 \bigl(1-(1000)^{-9/25}\bigr)^{-34/25}
 <0.920<1.
\]
Hence $2\lambda^p/t<1$, or equivalently $\lambda^p<t/2$.  The left-hand
side of \eqref{eq:one-pair-root} is consequently strictly negative.
\end{proof}

It remains to examine $2\leq t\leq1000$.  A single exact rational interval
calculation, performed by the adaptive subdivision procedure described above,
encloses the sign change as follows:
\[
 \begin{array}{c|c}
 \text{range of }t&
 \text{certified sign of the left-hand side of
 \eqref{eq:one-pair-root}}\\
 \hline
 {[2,78.4530]}&\text{positive}\\
 {[78.4532,1000]}&\text{negative}.
 \end{array}
\]
The small interval $(78.4530,78.4532)$ is deliberately left unresolved:
an interval containing a zero cannot have a strictly positive or strictly
negative enclosure.  Claims~\ref{clm:one-pair-near-one} and
\ref{clm:one-pair-large-t}, together with this calculation, prove that every
solution with $t>1$ lies in
\[
 78.4530<t<78.4532.
\]
This is stronger than merely finding a numerical zero near $78.4531$: the
sign certificates rule out any additional solutions outside the displayed
bracket.

Since $\lambda$ is increasing in $t$, directed rational evaluation of
\[
 u=\frac{t-1}{t^{p-1}-1}-1
\]
at the endpoints of the bracket gives
\[
 19.3338<u<19.3341.
\]
More precisely, the certified enclosure is
\[
 19.3339328<u<19.3339618.
\]

We now convert the unnormalized stationary point into the scale-invariant
objective value.  Its $\ell_p$ normalization factor is
\[
 \bigl(2u^p+t^p+1\bigr)^{1/p}.
\]
In general, if an unnormalized stationary vector is divided by a positive
factor $s$, then the linear left-hand sides of its Lagrange equations are
divided by $s$, whereas its terms $z_i^{p-1}$ are divided by $s^{p-1}$.
Thus its Lagrange multiplier is multiplied by $s^{p-2}$.  Here this gives
the normalized multiplier
\[
 \lambda\bigl(2u^p+t^p+1\bigr)^{(p-2)/p}
 =
 \lambda\bigl(2u^p+t^p+1\bigr)^{-8/17}.
\]
The third unnormalized stationarity equation gives $\lambda=1+u$.
Equation~\eqref{eq:norm-half-lambda} applied to the normalized point therefore
gives
\[
 \frac{\|T_\chsh z\|_2^2}{\|z\|_p^2}
 =
 \frac{1+u}
 {2\bigl(2u^p+t^p+1\bigr)^{8/17}}.
\]
Finally, apply the same outward-rounded rational interval operations to this
expression over the full certified intervals for $t$ and $u$.  The resulting
upper endpoint is less than $0.550768129$, and hence the ratio is strictly
less than $0.550769$, as required.
\end{proof}

\begin{lemma}\label{lem:Tchsh-two-nonconstant-pairs-case}
If a positive stationary point has, up to the symmetries and rescaling in
Lemma~\ref{lem:classification}, the form $(t,1,t,1)$ with $t>1$, then
\begin{equation}\label{eq:two-pair-bound}
 \frac{\|T_\chsh z\|_2^2}{\|z\|_p^2}<0.547454.
\end{equation}
\end{lemma}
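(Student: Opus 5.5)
Substituting $(t,1,t,1)$ into the Lagrange equations \eqref{eq:lagrange-a}--\eqref{eq:lagrange-b} collapses the four equations to two:
\[
 t+\frac{t+1}{2}=\lambda t^{p-1},
 \qquad
 1+\frac{t+1}{2}=\lambda .
\]
The second equation gives $\lambda=(t+3)/2$ explicitly. Substituting this into the first shows that $t$ must be a root of
\[
 h(t)=\frac{t+3}{2}\,t^{p-1}-\frac{3t+1}{2},\qquad t>1 .
\]
This is simpler than \eqref{eq:one-pair-root}, because there is no removable singularity at $t=1$ and no nested power. I will first locate all roots of $h$ on $(1,\infty)$, and then evaluate the scale-invariant objective there, exactly as in Lemma~\ref{lem:Tchsh-one-constant-pair-case}.

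\textbf{Locating the roots.} Near $t=1$ we have $h(1)=0$ and
\[
 h'(1)=\tfrac12+2(p-1)-\tfrac32=2(p-1)-1<0
\]
since $p-1=9/25$. So $h<0$ just to the right of $1$. I plan to certify negativity on a whole interval $(1,t_0]$ analytically by the tangent-line bound $t^{p-1}<1+(p-1)(t-1)$. This bound gives
\[
 h(t)<\tfrac12\bigl[(t+3)(1+(p-1)(t-1))-(3t+1)\bigr],
\]
and the right-hand side is a quadratic vanishing at $t=1$ with negative slope there. It is therefore negative up to its second root, which is $t=1+(1-2(p-1))/(p-1)=1+7/9$ with $p-1=9/25$. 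That handles $(1,16/9]$ cleanly.

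For large $t$, $h(t)\sim \tfrac12 t^{p}$ dominates $\tfrac32 t$, so $h>0$ eventually. I will certify positivity on $[t_1,\infty)$ using
\[
 h(t)\geq \tfrac12 t^{p}-\tfrac32 t-\tfrac12 .
\]
This is positive once $t^{p-1}>3+t^{-1}$, for instance for $t\geq 3^{25/9}+1$ (roughly $22$). Any explicit bound checked by the exact rational-power enclosure suffices.

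On the remaining compact interval, I will run the same adaptive rational interval bisection as before. It should certify a single sign change, giving a tiny bracket $t\in(t_-,t_+)$ and excluding all other roots.

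\textbf{Evaluating the objective.} The normalization factor is $(2t^p+2)^{1/p}$. Using \eqref{eq:norm-half-lambda} with the rescaling of $\lambda$ by $s^{p-2}$, exactly as in the previous lemma, the ratio is
\[
 \frac{\|T_\chsh z\|_2^2}{\|z\|_p^2}=\frac{t+3}{4\,(2t^p+2)^{8/17}} .
\]
An outward-rounded enclosure of this expression over the certified bracket should give an upper endpoint below $0.547454$. As a sanity check I can also compute $\|T_\chsh z\|_2^2$ directly: with $z=(t,1,t,1)$ one has $T_\chsh z=\tfrac12(2t,2,t+1,t+1)$.

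The main obstacle is the middle interval computation: making sure the bisection resolves a unique root and that the bracket is tight enough for the final bound. A monotonicity observation may help here. Since $\lambda=(t+3)/2$ is explicit and increasing, the objective can be enclosed by endpoint evaluation rather than naive interval propagation, which keeps the enclosure sharp.
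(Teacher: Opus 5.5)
Your proposal is correct and follows essentially the same route as the paper. It uses the same reduction to $\lambda=(t+3)/2$ and $\frac{3t+1}{t+3}=t^{p-1}$, and handles the tails analytically: your tangent-line bound near $t=1$ and your $t^p>3t+1$ estimate for large $t$ play the role of the paper's derivative estimate on $(1,1.001]$ and its $24^9>3^{25}$ check. The middle range is then settled by certified interval bisection (the paper gets the bracket $12.5360<t<12.5363$), followed by outward-rounded evaluation of $\frac{t+3}{4(2t^p+2)^{8/17}}$.

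Two small slips, neither fatal:
\begin{itemize}
\item Your quadratic is $\frac12(t-1)\bigl((p-1)(t+3)-2\bigr)$, whose second root is $23/9$, not $16/9$. Your near-$1$ claim on $(1,16/9]$ therefore remains valid, just weaker than it could be.
\item Monotonicity of $\lambda$ alone does not justify endpoint evaluation of the normalized objective, which also contains the decreasing factor $(2t^p+2)^{-8/17}$. Plain outward-rounded interval evaluation over such a narrow bracket suffices, as in the paper.
\end{itemize}
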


\begin{proof}
For $(t,1,t,1)$, the Lagrange equations reduce to
\[
 \frac{3t+1}{2}=\lambda t^{p-1},
 \qquad
 \frac{t+3}{2}=\lambda.
\]
Eliminating $\lambda$ gives
\begin{equation}\label{eq:two-pair-root}
 \frac{3t+1}{t+3}-t^{p-1}=0.
\end{equation}
As in the preceding case, we first handle the two tails analytically.

\begin{claim}\label{clm:two-pair-tails}
The left-hand side of \eqref{eq:two-pair-root} is strictly positive for
$1<t\leq1.001$ and strictly negative for $t\geq24$.
\end{claim}

\begin{proof}
The derivative of the left-hand side is
\[
 \frac8{(t+3)^2}-(p-1)t^{p-2}.
\]
For $1<t\leq1.001$,
\[
 \frac8{(t+3)^2}
 \geq
 \frac8{(4.001)^2}
 >
 \frac9{25}
 \geq
 \frac9{25}t^{-16/25}
 =
 (p-1)t^{p-2}.
\]
Thus the derivative is positive on this interval.  The left-hand side of
\eqref{eq:two-pair-root} vanishes at $t=1$, so it is strictly positive for
$1<t\leq1.001$.

For $t\geq24$,
\[
 \frac{3t+1}{t+3}<3,
\]
whereas
\[
 t^{p-1}\geq24^{9/25}>3.
\]
The last strict inequality is equivalent, after raising both positive sides
to the $25$th power, to the exact integer inequality $24^9>3^{25}$.
Therefore the left-hand side of \eqref{eq:two-pair-root} is strictly negative
for every $t\geq24$.
\end{proof}

On the remaining compact interval, one exact rational interval calculation
encloses the nonuniform sign change:
\[
 \begin{array}{c|c}
 \text{range of }t&
 \text{certified sign of the left-hand side of
 \eqref{eq:two-pair-root}}\\
 \hline
 {[1.001,12.5360]}&\text{positive}\\
 {[12.5363,24]}&\text{negative}.
 \end{array}
\]
Together with Claim~\ref{clm:two-pair-tails}, this proves that every
nonuniform solution satisfies
\[
 12.5360<t<12.5363.
\]

For the unnormalized vector $(t,1,t,1)$, the second stationarity equation
gives $\lambda=(t+3)/2$, and the $p$th power of its $\ell_p$ norm is
$2(t^p+1)$.  Repeating the normalization calculation from
Lemma~\ref{lem:Tchsh-one-constant-pair-case} and then applying
\eqref{eq:norm-half-lambda} gives
\[
 \frac{\|T_\chsh z\|_2^2}{\|z\|_p^2}
 =
 \frac{t+3}{4\bigl(2(t^p+1)\bigr)^{8/17}}.
\]
Outward-rounded rational interval evaluation over
$12.5360<t<12.5363$ gives an upper endpoint smaller than
$0.547453699$, which proves \eqref{eq:two-pair-bound}.
\end{proof}

Finally, we prove the required norm bound.
\begin{lemma}\label{lem:Tchsh-bound}
For $p=34/25$, the CHSH collision operator satisfies
\[
 \|T_\chsh\|_{p\to2}^2\leq0.550769.
\]
\end{lemma}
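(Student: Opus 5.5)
The plan is to assemble the preceding lemmas through a compactness and stationarity argument. By homogeneity, and because $\|T_\chsh z\|_2$ depends on $z$ only through the entries of $T_\chsh z$, we may restrict to nonnegative vectors. This restriction is harmless: replacing $z$ by $|z|$ does not decrease any entry of $T_\chsh z$ in absolute value, since $T_\chsh$ has nonnegative entries, and it leaves $\|z\|_p$ unchanged. So $\|T_\chsh\|_{p\to2}^2$ equals the maximum of the quadratic objective $\frac12\bigl[\sum_i z_i^2+(z_0+z_1)(z_2+z_3)\bigr]$ over the compact set $\{z\geq0,\ \sum_i z_i^p=1\}$, and this maximum is attained at some point $z^\star$.

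By Lemma~\ref{lem:interior}, $z^\star$ has all four coordinates strictly positive. It therefore lies in the relative interior of the constraint surface, which is a smooth manifold there because $\nabla g=(p z_i^{p-1})_i\neq0$. The Lagrange multiplier rule then applies, so $z^\star$ satisfies \eqref{eq:lagrange-a}--\eqref{eq:lagrange-b} for some $\lambda$, and hence it is a positive stationary point. By Lemma~\ref{lem:classification}, up to the within-pair swaps, the swap of the two pairs, and rescaling, $z^\star$ has one of the three forms $(u,u,u,u)$, $(u,u,t,1)$ with $t>1$, or $(t,1,t,1)$ with $t>1$. All of these symmetries preserve the objective and the constraint set, since $T_\chsh$ is invariant under the corresponding coordinate permutations. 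The scale-invariant ratio $\|T_\chsh z\|_2^2/\|z\|_p^2$ therefore takes the same value at $z^\star$ as at its canonical representative.

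I then invoke the three case lemmas. Lemma~\ref{lem:Tchsh-uniform-case} bounds the ratio by $0.520809$ in the first case, Lemma~\ref{lem:Tchsh-one-constant-pair-case} by $0.550769$ in the second, and Lemma~\ref{lem:Tchsh-two-nonconstant-pairs-case} by $0.547454$ in the third. Taking the maximum gives $\|T_\chsh\|_{p\to2}^2=\|T_\chsh z^\star\|_2^2\leq 0.550769$. In the normalization of Lemma~\ref{lem:chsh-col}, this gives $(\col_{34/25}(\chsh))^2\leq 2^{2/p-1}\cdot 0.550769$, which matches the value $\col_{1.36}(\chsh)<0.873606$ used in Theorem~\ref{thm:chsh-parrep}.

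The only delicate point is justifying that the maximizer really is a Lagrange stationary point of the stated form. This requires existence of a maximizer, which follows from compactness; interiority, which is Lemma~\ref{lem:interior}; and the constraint qualification, which holds because the gradient of the constraint is nonzero at positive points. A second point to check is that the positive multiplier absorbed in \eqref{eq:lagrange-a} is consistent with the classification lemma. Everything numerical has already been certified in the case lemmas, so no further computation is needed.
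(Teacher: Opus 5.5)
Your proposal is correct and follows the same route as the paper. You obtain a positive maximizer from Lemma~\ref{lem:interior}, reduce it via Lemma~\ref{lem:classification} to one of the three stationary families, and take the largest of the three certified case bounds. The only additions are details the paper leaves implicit: the reduction to nonnegative $z$, existence of a maximizer by compactness, and the constraint qualification for the Lagrange rule.
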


\begin{proof}
By Lemma~\ref{lem:interior}, the maximum defining
$\|T_\chsh\|_{p\to2}^2$ is attained at a positive stationary point.
Lemma~\ref{lem:classification} shows that, up to symmetries and rescaling,
every such point belongs to one of the three families considered above.
The corresponding bounds are
\[
 0.520809,\qquad 0.550769,\qquad 0.547454
\]
by Lemmas~\ref{lem:Tchsh-uniform-case},
\ref{lem:Tchsh-one-constant-pair-case}, and
\ref{lem:Tchsh-two-nonconstant-pairs-case}, respectively.  The largest of
these is $0.550769$, proving the result.
\end{proof}

Since $2/p-1=8/17$, we get from the definition of $\col_p(\chsh)$ and Lemma~\ref{lem:Tchsh-bound},
\[ (\col_{34/25}(\chsh))^2 < 2^{8/17}\times 0.550769\]
and taking the square root,
\[ \col_{34/25}(\chsh) < 0.873606.\]

\end{document}